\newcommand{\cmark}{\ding{51}}%
\newcommand{\xmark}{\ding{55}}%
\pgfplotsset{compat=1.11}
\newcommand{\Id}{\ensuremath{\mathbb{I}}}
\providecommand{\calP}{\ensuremath{\mathcal{P}}}
\providecommand{\calV}{\ensuremath{\mathcal{V}}}
\providecommand{\bbE}{\ensuremath{\mathbb{E}}}
\providecommand{\N}{\ensuremath{\mathcal{N}}}
\providecommand{\calP}{\ensuremath{\mathcal{P}}}
\providecommand{\calV}{\ensuremath{\mathcal{V}}}
\providecommand{\bbE}{\ensuremath{\mathbb{E}}}
\def\01{\{0,1\}}
\newcommand{\eps}{\varepsilon}
\newcommand{\C}{\ensuremath{\mathcal{C}}}
\let\oldabs\abs
\def\abs{\@ifstar{\oldabs}{\oldabs*}}
\let\oldnorm\norm
\def\norm{\@ifstar{\oldnorm}{\oldnorm*}}
\newtheorem{theorem}{Theorem}
\newtheorem{claim}[theorem]{Claim}
\newtheorem{lemma}[theorem]{Lemma}
\newtheorem{proposition}[theorem]{Proposition}
\newtheorem{definition}[theorem]{Definition}
\newtheorem{remark}[theorem]{Remark}
\newtheorem{corollary}[theorem]{Corollary}
\newtheorem{observation}[theorem]{Observation}
\newtheorem{example}[theorem]{Example}
\definecolor{antonio}{rgb}{.2,.5,.1}
\definecolor{armando}{rgb}{.5,.1,-.7}
\definecolor{jens}{rgb}{.5,.1,-.7}
\definecolor{yihui}{rgb}{.8,0,.5}
\definecolor{daniel}{rgb}{.8,.5,.3}
\definecolor{Soumik}{rgb}{.9,.5,.1}
\newcommand{\AntM}[1]{\textcolor{black}{#1}}
\newcommand{\Flip}{\mathbb{F}}
\newcommand{\Idd}{\mathbb{I}}
\newcommand{\CNOT}{\ensuremath{\operatorname{CNOT}}}
\newcommand{\Hadamard}{\ensuremath{\operatorname{H}}}
\newcommand{\PhaseS}{\ensuremath{\operatorname{S}}}
\newcommand{\supp}{\ensuremath{\operatorname{supp}}}
\newcommand{\ExU}{\underset{U\sim\mu_H}{\mathbb{E}}}
\newcommand{\MatC}[1]{\mathcal{L}\left(\mathbb{C}^{#1}\right)}
\newcommand{\Var}{\mathrm{Var}}
\newcommand{\Ex}{\mathbb{E}}
\newcommand{\hs}[2]{\langle #1, #2\rangle_{HS}}
\renewcommand{\Id}{I}
\newcommand{\Ug}{\mathrm{U}}
\newcommand{\fu}{Dahlem Center for Complex Quantum Systems, Freie Universit\"{a}t Berlin, 14195 Berlin, Germany}
\begin{document}

\title{Noise-induced shallow circuits and the absence of barren plateaus}

\date{\today}

\author{Antonio Anna Mele}
\email{a.mele@fu-berlin.de}
\affiliation{\fu}

\author{Armando Angrisani}
\email{armando.angrisani@epfl.ch}
\affiliation{LIP6, CNRS, Sorbonne Université, 75005 Paris, France}

\affiliation{Institute of Physics, Ecole Polytechnique Fédérale de Lausanne (EPFL), CH-1015 Lausanne, Switzerland}

\author{Soumik Ghosh}
\affiliation{Department of Computer Science, University of Chicago, Chicago, Illinois 60637, USA}

\author{\\Sumeet Khatri}
\affiliation{\fu}

\author{Jens Eisert}
\affiliation{\fu}
\affiliation{Fraunhofer Heinrich Hertz Institute, 10587 Berlin, Germany}

\author{Daniel Stilck França}
\email{dsfranca@math.ku.dk}
\affiliation{Department of Mathematical Sciences, University of Copenhagen, 2100 Copenhagen, Denmark}
\affiliation{Univ Lyon, ENS Lyon, UCBL, CNRS, Inria, LIP, F-69342, Lyon Cedex 07, France}

\author{Yihui Quek}
\email{yihuiquek3.14@gmail.com}
\affiliation{Departments of Mathematics and Physics, Massachusetts Institute of Technology, 182 Memorial Drive, Cambridge, MA 02138, USA}
\begin{abstract}
\textcolor{black}{Without a successful implementation of fault-tolerant quantum error correction, calculations on quantum computers are subject to noise that limits their capabilities. Motivated by realistic near-term hardware considerations, we study the impact of uncorrected local noise on logical quantum circuits. We first show that in the task of estimating observable expectation values any noise truncates most quantum circuits to effectively logarithmic depth. We then prove that quantum circuits under any non-unital noise do not exhibit barren plateaus for cost functions composed of local observables. However, by using the effective shallowness, we also design an efficient classical algorithm to estimate observable expectation values within any constant additive accuracy, with high probability over the choice of the circuit, in any circuit architecture. Taken together, our results establish that, unless we carefully engineer quantum circuits to take advantage of the noise, noisy quantum circuits are unlikely to offer an advantage over shallow ones for algorithms that output observable expectation value estimates, such as many variational quantum machine learning proposals.
}
\end{abstract}

\maketitle

%
\makeatletter

\setcounter{secnumdepth}{3}

\vspace{-0.5cm}

\section{Introduction}

\noindent 
One of the most important questions for quantum computers of today is to understand the behavior and impact of noise~\cite{Preskill_2018}. It is crucial to understand whether noisy quantum computers provide any advantage, both for practically relevant problems~\cite{cai2023shors,Kim2023}, or even as proof-of-principle~\cite{GoogleSupremacy}, or whether we ultimately need error-corrected logical qubits to achieve this goal~\cite{Bluvstein_2023,MindTheGaps}. 
Past years have seen a tussle between demonstrations of quantum advantage~\cite{GoogleSupremacy,ZHU2022240,MadsenPH,SupremacyReview,Kim2023,MindTheGaps} and subsequent efficient classical simulation~\cite{pan2021simulating,GbosonCLASS,clifford2020faster,gao2018efficient,PhysRevLett.128.030501,Kechedzhi_2024,begušić2023fast,liao2023simulation,rudolph2023classical,Begu_i__2024,patra2023efficient}.
The issue of noise shows up in a multi-faceted way in various areas of near-term quantum computation.
In quantum machine learning, certain types of noises cause `barren plateaus' in optimization landscapes -- that is, the optimization landscape becomes flat, and any quantum signal is destroyed~\cite{nibp,Variational}. In random quantum circuit sampling~\cite{SupremacyReview}, a popular framework for demonstrations of quantum advantage, certain type of noise makes it possible to simulate the systems efficiently classically~\cite{Aharonov_2023}.  
However, the vast majority of prior work assumes the noise to be \textit{local, unital,} and \textit{primitive} (e.g., depolarizing)---where a quantum channel is deemed unital if it maps the maximally mixed state onto itself, and primitive if any state converges to the maximally mixed state if the channel is applied often enough. However, for a 
number of current physical platforms, it is by far more natural 
and realistic to consider the noise as \textit{non-unital}~\cite{Kandala,GoogleSupremacy,Chirolli_2008,Pino}, which can decrease the entropy of the system -- to the extent that depolarizing noise 
can be a misleading model. Previous results have studied the qualitatively different behavior of this noise in certain contexts; for instance, fault tolerance \cite{refrigerator} and random circuit sampling \cite{fefferman2023effect}.

In this work, we make significant strides in presenting a comprehensive understanding of the impact of possibly non-unital noise on typical quantum circuits. Our assumptions about the noise are minimal, in particular we consider it to be \emph{local} and \emph{incoherent}, i.e., the noise present in the device has a tensor product structure and it is not given by a unitary channel. Our main results can be succinctly stated as follows:

\begin{itemize}
\item \textbf{Effective depth:} We show that arbitrary deep random quantum circuits, under \emph{any} uncorrected, possibly non-unital noise, effectively get `truncated' in the following sense: the influence of gates on observable expectation values decreases exponentially in their distance from the last layer, i.e., only the last layers can contribute significantly to the expectation value. \textcolor{black}{This also implies that, for typical (random) circuits, the complexity under non-unital and unital noise becomes essentially the same, within the well-defined context of estimating expectation values. Indeed, it is well known that under certain types of unital noise, like depolarizing, meaningful quantum computation must be confined to \emph{logarithmic} depth~\cite{aharonov1996limitations}, and what we prove is that the same limitation arises for \emph{typical} circuits with \emph{arbitrary} local noise—even non-unital: the initial portion of the circuit beyond the last $\log n$ layers can be discarded without affecting the predicted expectation values.}

\item \textbf{Lack of barren plateaus:} 
Under non-unital noise, we get a provable lack of barren plateaus for cost functions made out of local observables---that is, the cost landscape is never flat, and the gradient of the cost function never vanishes---at any depth. 
This also implies that local expectation values of arbitrary deep random quantum circuits with non-unital noise are \emph{not too concentrated} towards a fixed value, in stark contrast to the unital noise scenario~\cite{nibp}. This phenomenon, however, is not good news for variational quantum algorithms~\cite{Variational}, as we show that such circuits behave like shallow circuits, which have limited computational power. 

\item \textbf{Classical simulation:} \textcolor{black}{Furthermore, by exploiting the effective shallowness of the circuits, we show how to \emph{efficiently} classically simulate, on average over the class of noisy quantum circuits,
expectation values of any local observable, up to constant additive precision, at $\emph{any}$ depth and in any circuit architecture.}
\end{itemize}

\AntM{In this work we focus on the problem of estimating expectation values of observables, rather than on full output distributions or sampling tasks. 
This focus is motivated by two considerations. 
First, in condensed-matter physics and quantum simulation, the physically relevant quantities that diagnose phases of matter, reveal order parameters, or determine response functions are expectation values of local or few-body observables. 
Second, in variational quantum algorithms and quantum machine learning, the central task is to estimate cost functions made by observables expectation values on noisy near-term devices. 
For these reasons, expectation values are the natural figure of merit for the questions we target, and form the basis of our analysis.}


\begin{figure}[h]
\centering
\includegraphics[width=0.58\textwidth]{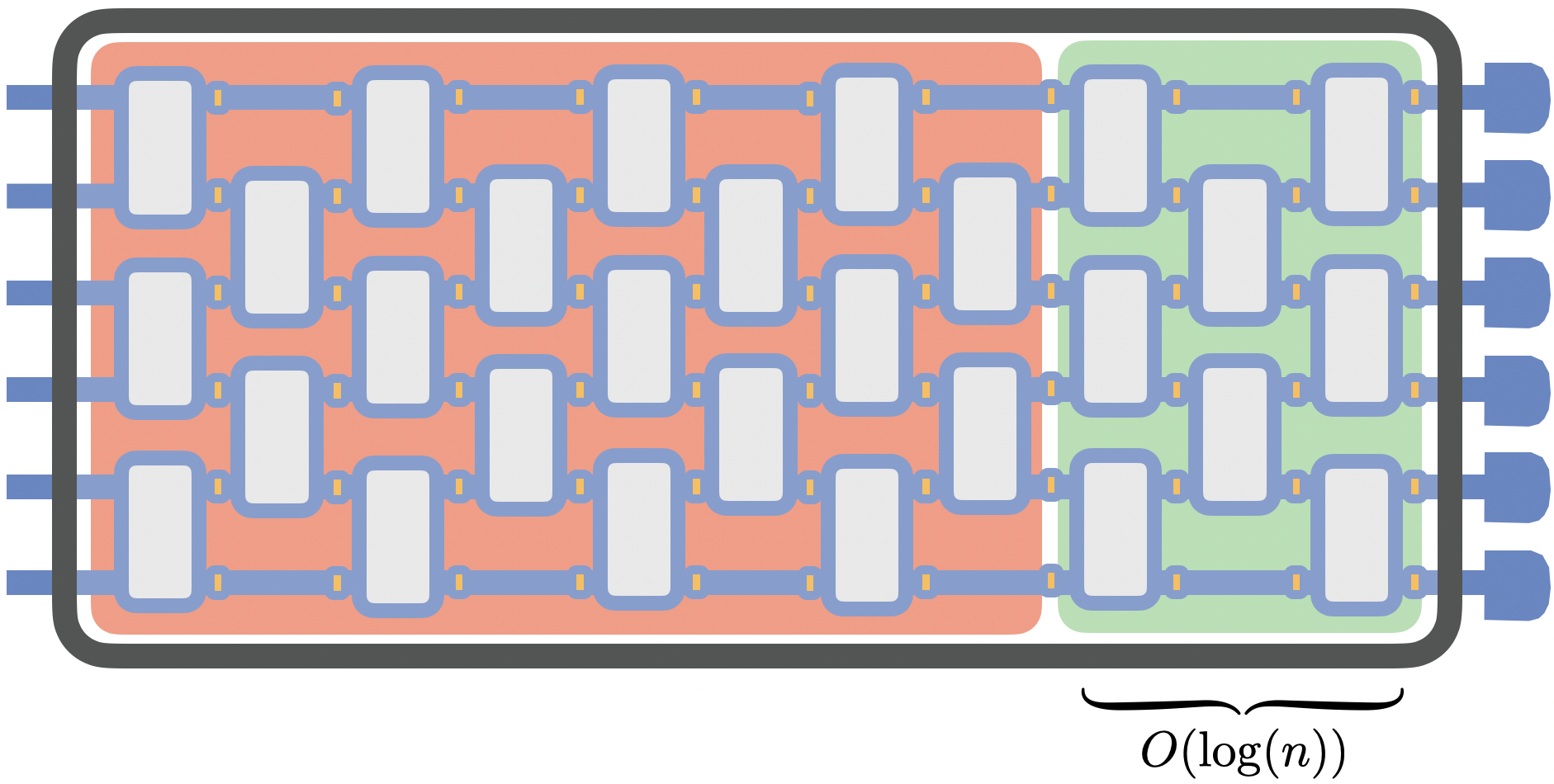}

\caption{For most of the quantum circuits with any possibly non-unital noise only the last $O(\log(n))$ influence significantly observables expectation values. Here, $n$ is the number of qubits.} 
\label{Fig:circ2}
\end{figure}

In summary, our results show that \emph{most} quantum circuits with non-unital noise at \emph{any} depth behave qualitatively as (noisy) shallow quantum circuits for the task of estimating observable expectation values.  
Beyond this task, we further establish that the majority of noisy quantum circuits $\Phi$, with a depth at least linear in the number of qubits, become independent to the initial states -- specifically, for any two states $\rho$ and $\sigma$, the trace distance between $\Phi(\rho)$ and $\Phi(\sigma)$ vanishes exponentially in the number of qubits. 

Note that although we are significantly more general than many previous results when it comes to noise model, our results hold only on average over a well-motivated class of quantum circuits and do not apply to every circuit in the class. However, this limitation is necessary; specifically, it reflects the fact that not every quantum circuit without access to fresh auxiliary systems becomes computationally trivial after a certain number of operations under more general noise, unlike circuits subjected to depolarizing noise~\cite{DanielPaper}. 
For instance, Ref.~\cite{refrigerator} has shown that it is possible to perform exponentially long quantum computations under non-unital noise, with specially constructed circuits. Because of this, we cannot expect to prove our statements for \emph{all} quantum circuits with non-unital noise.

From a technical perspective, our results rely on bounding various second moments of observable expectation values under noisy random quantum circuits. In particular, we show how combining a normal form of qubit channels~\cite{King2001} with a reduction to ensembles of random Clifford circuits renders most computations tractable. The only assumption we need is satisfied for any architecture where the local gates form $2$-designs~\cite{MeleHaar,Designs}, making our results widely applicable.

Taken together, our results substantially advance our understanding of the noise impact on near-term quantum computation and showcase that, unless we carefully engineer the circuits to take advantage of the noise (e.g., as in Ref.~\cite{refrigerator}), it is unlikely that a quantum computer with non-unital noise is preferable over one with depolarizing noise.

\subsection{Related works}
Previous studies have shown that circuits subjected to possibly non-unital noise can produce output states that are effectively independent from their input states~\cite{fawzi2022lower,ErrorMitigationObstructions}, i.e., they present a notion of an \emph{effective depth}. However, these studies explore regimes significantly and distinctly different from ours. Specifically, Ref.~\cite{fawzi2022lower} has shown that the output states of any circuit with depth more than exponential in the number of qubits $n$, when interleaved with non-unital noise, become effectively independent of the input state. In our work, instead, we show that even linear-depth circuits can accomplish this -- although only on average over the circuit class. The same cannot be shown for worst-case circuits because non-unital noise can be leveraged to perform fault-tolerant quantum computation up to $\exp(O(n))$ depth without needing fresh auxiliary qubits~\cite{refrigerator}.
Furthermore, Ref.~\cite{ErrorMitigationObstructions} has shown that circuits, composed of $\Omega(n)$ layers of global $2$-designs interspersed with non-unital noise, yield output states that are effectively independent of their initial states. In contrast, we only assume that each of the local two-qubit gate forms a $2$-design, which is arguably a more realistic circuit model.
Moreover, Refs.~\cite{limitations2021, de2021quantum} provide contraction bounds for quantum annealers and limitations for the \emph{quantum approximate optimization algorithm} (QAOA) which holds for several families of non-unital noise. Here, we show that most quantum circuits with non-unital noise behave like shallow quantum circuits for the task of estimating expectation values. In this regard, a line of research has delved into the limitations of shallow quantum circuits in terms of their computational capabilities. Notably, across a broad spectrum of combinatorial optimization problems, classical algorithms outperform quantum algorithms implemented by shallow quantum circuits~\cite{bravyi2020obstacles, farhi2020quantum, farhi2020quantum-bis, limitations2021, de2021quantum, anshu2023concentration}. On the other hand, it has been shown that shallow quantum circuits are unconditionally more powerful than shallow classical circuits~\cite{Bravyi_2018,watts2024unconditional,Bravyi_2020SHA}.

The interplay between barren plateaus, random circuits and noise has been thoroughly explored by the previous literature in the context of \emph{variational quantum algorithms} (VQAs)~\cite{Variational}. In the noiseless scenario, it has been highlighted~\cite{McClean_2018,Holmes_2022} that if the parameter distribution underlying the parametrized quantum circuit forms a global $2$-design~\cite{Brand_o_2016}, then any associated cost function exhibits barren plateaus.
The influence of the locality of observables on the onset of barren plateaus has 
been explored in Refs.~\cite{Cerezo_2021, uvarov2021barren, napp2022quantifying}. 
Methods for avoiding or mitigating barren plateaus in noiseless scenarios have been proposed, primarily relying on specific heuristic-based initialization strategies~\cite{Grant_2019,Sack_2022,Mele_2022,PhysRevResearch.5.L032040,rudolph2023synergy, Jain_2022, Holmes_2022,shi2024avoiding}, as well as by constraining the circuit expressibility~\cite{schatzki2022theoretical, zhang2020trainability, Volkoff_2021, Pesah_2021,liu2022mitigating, Meyer_2023, park2023hamiltonian, Larocca_2022, zhang2022escaping,park2024hardwareefficient,Zhang_2024}. 
Furthermore, it has been recently conjectured that methods to `provably' avoid barren plateaus typically enable also efficient classical simulation, either with purely classical resources or after an initial data acquisition phase which may require a quantum computer~\cite{cerezo2023does}.
In the context of noisy scenarios, Ref.~\cite{nibp} showed that under the action of certain types of unital noise, both cost functions and gradients experience exponential decay in the circuit depth. This phenomenon has been dubbed `\emph{noise-induced barren plateaus}' (NIBPs), and it has been considered as a major hurdle for VQAs since NIBPs kick in regardless of the used initialization strategy~\cite{nibp} -- crucially, assuming unital noise.
Another recent work~\cite{schumann2023emergence} shows how barren plateaus can emerge even with non-unital noise when the circuit is made by global $2$-designs interleaved by non-unital noise. Our work is different from Ref.~\cite{schumann2023emergence}, because we consider the arguably more realistic model in which the noise acts after each local gate that composes the circuit. In Ref.~\cite{sannia2023engineered}, the authors observed that properly engineered Markovian noise can prevent barren plateaus, although their analysis holds for a single noisy layer. 

In our work, we show how non-unital noise leads to strikingly different conclusions than unital noise (e.g., depolarizing) in the context of barren plateaus. We prove that parametrized quantum circuits made by $2$-qubit gates (each of them being drawn from a $2$ -design) do not exhibit exponential cost function concentration at any depth, and that the gates in the last $O(\log(n))$ layers of the circuit are (the only) trainable, implying absence of barren plateaus. In contrast, in the depolarizing noise case, no gates are trainable for sufficiently deep circuits~\cite{nibp}, implying exponential cost function concentration and the onset of barren plateaus. In other contexts, it has been pointed out how non-unital noise can lead to drastically different conclusions compared to unital noise. For example, in the context of fault-tolerant quantum computation, it was shown~\cite{refrigerator} that while in the non-unital noise scenario quantum computation is possible up to $\exp(O(n))$ depth without the need for introducing \emph{fresh} auxiliary qubits, in the depolarizing noise scenario the same is possible only up to $O(\log(n))$ depth~\cite{aharonov1996limitations,DanielPaper}, or in the case of dephasing noise only up to $O(\mathrm{poly}(n))$ depth.
Moreover, Ref.~\cite{fefferman2023effect} showed how existing easiness and hardness proofs~\cite{CharactGOOGLE,Deshpande_2022,Harrow_2023} of random circuit sampling break down, under non-unital noise, because the output distribution of the circuit is not `flat enough'---or in more technical terms, it does not `anticoncentrate'~\cite{dalzell2022random}. 
\textcolor{black}{Additionally, various studies from the many-body physics literature have explored the behavior of entanglement in noisy quantum circuits under different types of noise~\cite{Liu_2023,Liu_2024,liu2024noiseinducedphasetransitionshybrid}.}

The pursuit of efficient classical simulation algorithms for quantum circuits has also been addressed in previous research.
The task of simulating expectation values was addressed in several works (see, for instance, Ref.~\cite{Bravyi_2020} for shallow circuits and Ref.~\cite{fontana2023classical, shao2024} for circuits interspersed by Pauli noise). In particular, if the circuit is shallow and the observable is local, then it is well-known that standard light-cone arguments suffice to compute efficiently the expectation value~\cite{Bravyi_2006,Bravyi_2020}. (See Ref.~\cite{Schwarz_2017} for a discussion on the strategy of considering light-cone arguments for estimating expectation values of local observables with respect to tensor network states.)
In the case of noisy tensor network quantum states~\cite{Borregaard_2021,Schwarz_2017}, it has already been pointed out that only the last layers of the circuit preparing such states are sufficient for the estimation of local observables. In our work, we show that most of the states prepared by arbitrary deep noisy random quantum circuits have this property, leading to an efficient (average-case) classical simulation algorithm leveraging this effective shallowness and light-cone arguments. While for sufficiently deep circuits affected by depolarizing noise, estimating the expectation value merely requires outputting the result obtained from the maximally mixed state~\cite{DanielPaper,nibp}---yielding zero for Pauli observables---in scenarios involving non-unital noise, such an approach does not work, as we show that local expectation values of random quantum circuits with non-unital noise are not very concentrated around a fixed value.
For the task of approximately generating samples from the output distribution of random circuits interspersed by depolarizing noise, an efficient classical algorithm has been provided in Ref.~\cite{Aharonov_2023}. However, the same analysis of~\cite{Aharonov_2023} does not straightforwardly generalize to non-unital noise, as noted in Ref.~\cite{fefferman2023effect}, because of the very peaked nature of the output distribution. 
Thus, designing an efficient classical algorithm for this sampling task under non-unital noise remains an open problem to date.

\textcolor{black}{On a more technical note, our work contributes to the expanding literature on random quantum circuits by exploring their behavior under general local noise channels. While their behavior has been extensively studied in the noiseless scenario~\cite{Brand_o_2016,Harrow_2023,mittal2023localrandomquantumcircuits,belkin2024approximatetdesignsgenericcircuit,chen2024incompressibilityspectralgapsrandom}, rigorous research on their behavior under noise has been mainly restricted to the unital scenario~\cite{dalzell2022random,Deshpande_2022,Aharonov_2023}, with only a notable exception~\cite{fefferman2023effect}.}

\textcolor{black}{Recently, before our work, an independent study~\cite{singkanipa2024unitalnoisevariationalquantum} has been released that also analyzes the barren plateaus phenomenon beyond unital noise, reaching conclusions in agreement with ours. However, our work differs significantly from theirs both in techniques and assumptions. In particular, our results apply to any local noise channels, whereas the results in Ref.~\cite{singkanipa2024unitalnoisevariationalquantum} are restricted to a specific class of local and global noise channels (the so-called Hilbert-Schmidt-contractive channels).}
\textcolor{black}{After our work has been released, a study~\cite{schuster2024polynomialtimeclassicalalgorithmnoisy} has been published, demonstrating how to simulate classically the expectation values of quantum circuits with certain types of noise with high probability over random choices of input states. Specifically, the noise channels considered have been depolarizing noise and a constrained version of the amplitude damping channel, whereas our classical algorithm is applicable to any noise channel and works on average over the circuit ensemble rather than on average over the inputs.}

\section{Results}

Our work is organized as follows:
\begin{itemize}
    \item[-] In Subsection~\ref{sub:effMAIN}, we prove the effective depth picture.
    Namely, we show that the layers preceding the last $\Theta(\log(n))$ have negligible influence for the task of estimating observable expectation values. Notably, here the only assumption on the noise is that the associated channel is not a unitary channel. Then, we also show that the trace distance between two arbitrary states affected by the same linear depth noisy random quantum circuit becomes exponentially small in the number of qubits. In addition, we also investigate the impact of non-unital noise on worst-case (i.e., any possible) quantum circuit. In particular, we prove that, if the noise strength \emph{exceeds} a certain threshold, the layers at superlogarithmic distance from the end bear a negligible influence on the output state of \emph{any} circuit.

    \item[-] In Subsection~\ref{sub:classicalMAIN}, leveraging the concept of effective shallowness, we present a classical simulation algorithm for estimating local expectation values of arbitrary deep noisy random quantum circuits. The algorithm's core idea capitalizes on the fact that local expectation values of shallow quantum circuits can be computed efficiently through standard light-cone arguments. 
    Furthermore, we discuss a \emph{verification condition} that certifies the success of the classical simulation for a given fixed circuit.

    \item[-] In Subsection~\ref{sub:Lack}, we show that expectation values of local observables of random quantum circuits with non-unital noise are not exponentially concentrated towards a fixed value, in stark contrast to the unital noise scenario. Moreover, we complement the results in Subsection~\ref{sub:effMAIN} by revealing that the last $\Theta(\log(n))$ layers of random quantum circuits with non-unital noise have a non-trivial impact on local expectation values. This implies that non-unital noise induces an absence of barren plateaus at \emph{any} depth, for cost functions defined with respect to local expectation values. 
    Furthermore, we observe that cost functions defined with respect to global Pauli observables exhibit exponential cost-concentration and barren plateaus at any depth. Similarly, we show that fidelity quantum kernels~\cite{thanasilp2022exponential,Jerbi_2023,Schuld_2019}, well-studied quantities in the context of quantum machine learning, experience exponential concentration at any depth under various noise models, including both unital and non-unital channels. 
    For the special case of unital noise, we also present upper bounds for barren plateaus and fidelity kernels with a substantial improvement compared to previous results~\cite{nibp, thanasilp2022exponential}, which are restricted only to certain types of unital noise, whereas our results holds for general unital noise and have a tighter dependence on the circuit depth.
\end{itemize}
The detailed technical aspects are covered extensively in the Supplementary Material -- here we give an overview of the main results.

\subsection{Preliminaries and definitions}
\label{ref:prelMAIN}
As many of our results rely on understanding how noise behaves \emph{on average} over circuits, we first introduce the circuit ensembles over which these averages are taken.
We consider $n$-qubit quantum circuits $\Phi$ consisting of layers of two-qubit gates interleaved by local (single-qubit) noise, with a final layer of single-qubit gates. All gates are assumed to be drawn from a $2$-design, and we make no assumptions about geometric locality, except where explicitly mentioned.
We express our circuits as
\begin{align}
\Phi \coloneqq \mathcal{V}^{\mathrm{single}} \circ \mathcal{N}^{\otimes n} \circ \mathcal{U}_{L} \circ \cdots \circ \mathcal{N}^{\otimes n} \circ \mathcal{U}_{1},
\label{eq:randcirc_main}
\end{align}
where $\mathcal{V}^{\mathrm{single}} \coloneqq V(\cdot)V^{\dagger}$, with $V \coloneqq \bigotimes^{n}_{i=1} U_i$, is a layer of single-qubit gates, $L$ represents the number of layers, also referred to as circuit depth, $\mathcal{U}_{i}$ corresponds to the channel associated with the $i$-th unitary circuit layer for $i \in [L]\coloneqq\{1,2,\dotsc,L\}$, and $\mathcal{N}$ is a single-qubit quantum channel. Although we find that the final layer of single-qubit gates is not essential for our results and, if desired, might be omitted with minor adjustments, we retain it to simplify our proofs.

Exploiting the fact that we draw gates from a two-qubit 2-design and we consider up to second-moment quantities, without loss of generality, the underlying distributions of unitaries do not change if we compose $\mathcal{N}$ on the left and right by arbitrary unitary channels.
Then, using the so-called `normal' form representation of the channel~\cite{King2001,BETHRUSKAI2002159}, without loss of generality, $\mathcal{N}$ can be defined in terms of two real vectors $\bold{t} \coloneqq (t_X, t_Y, t_Z) \in \mathbb{R}^3$ and $\bold{D} \coloneqq (D_X, D_Y, D_Z) \in \mathbb{R}^3$, describing its action on a quantum state written in its Bloch sphere representation as
\begin{align}
    \mathcal{N} \!\left(\frac{I + \bold{w} \cdot \boldsymbol{\sigma}}{2}\right) = \frac{I}{2} + \frac{1}{2}(\bold{t} +  D\bold{w}) \cdot \boldsymbol{\sigma},
    \label{eq:normSSmain}
\end{align}
where $D\coloneqq\mathrm{diag}(\bold{D})$, $\bold{w} \in \mathbb{R}^3$ with $\|\bold{w}\|_2 \leq 1$, and $\boldsymbol{\sigma} \coloneqq (X, Y, Z)$ is the vector of single-qubit Pauli matrices.
In our work, it is crucial to consider the constant $c \coloneqq \frac{1}{3}(\|\bold{D}\|^2_2 + \|\bold{t}\|^2_2)$, as we show that it quantifies the contraction rate with respect to the depth of our noisy random circuits. Particularly, we show that $c \leq 1$, and the equality holds if and only if $\mathcal{N}$ is a unitary channel. We will often analyze noisy circuits in the Heisenberg picture. In particular, the adjoint channel $\mathcal{N}^{*}$ acts on $Q\in \{X,Y,Z\}$ as 
\begin{align}
\label{eq:adjointeq}
    \mathcal{N}^{*}(Q)=t_{Q} I + D_{Q}Q 
\end{align}
and on $I$ as $\mathcal{N}^{*}(I)= I$.
Note that if the noise is unital, i.e., $\mathcal{N}(I)= I$, then we have $t_Q=0$ for all $Q\in \{X,Y,Z\}$.

\textcolor{black}{We also often need to denote circuits derived from $\Phi$ by retaining only a subset of layers from the start or the end. 
In these cases, we use subscripts to indicate the relevant range of layers. That is, for $a \leq b \in [L]$,
\begin{align}
\Phi_{[a,b]} \coloneqq  
 (\mathcal{V}_b^{\mathrm{single}} \circ \mathcal{N}^{\otimes n} \circ \mathcal{U}_b) \circ \cdots \circ 
 (\mathcal{V}_a^{\mathrm{single}} \circ \mathcal{N}^{\otimes n} \circ \mathcal{U}_{a}).
\end{align}
In particular, $\Phi_{[L-m,L]}$ denotes the truncated noisy circuit obtained by discarding all but the last $m$ layers. 
A more detailed discussion of this notation is provided in the Supplementary Material.}

\textcolor{black}{While we have assumed the same quantum channel $\mathcal{N}$ acting in every layer and on every qubit, this assumption is not crucial and can be relaxed. In particular, our proofs also apply to spatially varying (inhomogeneous) local noise, provided each site is acted upon by a local non-unitary channel; uniformity in space is not required.
Our analysis also naturally extends to the case where each two-qubit gate is followed by a correlated two-qubit noise channel of the form $\mathcal{E} = \tilde{\mathcal{U}}_2 \circ \left( \mathcal{N}_1 \otimes \mathcal{N}_2 \right) \circ \tilde{\mathcal{U}}_1$, where $\tilde{\mathcal{U}}_1$ and $\tilde{\mathcal{U}}_2$ are arbitrary two-qubit unitary channels, and $\mathcal{N}_1$, $\mathcal{N}_2$ are single-qubit non-unital noise channels. Thanks to the left- and right-invariance properties of the Haar distribution used in our random circuit model, our key structural results—including logarithmic effective depth and lack of concentration—continue to hold in this more general setting.
}


\subsection{Noise-induced effective shallow circuits}
\label{sub:effMAIN}
Here, we first present a high-level motivation for our investigation of the effective depth of noisy quantum circuits with respect to the task of estimating observables expectation values and then move to present our results formally.

\subsubsection{Effective depth: a high-level motivation}
The question of how uncorrected noise affects quantum computation has gained prominence in the era of \emph{noisy intermediate-scale quantum devices} (NISQ)~\cite{Preskill_2018, chen2022complexity}, where resources for fault-tolerance are scarce.
To address this question for a commonly studied task, we investigate how possibly non-unital noise impacts quantum computations whose output is given by the expectation value of some observable of interest, e.g., a Pauli observable $P$, or an estimation thereof. A centerpiece of our work is the  bound 
\begin{equation}
    \Ex_{\Phi}[|\Tr(P \Phi(\rho - \sigma))|] \leq \exp(-\Omega(\mathrm{depth}(\Phi)))
    \label{eq:counterpiece}
\end{equation}
on how distinct the expectation values of two different states can be, under noise, 
which holds for any two input states $\rho$ and $\sigma$ that are fed into a noisy circuit. Here the expected value $\Ex_{\Phi}$ is taken over the local gates that compose the (possibly non-unital) noisy circuit $\Phi$ and that are assumed to be distributed according to a local $2$-design. The above bound means that, on average over circuits, even two orthogonal input states fed into the same noisy circuit get closer to each other (relative to a Pauli $P$) exponentially fast in the {\em depth} of the circuit. 

Computing the depth of a noisy circuit at which the input information gets ``erased" is one way to understand the destructive effects of noise. This is the basis of the contraction-type results in several early works ~\cite{aharonov1996limitations,Aharonov_2023,alex2021random,nibp} which modelled circuit noise as \emph{single-qubit depolarizing noise}. One result in this line~\cite{DanielPaper} asserts that any circuit $ \Phi^{\mathrm{dep}}$ with depth $L=\Omega(\log(n))$ interspersed with layers of depolarizing noise outputs the maximally-mixed state up to error which vanishes exponentially in $L$ in trace distance as
\begin{equation}\label{eq:dep}
    \lVert \Phi^{\mathrm{dep}}(\rho) - \mathbb{I}/2^n \rVert_1 \leq O(\sqrt{n} b^{L}),
\end{equation}
for some constant $b\in (0,1)$ that is independent of the state $\rho$. This implies that also 
$|\Tr(P \Phi^{\mathrm{dep}}(\rho - \sigma))|$ is bounded from above by $O(\sqrt{n} b^{L})$, for any two states $\rho$ and $\sigma$ and any Pauli operator $P$. We refer to Section~\ref{subsub:tracedep} for a concise derivation of Eq.~\eqref{eq:dep}.
What is notable about this expression is that a {\em single} state---the maximally-mixed state---is the `limit' to which all circuits affected by depolarizing noise converge, independent of what gates are actually in the circuit, or its input state. The depolarizing noise assumption, if true, enormously constraints noisy quantum algorithms outputting expectation values and without access to fresh auxiliary qubits: to perform useful computation, they should be executable within $O(\log(n))$ depth.
According to Eq.~\eqref{eq:dep}, any noisy circuit with depth larger than $O(\log(n))$ cannot be efficiently distinguished from the maximally mixed state. Consequently, when estimating Pauli expectation values with inverse polynomial accuracy, one can always output the fixed quantity $\Tr(P \, \mathbb{I}/2^n) = 0$.

For circuits with depth $O(1)$ in the number of qubits, in any spatial dimensionality (including all-to-all connectivity), Pauli expectation values of quantum circuits interspersed with depolarizing noise can be computed efficiently. This can be achieved either through standard light-cone arguments (if the observable is local) or by leveraging the fact that Pauli expectation values are exponentially suppressed with the Pauli weight (if the observable is global).
More generally, if the circuit depth is $O(\log(n)^{1/\mathrm{D}})$, where $\mathrm{D}$ is the spatial dimensionality of the circuit, then Pauli expectation values can still be efficiently computed using the same methods. In particular, for one-dimensional architectures, this provides an efficient classical simulation algorithm to estimate Pauli expectation values for any circuit under depolarizing noise, regardless of depth.

However, one could justifiably raise concerns about this conclusion on physical grounds. First, the noise in real hardware is not necessarily depolarizing or even unital~\cite{Kandala,GoogleSupremacy,Chirolli_2008,Pino}; secondly, even a slight departure from the depolarizing noise assumption causes the picture to change dramatically. For instance, consider the single-qubit dephasing noise $\mathcal{D}$, which acts as $\mathcal{D}(\rho)=(1-p) \rho+p \operatorname{diag}(\rho_{0,0}, \rho_{1,1})$ for some $p\in [0,1]$.
Dephasing noise preserves the diagonal elements of its input and thus still falls within the category of unital noise. Now, consider a three-qubit circuit consisting of $L$ consecutive layers of Toffoli gates interspersed with layers of dephasing noise: this noisy circuit acts as the identity on the input states $\rho_0 = \ketbra{0,0,0}{0,0,0}$ and $\sigma_0 = \ketbra{1,0,0}{1,0,0}$, no matter how deep it is. For these two input states, Eq.~\eqref{eq:dep} does not hold. 
This example illustrates the impossibility of drawing any conclusions about expectation values like $\Tr(Z_1 \Phi(\rho))$ that depend only on the {\em depth} of the circuit---when the noise is non-depolarizing, the actual gates in the circuit, as well as the input state $\rho$, can affect the result very much.

Broadening our scope to general non-unital noise, the non-trivial nature of analyzing arbitrary circuits under such noise arises from the following intuition. Unital noise preserves the maximally-mixed state, and so does any unitary. Thus, a circuit consisting of alternating layers of unitaries and unital noise has both its unitary and noise components acting `in tandem' to drive the state towards the maximally-mixed state. 
When the noise is non-unital, however, the unitary and the noise components of the circuit do not have the same fixed point, and thus heuristically act `in different directions'. 
Little is known about the behaviour of these noisy circuit ensembles--- a situation to which we hope to contribute with Eq.~\eqref{eq:counterpiece}. 
The expected value $\mathbb{E}_{\Phi}$ in Eq.~\eqref{eq:counterpiece} is taken to avoid pathological cases: as previously noticed, on worst-case circuits the quantity $|\Tr(P \Phi(\rho - \sigma))|$ may attain a constant value independent of depth for certain kinds of noise, such as the dephasing channel. Moreover, the so-called \emph{quantum refrigerator} construction~\cite{refrigerator} shows surprisingly how non-unital noise can be leveraged to perform fault-tolerant quantum computation for up to exponential depth with circuits similar to ours. Therefore, for these special classes of circuits, we do not expect this worst-case upper bound to hold.

Although contraction results for worst-case circuits have appeared in the literature~\cite{DanielPaper,limitations2021,hirche2022contraction}, they encompass a less general class of channels than those considered in the present work. Thus, proving a contraction result for general---possibly non-unital---noise is far from trivial, and to our knowledge, our work provides the strongest result in this regard, by crucially leveraging the randomness of the circuit.


\subsubsection{Effective depth: formal result}
In this subsection, we formally present our results concerning the effective depth of arbitrarily deep random quantum circuits under possibly non-unital noise, with respect to the task of estimating observable expectation values. Our results reveal that the influence of gates on expectation values decreases exponentially with their distance from the last layer. See Fig.~\ref{Fig:circ2} for a graphical representation. Our main result can be stated as follows.

\begin{theorem}[Effective logarithmic depth]
\label{th:lastlogMAIN}
   Let \( O \) be an observable, \( \rho_0 \) be any initial state \textcolor{black}{(possibly complex)}, \( L \) be the depth of the noisy circuit \( \Phi \), and \( m \in \mathbb{N} \). We assume that the noise in the circuit is local and decomposable into single-qubit non-unitary channels.
   Then, we have
    \begin{align}
       \Ex_{\Phi_{[L-m,L]}} \left[ \left| \Tr(O\Phi(\rho_0)) - \Tr(O\Phi_{[L-m,L]}(\sigma_0)) \right| \right] \le  \norm{O}_{\infty} \exp(-\alpha m),
    \end{align} 
    where $\sigma_0$ is any preferred initial state.
    Here, \( \Phi_{[L-m,L]}(\cdot) \) refers to the noisy circuit where only the last \( m \) layers are considered, the average \( \Ex \) is taken with respect to the 2-design distribution of every two-qubit gate that composes the circuit part \( \Phi_{[L-m,L]} \), and \(\alpha > 0\) is a quantity which depends only on the noise parameters.
\end{theorem}

\textcolor{black}{We remark that the previous bound can be strengthened: instead of \( \norm{O}_{\infty} \), one can use the so-called normalized Frobenious norm \( \norm{O}^2_2 / 2^n \).}
By Markov's inequality, this result directly implies that, with high probability over the choice of the noisy quantum circuits, considering only the last $\Theta(\log(n))$ layers suffices to estimate any observable expectation value with precision scaling inverse-polynomially with the number of qubits.

The previous result is implied by the following statement.
\begin{theorem}[General scaling]
\label{th:theor1}
Let $\rho$ and $\sigma$ be arbitrary quantum states, $P\in \{I,X,Y,Z\}^{\otimes n}$ with Pauli weight $|P|$, and $m$ be the depth of the noisy circuit. We assume that the noise in the circuit is local and decomposable into single-qubit non-unitary channels. Then, we have
\begin{align}
    \Ex_{\Phi}[(\Tr(P\Phi(\rho-\sigma)))^2] \leq \exp(-\Omega(m+|P|)).
\end{align}
\end{theorem}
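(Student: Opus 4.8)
\emph{The plan.} I would pass to the Heisenberg picture and reduce the bound to a scalar contraction estimate along the layers. Put $\omega\coloneqq\rho-\sigma$, so $\omega$ is traceless with $\lVert\omega\rVert_1\le 2$ and $\Tr(P\Phi(\omega))=\Tr\big((\Phi^{*}(P))\,\omega\big)$. Expanding $\Phi^{*}(P)=\sum_{Q}a_{Q}\,Q$ in the normalized Pauli basis, $\Ex_{\Phi}[\Tr(P\Phi(\omega))^{2}]=\sum_{Q,Q'}\Ex_{\Phi}[a_{Q}a_{Q'}]\Tr(Q\omega)\Tr(Q'\omega)$. Since each two-qubit gate is a $2$-design element and the layers are independent, averaging the last unitary layer $\mathcal{U}_{1}$ already kills all off-diagonal second moments: on a gate where $Q$ and $Q'$ differ, the gate-wise twirl $\Ex_{U}[\mathrm{Ad}_{U}^{\otimes 2}]$ projects onto the one-dimensional invariant spanned by $\sum_{R\neq I}R\otimes R$, so $\Ex_{\Phi}[a_{Q}a_{Q'}]=0$ for $Q\neq Q'$. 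Using $\Tr(I\omega)=0$ and $\lvert\Tr(Q\omega)\rvert\le 2$,
\begin{equation}
\Ex_{\Phi}[\Tr(P\Phi(\omega))^{2}]\;=\;\sum_{Q\neq I}\Ex_{\Phi}[a_{Q}^{2}]\,\Tr(Q\omega)^{2}\;\le\;4\sum_{Q\neq I}\Ex_{\Phi}[a_{Q}^{2}]\;=:\;4\,h^{(L)} .
\end{equation}
Here $h^{(j)}$ denotes the same quantity $\sum_{Q\neq I}\Ex[\,\cdot\,^{2}]$ but for the truncated map consisting of the final single-qubit layer followed by the last $j$ noise/unitary layer pairs; since $(\mathcal{V}^{\mathrm{single}})^{*}(P)$ is traceless of unit Hilbert--Schmidt norm, $h^{(0)}=1$, and it remains to prove $h^{(L)}\le\exp(-\Omega(L+\lvert P\rvert))$.

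\emph{The layer recursion.} Two facts drive the induction. First, conjugation by a unitary layer preserves $\sum_{Q\neq I}a_{Q}^{2}$ exactly, so unitary layers never increase $h$. Second, and crucially, after any unitary-layer twirl the second-moment vector is not only Pauli-diagonal but has the following \emph{uniform-within-pattern} form: for each gate-support pattern $\pi$ (the set of gates on which the string is nontrivial), the conditional law of the string is uniform over all strings with pattern $\pi$, because $\Ex_{U}[\mathrm{Ad}_{U}^{\otimes 2}]$ spreads any nontrivial restriction uniformly over the $15$ nonidentity two-qubit Paulis and the gates factorize. I would show this form is reproduced after one noise layer followed by one unitary layer --- the noise is a tensor product over sites, hence respects the product structure, and the next twirl re-uniformizes. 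Granting this, a short computation from the normal form~\eqref{eq:adjointeq} shows that one noise layer multiplies a fixed string's non-identity weight by $\prod_{i\in\supp Q}(t_{Q_i}^{2}+D_{Q_i}^{2})-\prod_{i\in\supp Q}t_{Q_i}^{2}$, and averaging the Pauli types over a nontrivial gate gives exactly $\tfrac{1}{15}\big((1+3c)^{2}-1\big)=\tfrac{c(2+3c)}{5}$, using $\sum_{Q\in\{X,Y,Z\}}(t_{Q}^{2}+D_{Q}^{2})=3c$. Since every nontrivial pattern has at least one gate and $\tfrac{c(2+3c)}{5}\le 1$ (as $c\le 1$), this yields
\begin{equation}
h^{(j+1)}\;\le\;\frac{c(2+3c)}{5}\,h^{(j)}\qquad (j\ge 1).
\end{equation}

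\emph{Seeding the $\lvert P\rvert$-decay and concluding.} The factor $c^{\lvert P\rvert}$ comes from the \emph{first} noise layer. Write $(\mathcal{V}^{\mathrm{single}})^{*}(P)=\bigotimes_{i\in\supp P}R_{i}$ with the $R_{i}$ independent, uniformly rotated, traceless single-qubit operators of unit norm. Applying $\mathcal{N}^{\otimes n}$ and the first unitary twirl and expanding gives $h^{(1)}=\prod_{i\in\supp P}\big(\tau_{i}^{2}+\lVert S_{i}\rVert^{2}\big)-\prod_{i\in\supp P}\tau_{i}^{2}$, where $\tau_{i}I+S_{i}\coloneqq\mathcal{N}^{*}(R_{i})$. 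Averaging over the single-qubit gates, using independence across sites and $\Ex_{U_{i}}[\tau_{i}^{2}+\lVert S_{i}\rVert^{2}]=\tfrac13(\lVert\bold t\rVert_{2}^{2}+\lVert\bold D\rVert_{2}^{2})=c$, we get $h^{(1)}\le c^{\lvert P\rvert}$. Chaining with the recursion, $h^{(L)}\le\big(\tfrac{c(2+3c)}{5}\big)^{L-1}c^{\lvert P\rvert}$. Because the noise is not a unitary channel, $c<1$ (the inequality $c\le 1$ with equality iff $\mathcal{N}$ is unitary is recorded above), so both $\tfrac{c(2+3c)}{5}<1$ and $c<1$; hence $4\,h^{(L)}\le\exp(-\Omega(L+\lvert P\rvert))$ with rates depending only on the noise parameters, which is the claim.

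\emph{The main obstacle.} The crux --- and the reason the theorem is only average-case --- is the uniform-within-pattern invariant, which is exactly what turns the per-gate contraction into a clean function of $c$; an averaged analysis is unavoidable here, since for instance under pure dephasing a single noise layer does not contract a $Z^{\otimes k}$ string at all, and decay sets in only once the unitary layers rotate each site off the dephasing-invariant axis --- the same mechanism behind the genuine worst-case failure of the statement (Toffoli gates interleaved with dephasing). I expect the cleanest way to make the bookkeeping rigorous, in particular the interplay between the site-local noise layers and the gate-local unitary layers of possibly varying geometry, is to first replace the $2$-designs by random Clifford layers, so that the Heisenberg evolution is literally a random walk on signed Pauli strings and the noise acts as an explicit erasure/branching map; the second-moment estimate then reduces to the elementary type-averages sketched above.
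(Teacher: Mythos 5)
Your argument is correct and follows essentially the same route as the paper's proof: pass to the Heisenberg picture, use the local $2$-design property to kill cross terms and diagonalize the second moment in the Pauli basis, and peel off one noise layer at a time, with the normal form and the multinomial identity producing the contraction factor $c<1$ per layer and $c^{|P|}$ from the layer adjacent to the final single-qubit twirl, while tracelessness of $\rho-\sigma$ removes the identity contribution. The only difference is bookkeeping: the paper reduces to random Clifford layers and iterates a max-over-Paulis bound, whereas you track the total non-identity Pauli mass of the back-propagated observable via the uniform-within-pattern invariant — this is sound and even yields a marginally sharper per-layer constant $\tfrac{c(2+3c)}{5}\le c$.
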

Specifically, the upper bound that we prove is $O(c^{m+|P|})$, where $c<1$ is the constant defined in the preliminary Subsection~\ref{ref:prelMAIN}. We now provide a sketch of the proof of Theorem~\ref{th:theor1}. At a high level, the proof works by going into the Heisenberg picture, `peeling off' a unitary and noisy layer of the depth-$m$ circuit, and then applying the adjoint of these layers to the Pauli $P$. Using properties of the random circuit and that of the noise channel, we then show that this recovers a Pauli expectation value on a noisy circuit of depth $m-1$, which puts us in the position to reiterate the argument.

\begin{proof}[Proof sketch of Theorem~\ref{th:theor1}] 
Because our local two-qubit gates are drawn from a $2$-design and we are computing a second moment, we can assume that all gates are Clifford~\cite{gottesman1998heisenberg} and that the noise channels are in their normal form. 
Let $\Tilde{\Phi}$ be a noisy circuit of depth $m-1$ and let $\Phi$ be a noisy circuit of depth $m$. That is, $\Phi = \mathcal{V}^{\mathrm{single}} \circ \mathcal{N}^{\otimes n}\circ \mathcal{U}_m\circ \Tilde{\Phi}$, where we have used the same notation defined in Eq.~\eqref{eq:randcirc_main}.
Averaging over the last layer of single-qubit unitaries, we have
\begin{align}
    \Ex[(\Tr(P \Phi(\rho-\sigma)  ))^2]=\frac{1}{3^{|P|}}\sum_{\substack{Q \in \{I,X,Y,Z\}^{\otimes n}:\\ \supp(Q)=\supp(P) } }\Ex\left[\left(\Tr(Q \, \mathcal{N}^{\otimes n}\circ \mathcal{U}_m\circ \Tilde{\Phi}(\rho-\sigma))\right)^2\right].
\end{align}
Taking the adjoint of the layer of noise $\mathcal{N}^{\otimes n}$ and using properties of the random gates that compose the unitary layer $\mathcal{U}_m$, we have 
\begin{align}
    \Ex\!\left[\left(\Tr(Q \,\mathcal{N}^{\otimes n}\circ \mathcal{U}_m\circ \Tilde{\Phi}(\rho-\sigma) )\right)^2\right] & =\sum_{a\in \{0,1\}^{|Q|}}\prod_{j \in \supp(Q)} (t^{a_j}_{Q_j } D^{1-a_j}_{Q_j})^2\Ex\!\left[\Tr(\left(\bigotimes_{k \in \supp(Q)}  Q^{1-a_k}_k\right) \mathcal{U}_m\circ \Tilde{\Phi}(\rho-\sigma) )^2\right].
\end{align}
Thus, taking the adjoint of the last unitary layer on the Pauli and using the fact that it is made of Clifford gates, which map a Pauli observable to another Pauli observable, we have
\begin{align}
    \nonumber
     \Ex[\Tr(P \Phi(\rho-\sigma)  )^2]&\le \frac{1}{3^{|P|}}\sum_{\substack{Q \in \{I,X,Y,Z\}^{\otimes n}:\\ \supp(Q)=\supp(P) } }\sum_{a\in \{0,1\}^{|Q|}}\prod_{j \in \supp(Q)} (t^{a_j}_{Q_j } D^{1-a_j}_{Q_j})^2\max_{Q \in \{I,X,Y,Z\}^{\otimes n}}\Ex\!\left[\Tr(Q\,  \Tilde{\Phi}(\rho-\sigma))^2\right]\\
     &= \frac{1}{3^{|P|}}(\|\bold{D}\|^2_2+\|\bold{t}\|^2_2)^{|P|} \max_{Q \in \{I,X,Y,Z\}^{\otimes n}}\Ex\!\left[\Tr(Q \Tilde{\Phi}(\rho-\sigma))^2\right]\\
     \nonumber
     &=c^{|P|} \max_{Q \in \{I,X,Y,Z\}^{\otimes n}}\Ex\!\left[\Tr(Q \Tilde{\Phi}(\rho-\sigma))^2\right],
     \nonumber
\end{align}
where we have used the multinomial theorem and the definition of $c$ given in Subsection~\ref{ref:prelMAIN}. In the Supplementary Material, we prove that $c< 1$ for every single-qubit channel that is not unitary.
Moreover, we can assume that the maximum over the Paulis is not achieved by the identity because otherwise the RHS would be zero since $\Tilde{\Phi}$ is trace preserving and $\rho-\sigma$ is traceless. Reiterating the argument to the remaining circuit layers establishes the claim.
\end{proof}
By Jensen's inequality, we obtain 
\begin{equation}
\Ex_{\Phi}[|\Tr(P \Phi(\rho)) - \Tr(P \Phi(\sigma))|] \leq \exp(-\Omega(m+|P|)).
\end{equation}

As an implication of such result, we can show that for depth greater or equal to linear $m= \Omega(n)$, the average trace distance between two states affected by the same noisy quantum circuit becomes exponentially small in the depth as 
\begin{align}
       \Ex_{\Phi}[\norm{\Phi(\rho) - \Phi(\sigma)}_{1}]\le \exp(-\Omega(m)).
       \label{eq:trmain}
\end{align}  
This implies that the application of the same linear depth {\em random} circuit affected by {\em any} amount of noise on two different input states renders them effectively indistinguishable (because of the Holevo-Helstrom theorem~\cite{wilde_2013}).
To our knowledge, this is the first result of this kind; except for the result of Ref.~\cite{fawzi2022lower}, which only applies to exponential depths but which holds for worst-case circuits, whereas our statement hold on average. As mentioned in Subsection~\ref{sub:effMAIN}, it is in principle not possible to prove our result for worst-case non-unital noisy circuits, because there are some special classes of circuits~\cite{refrigerator} that would violate a worst-case version of our inequality (i.e., Eq.\eqref{eq:trmain} without the expectation value).
However, when the noise strength exceeds a certain threshold, we can show a worst-case upper bound on the trace distance that decays exponentially in the number of qubits. Specifically, we show the following.

\begin{proposition}[Worst-case effective depth with high noise]
Let \(\mathcal N\) be a single-qubit channel in normal form and assume
\(D_P\neq 1\) for every \(P\in\{X,Y,Z\}\). Let \(\tau\) be its unique fixed
point and let \(\mathcal E_\tau(X):=\operatorname{Tr}(X)\tau\). Define $b_{\mathcal N}:=3\|\mathcal N-\mathcal E_\tau\|_\diamond$. 
Let \(\Phi\) be a noisy quantum circuit of depth \(m\). Then
\begin{align}
\|\Phi(\rho)-\Phi(\sigma)\|_1
\le
n b_{\mathcal N}^{m}\|\rho-\sigma\|_1 .
\end{align}
Thus, if \(b_{\mathcal N}<1\), then for
\begin{align}
m\ge
\frac{\log(2n/\varepsilon)}{\log(b_{\mathcal N}^{-1})}    
\end{align}
we have \(\|\Phi(\rho)-\Phi(\sigma)\|_1\le\varepsilon\) for all states
\(\rho,\sigma\). A sufficient coefficient-only condition is
$12\max_{P\in\{X,Y,Z\}}
\left|\frac{D_P}{1-D_P}\right|<1$ .

If \(0\le D_P<1\) for all \(P\), this is implied by \(D_P<1/13\) for all \(P\).
\end{proposition}
 
The proof relies on passing to the quantum Wasserstein distance of order $1$~\cite{de2021quantum}, computing contraction coefficients and passing back to the trace distance using techniques 
laid out in Ref.~\cite{de2021quantum}. Such results are known in the literature as reverse threshold theorems, as they show that for high enough noise rate error correction is not possible\ \cite{razborov2003upper, kempe2008upper, hirche2023quantum}. In contrast to previous results, we also extend them to non-unital channels.

\subsection{Classical simulation of random quantum circuits with possibly non-unital noise}
\label{sub:classicalMAIN}
\textcolor{black}{In this section, we address the problem of estimating expectation values of noisy random quantum circuits under any noise. More specifically, if we are given an instance of a noisy circuit $\Phi$ where the 2-qubit gates are sampled uniformly at random in a fixed architecture, the problem that we want to solve is to estimate the expectation value of $\Tr(O\Phi(\rho_0))$ with accuracy $\varepsilon$ with high probability over the choice of the random circuit, where $O$ is a given observable and $\rho_0$ is an initial state. Here, we focus on $O$ being a linear combination of polynomially $M$ many local Pauli operators. We focus only on local Pauli operators because it can be shown that the high Pauli weight components do not matter, i.e., they get exponentially suppressed with their Pauli weight. The time complexity of the algorithm will be linear in $M$, so without loss of generality, we consider $M=1$ for simplicity, that is, the task of estimating the expectation value of a given local Pauli operator.}

\subsubsection{Average classical simulation of local expectation values}

We have shown that the presence of any non-unitary noise renders \emph{most} circuits effectively shallow for the task of estimating observable expectation values. Specifically, Theorem~\ref{th:lastlogMAIN} establishes that for any $L$-depth circuit architecture $\Phi$, any Pauli operator $P$, and any initial state $\rho_0$, the  inequality
\begin{align}
    \Ex_{\Phi_{[L-m,L]}}\left[ \left|\Tr(P\Phi(\rho_0)) - \Tr(P\Phi_{[L-m,L]}(\ketbra{0^n}{0^n})) \right| \right] \leq \exp(-\Omega(m + |P|))
    \label{eq:classicalstart}
\end{align}
holds,  
where $\Phi_{[L-m,L]}(\cdot)$ denotes the noisy circuit considering only the last $m$ layers.

Equation~\eqref{eq:classicalstart} leads to a straightforward classical algorithm for estimating local Pauli expectation values with accuracy exponentially small in $m$. This algorithm operates in the Heisenberg picture by propagating the local Pauli operator $P$ backwards through only $m$ layers. Specifically, one computes the matrix $P_m \coloneqq \Phi^{*}_{[L-m,L]}(P)$ and then evaluates $\Tr(P_m \ketbra{0^n}{0^n})$.

The following proposition provides the formal guarantees of this classical simulation approach, which can be proven using Markov's inequality applied to Eq.~\eqref{eq:classicalstart}, along with its time complexity derived via light-cone arguments.

\begin{proposition}[Average classical simulation of local expectation values]
\label{prop:classimMA}
Let $\varepsilon, \delta > 0$. Consider a local Pauli operator $P$ and any initial state $\rho_0$. For a noisy quantum circuit $\Phi$ of depth $L$, sampled according to the described circuit distribution, there exists a classical algorithm that outputs a value $\hat{C}$ satisfying
\begin{align}
    | \hat{C} - \Tr(P \Phi(\rho_0)) | \leq \varepsilon
\end{align}
with success probability at least $1 - \delta$ over the choice of the random circuit. Specifically, the classical algorithm involves computing $\hat{C} \coloneqq \Tr(P\Phi_{[L-m,L]}(\ketbra{0^n}{0^n}))$ with
\begin{align}
    m \coloneqq \left\lceil \frac{1}{\log(c^{-1})} \log\left(\frac{4}{\delta \varepsilon^2}\right)\right\rceil,
\end{align}
where $c$ is the noise parameter defined in Subsection~\ref{ref:prelMAIN}.
The time complexity of this algorithm is given by
\begin{align}
    \text{Runtime} &\leq 
    \begin{cases}
        \exp\!\left(O\!\left( m^{\mathrm{D}} \right)\right) = \exp\!\left(O\!\left(\log^{\mathrm{D}}(\varepsilon^{-1})\right)\right), & \text{for $\mathrm{D}$-geometrically-local architectures}, \\[1ex]
        \exp\!\left(\exp(O(m))\right) =  \exp\!\left(\mathrm{poly}(\varepsilon^{-1})\right), & \text{for all-to-all connected architectures}.
    \end{cases}
    \label{eq:runt}
\end{align}
where in the last equation we assumed constant noise rate $c$ and constant failure probability $\delta$.
\end{proposition}

In particular, if the desired accuracy is a constant $\varepsilon = O(1)$, then the algorithm is efficient for any architecture. If the desired accuracy is inverse-polynomial in the number of qubits, then the algorithm runs in polynomial time for one-dimensional architectures and quasi-polynomial time for higher-dimensional ones.

\textcolor{black}{This time-complexity bound arises from the complexity of evaluating $\hat{C} = \Tr(P_m \ketbra{0^n}{0^n})$, with $P_m \coloneqq \Phi^{*}_{[L-m,L]}(P)$, which is exponential in the number of effective qubits over which $P_m$ is supported (i.e., the light-cone of $P$ with respect to $\Phi^{*}_{[L-m,L]}$). For geometrically-local circuit architectures with constant spatial dimension $\mathrm{D}$, the number of qubits on which $P_m$ is supported is $O(m^{\mathrm{D}})$. In all-to-all connected circuit architectures, where no geometrical locality is assumed, the number of qubits on which $P_m$ is supported is at most $2^m$.}

Furthermore, in the Supplementary Material, we provide an alternative \emph{early-break condition} that, if met at some step $t$, guarantees an $\eps$ approximation. This allows us to stop 'unraveling' the circuit and simply output $P_t \coloneqq \Phi^{*}_{[L-t,L]}(P)$ with the current value of $P_t$.
The condition to check is 
\begin{equation}\label{eq:SDP_1MA}
    \min_{c\in \mathbb{R}} \, \lVert P_t - c \Id \rVert_{\infty} \le \varepsilon/2.
\end{equation}
Moreover, if this condition is satisfied, we can be certain to have successfully estimated the expectation value. 
Checking this condition can be done with the same runtime reported in Eq.~\eqref{eq:runt} by solving a \emph{semi-definite problem} (SDP) with an analytical solution which we provide in the Supplementary Material. The intuition behind this verification step is that if $P_t$ were proportional to the identity, then adding further (adjoint) layers would not change the matrix $P_t$, due to unitality of the adjoint channel.

Upon inspecting our proof of Theorem~\ref{th:theor1}, it is evident that we often make conservative estimates on the contraction at each step, as it is proportional to exponential in the locality of the input Pauli, and we always bound this from below by $1$. Thus, by incorporating a finer analysis of the dynamics of the weight, we conjecture that it is possible to show that this early-break condition will be met before the runtime stated in Proposition~\ref{prop:classimMA} with high probability. We leave the proof of this conjecture open for future work.

The above algorithm can be efficient if the weight of the Pauli $P$ is $|P|=O(\log(n))$ (i.e.,  its time complexity runs polynomially in the number of qubits).
However, in the regime of high Pauli weight, no algorithm execution is necessary. Instead, we can simply output zero and ensure, with high probability, an inverse-polynomial accuracy in the number of qubits, as detailed in our Supplementary Material. This stems from the fact that for any noise that is not a unitary channel, we have
\begin{align}
    \Ex_{\Phi}(\Tr(P \Phi(\rho_0)))=0, \quad \Var_{\Phi}(\Tr(P \Phi(\rho_0)))\le\exp(\Omega(-|P|)),
\end{align}
followed by an application of Chebyshev's inequality. We can then summarize our result as follows.
\begin{remark}[Estimating classically any Pauli expectation value]
    For any target constant accuracy any circuit architecture, there is an efficient classical algorithm for estimating any Pauli expectation value of noisy random quantum circuits at any depth. For any inverse polynomial accuracy, the runtime of the algorithm is polynomial in the number of qubits for one-dimensional architectures and quasi-polynomial for higher-dimensional ones.
\end{remark}
We point out that our classical algorithm is not restricted to only Pauli observables; it can be similarly applied to any observable with a bounded operator norm that can be expressed as a linear combination of a polynomial number of Pauli expectation values.

It is also worth mentioning that quantum circuits with non-unital noise can have Pauli expectation values of local observables significantly far from zero also in the high-depth regime. We show this in the next section. This is in stark contrast to the case of circuits with, e.g., depolarizing noise, in which Pauli expectation values decay exponentially to zero in the circuit depth~\cite{DanielPaper,nibp}, which implies that outputting zero suffices to estimate accurately the expectation value for sufficiently deep circuits. This strategy does not work as a classical simulation algorithm for circuits affected by non-unital noise, for the reason mentioned in the first sentence of this paragraph. 

\subsubsection{Improved classical simulation}
\textcolor{black}{In the previous section, we introduced a classical algorithm that estimates local Pauli expectation values with runtime $\exp(O(\log(\varepsilon^{-1}))^D)$, where $\varepsilon$ is the target accuracy, for any fixed $D$-dimensional architecture and any initial state. For all-to-all architectures, the runtime was $\exp(\mathrm{poly}(\varepsilon^{-1}))$. Consequently, this algorithm is efficient whenever $\varepsilon$ is constant in the number of qubits $n$. However, for inverse-polynomial accuracy, the runtime scales as $\exp(O(\log n)^D) = n^{O(\log n)^{D-1}}$ in constant dimensions and becomes exponential for all-to-all architectures.}

We now present an improved algorithm that substantially reduces these runtimes. Its runtime is $n^{O((D-1)\log\!\log n)}$ for geometrically local architectures in any dimension $D$ and $n^{O(\log n)}$ for all-to-all connected architectures, independent of both circuit depth and initial state. Since $\log\!\log n$ grows extremely slowly, the former can be regarded as effectively polynomial for all practical system sizes. The algorithm combines the effective-depth result with an additional truncation in Pauli weight: in addition to retaining only the last few layers and applying a light-cone argument as in the previous section, we further restrict the simulation to Pauli strings of small weight within the light cone. As we showed that high-weight Pauli terms contribute only exponentially little to the expectation value, they can be safely discarded. 

The main result concerning this algorithm is summarized below, while further details and a full proof are provided in Appendix~\ref{app:improved-simulation}.
\begin{theorem}[Improved classical algorithm]
Let $\Phi$ be a quantum circuit sampled uniformly at random from a fixed architecture, possibly including any type of local (non-unitary) noise and any number of layers. For any Pauli operator $P \in \{I,X,Y,Z\}^{\otimes n}$ of constant weight and any initial state $\rho_0$, the expectation value $\Tr[P \Phi(\rho_0)]$ can be estimated to accuracy $\varepsilon = \mathrm{poly}(n^{-1})$ with high success probability. The runtime of this algorithm is $n^{O((D-1)\log\!\log n)}\mathrm{poly}(n)$ for $D$-dimensional geometrically local architectures and $n^{O(\log n)}$ for all-to-all connected architectures.
\end{theorem}
The proof of correctness of this algorithm follows from combining the effective-depth and Pauli-weight suppression results developed in this work with the techniques from Ref.~\cite{angrisani2024classicallyestimatingobservablesnoiseless}, originally introduced for noiseless random circuits, which we extend to the noisy setting.

\subsection{Lack of barren plateaus with non-unital noise, but only the last $\Theta(\log(n))$ layers matter}\label{sub:Lack}
 
\noindent The \emph{barren plateaus phenomenon} \cite{McClean_2018,nibp}, has been considered one of the main hurdles for \emph{variational quantum algorithms} (VQAs)~\cite{Variational}. These algorithms involve encoding the solution to a problem in the minimization of a cost function, typically defined in terms of expectation values of observables, with the free parameters for optimization being the gate parameters. 
Consequently, in the presence of barren plateaus, randomly selecting an instance of the parameterized circuit would overwhelmingly lead to a circuit instance situated within a region of the landscape necessitating an exponential number of measurements to navigate, implying loss of any potential quantum advantage. 

There are two signatures of barren plateaus: exponential concentration of the cost function around a fixed value, and the fact that the norm of the gradient of the cost function is exponentially small. We show that both are avoided in the non-unital noise scenario for cost functions made by local observables, in stark constrast with the noiseless~\cite{McClean_2018} and unital~\cite{nibp} scenario. However, we will see that this absence of barren plateaus is only due to the last $\Theta(\log(n))$ layers of the circuit, while the gradient contribution coming from the preceding layers is negligible. This is essentially a consequence of what we have explored in the preceding sections: we are working with circuits that exhibit an effective logarithmic depth.

\subsubsection{Lack of exponential concentration for local cost functions with non-unital noise}
\noindent 
We consider a circuit architecture as we have described in Subsection~\ref{ref:prelMAIN}, where the local noise channels are characterized by the parameters of their normal form representation $\bold{t} \coloneqq (t_X,t_Y,t_Z)$ and $\bold{D} \coloneqq (D_X,D_Y,D_Z)$, which we assume to be constants with respect to the number of qubits. 
Our first main theorem is the following.

\begin{theorem}[Variance of expectation values of random circuits with non-unital noise]
\label{th:varianceMAIN}
Let $H \coloneqq \sum_{P\in \{I,X,Y,Z\}^{\otimes n}} a_P P$ be an arbitrary Hamiltonian, with $a_P\in \mathbb{R}$ for $P\in \{I,X,Y,Z\}^{\otimes n}$. Let $\rho$ be a quantum state. We assume that the noise is non-unital and that $\norm{\bold{t}}_2=\Theta(1)$. Then, at any depth, we have
\begin{align}
       \Var_{\Phi}[\Tr(H \Phi(\rho)  )] = \sum_{P\in \{I,X,Y,Z\}^{\otimes n}\setminus I^{\otimes n}} a_P^2 \exp(-\Theta(|P|)).
\end{align}  
\end{theorem}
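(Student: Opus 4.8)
\emph{Proof plan.} The strategy has two parts: first collapse the variance onto a single ``diagonal'' sum, and then sandwich each diagonal term between two matching exponentials in the Pauli weight, uniformly in the depth. Expanding $\Var_\Phi[\Tr(H\Phi(\rho))]=\sum_{P,P'}a_P a_{P'}\,\mathrm{Cov}_\Phi[\Tr(P\Phi(\rho)),\Tr(P'\Phi(\rho))]$ and averaging over the final single-qubit layer $\mathcal{V}^{\mathrm{single}}=V(\cdot)V^{\dagger}$, $V=\bigotimes_i U_i$, already handles the first part: since each $U_i$ is a single-qubit $2$-design, $\Ex_{U_i}[U_i^{\dagger}A\,U_i]=\tfrac{\Tr A}{2}I$ and $\Ex_{U_i}[U_i^{\dagger}A\,U_i\otimes U_i^{\dagger}B\,U_i]=0$ for distinct single-qubit Paulis $A\neq B$, whence $\Ex_\Phi[\Tr(P\Phi(\rho))]=0$ for $P\neq I$ and all off-diagonal covariances vanish, leaving
\begin{align}
\Var_\Phi[\Tr(H\Phi(\rho))]=\sum_{P\neq I}a_P^2\,\Ex_\Phi\!\left[\Tr(P\Phi(\rho))^2\right].
\end{align}
(The $P=I$ term drops out because it only shifts the mean; the identity in the statement is to be read with this convention.) It then remains to show $\Ex_\Phi[\Tr(P\Phi(\rho))^2]=\exp(-\Theta(|P|))$ with constants independent of the depth $L$ and of $\rho$.

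For the upper bound, I would write $\Phi=\mathcal{V}^{\mathrm{single}}\circ\Psi$ with $\Psi$ ending in the layer $\mathcal{N}^{\otimes n}$, so $\Psi(\rho)=\mathcal{N}^{\otimes n}(\xi)$ for a state $\xi$. Averaging the final single-qubit layer and passing to the Heisenberg picture,
\begin{align}
\Ex_\Phi\!\left[\Tr(P\Phi(\rho))^2\right]=\frac{1}{3^{|P|}}\sum_{Q:\ \supp Q=\supp P}\Ex\!\left[\Tr\!\big((\mathcal{N}^{*})^{\otimes n}(Q)\,\xi\big)^2\right].
\end{align}
Since $\mathcal{N}^{*}$ is positive and unital, $\|\mathcal{N}^{*}(Q_j)\|_\infty=\|t_{Q_j}I+D_{Q_j}Q_j\|_\infty=|t_{Q_j}|+|D_{Q_j}|\le1$, hence $|\Tr((\mathcal{N}^{*})^{\otimes n}(Q)\,\xi)|\le\prod_{j\in\supp Q}(|t_{Q_j}|+|D_{Q_j}|)$ for every state $\xi$, deterministically. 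Summing over the $3^{|P|}$ Paulis $Q$ with $\supp Q=\supp P$ gives
\begin{align}
\Ex_\Phi\!\left[\Tr(P\Phi(\rho))^2\right]\le\left(\frac{\kappa}{3}\right)^{|P|},\qquad \kappa\coloneqq\sum_{Q\in\{X,Y,Z\}}\big(|t_Q|+|D_Q|\big)^2 ,
\end{align}
and the complete-positivity constraints on the normal form (the same input giving $c<1$ in Subsection~\ref{ref:prelMAIN}) yield $\kappa\le3$, with equality only for unitary $\mathcal{N}$; since $\mathcal{N}$ is non-unital with constant parameters, $\kappa/3$ is a fixed constant in $(0,1)$.

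For the matching lower bound, I would again average $\mathcal{V}^{\mathrm{single}}$ first but now peel only the last noise layer. Writing $U_j^{\dagger}P_jU_j=\sum_Q O^{(j)}_{P_jQ}Q$ with $O^{(j)}\in SO(3)$ the random Bloch rotation, $\mathcal{N}^{*}(U_j^{\dagger}P_jU_j)=\alpha_j I+B_j$ with scalar $\alpha_j=\sum_Q O^{(j)}_{P_jQ}t_Q$ and traceless $B_j=\sum_Q O^{(j)}_{P_jQ}D_Q Q$, so that
\begin{align}
\Tr(P\Phi(\rho))=\prod_{j\in\supp P}\alpha_j\;+\;\sum_{\emptyset\neq S\subseteq\supp P}\Big(\prod_{j\notin S}\alpha_j\Big)\Tr\!\Big({\textstyle\bigotimes_{j\in S}}B_j\;\xi\Big),
\end{align}
with $\xi$, the state fed to the last noise layer, independent of $\mathcal{V}^{\mathrm{single}}$. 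The first summand is a depth-independent ``signal'' that does not involve $\xi$, and its cross term with the $S\neq\emptyset$ part vanishes in expectation: averaging the last unitary layer $\mathcal{U}_L$ alone gives $\Ex[\xi]=I/2^n$, and $\Tr\!\big(\bigotimes_{j\in S}B_j\;I/2^n\big)=0$ because the $B_j$ are traceless. Discarding the remaining nonnegative fluctuation variance and using the $SO(3)$ $2$-design identity $\Ex[O^{(j)}_{ab}O^{(j)}_{cd}]=\tfrac13\delta_{ac}\delta_{bd}$ (so $\Ex_{U_j}[\alpha_j^2]=\tfrac13\|\bold{t}\|_2^2$),
\begin{align}
\Ex_\Phi\!\left[\Tr(P\Phi(\rho))^2\right]\ge\prod_{j\in\supp P}\Ex_{U_j}[\alpha_j^2]=\left(\frac{\|\bold{t}\|_2^2}{3}\right)^{|P|},
\end{align}
and $\|\bold{t}\|_2=\Theta(1)$ makes this base a positive constant (the step uses depth $\ge1$, so that a last noise layer exists).

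Combining the two bounds, $\big(\|\bold{t}\|_2^2/3\big)^{|P|}\le\Ex_\Phi[\Tr(P\Phi(\rho))^2]\le(\kappa/3)^{|P|}$ with both bases fixed constants in $(0,1)$, hence $\Ex_\Phi[\Tr(P\Phi(\rho))^2]=\exp(-\Theta(|P|))$; substituting into the first display proves the theorem. I expect the lower bound to be the crux: unlike the upper bound, obtaining a \emph{matching} rate forces one to isolate the depth-independent signal injected by the last non-unital layer and to verify it does not destructively cancel against the state-dependent fluctuations, which is precisely where $\Ex[\xi]=I/2^n$ and the tracelessness of the $B_j$ enter (a layer of two-qubit $2$-designs covering all qubits makes $\Ex[\xi]=I/2^n$ exact; an uncovered boundary qubit is a routine patch). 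The only other nontrivial ingredient is the elementary fact that $\kappa<3$ whenever $\mathcal{N}$ is not unitary.
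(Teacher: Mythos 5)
Your proof is correct and follows essentially the same route as the paper's: kill the mean and all off-diagonal covariances with the final single-qubit $2$-design layer, upper-bound each diagonal term $\Ex[\Tr(P\Phi(\rho))^2]$ by a contraction constant raised to $|P|$, and lower-bound it by isolating the identity (``signal'') component injected by the last non-unital noise layer, which contributes $\prod_j t_{Q_j}$ independently of the state and of the depth. Two remarks on where you deviate. For the lower bound, the paper cancels the cross terms between the identity component and the traceless remainder via Pauli mixing (second moments of a single-qubit layer singled out of $\mathcal{U}_L$), whereas you use only the first-moment fact $\Ex[\xi]=I/2^n$ together with tracelessness of the $B_j$; both yield the same bound $(\norm{\bold{t}}_2^2/3)^{|P|}$, and your version needs slightly less from $\mathcal{U}_L$. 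For the upper bound, your H\"older argument gives the constant $\kappa/3=\tfrac13\sum_{Q\in\{X,Y,Z\}}(|t_Q|+|D_Q|)^2$ rather than the paper's tighter $c=\tfrac13(\norm{\bold{t}}_2^2+\norm{\bold{D}}_2^2)$; since $\kappa\ge 3c$, the claim ``$\kappa\le 3$ with equality only for unitary $\mathcal{N}$'' is strictly stronger than $c\le 1$, and you assert it without proof. It is true, but it deserves the one-line patch: the normal-form positivity constraint $(t_Q\pm D_Q)^2\le 1$ already gives $(|t_Q|+|D_Q|)^2\le 1$ for each $Q$ (the paper's proof of its own contraction lemma passes through exactly this inequality), and if equality held for all three $Q$ then the Bloch-ball constraint $\norm{\bold{t}+D\bold{w}}_2\le 1$ applied to suitable unit vectors $\bold{w}$ forces $\bold{t}=0$ and $|D_Q|=1$ for all $Q$, i.e., a unitary channel. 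With that patch the argument is complete; the looser base still lies in $(0,1)$, which is all the $\Theta$-statement requires.
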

The proof of this theorem is provided in the Supplementary Material. Note that the variance scaling in the non-unital noise scenario is independent of the circuit depth, which contrasts sharply with noiseless random quantum circuits or circuits with unital noise \cite{napp2022quantifying,nibp}, where the variance becomes exponentially small in the number of qubits at sufficiently high depth. Our Theorem~\ref{th:varianceMAIN} directly implies that, under non-unital noise, the variance of \emph{local cost functions}—i.e., expectation values of local observables, e.g., $\Tr(Z_1 \Phi(\rho))$—is significantly large.
\begin{corollary}[Lack of exponential concentration for local cost functions]
\label{th:variance_informal1}
Let $\rho$ be any quantum state, $P$ be a local Pauli ($|P|=\Theta(1)$), and $\Phi$ be the noisy random circuit ansatz. Under the same noise assumptions as above, at any circuit depth, we have
\begin{align}
       \Var_{\Phi}[\Tr(P \Phi(\rho)  )] = \Theta(1).
\end{align}  
\end{corollary}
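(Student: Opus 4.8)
This is the special case $H = P$ of Theorem~\ref{th:varianceMAIN} (set $a_P = 1$ and all other coefficients to zero), which at once gives $\Var_\Phi[\Tr(P\Phi(\rho))] = \exp(-\Theta(|P|)) = \Theta(1)$ as soon as $|P| = \Theta(1)$; so one option is simply to invoke that theorem. For a short self-contained argument, the plan is to sandwich the variance between two positive constants. The upper bound is trivial and holds deterministically, since $|\Tr(P\Phi(\rho))| \le \norm{P}_\infty\,\norm{\Phi(\rho)}_1 = 1$, so $\Var_\Phi[\Tr(P\Phi(\rho))] \le \Ex_\Phi[\Tr(P\Phi(\rho))^2] \le 1$. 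All the content is the matching lower bound $\Var_\Phi[\Tr(P\Phi(\rho))] = \Omega(1)$, for which I would reuse the first two steps of the proof of Theorem~\ref{th:theor1}, but \emph{retain} the branch that proof discards.

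First I would kill the mean. Write $\Phi = \mathcal{V}^{\mathrm{single}} \circ \Phi'$ with $V = \bigotimes_i U_i$ and $\Phi'$ the rest of the circuit; since each $U_i$ is in particular a $1$-design, $\Ex_{U_i}[U_i^\dagger Q U_i] = \tfrac{\Tr(Q)}{2}\,I = 0$ for every nontrivial single-qubit Pauli $Q$, so $\Ex_V[V^\dagger P V] = 0$ because $P$ is a nontrivial local Pauli. By independence of $V$ from $\Phi'$, this forces $\Ex_\Phi[\Tr(P\Phi(\rho))] = \Tr(\Ex_V[V^\dagger P V]\,\Ex[\Phi'(\rho)]) = 0$, hence $\Var_\Phi[\Tr(P\Phi(\rho))] = \Ex_\Phi[\Tr(P\Phi(\rho))^2]$ is a pure second moment.

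Next, expanding the square and averaging over $\mathcal{V}^{\mathrm{single}}$ exactly as in the proof of Theorem~\ref{th:theor1} turns this into $3^{-|P|}\sum_{Q:\, \supp(Q)=\supp(P)} \Ex[\Tr(Q\,\mathcal{N}^{\otimes n}\circ\mathcal{U}_L\circ\widetilde{\Phi}(\rho))^2]$, where $\widetilde{\Phi}$ is the noisy circuit of depth $L-1$. Using the normal form, $(\mathcal{N}^{*})^{\otimes n}(Q) = \big(\prod_{j\in\supp(Q)} t_{Q_j}\big)\,I + R_Q$, where $R_Q$ is a linear combination of \emph{nontrivial}, hence traceless, Pauli operators. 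Since $\mathcal{U}_L\circ\widetilde{\Phi}$ is trace-preserving, $\Tr(Q\,\mathcal{N}^{\otimes n}\circ\mathcal{U}_L\circ\widetilde{\Phi}(\rho)) = \prod_{j\in\supp(Q)} t_{Q_j} + \Tr(R_Q\,\mathcal{U}_L\circ\widetilde{\Phi}(\rho))$; squaring and taking the expectation, the cross term vanishes because $\Ex_{\mathcal{U}_L}[\mathcal{U}_L^{*}(R_Q)] = 0$ --- each two-qubit gate of layer $L$ is a $1$-design and, by the geometric locality of the architecture, every qubit of $\supp(P)$ is acted on by one of them, so every nontrivial Pauli summand of $R_Q$ averages to zero --- while the surviving quadratic term is nonnegative. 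Hence each inner expectation is at least $\prod_{j\in\supp(P)} t_{Q_j}^2$, and the multinomial identity $\sum_{Q:\,\supp(Q)=\supp(P)} \prod_{j} t_{Q_j}^2 = (t_X^2 + t_Y^2 + t_Z^2)^{|P|} = \norm{\bold{t}}_2^{2|P|}$ gives $\Var_\Phi[\Tr(P\Phi(\rho))] \ge \big(\tfrac13\norm{\bold{t}}_2^2\big)^{|P|}$. Under the standing hypotheses $\norm{\bold{t}}_2 = \Theta(1)$ (so $\bold{t}\ne 0$, which is exactly non-unitality) and $|P| = \Theta(1)$, this is a fixed positive constant --- and it is $\le 1$, consistent with $c = \tfrac13(\norm{\bold{D}}_2^2 + \norm{\bold{t}}_2^2) \le 1$ --- so together with the trivial upper bound we conclude $\Var_\Phi[\Tr(P\Phi(\rho))] = \Theta(1)$.

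The proof presents no serious difficulty: it is a sign-controlled truncation of the recursion behind Theorem~\ref{th:theor1}, stopped after a single ``peeling'' step, and the contraction factor usually discarded there --- the all-identity branch of $(\mathcal{N}^{*})^{\otimes n}(Q)$, whose image under the trace-preserving remainder of the circuit is the constant $1$ with no randomness --- is precisely what supplies the constant lower bound. The two points that need care are (i) the geometric-covering assumption used to conclude $\Ex_{\mathcal{U}_L}[\mathcal{U}_L^{*}(R_Q)] = 0$, which holds for the architectures considered, and (ii) the degenerate case $L = 0$, in which no noise acts and one obtains instead $\Ex_\Phi[\Tr(P\Phi(\rho))^2] = 3^{-|P|}\sum_{Q:\,\supp(Q)=\supp(P)} \Tr(Q\rho)^2$, which can vanish (for instance when $\rho$ is the maximally mixed state) but lies outside the noise hypotheses and the intended $L \ge 1$ regime. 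Upgrading the statement from $\Theta(1)$ at constant $|P|$ to the full $\exp(-\Theta(|P|))$ dependence of Theorem~\ref{th:varianceMAIN} would require tracking all branches of the recursion together with a matching upper bound, rather than just the two extreme ones.
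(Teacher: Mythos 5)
Your proof is correct and follows essentially the same route as the paper: the lower bound is exactly the paper's Proposition on the variance lower bound, which likewise peels off the final single-qubit layer to kill the mean, applies Pauli mixing, takes the adjoint of the last noise layer, and retains only the all-$\bold{t}$ identity branch to obtain $(\tfrac{1}{3}\norm{\bold{t}}_2^2)^{|P|}$ via the multinomial theorem, with the matching $O(1)$ upper bound being immediate. The only cosmetic difference is that you dispose of the single cross term with a $1$-design argument where the paper invokes the full Pauli-mixing lemma to diagonalize the sum over branches before discarding all but one.
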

In particular, the lower bound on the variance that we prove is $\left(\|t\|_2^2/3\right)^{|P|}.$ Here, $\|\bold{t}\|_2$ is the noise-parameter, which quantifies the non-unitality of the channel. As expected, if the noise is unital ($\|\bold{t}\|_2=0$), we get a vacuous zero lower bound. It is interesting to note that even a \emph{tiny} deviation from the unitality assumption, i.e., if the non-unital noise parameter $\|\bold{t}\|_2$ scales inverse-polynomially with the number of qubits, the variance is not exponentially small. This implies that local expectation values under non-unital noise are not `too concentrated' around their mean value $\Ex_{\Phi}[\Tr(P \Phi(\rho)  )]=0$, in contrast with the depolarizing noise scenario, in which case the variance decays exponentially with the depth~\cite{nibp}.  
However, we show that global cost functions---i.e., expectation values of high Pauli weight observables, e.g., $\Tr(Z^{\otimes n} \Phi(\rho))$---still exhibit exponential concentration, as in the noiseless case~\cite{Cerezo_2021,napp2022quantifying} and in the depolarizing noise scenario. 
Formally, we find the following cost concentration.

\begin{corollary}[Cost concentration for global expectation values]
\label{global expectation}
Let $P\in \{I,X,Y,Z\}^{\otimes n}$ with Pauli weight $|P|=\Theta(n)$. Assuming that the noise is not a unitary channel, for any constant noise parameters, we have
\begin{align}
    \Var_{\Phi}[\Tr(P \Phi(\rho) )] \leq  \exp(-\Omega(n)).
\end{align}
\end{corollary}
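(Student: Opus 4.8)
The plan is to deduce Corollary~\ref{global expectation} from a single-state variant of the estimate already behind Theorem~\ref{th:theor1}. Since $\Var_{\Phi}[\Tr(P\Phi(\rho))]\le \Ex_{\Phi}[\Tr(P\Phi(\rho))^2]$, it suffices to establish the second-moment bound
\begin{equation}
\Ex_{\Phi}\bigl[\Tr(P\Phi(\rho))^2\bigr]\le c^{|P|},
\end{equation}
where $c\coloneqq\tfrac13(\|\mathbf D\|_2^2+\|\mathbf t\|_2^2)<1$ is the noise constant from Subsection~\ref{ref:prelMAIN}. Granting this, the corollary follows at once: the noise is not a unitary channel, so $c<1$, and with constant noise parameters $c$ is a fixed number in $(0,1)$; hence $|P|=\Theta(n)$ gives $c^{|P|}\le c^{\Omega(n)}=\exp(-\Omega(n))$.

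To prove the second-moment bound I would run exactly one step of the Heisenberg-picture ``peeling'' used in the proof sketch of Theorem~\ref{th:theor1}, but with the single state $\rho$ in place of the traceless operator $\rho-\sigma$. Namely: pass to Clifford gates and noise channels in normal form; write $\Phi=\mathcal V^{\mathrm{single}}\circ\mathcal N^{\otimes n}\circ\mathcal U_L\circ\tilde\Phi$ with $\tilde\Phi$ of depth $L-1$; average over the final single-qubit layer to replace $P$ by a uniform mixture of Paulis with the same support; push $(\mathcal N^{\otimes n})^{*}$ onto the Pauli using $\mathcal N^{*}(Q)=t_Q I+D_Q Q$; use the $2$-design property of $\mathcal U_L$ to annihilate the cross terms; and apply the Clifford adjoint of $\mathcal U_L$. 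This reproduces verbatim the displayed identity in the proof sketch of Theorem~\ref{th:theor1},
\begin{equation}
\Ex_{\Phi}\bigl[\Tr(P\Phi(\rho))^2\bigr]=\frac{1}{3^{|P|}}\sum_{Q\,:\,\supp(Q)=\supp(P)}\ \sum_{a\in\{0,1\}^{|Q|}}\ \prod_{j\in\supp(Q)}\bigl(t_{Q_j}^{a_j}D_{Q_j}^{1-a_j}\bigr)^2\ \Ex\bigl[\Tr(R_{Q,a}\,\tilde\Phi(\rho))^2\bigr],
\end{equation}
where each $R_{Q,a}$ is a Pauli operator. The only difference from Theorem~\ref{th:theor1} is that we can no longer drop the identity contribution --- but we do not need to: for any Pauli $R$ and state $\tau$, $|\Tr(R\tau)|\le\|R\|_\infty\|\tau\|_1=1$, so every expectation on the right is at most $1$, and the multinomial identity $\frac{1}{3^{|P|}}\sum_{Q,a}\prod_j(t_{Q_j}^{a_j}D_{Q_j}^{1-a_j})^2=\frac{1}{3^{|P|}}(\|\mathbf D\|_2^2+\|\mathbf t\|_2^2)^{|P|}=c^{|P|}$ yields the claim.

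I do not anticipate a genuine obstacle: all the heavy lifting --- the normal-form and Clifford reductions, the $2$-design twirl, and the strict inequality $c<1$ for non-unitary channels --- has already been done in Theorem~\ref{th:theor1} and Subsection~\ref{ref:prelMAIN}. The one new point is that the argument is robust to feeding in a single state rather than a difference of states, at the harmless price of replacing ``tracelessness kills the identity Pauli'' by the trivial bound $|\Tr(R\tau)|\le1$; and because $|P|$ is already linear in $n$, a single peeling step suffices to produce the exponential decay, so there is no need to iterate over layers or to track the evolution of the Pauli weight. This complements Theorem~\ref{th:varianceMAIN}, which gives the matching $\Theta$ behaviour in the more restricted non-unital, $\|\mathbf t\|_2=\Theta(1)$ setting.
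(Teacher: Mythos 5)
Your proposal is correct and follows essentially the same route as the paper: the corollary is derived there from Proposition 15 (the variance upper bound $\Var[\Tr(P\Phi(\rho))]\le c^{|P|}$), whose proof is exactly your single peeling step --- average over the final single-qubit layer via the Pauli-mixing lemma, push $\mathcal N^{*}$ onto the Pauli, bound each resulting Pauli expectation by $|\Tr(Q\sigma)|\le 1$, and apply the multinomial theorem. The only cosmetic difference is that the paper notes $\Ex[\Tr(P\Phi(\rho))]=0$ from the final $1$-design layer so that the variance equals the second moment, whereas you use the always-valid inequality $\Var\le\Ex[\,\cdot^{2}]$; both are fine.
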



\subsubsection{Lack of vanishing gradients for local cost functions, but only the last $\Theta(\log(n))$ layers are trainable}
\noindent
The effective depth of noisy random circuits, elucidated in Section~\ref{sub:effMAIN}, implies that, on average, changing the gates in the layers preceding the last $\Theta(\log(n))$ will not significantly influence the observable expectation value. This suggests that only the last few layers of the circuit can substantially alter the expectation value. We analyze this formally using the notion of `trainability' of a parametrized gate. As is common in the literature \cite{Variational,cerezo2023does}, we refer to a cost function $C$ as trainable with respect to a parametrized gate labeled by $\mu$ if and only if $\Var[\partial_\mu C]=\Omega(1/\mathrm{poly}(n))$, that is, the variance of the cost function partial derivative with respect to such a parameter vanishes no more than inverse-polynomially in the number of qubits. For a formal definition of cost function partial derivatives and detailed proofs of the following results, we refer to the Supplementary Material.
We rigorously show that for local cost functions, under non-unital noise, only the last $\Theta(\log(n))$ layers of the circuit are trainable, meaning they have significantly large partial derivatives. 
\begin{theorem}[Only the last few layers are trainable for local cost functions]
\label{th:trainabilityy}
    Let $C = \Tr(P \Phi(\rho_0))$ be a cost function associated with a local Pauli $P$ (with Pauli weight $|P|=\Theta(1)$), an initial state $\rho_0$, and a depth-$L$ noisy random circuit ansatz $\Phi$  in arbitrary dimension. Let $\mu$ be a parameter (in the light cone) of the $k$-th layer. Assume that the noise is not unital and it is not a replacer channel (i.e., it does not output a fixed state). Then, we have
    \begin{align}
        \Var[\partial_\mu C] =\exp(- \Theta(L-k)).
    \end{align}
\end{theorem}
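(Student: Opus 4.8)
The plan is to reduce the claim to a second‑moment computation on a noisy random circuit of depth $L-k$, use Theorem~\ref{th:theor1} for the upper bound, and build an explicit ``surviving Pauli path'' to obtain the matching lower bound.

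\emph{Step 1: differentiating and reducing to a second moment.} Write the circuit as $\Phi = \Phi_{\mathrm{post}}\circ \mathcal N^{\otimes n}\circ \mathcal W_2 \circ \mathcal G_\mu \circ \mathcal W_1 \circ \Phi_{\mathrm{pre}}$, where $\mathcal G_\mu(\cdot)=e^{-i\theta_\mu G/2}(\cdot)e^{i\theta_\mu G/2}$ carries the parameter $\mu$ with Pauli generator $G$, $\mathcal W_1,\mathcal W_2$ are the remaining gates of layer $k$ (supplying, together with the adjacent layers, the $2$-design averaging), $\Phi_{\mathrm{pre}}$ is the depth-$(k-1)$ prefix, and $\Phi_{\mathrm{post}}$ the depth-$(L-k)$ suffix including the final single-qubit layer $\mathcal V^{\mathrm{single}}$. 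From $\partial_{\theta_\mu}\mathcal G_\mu(\cdot)=-\tfrac{i}{2}[G,\mathcal G_\mu(\cdot)]$ one gets $\partial_\mu C = \Tr\!\big(P\,\Phi_{\mathrm{post}}(\mathcal N^{\otimes n}(A))\big)$, where $A\coloneqq \mathcal W_2\!\big(-\tfrac{i}{2}[G,\mathcal G_\mu(\mathcal W_1(\Phi_{\mathrm{pre}}(\rho_0)))]\big)$ is Hermitian, traceless, and $\norm{A}_1\le 1$ since $\norm{G}_\infty=1$. Because $\Ex_\Phi[\Tr(P\Phi(\rho_0))]=0$ for every $P\neq I^{\otimes n}$ and every non-unitary noise (shown earlier via averaging the final single-qubit layer), exchanging $\partial_{\theta_\mu}$ with $\Ex_\Phi$ gives $\Ex_\Phi[\partial_\mu C]=0$; hence $\Var[\partial_\mu C]=\Ex_\Phi[\Tr(P\,\Phi_{\mathrm{post}}(\mathcal N^{\otimes n}(A)))^2]$.

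\emph{Step 2: upper bound.} Condition on the gates of layers $1,\dots,k$, which (with $\theta_\mu$) determine $A$; then $\Phi_{\mathrm{post}}$ is a noisy random circuit of depth $L-k$ independent of $A$, and $B\coloneqq\mathcal N^{\otimes n}(A)$ is still traceless Hermitian with $\norm{B}_1\le 1$. Writing $B=\tfrac12(\rho_+-\rho_-)$ with states $\rho_\pm$ and applying Theorem~\ref{th:theor1} (whose proof uses only that the input is traceless) bounds the conditional expectation by $O(c^{(L-k)+|P|})=O(c^{L-k})$; averaging over the conditioning yields $\Var[\partial_\mu C]\le \exp(-\Omega(L-k))$, consistently with the effective-depth picture of Proposition~\ref{cor:lastlogMAIN}.

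\emph{Step 3: lower bound — the main obstacle.} Here the identity-component argument that gave the depth-\emph{independent} lower bound in Corollary~\ref{th:variance_informal1} is unavailable precisely because $A$ is traceless, which is exactly why the gradient must decay with $L-k$; the point is to show that this decay is no faster than exponential. Reduce to Clifford gates and normal-form noise (valid for second moments), pass to the Heisenberg picture, and expand $\Tr(P\,\Phi_{\mathrm{post}}(\mathcal N^{\otimes n}(A)))^2$ as a sum over Pauli paths running backward from $P$ through the $L-k$ suffix layers; every summand is a squared product of noise coefficients times $\langle R,A\rangle^2$ for the path's endpoint Pauli $R$, hence nonnegative. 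Retain only the ``stay-at-weight-one'' family: at each backward layer the weight-one Pauli is mapped by the (geometrically local) Clifford layer to a Pauli of weight one or two, and in the weight-two case the adjoint non-unital noise is used to send the extra site to the identity (coefficient $t_{Q}$) while keeping the other (coefficient $D_{Q}$). Non-unitality ($\mathbf t\neq 0$) supplies a usable $t_Q\neq0$ and the non-replacer assumption ($\mathbf D\neq0$) a usable $D_Q\neq0$; since the Clifford outcome randomizes the Pauli types, a constant fraction of outcomes at each layer admits such a choice, so this family contributes at least $\tilde c^{\,L-k}$ for a constant $\tilde c\in(0,1)$ depending only on the noise parameters and the lattice degree, all the while keeping the operator a single weight-one Pauli (never the identity). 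Its endpoint $R$ then performs a random walk confined to the backward light cone of $P$; since $\mu$ lies in that light cone, $R$ overlaps the support of $A$ with at least inverse-polynomial probability, and averaging $\langle R,A\rangle^2$ over layers $1,\dots,k$ (a second moment of the prefix, nonvanishing because that region is causally connected to both $\rho_0$ and the generator $G$) is bounded below by a positive constant. Summing these nonnegative contributions gives $\Var[\partial_\mu C]\ge \exp(-O(L-k))$.

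\emph{Step 4: conclusion.} Steps 2 and 3 together give $\exp(-O(L-k))\le \Var[\partial_\mu C]\le \exp(-\Omega(L-k))$, i.e. $\Var[\partial_\mu C]=\exp(-\Theta(L-k))$. I expect Step 3 to be by far the hardest: one must track the weight-one path family through the random Clifford layers while simultaneously controlling the accumulated noise coefficients, the wandering of the endpoint, and its residual overlap with $A$, and in particular verify that the non-unital and non-replacer hypotheses are exactly what keep this contribution strictly positive rather than super-exponentially small.
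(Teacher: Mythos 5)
Your overall architecture mirrors the paper's: the upper bound follows from the effective-depth contraction (the paper's Proposition~\ref{th:ubvariance} peels layers in the Heisenberg picture via the Pauli-mixing lemma, while its alternative ``method 1'' is essentially your Step~2), and the lower bound proceeds by isolating a surviving Pauli path through the $L-k$ suffix layers and then lower-bounding a second moment of the prefix. Where you differ is in how the path is controlled: the paper \emph{fixes} specific two-qubit Cliffords (SWAPs and identities) to deterministically route one tensor factor of $P$ onto the support of $H_\mu$ while ``protecting'' the rest, paying an explicit $|\mathrm{C}_2|^{-1}$ per fixed gate, whereas you keep the gates random and argue probabilistically that a weight-one path survives. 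Your version is plausible but leaves the correlations between the survival indicator, the accumulated $t_Q,D_Q$ coefficients, and the endpoint location unquantified; the paper's conditioning sidesteps exactly this.

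The genuine gap is at the last step of your Step~3. You assert that the prefix average of $\langle R,A\rangle^2$ is ``bounded below by a positive constant'' because the relevant region ``is causally connected to both $\rho_0$ and the generator $G$.'' Causal connectivity is not sufficient: writing $\langle R,A\rangle=\Tr\!\big(i[\mathcal W_2^*(R),G]\,\Phi_{\mathrm{pre}}(\rho_0)\big)$, the quantity you need is the second moment of a Pauli expectation value over a depth-$(k-1)$ noisy random circuit, and for unital (or noiseless) circuits this decays as $\exp(-\Theta(k))$ even though the same causal connections are present (Proposition~\ref{prop:expdecayPunital}). If that were the situation here, your lower bound would acquire an extra $\exp(-\Theta(k))$ factor and the claimed $\exp(-\Theta(L-k))$ scaling, independent of $k$, would fail. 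The missing ingredient is to invoke the non-unital variance lower bound (Theorem~\ref{th:varianceMAIN}, i.e.\ Proposition~\ref{prop:lbvar}), which gives $\Ex_{\mathrm{pre}}\big[\Tr(Q_R\,\Phi_{\mathrm{pre}}(\rho_0))^2\big]\ge \tfrac{1}{3^{|Q_R|}}\|\bold{t}\|_2^{2|Q_R|}=\Omega(1)$ for the constant-weight Pauli $Q_R=i[\mathcal W_2^*(R),G]$ — this is precisely how the paper closes its proof, and it is a second, independent use of non-unitality beyond the one you make when killing extra sites along the path. Note also that Corollary~\ref{th:variance_informal1} is \emph{not} unavailable here as you suggest: the input to the prefix is the state $\rho_0$, not a traceless operator, so the identity-component argument applies to it directly.
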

Theorem~\ref{th:trainabilityy} states that local cost functions are mostly sensitive to changes of parameters in the last few layers. This is in stark contrast to the case of depolarizing noise or the noiseless case, in which the partial derivative variances with respect to gates in \emph{all} the layers are exponentially suppressed with the number of qubits at sufficiently high depth~\cite{nibp,McClean_2018,napp2022quantifying}. 

As a corollary of the previous theorem, we have that the norm of the gradient of local cost functions does not exponentially vanish in the number of qubits at any depth. Crucially, this fact is only due to the last few layers of the circuits, which we have shown to be trainable. 
\begin{corollary}[Non-vanishing gradient for local cost functions with non-unital noise]
\label{th:variance_informal}
Let $\rho_0$ be any quantum state and $P$ be a local Pauli. Let $C:=\Tr(P \Phi(\rho_0)  )$ be the cost function associated with a noisy random quantum circuit ansatz of any depth. We assume that the noise is non-unital and it is not a replacer channel (i.e., it does not output a fixed state). Then, we have
\begin{align}
       \Ex[\norm{\nabla C}_2^{2}] = \Theta(1).
\end{align}  
\end{corollary}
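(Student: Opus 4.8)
The plan is to derive the statement from Theorem~\ref{th:trainabilityy} by expanding $\norm{\nabla C}_2^2 = \sum_\mu (\partial_\mu C)^2$ as a sum over all gate parameters $\mu$ of the ansatz and controlling each term separately. Because the ensemble is built from local $2$-designs one has $\Ex[\partial_\mu C]=0$ for every $\mu$ (a standard fact, verified in the Supplementary Material), so $\Ex[(\partial_\mu C)^2] = \Var[\partial_\mu C]$ and hence $\Ex[\norm{\nabla C}_2^2] = \sum_\mu \Var[\partial_\mu C]$. It then suffices to bound this sum above and below by positive constants.

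For the lower bound, I would isolate a single parameter $\mu_0$ attached to a gate whose support intersects $\supp(P)$ and which sits within the last $O(1)$ layers of the circuit (such a gate always exists since $P$ is supported somewhere and, in a brickwork-type architecture, every qubit is acted on within the last couple of layers). Theorem~\ref{th:trainabilityy} applied with $L-k=O(1)$ gives $\Var[\partial_{\mu_0} C] = \exp(-\Theta(1)) = \Theta(1)$, whence $\Ex[\norm{\nabla C}_2^2] \ge \Var[\partial_{\mu_0} C] = \Omega(1)$ (using also $\Ex[(\partial_\mu C)^2]\ge\Var[\partial_\mu C]$ unconditionally). This is precisely the step that uses the hypotheses that the noise is non-unital and is not a replacer channel: they guarantee the hidden constant in this last-layer contribution is strictly positive, so the bound is non-vacuous.

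For the upper bound, I would split the parameters $\mu$ into those inside and outside the backward light cone of $P$. For $\mu$ outside the light cone, the standard light-cone argument — valid because $P$ is local and the architecture is geometrically local in fixed dimension $\mathrm{D}$ — shows that varying that gate leaves $\Tr(P\Phi(\rho_0))$ unchanged, so $\partial_\mu C \equiv 0$ and these terms drop out. For $\mu$ in layer $k$ lying in the light cone, the causal region grown backward over $L-k$ layers occupies $O((L-k)^{\mathrm{D}})$ sites, so there are at most $O((L-k)^{\mathrm{D}})$ such parameters in layer $k$, and Theorem~\ref{th:trainabilityy} bounds each of their contributions by $\exp(-\Theta(L-k))$. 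Summing over layers,
\begin{equation}
\Ex[\norm{\nabla C}_2^2] \le \sum_{k=1}^{L} O\big((L-k)^{\mathrm{D}}\big)\,\exp(-\Theta(L-k)) = \sum_{j=0}^{L-1} O\big(j^{\mathrm{D}}\big)\,\exp(-\Theta(j)) = O(1),
\end{equation}
since the geometric decay dominates the polynomial prefactor and the series converges uniformly in $L$. Combining the two bounds yields $\Ex[\norm{\nabla C}_2^2] = \Theta(1)$.

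The substantive work is entirely inside Theorem~\ref{th:trainabilityy}; within this corollary the only point requiring genuine care is that the $\exp(-\Theta(L-k))$ \emph{upper} bound on $\Var[\partial_\mu C]$ must hold \emph{uniformly} over all light-cone parameters $\mu$ of layer $k$, so that multiplying it by the $O((L-k)^{\mathrm{D}})$ count of such parameters is legitimate. A minor secondary point is the exact accounting of how many real parameters each two-qubit gate carries and how $|P|=\Theta(1)$ enters the light-cone size, but both only affect the $O(1)$ constants and not the $\Theta(1)$ conclusion.
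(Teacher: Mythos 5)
Your proof is correct. The lower bound is exactly the paper's argument (Corollary~\ref{cor:lbGrad} in the Supplementary Material): drop all but one term of $\sum_\mu \Ex[(\partial_\mu C)^2]$, pick a parameter of a last-layer gate inside the light cone of $P$, use $\Ex[\partial_\mu C]=0$ (Lemma~\ref{le:zeroDer}) to identify the second moment with the variance, and invoke the partial-derivative lower bound (Proposition~\ref{th:variance_PDlb}) with $L-k=O(1)$, which is where non-unitality and the non-replacer assumption enter.

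Where you go beyond the paper is the upper bound. The Supplementary Material only proves $\Ex[\norm{\nabla C}_2^2]\ge\Omega(1)$ explicitly and elsewhere bounds $\sup\norm{\nabla C}_2^2\le 4m$, which grows with the parameter count and does not by itself give $O(1)$; the $\Theta(1)$ claim in the main text therefore implicitly relies on an argument like yours. Your version — zero derivative outside the light cone (Lemma~\ref{le:outside}), at most $O((|P|+L-k)^{\mathrm D})$ light-cone parameters in layer $k$, each contributing at most $\exp(-\Omega(L-k))$ uniformly by Proposition~\ref{th:ubvariance} (which gives the explicit bound $4c^{|P|+L-k-1}$ for every parameter, so the uniformity you flag is not an issue), and then a convergent series $\sum_j j^{\mathrm D}e^{-\Omega(j)}=O(1)$ — is a clean and correct way to close that gap. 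Note only that the upper half of the argument, like the paper's Theorem~\ref{th:trainabilityy}, needs geometric locality in fixed dimension for the light-cone count, so it is worth stating that hypothesis explicitly in the corollary rather than leaving it implicit in "noisy random quantum circuit ansatz".
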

In the Supplementary Material, we also show that global cost functions---i.e., those cost functions associated with high-weight Pauli observables---have all partial derivative variances exponentially vanishing in the number of qubits, and therefore also their gradients.
Furthermore, in the Supplementary Material, we present numerical simulations that corroborate our findings and attempt to extend beyond the assumptions of our theorems, such as the assumption that two-qubit gates are sampled from a $2$-design. We provide evidence that even for more restricted classes of ans\"{a}tze, such as QAOA~\cite{farhi2014quantum}, the same qualitative behavior is observed.

\subsubsection{Improved upper bounds for barren plateaus in the unital noise scenario}
The technical tools developed in this work also allow us, in the unital-noise scenario, to improve upon the barren plateaus upper bounds presented in Ref.~\cite{nibp}.
In particular, we establish the following.
\begin{proposition}[Improved upper bound on the partial derivative for unital noise]
\label{th:ubvarianceUNITMAIN}
Let $C\coloneqq \Tr(P\Phi(\rho_0))$ be the cost function, where $P$ is a Pauli, $\rho_0$ is an arbitrary initial state, and $\Phi$ is a random quantum circuit ansatz of depth $L$ in arbitrary dimension. We assume that the noise is unital and not unitary. Let $\mu$ denote a parameter of any $2$-qubit gate $\exp(-i\theta_\mu H_\mu)$ in the circuit such that $\norm{H_\mu}_{\infty}\le 1$. Then, we have
\begin{align}
    \Var[\partial_\mu C]\le \exp(-\Omega(|P|+L)).
\end{align}
\end{proposition}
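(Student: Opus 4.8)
The plan is to adapt the Heisenberg-picture peeling argument used for Theorem~\ref{th:theor1}, inserting one extra step to absorb the differentiated gate. As there, since $\Var[\partial_\mu C]\le\Ex[(\partial_\mu C)^2]$ and the quantity is of degree two in every two-qubit gate that is averaged, we may assume all such gates are Clifford and that $\mathcal{N}$ is in normal form; for unital noise $\mathbf{t}=0$, so $c=\tfrac{1}{3}\|\mathbf{D}\|_2^2$, which is $<1$ because $\mathcal{N}$ is not unitary. Writing the differentiated gate as the channel $\mathcal{G}_\mu(\cdot)=e^{-i\theta_\mu H_\mu}(\cdot)e^{i\theta_\mu H_\mu}$ and using $\partial_{\theta_\mu}\mathcal{G}_\mu=\mathcal{C}_{H_\mu}\circ\mathcal{G}_\mu$ with $\mathcal{C}_{H_\mu}(\cdot):=-i[H_\mu,\cdot]$, the partial derivative becomes
\begin{align}
\partial_\mu C=\Tr\!\big(P\;\Phi_{>}\circ\mathcal{C}_{H_\mu}\circ\mathcal{G}_\mu\circ\Phi_{<}(\rho)\big),
\end{align}
where $\Phi_{<}$ collects all channels strictly before $\mathcal{G}_\mu$ and $\Phi_{>}$ all channels strictly after $\mathcal{C}_{H_\mu}$ (the remaining gates of the $k$-th layer, its noise layer, the layers $k+1,\dots,L$ with their noise, and the final single-qubit layer). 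Two facts will be used repeatedly: $\mathcal{C}_{H_\mu}$ has traceless output, and both $\mathcal{G}_\mu$ and $\mathcal{C}_{H_\mu}$ act on only two qubits.

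I would then bound $\Ex[(\partial_\mu C)^2]$ by a single peeling that contracts on both sides of the insertion. Going into the Heisenberg picture and peeling the layers of $\Phi_{>}$ off $P$ exactly as in the proof of Theorem~\ref{th:theor1} -- averaging over the final single-qubit layer to obtain the $\tfrac{1}{3^{|P|}}\sum_{\supp(Q)=\supp(P)}$ sum, taking the adjoint of the following noise layer (only the $\mathbf{t}=0$ term survives) and averaging over the adjacent Clifford layer, and iterating -- produces a factor $c^{|P|+(L-k)}$ and replaces $P$ by a Pauli $Q$; we may take $Q\neq I$ because $\mathcal{C}_{H_\mu}$ annihilates the identity. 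Next one propagates through the adjoint of $\mathcal{C}_{H_\mu}$, namely $Q\mapsto i[H_\mu,Q]$, and then through the adjoint of $\mathcal{G}_\mu$; since both are two-qubit maps, each turns a Pauli into an $O(1)$-term combination of Paulis of bounded weight with $O(1)$ total coefficient $\ell_1$-mass (for $\mathcal{C}_{H_\mu}$ this uses $\|H_\mu\|_\infty\le1$ together with the commutator structure $[S,R]\in\{0,\pm2SR\}$), and -- crucially -- the total propagated operator stays traceless, so no stray identity term is produced. Finally one peels the remaining layers of $\Phi_{<}$ as in Theorem~\ref{th:theor1}: each of the $k-1$ noise layers contributes a factor at most $c$, and the recursion bottoms out at $\Tr(R\rho)^2\le1$ for the final Pauli $R\neq I$. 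Collecting the contributions gives $\Ex[(\partial_\mu C)^2]\le O(1)\cdot c^{|P|+L-1}=\exp(-\Omega(|P|+L))$, hence the claim.

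I expect the main obstacle to be making the two non-Clifford insertions $\mathcal{G}_\mu$ and $\mathcal{C}_{H_\mu}$ compatible with an argument that otherwise relies on Clifford gates mapping single Paulis to single Paulis. Concretely, one must verify that (i) because each acts on only two qubits, propagating a Pauli through their adjoints only expands the operator into an $O(1)$-size Pauli combination with $n$- and $|P|$-independent coefficient mass, so these steps cost only constants and never an exponential blow-up; (ii) the tracelessness of $i[H_\mu,\cdot]$ survives propagation all the way back to $\rho$, so the peeling terminates in $0$ rather than in an uncontracted identity component; and (iii) the cross terms produced by the resulting Pauli expansions vanish under the one- and two-qubit $2$-design averages, just as the analogous cross terms are handled in the proof of Theorem~\ref{th:theor1}. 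A minor additional care is needed when reducing the noise channels adjacent to the differentiated gate to normal form, absorbing the residual unitaries into the neighbouring random Clifford layers rather than into $\mathcal{G}_\mu$. With these checks in place, $\Var[\partial_\mu C]\le\exp(-\Omega(|P|+L))$ follows.
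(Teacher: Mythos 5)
Your proposal is correct and follows essentially the same route as the paper: peel the layers above the differentiated gate in the Heisenberg picture to collect a factor $c^{|P|+L-k-1}$, reduce the commutator insertion to a Pauli decomposition of $H_\mu$ whose cross terms vanish under the single-qubit $2$-design average, and then use unitality to contract the remaining $k$ layers below the gate for an additional $c^{k}$. The only cosmetic difference is that you carry out this bottom-half contraction as a continuation of the peeling (noting that $\mathbf{t}=0$ means no identity component is ever generated), whereas the paper invokes its separate unital-noise concentration bound for $\Ex[\Tr(Q\Phi_{[1,k]}(\rho_0))^2]$ as a black box; the underlying mechanism is identical.
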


It is noteworthy that the upper bound of Ref.~\cite{nibp} has no dependence on the Pauli weight, unlike ours. Furthermore,  the upper bound of Ref.~\cite{nibp} includes an $n^{1/2}$ factor in front of the exponential decay in $L$, making it meaningful only at depths $\Omega(\log(n))$, whereas our result is without such a factor.
Moreover, our result is more general than that shown in~\cite{nibp} because it extends to any unital noise, whereas the results shown in Ref.~\cite{nibp} apply only to primitive Pauli noise, which is only a particular type of unital noise (e.g., dephasing is not included in this class).

\subsubsection{Exponential concentration and lack thereof in noisy kernels} 
\noindent 
As a complementary result, we demonstrate how quantum kernels~\cite{thanasilp2022exponential,Jerbi_2023,Schuld_2019} exhibit a similar behavior to cost functions under the influence of non-unital noise. In particular, we show that fidelity quantum kernels~\cite{thanasilp2022exponential,Jerbi_2023,Schuld_2019} can incur an exponential concentration at any depth.
Assuming that the noise parameters satisfy $\|\bold{t}\|_2^2 + \|\bold{D}\|_2^2 < 1$, we show that
\begin{equation}
\mathbb{E}_{\Phi, \Phi'}[\Tr[\Phi(\rho) \Phi'(\rho)] ]\leq \mathrm{exp}(-\Omega(n)),
\end{equation}
for any two noisy quantum circuits $\Phi$ and $\Phi'$, where the expectation is taken over the random gates in both the circuits.
This result follows by upper bounding the overlap of two states in terms of their purities by the Cauchy-Schwarz inequality, expanding the expected purities in the Pauli basis, and upper bounding the contribution of each Pauli by means of Theorem~\ref{th:varianceMAIN}.

Furthermore, we are able to show worst-case concentration bounds by introducing an additional assumption on the noise model. In particular, we assume that the noise channel $\widetilde{\mathcal{N}}$ is 
the composition of the depolarizing channel ${\mathcal{N}_p^{(\mathrm{dep})}}$ with constant $p>0$ and an arbitrary noise channel with $\norm{\bold{t}}_2 < 1$, i.e., $\widetilde{\mathcal{N}} := \mathcal{N} \circ {\mathcal{N}_p^{(\mathrm{dep})}}$.
Thus, we obtain that 
\begin{equation}
\Tr[\Phi(\rho) \Phi'(\rho)] \leq 2^{ n(\delta_L -1)},
\end{equation}
where $\delta_L\coloneqq (1-p)^{2L} + \|\bold{t}\|_2\frac{1 - (1-p)^{2L}}{2p - p^2}$.
%
We emphasize that, when the noise is purely depolarizing, i.e., 
$\norm{\bold{t}}_2=0$, this bound predicts exponential concentration at any depth, thus improving a previous result given in Ref.\ \cite{thanasilp2022exponential}, predicting exponential concentration at linear depth.

\section{Conclusions and outlook}
\subsection{Summary}

Our work advances our understanding of how quantum circuits---specifically, random quantum circuits---behave under possibly non-unital noise on several fronts, developing a comprehensive picture of the impact of noise. First, we have shown what we called the effective depth picture, establishing that most quantum circuits under noise behave like shallow logarithmic-depth circuits, no matter how deep they are. 
\textcolor{black}{Importantly, this result is derived without imposing any specific assumptions on the noise model, except that it is incoherent and local. As a result, the findings apply to various types of physical noise, including dephasing noise, which may have implications for phenomena such as measurement-induced phase transitions~\cite{measind1,Skinner_2019}.}

We have also shown that, in contrast to the noiseless and unital cases, variational quantum algorithms do not suffer from barren plateaus in the presence of non-unital noise. This is a striking insight in its own right. However, it is crucial to contextualize this result and not interpret it as good news: indeed, we demonstrate that the absence of barren plateaus arises because these circuits behave similarly to shallow circuits, with only the last few layers being trainable.

We have also used this fact to come up with a simple classical simulation algorithm for estimating Pauli expectation values of quantum circuits under possibly non-unital noise, on average over the ensemble. Furthermore, we have shown that the algorithm is efficient for inverse-polynomial precision even for arbitrarily deep circuits in one-dimensional architectures, and for constant precision in any spatial dimensions. 

In short, a good way of summarizing our results is that we show that, unless we engineer circuits to exploit the structure of the noise affecting the quantum device, as long as we are interested in using a noisy quantum device to estimate observable expectation values, it is unlikely that a quantum computer affected by local depolarizing noise is preferable over one with non-unital noise.  
More specifically, this arises from the fact that quantum circuits subjected to depolarizing noise, without access to fresh auxiliary qubits, are meaningful only if their depth is at most logarithmic in the number of qubits~\cite{aharonov1996limitations,DanielPaper}. In contrast, with non-unital noise, it is in principle possible to design circuits that leverage the noise to surpass this logarithmic depth barrier, as demonstrated by the quantum refrigerator construction~\cite{refrigerator}. However, our results indicate that for most circuits affected by non-unital noise, the effective depth remains logarithmic, suggesting that this advantage does not hold for such generic circuits.
In this sense, both depolarizing and non-unital noise impose qualitatively similar logarithmic-depth limitations for typical circuits in expectation-value estimation.
Our results can also be seen as an invitation to 
more carefully investigate the underlying noise to draw precise conclusions about the anticipated quantum application, as assuming that all noise is depolarizing would both lead to the wrong conclusions and be the wrong model for many physical platforms.

\subsection{Open problems}
\noindent
Our results invite a number of follow-up research questions. 

\subsubsection{Tighter bounds, more general noise models and mid-circuit measurements}
From a more technical perspective, it is important to understand if the trace distance bound we have shown in Eq.~\eqref{eq:trmain} can be proven assuming smaller depth than linear in the number of qubits (for example, logarithmic). Additionally, it remains an open question whether the early break condition in Eq.~\eqref{eq:SDP_1MA} is met with high probability over the ensemble.

Another intriguing question is whether we can relax the assumption that the distribution of each two-qubit gate forms a two-design, potentially replacing it with a less stringent condition. Our numerical simulations suggest that this may be the case. \textcolor{black}{Another interesting direction for future work is to extend our results to more general and complex noise models; for example, to arbitrary two-qubit noise models for the quantum gates, in order to model spatially-correlated noise, with correlations decaying in distance, and to non-Markovian noise models, in order to model temporally-correlated noise, with correlations decaying in time.}
Beyond these technical refinements, it would be interesting to investigate whether our techniques can provide insights into other phenomena, such as measurement-induced phase transitions~\cite{measind1, Skinner_2019}.  

Moreover, on a more qualitative note, although our work reveals the negative result that for most quantum circuits affected by any type of noise, only the last few layers significantly contribute to the computation, this outcome could potentially offer guidance on constructing circuits that are more resilient to this effect by considering how to challenge our assumptions. For instance, a critical assumption in our analysis is that each layer operates independently of the others. This assumption, however, no longer holds, for example, once mid-circuit measurements and classical feedforward are introduced.

Furthermore, our work leaves open the following broader problem regarding random circuits with non-unital noise.

\subsubsection{Sampling from random quantum circuits under non-unital noise}
\noindent
Our analysis establishes rigorous structural limitations of noisy random circuits and their implications for expectation-value estimation and trainability. 
At the same time, the problem of classical sampling from such circuits remains an important and subtle open direction. 
It is worth emphasizing, however, that sampling is not required for the applications we target, such as diagnosing phases of matter in quantum simulation and condensed matter physics or evaluating cost functions in variational quantum algorithms. 
By focusing on expectation values, we provide rigorous guarantees for the figures of merit most relevant to these domains, while also laying the groundwork for future progress on the sampling problem.

Previous work has shown that efficient sampling is possible in certain noisy settings. 
For example, Ref.~\cite{Aharonov_2023} presented an algorithm to efficiently sample from random quantum circuits of $\Theta(\log n)$ depth under local depolarizing noise. 
However, these techniques do not carry over to the non-unital setting. 
The key obstacle is that circuits with non-unital noise do not anticoncentrate: their output distributions never become sufficiently uniform~\cite{fefferman2023effect}. 
Moreover, unlike depolarizing or other unital noise sources~\cite{aharonov1996polynomial,DanielPaper,Deshpande_2022}, non-unital noisy circuits do not converge to the maximally mixed state. 
As a result, the complexity of sampling from random circuits under non-unital noise remains unresolved.

Our work brings a new perspective that may help address this fundamental question. 
Specifically, we establish that random quantum circuits with non-unital noise behave effectively as logarithmic-depth noisy circuits, in the sense that only the last $O(\log n)$ layers significantly affect the output. 
In parallel, Ref.~\cite{fefferman2023effect} has shown that the outcome distribution of such circuits is never too flat, but instead remains peaked. 
Taken together, these findings reveal that random circuits under non-unital noise are both shallow and peaked, imposing considerable structure on the problem.

Interestingly, recent work has demonstrated efficient classical sampling for two-dimensional, constant-depth shallow circuits with peaked output distributions~\cite{bravyi2023classical}. 
While the definition of “peakedness” in Ref.~\cite{bravyi2023classical} is stronger than that in Ref.~\cite{fefferman2023effect}, and while their notion of shallowness refers to constant depth rather than logarithmic depth, the structural parallels are suggestive. 
Moreover, their algorithm establishes hardness in the worst case, whereas our setting requires only an average-case guarantee over the ensemble of random circuits, which could potentially simplify the problem. 

We leave it to future work to rigorously determine whether the structural insights developed here can be leveraged to design an efficient classical sampling algorithm, or conversely, to establish hardness results for this task. 
In either case, our findings provide a new lens through which to analyze the complexity of noisy random quantum circuits sampling with non-unital noise.

\section{Acknowledgements}
\noindent AAM thanks Lennart Bittel for insightful discussions on the partial derivative lower bound; AAM and AA thank Marco Cerezo and Samson Wang for helpful discussions on their previous work~\cite{nibp}; and Marco Cerezo, together with Manuel S. Rudolph and Zoe Holmes, for invaluable input on the improved classical simulation algorithm included in this revised manuscript. AA thanks Giacomo De Palma and Isabella Uta Meyer for discussions about channel distinguishability.
SG thanks Bill Fefferman, Kunal Sharma, Michael Gullans, and Kohdai Kuroiwa for insightful discussions regarding the Heisenberg picture and optimization problems under noise.
DSF acknowledges funding from the European Union under Grant Agreement 101080142 and the project EQUALITY and financial support from the Novo Nordisk Foundation (Grant No. NNF20OC0059939 Quantum for Life). The Berlin team has been supported by the BMFTR (DAQC, MUNIQC-Atoms, QuSol, Hybrid++), the Munich Quantum Valley (K-4 and K-8), the Quantum Flagship (PasQuans2, Millenion), QuantERA (HQCC), the Clusters of Excellence MATH+ and ML4Q, the DFG (CRC 183), the Einstein Foundation (Einstein Research Unit on Quantum Devices), Berlin Quantum, and the ERC (DebuQC).
AA acknowledges financial support from the QICS (Quantum Information Center Sorbonne). This work is supported by a collaboration between the US DOE and other Agencies. YQ 
acknowledges financial support from the U.S. Department of Energy, Office of Science, National Quantum Information Science Research Centers, Quantum Systems Accelerator. This work was done (in part) while YQ and SG were visiting the Simons Institute for the Theory of Computing.

\section*{Author Contributions}
The following describes the different contributions of all authors of this work, using roles defined by the CRediT (Contributor Roles Taxonomy) project: 
A.A.M.\ Conceptualization, formal analysis, methodology, software, writing (original draft), writing (review and editing); 
A.A.\ Conceptualization, formal analysis, methodology, writing (original draft), writing (review and editing); 
S.G.\ Conceptualization, methodology, writing (review and editing); 
S.K.\ Conceptualization, methodology, writing (review and editing); 
D.S.F.\ Conceptualization, formal analysis, methodology, writing (original draft), writing (review and editing); 
J.E.\ Conceptualization, methodology, writing (review and editing); 
Y.Q.\ Conceptualization, formal analysis, methodology, writing (original draft), writing (review and editing).

\section*{Competing Interests}
The authors declare no competing interests.

\bibliography{ref}

@inproceedings{fawzi2022lower,
  doi = {10.4230/LIPICS.ITCS.2022.68},
  url = {https://drops.dagstuhl.de/entities/document/10.4230/LIPIcs.ITCS.2022.68},
  author = {Fawzi, Omar and Müller-Hermes, Alexander and Shayeghi, Ala},
  title = {{A lower bound on the space overhead of fault-tolerant quantum computation}},
  booktitle = {{13th Innovations in Theoretical Computer Science Conference (ITCS 2022)}},
  publisher = {Schloss Dagstuhl – Leibniz-Zentrum für Informatik},
  year = {2022},
  copyright = {Creative Commons Attribution 4.0 International license}
}

@article{Schuld_2019,
   title={{Quantum machine learning in feature Hilbert spaces}},
   volume={122},
   DOI={10.1103/physrevlett.122.040504},
   number={4},pages=040504,
   journal={Phys. Rev. Lett.},
   author={Schuld, Maria and Killoran, Nathan},
   year={2019}}

@article{Jerbi_2023,
   title={Quantum machine learning beyond kernel methods},
   volume={14},
   DOI={10.1038/s41467-023-36159-y},
   pages=517,
   journal={Nature Comm.},
   publisher={Springer Science and Business Media LLC},
   author={Jerbi, Sofiene and Fiderer, Lukas J. and Poulsen Nautrup, Hendrik and Kübler, Jonas M. and Briegel, Hans J. and Dunjko, Vedran},
   year={2023}}

@misc{angrisani2023unifying,
  title={A unifying framework for differentially private quantum algorithms},
  author={Angrisani, Armando and Doosti, Mina and Kashefi, Elham},
eprint={2307.04733},
      archivePrefix={arXiv},
  year={2023}
}

@article{Deshpande_2022,
   title={Tight Bounds on the Convergence of Noisy Random Circuits to the Uniform Distribution},
   volume={3},
pages={040329},
   ISSN={2691-3399},
   DOI={10.1103/prxquantum.3.040329},
   number={4},
   journal={PRX Quantum},
   publisher={American Physical Society (APS)},
   author={Deshpande, Abhinav and Niroula, Pradeep and Shtanko, Oles and Gorshkov, Alexey V. and Fefferman, Bill and Gullans, Michael J.},
   year={2022},
   month=dec }

@misc{bravyi2023classical,
  title={Classical simulation of peaked shallow quantum circuits},
  author={Bravyi, Sergey and Gosset, David and Liu, Yinchen},
eprint={2309.08405},
      archivePrefix={arXiv},
  year={2023}
}

@misc{aharonov1996limitations,
      title={Limitations of Noisy Reversible Computation}, 
      author={D. Aharonov and M. Ben-Or and R. Impagliazzo and N. Nisan},
      year={1996},
      eprint={quant-ph/9611028},
      archivePrefix={arXiv},
      optoptprimaryClass={quant-ph}
}

@misc{pan2021simulating,
      title={{Simulating the Sycamore quantum supremacy circuits}}, 
      author={Feng Pan and Pan Zhang},
      year={2021},
      eprint={2103.03074},
      archivePrefix={arXiv},
      optoptprimaryClass={quant-ph}
}

@misc{rubboli2023mixed,
  title={Mixed-state additivity properties of magic monotones based on quantum relative entropies for single-qubit states and beyond},
  author={Rubboli, Roberto and Takagi, Ryuji and Tomamichel, Marco},
  eprint={2307.08258},
  archivePrefix={arXiv},
  year={2023}
}

@book{shalev2014understanding,
  title={Understanding machine learning: From theory to algorithms},
  author={Shalev-Shwartz, Shai and Ben-David, Shai},
  year={2014},
  publisher={Cambridge university press}
}

@article{anshu2023concentration,
  title={{Concentration bounds for quantum states and limitations on the QAOA from polynomial approximations}},
  author={Anshu, A. and Metger, T.},
  journal={Quantum},
doi={10.22331/q-2023-05-11-999},
  volume={7},
  pages={999},
  year={2023}
}

@article{hirche2023quantum,
  title={Quantum differential privacy: An information theory perspective},
  author={Hirche, Christoph and Rouz{\'e}, Cambyse and Fran{\c{c}}a, Daniel Stilck},
  journal={IEEE Trans. Inf. Th.},
volume= 69, pages={5771-5787}, doi={10.1109/TIT.2023.3272904},
  year={2023},
  publisher={IEEE}
}

@misc{cai2023shors,
      title={Shor's Algorithm Does Not Factor Large Integers in the Presence of Noise}, 
      author={Jin-Yi Cai},
      year={2023},
      eprint={2306.10072},
      archivePrefix={arXiv},
      optoptprimaryClass={quant-ph}
}

@article{Kim2023,
  title = {Evidence for the utility of quantum computing before fault tolerance},
  volume = {618},
  ISSN = {1476-4687},
  url = {http://dx.doi.org/10.1038/s41586-023-06096-3},
  DOI = {10.1038/s41586-023-06096-3},
  number = {7965},
  journal = {Nature},
  publisher = {Springer Science and Business Media LLC},
  author = {Kim,  Youngseok and Eddins,  Andrew and Anand,  Sajant and Wei,  Ken Xuan and van den Berg,  Ewout and Rosenblatt,  Sami and Nayfeh,  Hasan and Wu,  Yantao and Zaletel,  Michael and Temme,  Kristan and Kandala,  Abhinav},
  year = {2023},
  month = jun,
  pages = {500–505}
}

@article{Bluvstein_2023,
   title={Logical quantum processor based on reconfigurable atom arrays},
   volume={626},
pages={58–65}, 
year=2024,
   url={http://dx.doi.org/10.1038/s41586-023-06927-3},
   DOI={10.1038/s41586-023-06927-3},
   number={7997},
   journal={Nature},
   publisher={Springer Science and Business Media LLC},
   author={Bluvstein, Dolev and Evered, Simon J. and Geim, Alexandra A. and Li, Sophie H. and Zhou, Hengyun and Manovitz, Tom and Ebadi, Sepehr and Cain, Madelyn and Kalinowski, Marcin and Hangleiter, Dominik and Bonilla Ataides, J. Pablo and Maskara, Nishad and Cong, Iris and Gao, Xun and Sales Rodriguez, Pedro and Karolyshyn, Thomas and Semeghini, Giulia and Gullans, Michael J. and Greiner, Markus and Vuletić, Vladan and Lukin, Mikhail D.}}

@article{Huang_2021,
   title={Power of data in quantum machine learning},
   volume={12},
pages={2631},
   DOI={10.1038/s41467-021-22539-9},
   number={1},
   journal={Nature Comm.},
   publisher={Springer Science and Business Media LLC},
   author={Huang, Hsin-Yuan and Broughton, Michael and Mohseni, Masoud and Babbush, Ryan and Boixo, Sergio and Neven, Hartmut and McClean, Jarrod R.},
   year={2021},
   month=may }

@misc{crooks2019gradients,
      title={Gradients of parameterized quantum gates using the parameter-shift rule and gate decomposition}, 
      author={Gavin E. Crooks},
      year={2019},
      eprint={1905.13311},
      archivePrefix={arXiv},
      optoptprimaryClass={quant-ph}
}

@article{uvarov2021barren,
  title={On barren plateaus and cost function locality in variational quantum algorithms},
  author={Uvarov, AV and Biamonte, Jacob D},
  journal={J. Phys. A},
  volume={54},
doi={10.1088/1751-8121/abfac7},
  number={24},
  pages={245301},
  year={2021},
  publisher={IOP Publishing}
}

@misc{fefferman2023effect,
      title={Effect of non-unital noise on random circuit sampling}, 
      author={Bill Fefferman and Soumik Ghosh and Michael Gullans and Kohdai Kuroiwa and Kunal Sharma},
      year={2023},
      eprint={2306.16659},
      archivePrefix={arXiv},
      optoptprimaryClass={quant-ph}
}

@article{Bravyi_2020,
   title={Quantum advantage with noisy shallow circuits},
   volume={16},
   ISSN={1745-2481},
   url={http://dx.doi.org/10.1038/s41567-020-0948-z},
   DOI={10.1038/s41567-020-0948-z},
   number={10},
   journal={Nature Phys.},
   publisher={Springer Science and Business Media LLC},
   author={Bravyi, Sergey and Gosset, David and König, Robert and Tomamichel, Marco},
   year={2020},
   month=jul, pages={1040–1045} }

@misc{gottesman1998heisenberg,
      title={{The Heisenberg representation of quantum computers}}, 
      author={Daniel Gottesman},
      year={1998},
      eprint={quant-ph/9807006},
      archivePrefix={arXiv},
      optoptprimaryClass={quant-ph}
}

@misc{webb2016clifford,
      title={{The Clifford group forms a unitary 3-design}}, 
      author={Zak Webb},
      year={2016},
      eprint={1510.02769},
      archivePrefix={arXiv},
      optoptprimaryClass={quant-ph}
}

@misc{zhu2016clifford,
      title={{The Clifford group fails gracefully to be a unitary 4-design}}, 
      author={Huangjun Zhu and Richard Kueng and Markus Grassl and David Gross},
      year={2016},
      eprint={1609.08172},
      archivePrefix={arXiv},
      optoptprimaryClass={quant-ph}
}

@INPROCEEDINGS{aharonov1996polynomial,
  author={Aharonov, D. and Ben-Or, M.},
  booktitle={{Proceedings of 37th Conference on Foundations of Computer Science}}, 
  title={Polynomial simulations of decohered quantum computers}, 
  year={1996},
  volume={},
  number={},
  pages={46-55},
  doi={10.1109/SFCS.1996.548463},
  eprint = {quant-ph/9611029},
  eprinttype = {arXiv}
}

@inproceedings{Aharonov_2023, series={STOC ’23},
   title={A polynomial-time classical algorithm for noisy random circuit sampling},
   url={http://dx.doi.org/10.1145/3564246.3585234},
   DOI={10.1145/3564246.3585234},
   booktitle={Proc. 55th Ann. ACM Symp. Th. Comp.},
   publisher={ACM},
   author={Aharonov, Dorit and Gao, Xun and Landau, Zeph and Liu, Yunchao and Vazirani, Umesh},
   year={2023},
   month=jun, collection={STOC ’23} }

@article{schumann2023emergence,
      title={Emergence of noise-induced barren plateaus in arbitrary layered noise models}, 
      author={Marco Schumann and Frank K. Wilhelm and Alessandro Ciani},
      journal={Quantum Sci. Technol.}, Volume=9, year= 2024, pages=045019, DOI={10.1088/2058-9565/ad6285}
}

@article{King2001,
  title = {Minimal entropy of states emerging from noisy quantum channels},
  volume = {47},
  ISSN = {0018-9448},
  url = {http://dx.doi.org/10.1109/18.904522},
  DOI = {10.1109/18.904522},
  number = {1},
  journal = {IEEE Trans. Inf. Th.},
  publisher = {Institute of Electrical and Electronics Engineers (IEEE)},
  author = {King,  C. and Ruskai,  M.B.},
  year = {2001},
  pages = {192–209}
}

@misc{alex2021random,
    title={Random quantum circuits transform local noise into global white noise},
    author={Alexander M. Dalzell and Nicholas Hunter-Jones and Fernando G. S. L. Brandão},
    year={2021},
    eprint={2111.14907},
    archivePrefix={arXiv},
    optoptprimaryClass={quant-ph}
}

@book{Landau1981Quantum,
  abstract = {{This edition has been completely revised to include some 20\% of new material. Important recent developments such as the theory of Regge poles are now included. Many problems with solutions have been added to those already contained in the book.}},
  doi={10.1016/C2013-0-02793-4},
  added-at = {2014-01-09T15:14:33.000+0100},
  author = {Landau, L. D. and Lifshitz, L. M.},
  biburl = {https://www.bibsonomy.org/bibtex/21aff86f11dd4dda1e49fd247e0f0f331/jaspervh},
  citeulike-article-id = {1284485},
  citeulike-linkout-0 = {http://www.amazon.ca/exec/obidos/redirect?tag=citeulike09-20&amp;path=ASIN/0750635398},
  citeulike-linkout-1 = {http://www.amazon.de/exec/obidos/redirect?tag=citeulike01-21&amp;path=ASIN/0750635398},
  citeulike-linkout-10 = {http://www.worldcat.org/oclc/846962062},
  citeulike-linkout-2 = {http://www.amazon.fr/exec/obidos/redirect?tag=citeulike06-21&amp;path=ASIN/0750635398},
  citeulike-linkout-3 = {http://www.amazon.jp/exec/obidos/ASIN/0750635398},
  citeulike-linkout-4 = {http://www.amazon.co.uk/exec/obidos/ASIN/0750635398/citeulike00-21},
  citeulike-linkout-5 = {http://www.amazon.com/exec/obidos/redirect?tag=citeulike07-20&path=ASIN/0750635398},
  citeulike-linkout-6 = {http://www.worldcat.org/isbn/0750635398},
  citeulike-linkout-7 = {http://books.google.com/books?vid=ISBN0750635398},
  citeulike-linkout-8 = {http://www.amazon.com/gp/search?keywords=0750635398&index=books&linkCode=qs},
  citeulike-linkout-9 = {http://www.librarything.com/isbn/0750635398},
  day = 15,
  edition = 3,
  howpublished = {Paperback},
  interhash = {e25ee5b7dc2fd3f32914663669279218},
  intrahash = {1aff86f11dd4dda1e49fd247e0f0f331},
  isbn = {0750635398},
  keywords = {book},
  month = jan,
  posted-at = {2011-05-23 16:54:40},
  priority = {2},
  publisher = {Butterworth-Heinemann},
  timestamp = {2014-01-09T15:14:33.000+0100},
  title = {Quantum mechanics non-relativistic theory, third edition: Volume 3},
  year = 1981
}

@article{Bravyi_2006,
  title = {{Lieb-Robinson Bounds and the generation of correlations and topological quantum order}},
  author = {Bravyi, S. and Hastings, M. B. and Verstraete, F.},
  journal = {Phys. Rev. Lett.},
  volume = {97},
  pages = {050401},
  numpages = {4},
  year = {2006},
  publisher = {American Physical Society},
  doi = {10.1103/PhysRevLett.97.050401}
}

@Article{Designs,
title={Evenly distributed unitaries: On the structure of unitary designs},
  Author                   = {D.  Gross and K. Audenaert and J. Eisert},
  Journal                  = {J. Math. Phys.},
  Year                     = {2007},
  Pages                    = {052104},
  doi={10.1063/1.2716992},
  Volume                   = {48}
}

@article{de2021quantum,
  title={{The quantum Wasserstein distance of order 1}},
  author={De Palma, Giacomo and Marvian, Milad and Trevisan, Dario and Lloyd, Seth},
  journal={IEEE Trans. Inf. Th.},
  volume={67},
  number={10},
  pages={6627--6643},
doi={10.1109/TIT.2021.3076442},
  year={2021},
  publisher={IEEE}
}

@book{Bhatia2007PositiveDM,
  title={{Positive definite matrices}},
  author={R. Bhatia},
  year={2007},
publisher={Princeton University Press},
address={Princeton},
doi={10.1515/9781400827787}
}

@misc{chen2024incompressibilityspectralgapsrandom,
      title={Incompressibility and spectral gaps of random circuits}, 
      author={Chi-Fang Chen and Jeongwan Haah and Jonas Haferkamp and Yunchao Liu and Tony Metger and Xinyu Tan},
      year={2024},
      eprint={2406.07478},
      archivePrefix={arXiv}
}

@article{BETHRUSKAI2002159,
title = {{An analysis of completely-positive trace-preserving maps on M2}},
journal = {Lin. Alg. Appl.},
volume = {347},
number = {1},
pages = {159-187},
year = {2002},
issn = {0024-3795},
doi = {https://doi.org/10.1016/S0024-3795(01)00547-X},
url = {https://www.sciencedirect.com/science/article/pii/S002437950100547X},
author = {Mary {Beth Ruskai} and Stanislaw Szarek and Elisabeth Werner},
keywords = {Completely positive maps, Stochastic maps, Quantum communication, Noisy channels, Bloch sphere, States},
abstract = {We give a useful new characterization of the set of all completely positive, trace-preserving maps Φ:M2→M2 from which one can easily check any trace-preserving map for complete positivity. We also determine explicitly all extreme points of this set, and give a useful parameterization after reduction to a certain canonical form. This allows a detailed examination of an important class of non-unital extreme points that can be characterized as having exactly two images on the Bloch sphere. We also discuss a number of related issues about the images and the geometry of the set of stochastic maps, and show that any stochastic map on m2 can be written as a convex combination of two “generalized” extreme points.}
}

@book{wilde_2013, place={Cambridge}, title={Quantum information theory}, DOI={10.1017/CBO9781139525343}, publisher={Cambridge University Press}, author={Wilde, Mark M.}, year={2013}}

@article{Aaronson_2004,
	doi = {10.1103/physreva.70.052328},

	year = 2004,

	publisher = {American Physical Society ({APS})},
  
	volume = {70},
  pages={052328},
  
	author = {Scott Aaronson and Daniel Gottesman},
  
	title = {Improved simulation of stabilizer circuits},
  
	journal = {Phys. Rev. A}
}

@article{Preskill_2018,
	doi = {10.22331/q-2018-08-06-79},
  
	url = {https://doi.org/10.22331%2Fq-2018-08-06-79},
  
	year = 2018,
	month = {aug},
  
	publisher = {Verein zur Forderung des Open Access Publizierens in den Quantenwissenschaften},
  
	volume = {2},
  
	pages = {79},
  
	author = {John Preskill},
  
	title = {Quantum Computing in the {NISQ} era and beyond},
  
	journal = {Quantum}
}

@article{optimal,
   title={Limitations of Variational Quantum Algorithms: A Quantum Optimal Transport Approach},
   volume={4},
   url={http://dx.doi.org/10.1103/PRXQuantum.4.010309},
   DOI={10.1103/prxquantum.4.010309},
pages={010309},
   journal={PRX Quantum},
   publisher={American Physical Society (APS)},
   author={De Palma, Giacomo and Marvian, Milad and Rouzé, Cambyse and França, Daniel Stilck},
   year={2023},
   month=jan }

@article{DanielPaper,
   title={Relative entropy convergence for depolarizing channels},
   volume={57},
   ISSN={1089-7658},
   url={http://dx.doi.org/10.1063/1.4939560},
   DOI={10.1063/1.4939560},
   number={2},
   journal={J. Math. Phys.},
   publisher={AIP Publishing},
pages={022202},
   author={Müller-Hermes, Alexander and Stilck França, Daniel and Wolf, Michael M.},
   year={2016},
   month=jan }

@misc{MeleHaar,
  author = {Antonio Anna Mele},
  title = {{I}ntroduction to {H}aar {m}easure {t}ools in {q}uantum {i}nformation: {A} {b}eginner's {t}utorial},
  eprint={2307.08956v2},
      archivePrefix={arXiv},
  year = {2023}
}

@misc{canonne2022short,
  title={{A short note on an inequality between KL and TV}},
  author={Canonne, Cl{\'e}ment L},
  eprint={2202.07198},
  archivePrefix={arXiv},
  year={2022}
}

@misc{napp2022quantifying,
      title={Quantifying the barren plateau phenomenon for a model of unstructured variational ans\"{a}tze}, 
      author={John Napp},
      year={2022},
      eprint={2203.06174},
      archivePrefix={arXiv},
      optoptprimaryClass={quant-ph}
}

@article{nibp,
	doi = {10.1038/s41467-021-27045-6},
  
	url = {https://doi.org/10.1038%2Fs41467-021-27045-6},
  
	year = 2021,
  
	publisher = {Springer Science and Business Media {LLC}
},
  
	volume = {12},
  
	number = {1},
  
	author = {Samson Wang and Enrico Fontana and M. Cerezo and Kunal Sharma and Akira Sone and Lukasz Cincio and Patrick J. Coles},
  pages=6961,
	title = {Noise-induced barren plateaus in variational quantum algorithms},
  
	journal = {Nature Comm.}
}

@misc{thanasilp2022exponential,
      title={Exponential concentration in quantum kernel methods}, 
      author={Supanut Thanasilp and Samson Wang and M. Cerezo and Zoë Holmes},
      year={2024},
      eprint={2208.11060},
      archivePrefix={arXiv},
      primaryClass={quant-ph},
      url={https://arxiv.org/abs/2208.11060}, 
}

@article{limitations2021,
   title={Limitations of optimization algorithms on noisy quantum devices},
   volume={17},
   ISSN={1745-2481},
   url={http://dx.doi.org/10.1038/s41567-021-01356-3},
   DOI={10.1038/s41567-021-01356-3},
   number={11},
   journal={Nature Phys.},
   publisher={Springer Science and Business Media LLC},
   author={Stilck França, Daniel and García-Patrón, Raul},
   year={2021},
   month={Oct},
   pages={1221–1227} }

@article{SupremacyReview,
  title = {Computational advantage of quantum random sampling},
  author = {Hangleiter, Dominik and Eisert, Jens},
  journal = {Rev. Mod. Phys.},
  volume = {95},
  pages = {035001},
  year = {2023},
  publisher = {American Physical Society},
  doi = {10.1103/RevModPhys.95.035001}
}

@article{Meyer_2023,
	doi = {10.1103/prxquantum.4.010328},
  
	url = {https://doi.org/10.1103%2Fprxquantum.4.010328},
  
	year = 2023,
	month = {mar},
  
	publisher = {American Physical Society ({APS})},
  pages={010328},
	volume = {4},
  
	number = {1},
  
	author = {Johannes Jakob Meyer and Marian Mularski and Elies Gil-Fuster and Antonio Anna Mele and Francesco Arzani and Alissa Wilms and Jens Eisert},
  
	title = {Exploiting Symmetry in Variational Quantum Machine Learning},
  
	journal = {{PRX} Quantum}
}

@misc{liu2022mitigating,
      title={Mitigating barren plateaus of variational quantum eigensolvers}, 
      author={X. Liu and G. Liu and J. Huang and H.-K. Zhang and X. Wang},
      year={2022},
      eprint={2205.13539},
      archivePrefix={arXiv},
      optprimaryClass={quant-ph}
}

@article{Pesah_2021,
	doi = {10.1103/physrevx.11.041011},
	url = {https://doi.org/10.1103%2Fphysrevx.11.041011}, 
	year = 2021,
    pages=041011,
	publisher = {American Physical Society ({APS})},
	volume = {11},
	author = {Arthur Pesah and M. Cerezo and Samson Wang and Tyler Volkoff and Andrew T. Sornborger and Patrick J. Coles},
	title = {Absence of barren plateaus in quantum convolutional neural networks},
	journal = {Phys. Rev. X}
}

@article{Volkoff_2021,
	doi = {10.1088/2058-9565/abd891},
  
	url = {https://doi.org/10.1088%2F2058-9565%2Fabd891},
  
	year = 2021,
	month = {jan},
  
	publisher = {{IOP} Publishing},
  
	volume = {6},
  
	number = {2},
  
	pages = {025008},
  
	author = {Tyler Volkoff and Patrick J Coles},
  
	title = {Large gradients via correlation in random parameterized quantum circuits},
  
	journal = {Quant. Sc. Tech.}
}

@misc{zhang2020trainability,
      title={Toward Trainability of Quantum Neural Networks}, 
      author={Kaining Zhang and Min-Hsiu Hsieh and Liu Liu and Dacheng Tao},
      year={2020},
      eprint={2011.06258},
      archivePrefix={arXiv},
      optprimaryClass={quant-ph}
}

@article{zhang2022escaping,
  title={{Escaping from the barren plateau via Gaussian initializations in deep variational quantum circuits}},
  author={Zhang, Kaining and Liu, Liu and Hsieh, Min-Hsiu and Tao, Dacheng},
  journal={Adv. Neur. Inf. Proc. Sys.},
doi={10.48550/arXiv.2203.09376},
  volume={35},
  pages={18612--18627},
  year={2022}
}

@article{Sack_2022,
	doi = {10.1103/prxquantum.3.020365},
  
	url = {https://doi.org/10.1103%2Fprxquantum.3.020365},
  pages=020365,
	year = 2022,
	month = {jun},
  
	publisher = {American Physical Society ({APS})},
  
	volume = {3},
  
	number = {2},
  
	author = {Stefan H. Sack and Raimel A. Medina and Alexios A. Michailidis and Richard Kueng and Maksym Serbyn},
  
	title = {Avoiding barren plateaus using classical shadows},
  
	journal = {{PRX} Quantum}
}

@misc{farhi2014quantum,
      title={{A quantum approximate optimization algorithm}}, 
      author={Edward Farhi and Jeffrey Goldstone and Sam Gutmann},
      year={2014},
      eprint={1411.4028},
      archivePrefix={arXiv},
      optoptprimaryClass={quant-ph}
}

@article{Larocca_2022,
	doi = {10.1103/prxquantum.3.030341},
  
	url = {https://doi.org/10.1103%2Fprxquantum.3.030341},
  
	year = 2022,
	month = {sep},
  
	publisher = {American Physical Society ({APS})},
  
	volume = {3},
  
	number = {3},
  pages=030341,
	author = {Mart{\'{\i}
}n Larocca and Fr{\'{e}}d{\'{e}}ric Sauvage and Faris M. Sbahi and Guillaume Verdon and Patrick J. Coles and M. Cerezo},
  
	title = {Group-Invariant Quantum Machine Learning},
  
	journal = {{PRX} Quantum}
}

@misc{wang2021error,
      title={Can Error Mitigation Improve Trainability of Noisy Variational Quantum Algorithms?}, 
      author={Samson Wang and Piotr Czarnik and Andrew Arrasmith and M. Cerezo and Lukasz Cincio and Patrick J. Coles},
      year={2021},
      eprint={2109.01051},
      archivePrefix={arXiv},
      optprimaryClass={quant-ph}
}

@misc{schatzki2022theoretical,
      title={{Theoretical guarantees for permutation-equivariant quantum neural networks}}, 
      author={Louis Schatzki and Martin Larocca and Quynh T. Nguyen and Frederic Sauvage and M. Cerezo},
      year={2022},
      eprint={2210.09974},
      archivePrefix={arXiv},
      optprimaryClass={quant-ph}
}

@article{Jain_2022,
	doi = {10.22331/q-2022-11-17-861},
  
	url = {https://doi.org/10.22331%2Fq-2022-11-17-861},
  
	year = 2022,
	month = {nov},
  
	publisher = {Verein zur Forderung des Open Access Publizierens in den Quantenwissenschaften},
  
	volume = {6},
  
	pages = {861},
  
	author = {Nishant Jain and Brian Coyle and Elham Kashefi and Niraj Kumar},
  
	title = {Graph neural network initialisation of quantum approximate optimisation},
  
	journal = {Quantum}
}

@article{Grant_2019,
	doi = {10.22331/q-2019-12-09-214},
  
	url = {https://doi.org/10.22331%2Fq-2019-12-09-214},
  
	year = 2019,
	month = {dec},
  
	publisher = {Verein zur Forderung des Open Access Publizierens in den Quantenwissenschaften},
  
	volume = {3},
  
	pages = {214},
  
	author = {Edward Grant and Leonard Wossnig and Mateusz Ostaszewski and Marcello Benedetti},
  
	title = {An initialization strategy for addressing barren plateaus in parametrized quantum circuits},
  
	journal = {Quantum}
}

@misc{rudolph2023synergy,
      title={Synergy Between Quantum Circuits and Tensor Networks: Short-cutting the Race to Practical Quantum Advantage}, 
      author={Manuel S. Rudolph and Jacob Miller and Danial Motlagh and Jing Chen and Atithi Acharya and Alejandro Perdomo-Ortiz},
      year={2023},
      eprint={2208.13673},
      archivePrefix={arXiv},
      optprimaryClass={quant-ph}
}

@misc{park2023hamiltonian,
      title={Hamiltonian variational ansatz without barren plateaus}, 
      author={Chae-Yeun Park and Nathan Killoran},
      year={2023},
      eprint={2302.08529},
      archivePrefix={arXiv},
      optprimaryClass={quant-ph}
}

@article{Mele_2022,
	doi = {10.1103/physreva.106.l060401},
  
	url = {https://doi.org/10.1103%2Fphysreva.106.l060401},
  
	year = 2022,
	month = {dec},
  
	publisher = {American Physical Society ({APS})},
  
	volume = {106},
  
	number = {6},
  
	author = {Antonio A. Mele and Glen B. Mbeng and Giuseppe E. Santoro and Mario Collura and Pietro Torta},
  
	title = {Avoiding barren plateaus via transferability of smooth solutions in a Hamiltonian variational ansatz},
    pages={L060401},
	journal = {Phys. Rev. A}
}

@article{Holmes_2022,
	doi = {10.1103/prxquantum.3.010313},
  
	url = {https://doi.org/10.1103%2Fprxquantum.3.010313},
        pages={010313},
	year = 2022,
	month = {jan},
  
	publisher = {American Physical Society ({APS})},
  
	volume = {3},
  
	number = {1},
  
	author = {Zoë Holmes and Kunal Sharma and M. Cerezo and Patrick J. Coles},
	title = {Connecting Ansatz Expressibility to Gradient Magnitudes and Barren Plateaus},
  
	journal = {{PRX} Quantum}
}

@article{Harrow_2023,
	doi = {10.1007/s00220-023-04675-z},
  
	url = {https://doi.org/10.1007%2Fs00220-023-04675-z},
  
	year = 2023,
	month = {may},
  
	publisher = {Springer Science and Business Media {LLC}
},
  
	volume = {401},
  
	number = {2},
  
	pages = {1531--1626},
  
	author = {Aram W. Harrow and Saeed Mehraban},
  
	title = {Approximate Unitary t-Designs by Short Random Quantum Circuits Using Nearest-Neighbor and Long-Range Gates},
  
	journal = {Commun. Math. Phys.}
}

@article{Brand_o_2016,
	doi = {10.1007/s00220-016-2706-8},
  
	url = {https://doi.org/10.1007%2Fs00220-016-2706-8},
  
	year = 2016,
	month = {aug},
  
	publisher = {Springer Science and Business Media {LLC}
},
  
	volume = {346},
  
	number = {2},
  
	pages = {397--434},
  
	author = {Fernando G. S. L. Brand{\~{a}}o and Aram W. Harrow and Micha{\l} Horodecki},
  
	title = {Local Random Quantum Circuits are Approximate Polynomial-Designs},
  
	journal = {Commun. Math. Phys.}
}

@article{McClean_2018,
	doi = {10.1038/s41467-018-07090-4},
  pages={4812},
	year = 2018,
	volume = {9},
	author = {Jarrod R. McClean and Sergio Boixo and Vadim N. Smelyanskiy and Ryan Babbush and Hartmut Neven},
	title = {Barren plateaus in quantum neural network training landscapes},
	journal = {Nature Comm.}
}

@article{Cerezo_2021,
	doi = {10.1038/s41467-021-21728-w},
  
	url = {https://doi.org/10.1038%2Fs41467-021-21728-w},
  
	year = 2021,
	month = {mar},
  
	publisher = {Springer Science and Business Media {LLC}
},
  pages=1791,
	volume = {12},
  
	number = {1},
  
	author = {M. Cerezo and Akira Sone and Tyler Volkoff and Lukasz Cincio and Patrick J. Coles},
  
	title = {Cost function dependent barren plateaus in shallow parametrized quantum circuits},
  
	journal = {Nature Comm.}
}

@article{CharactGOOGLE,
	abstract = {A critical question for quantum computing in the near future is whether quantum devices without error correction can perform a well-defined computational task beyond the capabilities of supercomputers. Such a demonstration of what is referred to as quantum supremacy requires a reliable evaluation of the resources required to solve tasks with classical approaches. Here, we propose the task of sampling from the output distribution of random quantum circuits as a demonstration of quantum supremacy. We extend previous results in computational complexity to argue that this sampling task must take exponential time in a classical computer. We introduce cross-entropy benchmarking to obtain the experimental fidelity of complex multiqubit dynamics. This can be estimated and extrapolated to give a success metric for a quantum supremacy demonstration. We study the computational cost of relevant classical algorithms and conclude that quantum supremacy can be achieved with circuits in a two-dimensional lattice of 7 ×7 qubits and around 40 clock cycles. This requires an error rate of around 0.5{\%} for two-qubit gates (0.05{\%} for one-qubit gates), and it would demonstrate the basic building blocks for a fault-tolerant quantum computer.},
	author = {Boixo, Sergio and Isakov, Sergei V. and Smelyanskiy, Vadim N. and Babbush, Ryan and Ding, Nan and Jiang, Zhang and Bremner, Michael J. and Martinis, John M. and Neven, Hartmut},
	date = {2018/06/01},
	date-added = {2023-09-13 17:13:04 +0200},
	date-modified = {2023-09-13 17:13:04 +0200},
	doi = {10.1038/s41567-018-0124-x},
	id = {Boixo2018},
	isbn = {1745-2481},
	journal = {Nature Phys.},
	number = {6},
	pages = {595--600},
	title = {Characterizing quantum supremacy in near-term devices},
	url = {https://doi.org/10.1038/s41567-018-0124-x},
	volume = {14},
	year = {2018},
	bdsk-url-1 = {https://doi.org/10.1038/s41567-018-0124-x}}

@article{Chirolli_2008,
	doi = {10.1080/00018730802218067},
  
	url = {https://doi.org/10.1080%2F00018730802218067},
  
	year = 2008,
	month = {may},
  
	publisher = {Informa {UK} Limited},
  
	volume = {57},
  
	number = {3},
  
	pages = {225--285},
  
	author = {Luca Chirolli and Guido Burkard},
  
	title = {Decoherence in solid-state qubits},
  
	journal = {Adv. Phys.}
}

@misc{watts2024unconditional,
      title={Unconditional Quantum Advantage for Sampling with Shallow Circuits}, 
      author={Adam Bene Watts and Natalie Parham},
      year={2024},
      eprint={2301.00995},
      archivePrefix={arXiv},
      optprimaryClass={quant-ph}
}

@misc{MindTheGaps,
      title={Mind the gaps: The fraught road to quantum advantage}, 
      author={Jens Eisert and John Preskill},
      year={2026},
      eprint={2510.19928},
      archivePrefix={arXiv},
      primaryClass={quant-ph},
      url={https://arxiv.org/abs/2510.19928}, 
}

@article{Bravyi_2020SHA,
   title={Quantum advantage with noisy shallow circuits},
   volume={16},
   ISSN={1745-2481},
   url={http://dx.doi.org/10.1038/s41567-020-0948-z},
   DOI={10.1038/s41567-020-0948-z},
   number={10},
   journal={Nature Phys.},
   publisher={Springer Science and Business Media LLC},
   author={Bravyi, Sergey and Gosset, David and König, Robert and Tomamichel, Marco},
   year={2020},
   month=jul, pages={1040–1045} }

@article{Bravyi_2018,
   title={Quantum advantage with shallow circuits},
   volume={362},
   ISSN={1095-9203},
   url={http://dx.doi.org/10.1126/science.aar3106},
   DOI={10.1126/science.aar3106},
   number={6412},
   journal={Science},
   publisher={American Association for the Advancement of Science (AAAS)},
   author={Bravyi, Sergey and Gosset, David and König, Robert},
   year={2018},
   month=oct, pages={308–311} }

@misc{chen2022complexity,
      title={{The complexity of NISQ}}, 
      author={Sitan Chen and Jordan Cotler and Hsin-Yuan Huang and Jerry Li},
      year={2022},
      eprint={2210.07234},
      archivePrefix={arXiv},
      optoptprimaryClass={quant-ph}
}

@misc{refrigerator,
	archiveprefix = {arXiv},
	author = {Michael Ben-Or and Daniel Gottesman and Avinatan Hassidim},
	eprint = {1301.1995},
	optprimaryClass = {quant-ph},
	title = {Quantum Refrigerator},
	year = {2013}}

@article{bravyi2020obstacles,
  title={Obstacles to variational quantum optimization from symmetry protection},
  doi = {10.1103/PhysRevLett.125.260505},
  author={Bravyi, Sergey and Kliesch, Alexander and Koenig, Robert and Tang, Eugene},
  journal={Phys. Rev. Lett.},
  volume={125},
  number={26},
  pages={260505},
  year={2020},
  publisher={APS}
}

@misc{farhi2020quantum,
  title={The quantum approximate optimization algorithm needs to see the whole graph: Worst case examples},
  author={Farhi, Edward and Gamarnik, David and Gutmann, Sam},
    eprint={2005.08747},
    archivePrefix={arXiv},
  year={2020}
}

@misc{farhi2020quantum-bis,
  title={The quantum approximate optimization algorithm needs to see the whole graph: A typical case},
  author={Farhi, Edward and Gamarnik, David and Gutmann, Sam},
    eprint={2004.09002},
    archivePrefix={arXiv},
  year={2020}
}

@misc{ErrorMitigationObstructions,
      title={Exponentially tighter bounds on limitations of quantum error mitigation}, 
      author={Yihui Quek and Daniel Stilck França and Sumeet Khatri and Johannes Jakob Meyer and Jens Eisert},
      year={2023},
      eprint={2210.11505},
      archivePrefix={arXiv},
}

@article{Variational,
title={Variational quantum algorithms},
author={M. Cerezo and A. Arrasmith and R. Babbush and S. C. Benjamin and S. Endo and K. Fujii and J. R. McClean and K. Mitarai and X. Yuan and L. Cincio  and P. J. Coles},
doi={10.1038/s42254-021-00348-9},
journal={Nature Rev. Phys.}, volume=3, pages={625-644},year= 2021}

@article{Kandala,
author={A. Kandala and A. Mezzcapo and K. Temme and M. Takita and M. Brink and J. W. Chow and J. M. Gambetta}, title={Hardware-efficient variational quantum eigensolver for small molecules and quantum magnets},  
doi={10.1038/nature23879},journal={Nature},volume=549, pages=242, year=2017}

@article{GoogleSupremacy,
author={F. Arute and others},
doi={10.1038/s41586-019-1666-5},
title={Quantum supremacy using a programmable superconducting processor},
journal={Nature}, volume=574, pages={505-510}, year=2019}

@article{Pino,
author={J.  and M. Pino and J. M. Dreiling and C. Figgatt and J. P. Gaebler and S. A. Moses and M. S. All- man and C. H. Baldwin and M. Foss-Feig and D. Hayes and K. Mayer and C. Ryan-Anderson and  B. Neyenhuis},
title={{Demonstration of the trapped-ion quantum CCD computer architecture}}, 
journal={Nature}, volume=592, pages={209–213}, year={2021}, doi={10.1038/ s41586-021-03318-4}}

@article{dalzell2022random,
  title={{Random quantum circuits anticoncentrate in log depth}},
  author={Dalzell, Alexander M and Hunter-Jones, Nicholas and Brand{\~a}o, Fernando GSL},
  doi = {10.1103/PRXQuantum.3.010333},
  journal={PRX Quantum},
  volume={3},
  number={1},
  pages={010333},
  year={2022},
  publisher={APS}
}

@inproceedings{kempe2008upper,
  title={Upper bounds on the noise threshold for fault-tolerant quantum computing},
  author={Kempe, Julia and Regev, Oded and Unger, Falk and De Wolf, Ronald},
  booktitle={International Colloquium on Automata, Languages, and Programming},
  pages={845--856},
  year={2008},
  DOI={10.1007/978-3-540-70575-8_69},
  organization={Springer}
}

@misc{razborov2003upper,
    title={An upper bound on the threshold quantum decoherence rate},
    author={Razborov, A. A.},
    year={2003},
    eprint={quant-ph/0310136},
    archivePrefix={arXiv},
}

@misc{schuld2021supervised,
      title={Supervised quantum machine learning models are kernel methods}, 
      author={Maria Schuld},
      year={2021},
      eprint={2101.11020},
      archivePrefix={arXiv},
      optoptprimaryClass={quant-ph}
}

@article{havlivcek2019supervised,
  title={Supervised learning with quantum-enhanced feature spaces},
doi={10.1038/s41586-019-0980-2},
  author={Havl{\'\i}{\v{c}}ek, Vojt{\v{e}}ch and C{\'o}rcoles, Antonio D and Temme, Kristan and Harrow, Aram W and Kandala, Abhinav and Chow, Jerry M and Gambetta, Jay M},
  journal={Nature},
  volume={567},
  number={7747},
  pages={209--212},
  year={2019},
  publisher={Nature Publishing Group UK London}
}

@article{liu2021rigorous,
  title={A rigorous and robust quantum speed-up in supervised machine learning},
  author={Liu, Yunchao and Arunachalam, Srinivasan and Temme, Kristan},
doi={10.1038/s41567-021-01287-z},
  journal={Nature Phys.},
  volume={17},
  number={9},
  pages={1013--1017},
  year={2021},
  publisher={Nature Publishing Group UK London}
}

@misc{belkin2024approximatetdesignsgenericcircuit,
      title={Approximate t-designs in generic circuit architectures}, 
      author={Daniel Belkin and James Allen and Soumik Ghosh and Christopher Kang and Sophia Lin and James Sud and Fred Chong and Bill Fefferman and Bryan K. Clark},
      year={2024},
      eprint={2310.19783},
      archivePrefix={arXiv},
      optprimaryClass={quant-ph}
}

@misc{mittal2023localrandomquantumcircuits,
      title={Local random quantum circuits form approximate designs on arbitrary architectures}, 
      author={Shivan Mittal and Nicholas Hunter-Jones},
      year={2023},
      eprint={2310.19355},
      archivePrefix={arXiv},
      optprimaryClass={quant-ph}
}

@misc{fontana2023classical,
  title={Classical simulations of noisy variational quantum circuits},
  author={Fontana, Enrico and Rudolph, Manuel S and Duncan, Ross and Rungger, Ivan and C{\^\i}rstoiu, Cristina},
eprint={2306.05400},
      archivePrefix={arXiv},
  year={2023}
}

@misc{cerezo2023does,
      title={Does provable absence of barren plateaus imply classical simulability? Or, why we need to rethink variational quantum computing}, 
      author={M. Cerezo and Martin Larocca and Diego García-Martín and N. L. Diaz and Paolo Braccia and Enrico Fontana and Manuel S. Rudolph and Pablo Bermejo and Aroosa Ijaz and Supanut Thanasilp and Eric R. Anschuetz and Zoë Holmes},
      year={2023},
      eprint={2312.09121},
      archivePrefix={arXiv},
      optoptprimaryClass={quant-ph}
}

@misc{shi2024avoiding,
      title={{Avoiding barren plateaus via Gaussian mixture model}}, 
      author={Xiao Shi and Yun Shang},
      year={2024},
      eprint={2402.13501},
      archivePrefix={arXiv},
      optprimaryClass={quant-ph}
}

@misc{park2024hardwareefficient,
      title={Hardware-efficient ansatz without barren plateaus in any depth}, 
      author={Chae-Yeun Park and Minhyeok Kang and Joonsuk Huh},
      year={2024},
      eprint={2403.04844},
      archivePrefix={arXiv},
      optprimaryClass={quant-ph}
}

@article{MadsenPH,
	abstract = {A quantum computer attains computational advantage when outperforming the best classical computers running the best-known algorithms on well-defined tasks. No photonic machine offering programmability over all its quantum gates has demonstrated quantum computational advantage: previous machines1,2 were largely restricted to static gate sequences. Earlier photonic demonstrations were also vulnerable to spoofing3, in which classical heuristics produce samples, without direct simulation, lying closer to the ideal distribution than do samples from the quantum hardware. Here we report quantum computational advantage using Borealis, a photonic processor offering dynamic programmability on all gates implemented. We carry out Gaussian boson sampling4 (GBS) on 216 squeezed modes entangled with three-dimensional connectivity5, using a time-multiplexed and photon-number-resolving architecture. On average, it would take more than 9,000 years for the best available algorithms and supercomputers to produce, using exact methods, a single sample from the programmed distribution, whereas Borealis requires only 36 μs. This runtime advantage is over 50 million times as extreme as that reported from earlier photonic machines. Ours constitutes a very large GBS experiment, registering events with up to 219 photons and a mean photon number of 125. This work is a critical milestone on the path to a practical quantum computer, validating key technological features of photonics as a platform for this goal.},
	author = {Madsen, Lars S. and Laudenbach, Fabian and Askarani, Mohsen Falamarzi. and Rortais, Fabien and Vincent, Trevor and Bulmer, Jacob F. F. and Miatto, Filippo M. and Neuhaus, Leonhard and Helt, Lukas G. and Collins, Matthew J. and Lita, Adriana E. and Gerrits, Thomas and Nam, Sae Woo and Vaidya, Varun D. and Menotti, Matteo and Dhand, Ish and Vernon, Zachary and Quesada, Nicol{\'a}s and Lavoie, Jonathan},
	date = {2022/06/01},
	date-added = {2024-03-12 15:09:16 +0100},
	date-modified = {2024-03-12 15:09:16 +0100},
	doi = {10.1038/s41586-022-04725-x},
	id = {Madsen2022},
	isbn = {1476-4687},
	journal = {Nature},
	number = {7912},
	pages = {75--81},
	title = {Quantum computational advantage with a programmable photonic processor},
	url = {https://doi.org/10.1038/s41586-022-04725-x},
	volume = {606},
	year = {2022},
	bdsk-url-1 = {https://doi.org/10.1038/s41586-022-04725-x}}

@article{angrisani2024classicallyestimatingobservablesnoiseless,
      title={Classically estimating observables of noiseless quantum circuits}, 
      author={Armando Angrisani and Alexander Schmidhuber and Manuel S. Rudolph and M. Cerezo and Zoë Holmes and Hsin-Yuan Huang},
      journal={Phys. Rev. Lett.}, volume=135, pages=170602, year=2025,doi={10.1103/lh6x-7rc3} 
}

@misc{huang2023learningpredictarbitraryquantum,
      title={Learning to predict arbitrary quantum processes}, 
      author={Hsin-Yuan Huang and Sitan Chen and John Preskill},
      year={2023},
      eprint={2210.14894},
      archivePrefix={arXiv},
      primaryClass={quant-ph},
      url={https://arxiv.org/abs/2210.14894}, 
}

@article{Zhang_2024,
   title={Absence of Barren Plateaus in Finite Local-Depth Circuits with Long-Range Entanglement},
   volume={132},
  pages=150603,
   DOI={10.1103/physrevlett.132.150603},
   journal={Phys. Rev. Lett.},
   publisher={American Physical Society (APS)},
   author={Zhang, Hao-Kai and Liu, Shuo and Zhang, Shi-Xin},
   year={2024},
   month=apr }

@article{PhysRevResearch.5.L032040,
  title = {Training variational quantum algorithms with random gate activation},
  author = {Liu, Shuo and Zhang, Shi-Xin and Jian, Shao-Kai and Yao, Hong},
  journal = {Phys. Rev. Res.},
  volume = {5},
  issue = {3},
  pages = {L032040},
  numpages = {7},
  year = {2023},
  month = {Sep},
  publisher = {American Physical Society},
  doi = {10.1103/PhysRevResearch.5.L032040},
  url = {https://link.aps.org/doi/10.1103/PhysRevResearch.5.L032040}
}

@misc{liu2024noiseinducedphasetransitionshybrid,
      title={Noise-induced phase transitions in hybrid quantum circuits}, 
      author={Shuo Liu and Ming-Rui Li and Shi-Xin Zhang and Shao-Kai Jian and Hong Yao},
      year={2024},
      eprint={2401.16631},
      archivePrefix={arXiv},
      optprimaryClass={quant-ph} 
}

@article{Liu_2024,
   title={Entanglement Structure and Information Protection in Noisy Hybrid Quantum Circuits},
   volume={132},
   pages=240402,
   DOI={10.1103/physrevlett.132.240402},
   number={24},
   journal={Phys. Rev. Lett.},
   publisher={American Physical Society (APS)},
   author={Liu, Shuo and Li, Ming-Rui and Zhang, Shi-Xin and Jian, Shao-Kai},
   year={2024},
   month=jun }

@article{Skinner_2019,
   title={Measurement-Induced Phase Transitions in the Dynamics of Entanglement},
   volume={9},
   url={http://dx.doi.org/10.1103/PhysRevX.9.031009},
   DOI={10.1103/physrevx.9.031009},
pages=031009,
   journal={Phys. Rev. X},
   publisher={American Physical Society (APS)},
   author={Skinner, Brian and Ruhman, Jonathan and Nahum, Adam},
   year={2019}}

@article{Liu_2023,
   title={{Universal Kardar-Parisi-Zhang scaling in noisy hybrid quantum circuits}},
   volume={107},
   ISSN={2469-9969},
   url={http://dx.doi.org/10.1103/PhysRevB.107.L201113},
   DOI={10.1103/physrevb.107.l201113},
pages={L201113},
   journal={Phys. Rev. B},
   publisher={American Physical Society (APS)},
   author={Liu, Shuo and Li, Ming-Rui and Zhang, Shi-Xin and Jian, Shao-Kai and Yao, Hong},
   year={2023}}

@article{measind1,
	abstract = {Quantum many-body systems subjected to unitary evolution with the addition of interspersed measurements exhibit a variety of dynamical phases that do not occur under pure unitary evolution. However, these systems remain challenging to investigate on near-term quantum hardware owing to the need for numerous ancilla qubits or repeated high-fidelity mid-circuit measurements, a capability that has only recently become available. Here we report the realization of a measurement-induced entanglement phase transition with a hybrid random circuit on up to 14 superconducting qubits with mid-circuit readout capability. We directly observe extensive and sub-extensive scaling of entanglement entropy in the volume- and area-law phases, respectively, by varying the rate of the measurements. We also demonstrate phenomenological critical behaviour by performing a data collapse of the measured entanglement entropy. Our work establishes the use of mid-circuit measurement as a powerful resource for quantum simulation on near-term quantum computers.},
	author = {Koh, Jin Ming and Sun, Shi-Ning and Motta, Mario and Minnich, Austin J.},
	date = {2023/09/01},
	date-added = {2024-08-21 10:52:18 +0200},
	date-modified = {2024-08-21 10:52:18 +0200},
	doi = {10.1038/s41567-023-02076-6},
	id = {Koh2023},
	isbn = {1745-2481},
	journal = {Nature Phys.},
	number = {9},
	pages = {1314--1319},
	title = {Measurement-induced entanglement phase transition on a superconducting quantum processor with mid-circuit readout},
	url = {https://doi.org/10.1038/s41567-023-02076-6},
	volume = {19},
	year = {2023},
	bdsk-url-1 = {https://doi.org/10.1038/s41567-023-02076-6}}

@misc{schuster2024polynomialtimeclassicalalgorithmnoisy,
      title={A polynomial-time classical algorithm for noisy quantum circuits}, 
      author={Thomas Schuster and Chao Yin and Xun Gao and Norman Y. Yao},
      year={2024},
      eprint={2407.12768},
      archivePrefix={arXiv},
      optprimaryClass={quant-ph}
}

@misc{singkanipa2024unitalnoisevariationalquantum,
      title={Beyond unital noise in variational quantum algorithms: noise-induced barren plateaus and limit sets}, 
      author={P. Singkanipa and D. A. Lidar},
      year={2024},
      eprint={2402.08721},
      archivePrefix={arXiv},
      optprimaryClass={quant-ph} 
}

@article{ZHU2022240,
title = {Quantum computational advantage via 60-qubit 24-cycle random circuit sampling},
journal = {Science Bull.},
volume = {67},
number = {3},
pages = {240-245},
year = {2022},
issn = {2095-9273},
doi = {https://doi.org/10.1016/j.scib.2021.10.017},
url = {https://www.sciencedirect.com/science/article/pii/S2095927321006733},
author = {Qingling Zhu and Sirui Cao and Fusheng Chen and Ming-Cheng Chen and Xiawei Chen and Tung-Hsun Chung and Hui Deng and Yajie Du and Daojin Fan and Ming Gong and Cheng Guo and Chu Guo and Shaojun Guo and Lianchen Han and Linyin Hong and He-Liang Huang and Yong-Heng Huo and Liping Li and Na Li and Shaowei Li and Yuan Li and Futian Liang and Chun Lin and Jin Lin and Haoran Qian and Dan Qiao and Hao Rong and Hong Su and Lihua Sun and Liangyuan Wang and Shiyu Wang and Dachao Wu and Yulin Wu and Yu Xu and Kai Yan and Weifeng Yang and Yang Yang and Yangsen Ye and Jianghan Yin and Chong Ying and Jiale Yu and Chen Zha and Cha Zhang and Haibin Zhang and Kaili Zhang and Yiming Zhang and Han Zhao and Youwei Zhao and Liang Zhou and Chao-Yang Lu and Cheng-Zhi Peng and Xiaobo Zhu and Jian-Wei Pan},
keywords = {Quantum physics, Quantum computation, Quantum information, Superconducting quantum computing, Superconducting qubit},
abstract = {To ensure a long-term quantum computational advantage, the quantum hardware should be upgraded to withstand the competition of continuously improved classical algorithms and hardwares. Here, we demonstrate a superconducting quantum computing systems Zuchongzhi 2.1, which has 66 qubits in a two-dimensional array in a tunable coupler architecture. The readout fidelity of Zuchongzhi 2.1 is considerably improved to an average of 97.74%. The more powerful quantum processor enables us to achieve larger-scale random quantum circuit sampling, with a system scale of up to 60 qubits and 24 cycles, and fidelity of FXEB=(3.66±0.345)×10-4. The achieved sampling task is about 6 orders of magnitude more difficult than that of Sycamore [Nature 574, 505 (2019)] in the classic simulation, and 3 orders of magnitude more difficult than the sampling task on Zuchongzhi 2.0 [arXiv:2106.14734 (2021)]. The time consumption of classically simulating random circuit sampling experiment using state-of-the-art classical algorithm and supercomputer is extended to tens of thousands of years (about 4.8×104 years), while Zuchongzhi 2.1 only takes about 4.2 h, thereby significantly enhancing the quantum computational advantage.}
}

@article{PhysRevLett.128.030501,
  title = {Simulation of Quantum Circuits Using the Big-Batch Tensor Network Method},
  author = {Pan, Feng and Zhang, Pan},
  journal = {Phys. Rev. Lett.},
  volume = {128},
  issue = {3},
  pages = {030501},
  numpages = {6},
  year = {2022},
  month = {Jan},
  publisher = {American Physical Society},
  doi = {10.1103/PhysRevLett.128.030501},
  url = {https://link.aps.org/doi/10.1103/PhysRevLett.128.030501}
}

@misc{clifford2020faster,
      title={Faster classical Boson Sampling}, 
      author={Peter Clifford and Raphaël Clifford},
      year={2020},
      eprint={2005.04214},
      archivePrefix={arXiv},
      optprimaryClass={quant-ph}
}

@misc{gao2018efficient,
      title={Efficient classical simulation of noisy quantum computation}, 
      author={Xun Gao and Luming Duan},
      year={2018},
      eprint={1810.03176},
      archivePrefix={arXiv},
      optprimaryClass={quant-ph}
}

@article{
GbosonCLASS,
author = {Jacob F. F. Bulmer  and Bryn A. Bell  and Rachel S. Chadwick  and Alex E. Jones  and Diana Moise  and Alessandro Rigazzi  and Jan Thorbecke  and Utz-Uwe Haus  and Thomas Van Vaerenbergh  and Raj B. Patel  and Ian A. Walmsley  and Anthony Laing },
title = {{The boundary for quantum advantage in Gaussian boson sampling}},
journal = {Science Adv.},
volume = {8},
number = {4},
pages = {eabl9236},
year = {2022},
doi = {10.1126/sciadv.abl9236}
}

@article{patra2023efficient,
      title={{Efficient tensor network simulation of IBM's largest quantum processors}}, 
      author={Siddhartha Patra and Saeed S. Jahromi and Sukhbinder Singh and Roman Orus},
      journal={Phys. Rev. Res.},
      volume=6, pages={013326}, 
      year=2024,
      doi={0.1103/PhysRevResearch.6.013326}
}

@article{Begu_i__2024,
   title={Fast and converged classical simulations of evidence for the utility of quantum computing before fault tolerance},
   volume={10},
   ISSN={2375-2548},
   url={http://dx.doi.org/10.1126/sciadv.adk4321},
   DOI={10.1126/sciadv.adk4321},
   number={3},
   journal={Science Adv.},
   publisher={American Association for the Advancement of Science (AAAS)},
   author={Begušić, T. and Gray, J. and Chan, G. K.-L.},
   year={2024},
 pages={eadk4321},   
   month=jan }

@misc{rudolph2023classical,
      title={{Classical surrogate simulation of quantum systems with LOWESA}}, 
      author={Manuel S. Rudolph and Enrico Fontana and Zoë Holmes and Lukasz Cincio},
      year={2023},
      eprint={2308.09109},
      archivePrefix={arXiv},
      optprimaryClass={quant-ph}
}

@misc{begušić2023fast,
      title={Fast classical simulation of evidence for the utility of quantum computing before fault tolerance}, 
      author={Tomislav Begušić and Garnet Kin-Lic Chan},
      year={2023},
      eprint={2306.16372},
      archivePrefix={arXiv},
      optprimaryClass={quant-ph}
}

@misc{liao2023simulation,
      title={{Simulation of IBM's kicked Ising experiment with 
              projected 
              entangled 
              pair 
              operator}}, 
      author={Hai-Jun Liao and Kang Wang and Zong-Sheng Zhou and Pan Zhang and Tao Xiang},
      year={2023},
      eprint={2308.03082},
      archivePrefix={arXiv},
      optprimaryClass={quant-ph}
}

@article{Kechedzhi_2024,
   title={Effective quantum volume, fidelity and computational cost of noisy quantum processing experiments},
   volume={153},
   ISSN={0167-739X},
   url={http://dx.doi.org/10.1016/j.future.2023.12.002},
   DOI={10.1016/j.future.2023.12.002},
   journal={Future Gen. Comp. Syst.},
   publisher={Elsevier BV},
   author={Kechedzhi, K. and Isakov, S.V. and Mandrà, S. and Villalonga, B. and Mi, X. and Boixo, S. and Smelyanskiy, V.},
   year={2024},
   month=apr, pages={431–441} }

@article{Schwarz_2017,
   title={Approximating local observables on projected entangled pair states},
   volume={95},
   pages=060102,
   DOI={10.1103/physreva.95.060102},
   number={6},
   journal={Phys. Rev. A},
   publisher={American Physical Society (APS)},
   author={Schwarz, M. and Buerschaper, O. and Eisert, J.},
   year={2017},
   month=jun }

@article{Borregaard_2021,
   title={Noise-robust exploration of many-body quantum states on near-term quantum devices},
   volume={7},
pages=45,
   DOI={10.1038/s41534-021-00363-9},
   number={1},
   journal={npj Quant. Inf.},
   publisher={Springer Science and Business Media LLC},
   author={Borregaard, Johannes and Christandl, Matthias and Stilck França, Daniel},
   year={2021},
   month=mar }

@article{Christandl_2017,
	doi = {10.1007/s00220-017-2885-y},
  
	url = {https://doi.org/10.1007%2Fs00220-017-2885-y},
  
	year = 2017,
	month = {may},
  
	publisher = {Springer Science and Business Media {LLC}
},
  
	volume = {353},
  
	number = {2},
  
	pages = {821--852},
  
	author = {Matthias Christandl and Alexander Müller-Hermes},
  
	title = {Relative Entropy Bounds on Quantum, Private and Repeater Capacities},
  
	journal = {Commun. Math. Phys.}
}

@article{hirche2022contraction,
  title={On contraction coefficients, partial orders and approximation of capacities for quantum channels},
doi={10.22331/q-2022-11-28-862},
  author={Hirche, Christoph and Rouz{\'e}, Cambyse and Fran{\c{c}}a, Daniel Stilck},
  journal={Quantum},
  volume={6},
  pages={862},
  year={2022},
  publisher={Verein zur F{\"o}rderung des Open Access Publizierens in den Quantenwissenschaften}
}

@misc{sannia2023engineered,
  title={Engineered dissipation to mitigate barren plateaus},
  author={Sannia, Antonio and Tacchino, Francesco and Tavernelli, Ivano and Giorgi, Gian Luca and Zambrini, Roberta},
    eprint={2310.15037},
    archivePrefix={arXiv},
  year={2023}
}

@article{MDS+13,
title = {{On quantum R\'{e}nyi entropies: A new generalization and some properties}},
author = {M\"{u}ller-Lennert,Martin  and Dupuis,Fr\'{e}d\'{e}ric  and Szehr,Oleg  and Fehr,Serge  and Tomamichel,Marco},
year = 2013,
journal = {J. Math. Phys.},
volume = 54,
number = 12,
pages = 122203,
doi = {10.1063/1.4838856},
url = {https://doi.org/10.1063/1.4838856}
}

@article{WWY14,
	title = {{Strong converse for the classical capacity of entanglement-breaking and Hadamard channels via a sandwiched R\'{e}nyi relative entropy}},
	author = {Wilde, Mark M. and Winter, Andreas and Yang, Dong},
	journal = {Comm. Math. Phys.},
	volume = {331},
	issue = {2},
	pages = {593--622},
	year = {2014},
	url = {https://doi.org/10.1007/s00220-014-2122-x},
	eprint = {1306.1586},
	eprinttype = {arXiv}
}

@article{shao2024,
  title = {Simulating Noisy Variational Quantum Algorithms: A Polynomial Approach},
  author = {Shao, Yuguo and Wei, Fuchuan and Cheng, Song and Liu, Zhengwei},
  journal = {Phys. Rev. Lett.},
  volume = {133},
  issue = {12},
  pages = {120603},
  numpages = {7},
  year = {2024},
  month = {Sep},
  publisher = {American Physical Society},
  doi = {10.1103/PhysRevLett.133.120603},
  url = {https://link.aps.org/doi/10.1103/PhysRevLett.133.120603}
}

\clearpage

\onecolumngrid
\appendix
\begin{center}
		
	\noindent\textbf{\large{Supplementary Material}}
\end{center}



\tableofcontents
\newpage

\section{Preliminaries}
\subsection{Notation and basic definitions}\label{sec-definitions}
\noindent Throughout this work, we employ the following notation and conventions. 

\begin{itemize}
\item $\mathcal{L}(\mathbb{C}^d)$ denotes the set of linear operators that act on the $d$-dimensional complex vector space $\mathbb{C}^d$. 
\item The \emph{unitary group}, denoted as $\Ug(d)$, comprises operators $U \in \mathcal{L}(\mathbb{C}^d)$ satisfying $U^\dagger U = I$, with $I$ representing the identity operator. Additionally, we use $[d]$ to represent the set of integers from $1$ to $d$, i.e., $[d] \coloneqq \{1,\dots, d\}$.

\item Given a vector $v \in \mathbb{C}^d$ and a value $p \in \left[1,\infty\right]$, we denote the $p$-norm of $v$ as $\norm{v}_p$, defined as 
\begin{equation}\norm{v}_p \coloneqq \left(\sum_{i=1}^d |v_i|^p\right)^{1/p}.
\end{equation}

\item \textbf{Norms:} For a matrix $A\in \MatC{d}$, its 
\emph{Schatten $p$-norm} is $\norm{A}_p\coloneqq \Tr\small(\small(\sqrt{A^\dagger A}\small)^p\small)^{1/p}$, corresponding to the $p$-norm of the vector of singular values of $A$. 

The trace norm and Hilbert-Schmidt norm, specific instances of Schatten $p$-norms, are, respectively, denoted as $\norm{\cdot}_1$ and $\norm{\cdot}_2$. 

The infinity norm, $\norm{\cdot}_\infty$, of a matrix is the maximum singular value, which is equal to the limit of the Schatten $p$-norm as $p$ approaches infinity. 

The Hilbert-Schmidt norm arises from the scalar product $\hs{A}{B}\coloneqq \Tr\!\left(A^\dagger B\right)$ for $A,B \in \mathcal{L}\left(\mathbb{C}^d\right)$. 

The \emph{H\"older inequality}, 
$\lvert \hs{A}{B} \rvert \le \norm{A}_p \norm{B}_q, $ holds for $1 \le p,q \le \infty $ such that $p^{-1}+q^{-1}=1$.

For all matrices $A$ and $1\le p\le q$, we have $\norm{A}_{q}\le \norm{A}_{p}$ and $\norm{A}_{p}\le \mathrm{rank}(A)^{(p^{-1}-q^{-1})}\norm{A}_{q}$. In particular, we have that $\norm{A}_1\leq\sqrt{d}\norm{A}_2$.

\item \textbf{Quantum states:} The set of density matrices (\emph{quantum states}) is 
\begin{equation}
    \mathcal{S}\!\left(\mathbb{C}^d\right)\coloneqq \{\rho \in \mathcal{L}\left(\mathbb{C}^d\right) \,:\,\rho \ge 0,\,\Tr(\rho)=1\}.
    \end{equation}
We adopt the bra-ket notation, denoting a vector $v \in \mathbb{C}^d$ as $\ket{v}$ and its adjoint as $\bra{v}$. A vector $\ket{\psi}\in \mathbb{C}^d$ is a (pure) state vector if $\norm{\ket{\psi}}_2=1$. The canonical basis of $\mathbb{C}^d$ is $\{\ket{i}\}_{i=1}^{d}$, and the non-normalized maximally entangled state vector is given by $\ket{\Omega}\coloneqq \sum^d_{i=1}\ket{i}\otimes\ket{i}=\sum^d_{i=1}\ket{i,i}.$

\item Given an element of $A\in \mathcal{L}(\mathbb{C}^{d_1}\otimes \mathbb{C}^{d_2})$, we indicate with $\Tr_1(A)$ the partial trace of $A$ with respect to the first subsystem, and similarly for $\Tr_2(A)$, for partial trace of $A$ with respect to the second subsystem. 

\item  When addressing a system of $n$ qubits, we use $I$ to denote the identity operator on the Hilbert space $\mathbb{C}^{2}$ of one qubit, while $I_n=I^{\otimes n}$ denotes the identity on the Hilbert space of $n$ qubits.

\item \textbf{Pauli basis:} Let $d=2^n$, where $n \in \mathbb{N}$. Elements  of the Pauli basis $\{I,X,Y,Z\}^{\otimes n}$ are Hermitian, unitary, trace-less, they square to the identity and they are orthogonal to each other with respect to the Hilbert-Schmidt Scalar product. The Pauli basis forms an orthogonal basis for the linear operators $\MatC{d}$. We denote $\boldsymbol{\sigma}\coloneqq (X,Y,Z)$ the vector of single qubit Pauli matrices.

Given $P\in \{I,X,Y,Z\}^{\otimes n}$ such that $P=Q_1 ~\otimes \cdots \otimes ~Q_n$, we define $\left[P\right]$ the set $\left[P\right]\coloneqq \{Q_1,\dotsc, Q_n\}$. 

\item \textbf{Support:} We define the \emph{support of} $P\in \{I,X,Y,Z\}^{\otimes n}$ as the set of integers containing the non-identity terms in $\left[P\right]$, i.e., \begin{equation}
\mathrm{supp}(P)\coloneqq \{i \in [n]\, :\, Q_i\neq I\}.\end{equation}
For example, $\mathrm{supp}(X\otimes I\otimes Y)=\{1,3\}$.

Similarly, if $H$ is an operator expressed in the Pauli basis as $H=\sum^{M}_{i=1} c_i P_i$, where $\{c_i\}^{M}_{i=1}$ are real non-zero numbers, and $\{P_i\}^{M}_{i=1}$ are elements of the Pauli basis, then
\begin{align}
    \mathrm{supp}(H)\coloneqq\bigcup^{M}_{i=1} \supp(P_i).
\end{align}
\item \textbf{Pauli weight:} The \emph{Pauli-weight} of $P$, denoted as $|P|$, is the number of Pauli in the tensor product decomposition of $P$ different from $I$, namely $|P|\coloneqq |\mathrm{supp}(P)|$. For example $|X\otimes I\otimes Y|=|\{1,3\}|=2$. The Pauli-$X$ weight of $P$, denoted as $|P|_X$, corresponds to the number of Pauli-$X$ operators in the tensor-product decomposition of $P$. Similarly, the Pauli-$Y$ weight of $P$, denoted as $|P|_Y$, represents the number of Pauli-$Y$ operators in the decomposition of $P$. Analogously, the Pauli-$Z$ weight is denoted as $|P|_Z$.

\item \textbf{Locality of an observable:} The \emph{locality} of a Hermitian operator $H$ is defined as $|\supp(H)|$.

\item \textbf{Light cone:} The light-cone of a Hermitian operator $H$ with respect to a linear map $\Phi$ is defined as 
    \begin{align}
       \mathrm{Light}(\Phi,H) \coloneqq  \supp
        (\Phi^*(H)),
    \end{align}
    where $\Phi^*$ denotes the adjoint of $\Phi$ with respect to the Hilbert-Schmidt scalar product.
    
\item \textbf{Circuit layer:} \textcolor{black}{A unitary operator is said to form a circuit layer if it can be expressed as a tensor product of two-qubit gates (not necessarily nearest neighbors).}

\item \textbf{Geometrical locality:} \textcolor{black}{A circuit \(U = U_L U_{L-1} \dots U_1\) is said to have geometrical locality \(\mathrm{D} > 0\) if, for any observable \(H\) and unitary layers \(U_1, \dots, U_L\), the Heisenberg-evolved observable \(U_{j+1}^\dag U_{j+2}^\dag \dots U^\dag_L H U_L \dots U_{j+2} U_{j+1}\) is supported on at most \(|\mathrm{supp}(H)|(2m)^{\mathrm{D}}\) qubits, where \(m \coloneqq L-j\). This definition aligns and it is implied by other common definitions of geometric locality. A circuit without any assumptions on geometric locality is said to have all-to-all connectivity, in which case the Heisenberg-evolved observable \(U_{j+1}^\dag U_{j+2}^\dag \dots U^\dag_L H U_L \dots U_{j+2} U_{j+1}\) is supported on at most \(|\mathrm{supp}(H)| 2^{m}\) qubits.}

\item \textbf{Asymptotic notation:} Big-O Notation: For a function \(f(n)\), if there exists a constant \(c\) and a specific input size \(n_0\) such that $f(n) \leq c \cdot g(n)$ for all \(n \geq n_0\), where \(g(n)\) is a well-defined function, then we express it as \(f(n) = O(g(n))\). This notation signifies the upper limit of how fast a function grows in relation to \(g(n)\).

Big-Omega Notation: For a function \(f(n)\), if there exists a constant \(c\) and a specific input size \(n_0\) such that $f(n) \geq c \cdot g(n)$ for all \(n \geq n_0\), where \(g(n)\) is a well-defined function, then we express it as \(f(n) = \Omega(g(n))\). This notation signifies the lower limit of how fast a function grows in relation to \(g(n)\).

Big-Theta Notation: For a function \(f(n)\), if $f(n) = O(g(n))$ and if $f(n) = \Omega(g(n))$, where \(g(n)\) is a well-defined function, then we express it as \(f(n) = \Theta(g(n))\). 

Little-Omega Notation: For a function \(f(n)\), if for any constant $c$, there exists an input size $n_0$ such that $f(n) > c \cdot g(n)$ for all \(n \geq n_0\), where \(g(n)\) is a well-defined function, then then we express it as \(f(n) = \omega(g(n))\). 
This notation implies that the function grows strictly faster than the provided lower bound.

\end{itemize}

\subsection{Haar measure and unitary designs}
\label{unitary design and haar measure}
\noindent In the following, for our proofs it will be useful to have some familiarity with the Haar measure, which formalizes the notion of uniform distribution over unitaries. 
For a more detailed explanation we refer to
Ref.~\cite{MeleHaar} -- here we state a few crucial properties that will be useful later on. We define as the \emph{Haar measure} $\mu_H(\Ug(d))$ the (unique) probability distribution over the unitary group $\Ug(d)$ which is left and right invariant, which means that for any integrable function $f$, we have
\begin{align}
    \ExU \left[f(U)\right]= \ExU \left[f(UV)\right]= \ExU \left[f(VU)\right],
\end{align}
for any $U,V \in \Ug(d)$. In the last equation, we have used $\mu_H\equiv \mu_H(\Ug(d))$, that is, we omitted to specify the group $\Ug(d)$, and we will do the same in subsequent sections.
 Moreover, it holds that
\begin{align}
    \ExU \left[f(U)\right]= \ExU \left[f(U^\dagger)\right].
\end{align}

\subsubsection{Useful relations}
\noindent We define the \emph{identity} $\mathbb{I}$ and the \emph{flip operator} $\mathbb{F}$, also known as the permutation operators associated to a tensor product of two Hilbert spaces, as 
\begin{align}
    \Idd\coloneqq \sum^d_{i,j=1} \ketbra{i,j}{i,j}, \quad\quad\quad \Flip \coloneqq \sum^d_{i,j=1} \ketbra{i,j}{j,i}.
\end{align}
From this definition, it can be observed that they satisfy 
\begin{align}
\mathbb{I}\left(\ket{\psi}\otimes\ket{\phi}\right)=\ket{\psi}\otimes\ket{\phi}, \quad \quad
\mathbb{F}\left(\ket{\psi}\otimes\ket{\phi}\right)=\ket{\phi}\otimes\ket{\psi},
\end{align}
for all $\ket{\psi},\ket{\phi} \in \mathbb{C}^d$.
Useful properties of the flip operator are the \emph{swap-trick} and the \emph{partial-swap-trick}
\begin{align}
    \Tr(A\otimes B \Flip)=\Tr(AB),\quad \Tr_2(A\otimes B \Flip)=AB,
\end{align}
equalities that can be easily verified in terms of tensor network diagrams.

Let $n$ now be the number of qubits in a system. If $d=2^n$, then the flip operator can be written in terms of the Pauli basis as
\begin{align}
    \Flip=\frac{1}{d}\sum_{P\in \{I,X,Y,Z\}^{\otimes n}} P\otimes P,
\end{align}
where we have used the fact that the Pauli basis is an orthogonal basis and the \emph{swap-trick}. 

Two formulas, involving expected values over the Haar measure and permutation operators, will be crucial in our proofs.
Given $O \in \MatC{d}$, we have the so called \emph{first-moment formula}, given by
\begin{align}
            \ExU \left[U O U^{\dagger}\right]&=\frac{\Tr\!\left(O\right)}{d} I.
            \label{eq:1momHaar}
\end{align}
Given $O \in \mathcal{L}((\mathbb{C}^d)^{\otimes 2})$, we have the second-moment formula
    \begin{align}
        \ExU \left[U^{\otimes 2} O U^{\dagger \otimes 2}\right]=c_{\Idd,O}\mathbb{I} + c_{\Flip,O} \mathbb{F},
        \label{eq:2momHaar}
    \end{align}
    where
    \begin{align}
        c_{\Idd,O}=\frac{\Tr\!\left(O\right)-d^{-1}\Tr\!\left(\mathbb{F}O\right)}{d^2-1}\quad \text{and} \quad c_{\Flip,O}=\frac{\Tr\!\left(\mathbb{F}O\right)-d^{-1}\Tr\!\left(O\right)}{d^2-1},
    \end{align}
(see Ref.~\cite{MeleHaar} for a proof of the previous two equations).
A probability distribution over unitaries $\nu$ is defined to be a \emph{$k$-design}~\cite{Designs}, for $k\in\mathbb{N}$, if and only if 
    \begin{align}
        \ExU \left[U^{\otimes k} O U^{\dagger \otimes k}\right]= \underset{V \sim \nu}{\mathbb{E}} \left[V^{\otimes k} O V^{\dagger \otimes k}\right].
        \label{eq:design}
    \end{align}
If the distribution $\nu$ is a $(k+1)$-design, then it is also a $k$-design. 
%
An important set of unitaries which will be useful in our work is the Clifford group~\cite{gottesman1998heisenberg}.
\begin{definition}[Clifford group~\cite{gottesman1998heisenberg,Aaronson_2004}]
\label{def:Clifford}
    The Clifford group $\mathrm{Cl}(n)$ is the set of unitaries which sends the Pauli group $\mathcal{P}_n \coloneqq  \{i^k\}^{3}_{k=0}\times \{I,X,Y,Z\}^{\otimes n}$ in itself under the adjoint operation:
\begin{align}
    \mathrm{Cl}(n)\coloneqq \{U\in \mathrm{U}(2^n)\colon UPU^{\dagger} \in \mathcal{P}_n \text{ for all $P\in  \{I,X,Y,Z\}^{\otimes n}$} \},
    \label{eq:Cliffordgroup}
\end{align}
and it is equivalent to the set of unitaries generated by $\{\Hadamard,\CNOT,\PhaseS\}$ where $\Hadamard$, $\CNOT$, and $\PhaseS$ are, respectively, the $\mathrm{Hadamard}$, $\mathrm{Controlled}$-$\mathrm{NOT}$, and $\mathrm{Phase}$ gates.
\end{definition}

We make extensive use of the following seminal result throughout our work.

\begin{lemma}[Clifford group is a $2$-design~\cite{webb2016clifford,zhu2016clifford}]
\label{le:Clifford3design}
    The uniform distribution over the Clifford group $ \mathrm{Cl}(n)$ is a $2$-design.
\end{lemma}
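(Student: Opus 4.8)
The plan is to show that the second-moment twirl over the Clifford group,
$\mathcal{T}_{\mathrm{Cl}}(O)\coloneqq\mathbb{E}[U^{\otimes 2}\,O\,U^{\dagger\otimes 2}]$ where $\mathbb{E}$ denotes the average over the uniform distribution on $\mathrm{Cl}(n)$ and $O\in\mathcal{L}((\mathbb{C}^d)^{\otimes 2})$ with $d=2^n$, equals the Haar twirl $\mathcal{T}_{\mathrm{Haar}}$ described by Eq.~\eqref{eq:2momHaar}. By definition~\eqref{eq:design}, this equality for all $O$ is precisely the $2$-design property. Both $\mathcal{T}_{\mathrm{Cl}}$ and $\mathcal{T}_{\mathrm{Haar}}$ are orthogonal projections with respect to the Hilbert–Schmidt inner product (self-adjoint idempotents, by the standard properties of group twirls that follow from invariance of the measures), and $\mathcal{T}_{\mathrm{Haar}}$ projects onto $\mathrm{span}\{\Idd,\Flip\}$, which is also fixed pointwise by $\mathcal{T}_{\mathrm{Cl}}$ since $\Idd$ and $\Flip$ commute with every $U^{\otimes 2}$. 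Hence it suffices to prove that the range of $\mathcal{T}_{\mathrm{Cl}}$ is contained in $\mathrm{span}\{\Idd,\Flip\}$, because an orthogonal projection is determined by its range.

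To establish the range claim, I would expand $O$ in the two-copy Pauli basis, $O=\sum_{a,b\in\{I,X,Y,Z\}^{\otimes n}}c_{a,b}\,a\otimes b$, and use two structural facts about $\mathrm{Cl}(n)$. First, for any Clifford $U$ and Pauli $a$ one has $UaU^\dagger=(-1)^{\chi_U(a)}\phi_U(a)$, where $\phi_U$ is a permutation of $\{I,X,Y,Z\}^{\otimes n}$ that is the \emph{same} function applied to $a$ and to $b$. Second, $\phi_U$ acts transitively on the $d^2-1$ nonidentity Paulis; this follows from the standard statement that any nonidentity Pauli can be conjugated to $Z_1$ by an explicit Clifford circuit (equivalently, $\mathrm{Cl}(n)$ surjects onto $\mathrm{Sp}(2n,\mathbb{F}_2)$, which acts transitively on $\mathbb{F}_2^{2n}\setminus\{0\}$). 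Then I treat diagonal and off-diagonal terms separately. For $a=b$ the two phases cancel, so $\mathbb{E}[(UaU^\dagger)^{\otimes 2}]=\mathbb{E}[\phi_U(a)^{\otimes 2}]$, which equals $I^{\otimes 2}$ if $a=I$ and $\tfrac{1}{d^2-1}\sum_{a'\neq I}a'\otimes a'$ otherwise, by transitivity. For $a\neq b$, since distinct elements of $\{I,X,Y,Z\}^{\otimes n}$ are distinct modulo phase and the commutation (symplectic) form on the Paulis is nondegenerate, there is a Pauli operator $V$ commuting with $a$ and anticommuting with $b$; note $V\in\mathrm{Cl}(n)$. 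Right-invariance of the uniform Clifford average under $U\mapsto UV$ gives
\begin{align}
\mathbb{E}\big[(UaU^\dagger)\otimes(UbU^\dagger)\big]=\mathbb{E}\big[(UaU^\dagger)\otimes(-UbU^\dagger)\big]=-\,\mathbb{E}\big[(UaU^\dagger)\otimes(UbU^\dagger)\big],
\end{align}
so this term vanishes. Therefore $\mathcal{T}_{\mathrm{Cl}}(O)$ is a linear combination of $I^{\otimes 2}=\Idd$ and $\sum_{a'\neq I}a'\otimes a'=d\,\Flip-\Idd$, which proves the range containment.

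To conclude, one may either appeal to the abstract projection argument above, or simply finish by computation: substituting $c_{I,I}=\Tr(O)/d^2$ and $\sum_{a\neq I}c_{a,a}=\tfrac1d\Tr(\Flip O)-\tfrac1{d^2}\Tr(O)$ (using $\Flip=\tfrac1d\sum_a a\otimes a$) into $\mathcal{T}_{\mathrm{Cl}}(O)=c_{I,I}\Idd+\tfrac{1}{d^2-1}\big(\sum_{a\neq I}c_{a,a}\big)(d\,\Flip-\Idd)$, a short manipulation recovers exactly the coefficients $c_{\Idd,O}$ and $c_{\Flip,O}$ of Eq.~\eqref{eq:2momHaar}.

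I expect the main obstacle to be the rigorous justification of the second structural fact — transitivity of the Clifford action on nonidentity Paulis, together with careful bookkeeping of the phases $(-1)^{\chi_U(a)}$ — which is where one must either build explicit Clifford circuits or invoke the surjection onto the symplectic group. The remaining ingredients (the sign-cancellation for $a\neq b$, the nondegeneracy of the commutation form, and the self-adjointness and idempotence of group twirls) are routine.
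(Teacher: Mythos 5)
Your proof is correct, but note that the paper does not prove this lemma at all: it is imported as a known result from the cited literature, and the paper then \emph{derives} the Pauli-mixing identities of its Lemma~\ref{eq:paulimixing} as consequences of the $2$-design property together with the Haar second-moment formula. Your argument runs in the opposite direction: you prove Pauli mixing directly for the Clifford group (diagonal terms spread uniformly over the nonidentity Paulis by transitivity of the induced $\mathrm{Sp}(2n,\mathbb{F}_2)$ action; off-diagonal terms vanish by a sign-flip under right-translation by a suitable Pauli), and then reassemble the Haar twirl, either by the projection argument or by matching the coefficients $c_{\Idd,O}$ and $c_{\Flip,O}$ of Eq.~\eqref{eq:2momHaar} — a computation I have checked and which does come out right. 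This is essentially the standard textbook proof and is a perfectly valid self-contained replacement for the citation; what it buys is independence from the references, at the cost of needing the two structural facts about $\mathrm{Cl}(n)$ you identify (transitivity on nonidentity Paulis and nondegeneracy of the commutation form), both of which are standard.

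One small imprecision to fix: in the off-diagonal case you ask for a Pauli $V$ \emph{commuting with $a$ and anticommuting with $b$}, but no such $V$ exists when $b=I$ and $a\neq I$. The map $V\mapsto(\langle V,a\rangle,\langle V,b\rangle)$ onto $\mathbb{F}_2^2$ is surjective only when $a$ and $b$ are linearly independent (equivalently, both nonzero and distinct in $\mathbb{F}_2^{2n}$). The fix is immediate — for $a\neq b$ at least one of them is nonidentity, so by nondegeneracy you can always find a Pauli $V$ that anticommutes with exactly one of the two, and the same right-invariance argument then forces the cross term to vanish — but the statement as written should be symmetrized. With that adjustment the proof is complete.
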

The Clifford group is actually also a $3$-design~\cite{webb2016clifford,zhu2016clifford}, but in our work we need only its $2$-design property. 
Now, we state an important formula -- the Pauli mixing formula -- that we use in many of the proofs.
\begin{lemma}[Pauli mixing]
Let $d=2^n$ and consider $\nu$ to be a $2$-design distribution. If $P_1,P_2\in \{I,X,Y,Z\}^{\otimes n}$ are elements of the Pauli basis, then 
\begin{align}
   \underset{U \sim \nu}{\mathbb{E}}  \left[U^{\otimes 2} (P_1\otimes P_2) U^{\dagger \otimes 2}\right]&=\delta_{P_1,P_2}\underset{U\sim\nu}{\mathbb{E}}\left[U^{\otimes 2}(P_1\otimes P_1)U^{\dagger\otimes 2}\right]
   \label{eq:paulimixing}\\
   &=\begin{cases}
    I\otimes I & \text{if } P_1= P_2 = I, \\
   \frac{1}{d^2-1} \sum_{P\in \{I,X,Y,Z\}^{\otimes n} \setminus I_n} P\otimes P & \text{if } P_1= P_2 \neq I, \\
   0 & \text{if } P_1\neq P_2. \end{cases}\label{eq:paulimixing_2}
\end{align}
\end{lemma}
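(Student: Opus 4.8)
The plan is to reduce the two-Pauli average to the single-Pauli average using the $2$-design property and then evaluate the single-Pauli average explicitly via the second-moment (Haar) formula stated in Eq.~\eqref{eq:2momHaar}, which applies because a $2$-design reproduces Haar averages of degree-$2$ monomials by Eq.~\eqref{eq:design}. First I would observe that since $\nu$ is a $2$-design, $\underset{U\sim\nu}{\mathbb{E}}[U^{\otimes 2}(P_1\otimes P_2)U^{\dagger\otimes 2}] = \ExU[U^{\otimes 2}(P_1\otimes P_2)U^{\dagger\otimes 2}]$, so it suffices to compute the Haar average, and by Eq.~\eqref{eq:2momHaar} this equals $c_{\Idd,O}\,\Idd + c_{\Flip,O}\,\Flip$ with $O = P_1\otimes P_2$. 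The two coefficients require $\Tr(P_1\otimes P_2) = \Tr(P_1)\Tr(P_2)$ and $\Tr(\Flip\,(P_1\otimes P_2)) = \Tr(P_1 P_2)$ via the swap-trick.

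\textbf{Case analysis on the coefficients.} The key computational step is evaluating these traces using orthogonality of the Pauli basis: $\Tr(P_1) = 0$ unless $P_1 = I_n$ (in which case it is $d$), and $\Tr(P_1 P_2) = d\,\delta_{P_1,P_2}$ since Paulis are traceless, square to the identity, and are orthogonal. Hence if $P_1 = P_2 = I_n$, then $O = \Idd$, $\Tr(O) = d^2$, $\Tr(\Flip\,O) = d$, giving $c_{\Idd,O} = 1$, $c_{\Flip,O} = 0$, so the average is $\Idd = I\otimes I$. If $P_1 = P_2 \neq I_n$, then $\Tr(O) = 0$ and $\Tr(\Flip\,O) = \Tr(P_1^2) = d$, giving $c_{\Idd,O} = d/(d^2-1)$ and $c_{\Flip,O} = -d^{-1}\cdot(-d)\cdot\ldots$ — more carefully $c_{\Flip,O} = (d - 0)/(d^2-1) = d/(d^2-1)$, wait; recomputing, $c_{\Flip,O} = (\Tr(\Flip O) - d^{-1}\Tr(O))/(d^2-1) = d/(d^2-1)$ while $c_{\Idd,O} = (\Tr(O) - d^{-1}\Tr(\Flip O))/(d^2-1) = -1/(d^2-1)$. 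Then using $\Flip = \frac{1}{d}\sum_{P} P\otimes P$ and $\Idd = \frac{1}{d}\sum_P P\otimes P$ with the sign — actually $\Idd$ in the Pauli basis is $\frac{1}{d}\sum_P \epsilon_P\, P\otimes P$; rather than risk this I would instead directly use $\Flip = \frac{1}{d}\sum_P P\otimes P$ and note $c_{\Idd,O}\Idd + c_{\Flip,O}\Flip$ combined with $\Idd = (I\otimes I)$-type expansion yields, after collecting the $I_n\otimes I_n$ term separately, exactly $\frac{1}{d^2-1}\sum_{P \neq I_n} P\otimes P$. If $P_1 \neq P_2$, then both $\Tr(O) = \Tr(P_1)\Tr(P_2)$ and $\Tr(\Flip O) = \Tr(P_1 P_2)$ vanish (the former because at least one factor is traceless, the latter by orthogonality), so both coefficients are zero and the average is $0$; this also establishes the Kronecker-delta factorization in Eq.~\eqref{eq:paulimixing}.

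\textbf{Main obstacle.} The only delicate point is the bookkeeping in the $P_1 = P_2 \neq I_n$ case: one must correctly express $\Idd$ and $\Flip$ in the Pauli basis and verify that $-\frac{1}{d^2-1}\Idd + \frac{d}{d^2-1}\Flip$ collapses to $\frac{1}{d^2-1}\sum_{P\neq I_n} P\otimes P$. The clean way is to write $\Flip = \frac{1}{d}\sum_P P\otimes P$, so $\frac{d}{d^2-1}\Flip = \frac{1}{d^2-1}\sum_P P\otimes P = \frac{1}{d^2-1}(I_n\otimes I_n + \sum_{P\neq I_n}P\otimes P)$, and separately $\Idd = I_n\otimes I_n$, so $-\frac{1}{d^2-1}\Idd + \frac{d}{d^2-1}\Flip = \frac{1}{d^2-1}\sum_{P\neq I_n}P\otimes P$, as claimed. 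All other steps are immediate from definitions and the already-stated moment formulas, so no further obstacle arises.
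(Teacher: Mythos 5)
Your proof is correct and follows exactly the route the paper indicates (the paper only sketches this lemma, citing the second-moment formula and the Pauli expansion of the flip operator, which is precisely what you carry out). The coefficient computations in all three cases check out — $c_{\Idd,O}=1,\,c_{\Flip,O}=0$ for $P_1=P_2=I_n$; $c_{\Idd,O}=-\tfrac{1}{d^2-1},\,c_{\Flip,O}=\tfrac{d}{d^2-1}$ for $P_1=P_2\neq I_n$, collapsing via $\Idd=I_n\otimes I_n$ and $\Flip=\tfrac{1}{d}\sum_P P\otimes P$ to the stated sum; and both coefficients vanishing for $P_1\neq P_2$ by tracelessness and orthogonality.
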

This can be shown using the second-moment formula previously introduced and the decomposition of the flip operator in terms of the Pauli basis.

\subsubsection{Properties of layers of single-qubit random gates}

In our work, frequent calculations involve averages over the tensor product of single-qubit $2$-design gates. To facilitate these calculations, we introduce two Lemmas that will be instrumental in later sections of our work.
\begin{lemma}[A layer of $1$-qubit Haar random gates is a global $1$-design]
\label{le:qubitonedesign}
Let $\nu$ be a distribution over the tensor product of single-qubit $1$-design gates, namely over unitaries $U$ of the form $U=\bigotimes^{n}_{i=1} u_i$, where $u_i$ is a single-qubit unitary acting on the $i$-th qubit. Then, $\nu$ is a $n$-qubit $1$-design. 
\end{lemma}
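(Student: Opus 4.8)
The plan is to reduce the claim to the single-qubit level by exploiting linearity and the tensor-product structure of $U$. First I would expand an arbitrary operator $O \in \mathcal{L}(\mathbb{C}^{2^n})$ in the Pauli basis, $O = \sum_{P \in \{I,X,Y,Z\}^{\otimes n}} c_P P$; since the map $U \mapsto U O U^\dagger$ is linear, it suffices to compute $\Ex_{U \sim \nu}[U P U^\dagger]$ for each Pauli string $P = Q_1 \otimes \cdots \otimes Q_n$ with $Q_i \in \{I,X,Y,Z\}$.

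For such a $P$, writing $U = \bigotimes_{i=1}^n u_i$ gives $U P U^\dagger = \bigotimes_{i=1}^n u_i Q_i u_i^\dagger$. Using that the single-qubit gates $u_i$ are drawn independently, the expectation factorizes as $\Ex_{U \sim \nu}[U P U^\dagger] = \bigotimes_{i=1}^n \Ex_{u_i}[u_i Q_i u_i^\dagger]$. Each single-qubit factor is then evaluated via the $1$-design property of the distribution of $u_i$ together with the Haar first-moment formula~\eqref{eq:1momHaar}: $\Ex_{u_i}[u_i Q_i u_i^\dagger] = \int u Q_i u^\dagger \, d\mu_H(u) = \frac{\Tr(Q_i)}{2} I$. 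Since the single-qubit Paulis $X,Y,Z$ are traceless while $\Tr(I) = 2$, this equals $I$ when $Q_i = I$ and $0$ when $Q_i \in \{X,Y,Z\}$.

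Hence $\Ex_{U\sim\nu}[U P U^\dagger]$ vanishes unless every $Q_i = I$, i.e.\ unless $P = I^{\otimes n}$, in which case it equals $I^{\otimes n}$. Substituting back into the Pauli expansion yields $\Ex_{U\sim\nu}[U O U^\dagger] = c_{I^{\otimes n}} I^{\otimes n} = \frac{\Tr(O)}{2^n} I^{\otimes n}$, where I use $c_{I^{\otimes n}} = 2^{-n}\Tr(O)$, which follows from the orthogonality of the Pauli basis. Comparing with the Haar value in~\eqref{eq:1momHaar} (with $d = 2^n$) shows $\Ex_{U\sim\nu}[U O U^\dagger] = \Ex_{V\sim\mu_H}[V O V^\dagger]$, which is precisely the defining property~\eqref{eq:design} of a $1$-design for $k=1$, so $\nu$ is a global $1$-design.

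I do not expect a genuine obstacle here beyond bookkeeping; the one point requiring care is the factorization step, which relies on the single-qubit gates being independent, so that the global distribution is the product of single-qubit $1$-designs. Without independence one would instead need every conditional marginal of each $u_i$ to remain a $1$-design, which is not guaranteed in general, so I would state the independence assumption explicitly.
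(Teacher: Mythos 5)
Your proof is correct and follows essentially the same route as the paper's: expand $O$ in the Pauli basis, apply the single-qubit first-moment formula factor by factor, and observe that only the all-identity Pauli string survives because the non-identity Paulis are traceless. The paper compresses this into one line; your version merely spells out the factorization step (and rightly flags that it uses independence of the $u_i$, which is implicit in the paper's circuit model).
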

\begin{proof}
Let $O\in\MatC{d}$ with $d=2^n$. By writing it in the Pauli basis, we have
\begin{align}
    \underset{U \sim \nu}{\mathbb{E}} \left[U O U^{\dagger}\right]=\frac{1}{d}\sum_{P\in \{I,X,Y,Z\}^{\otimes n}} \Tr(OP) \underset{U \sim \nu}{\mathbb{E}}\left[U P U^{\dagger}\right]=\frac{\Tr(O)}{d}I_n,
\end{align}
where in the last equality we have used the first-moment formula (Eq.~\eqref{eq:1momHaar}) on each of the qubits and used that the Pauli are trace-less.
\end{proof}
\noindent As a consequence of the Pauli mixing property, we have the following lemma.

\begin{lemma}[Second moments of single-qubit random gates layers]
\label{le:mixham}
Let $\nu$ be a distribution over the tensor product of single-qubit $2$-design gates, namely over unitaries $U$ of the form $U=\bigotimes^{n}_{i=1} u_i$, where $u_i$ is a single-qubit unitary acting on the $i$-th qubit. Let $B$ be any operator. Then we have

\begin{enumerate}
    \item Let $O \coloneqq \sum_{P\in \{I,X,Y,Z\}^{\otimes n}} a_P P$, with $a_P\in \mathbb{R}$ for any $P\in \{I,X,Y,Z\}^{\otimes n}$. We have 
    \begin{align}
       \underset{U \sim \nu}{\mathbb{E}}\left[\Tr(O UB U^{\dagger} )^2\right] = \sum_{P\in \{I,X,Y,Z\}^{\otimes n}} a_P^2  \underset{U \sim \nu}{\mathbb{E}}\left[\Tr(P UB U^{\dagger} )^2\right].
       \end{align}
    \item For any $P\in \{I,X,Y,Z\}^{\otimes n}$, we have 
\begin{align}
    \underset{U \sim \nu}{\mathbb{E}}\left[\Tr(P UB U^{\dagger})^2\right]=\frac{1}{3^{|P|}}\sum_{\substack{Q \in \{I,X,Y,Z\}^{\otimes n}:\\ \supp(Q)=\supp(P) }}\Tr(Q B)^2.
\end{align}
\end{enumerate}

\end{lemma}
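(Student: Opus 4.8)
The plan is to reduce both claims to the single-qubit Pauli mixing identity (Eq.~\eqref{eq:paulimixing} with $n=1$, $d=2$), applied independently on each qubit. The starting point is the elementary fact $\Tr(A)^2=\Tr(A\otimes A)$, valid for any operator $A$, which gives, for any operator $M$ and any state $\rho$,
\begin{align}
\Ex_{U\sim\nu}[\Tr(M U\rho U^\dagger)^2] = \Ex_{U\sim\nu}[\Tr((U^\dagger M U)\rho)^2] = \Tr(\mathcal{T}(M\otimes M)(\rho\otimes\rho)),
\end{align}
where $\mathcal{T}(X)\coloneqq\Ex_{U\sim\nu}[U^{\dagger\otimes 2}X\,U^{\otimes 2}]$. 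Expanding $M$ in the Pauli basis, $\mathcal{T}(M\otimes M)$ becomes a linear combination of the quantities $\mathcal{T}(P_1\otimes P_2)$ with $P_1,P_2$ Paulis, so the whole computation reduces to evaluating $\mathcal{T}$ on pairs of Paulis.

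The next step is to compute $\mathcal{T}(P_1\otimes P_2)$. Regrouping the two tensor copies of $(\mathbb{C}^2)^{\otimes n}$ qubit by qubit, $(\mathbb{C}^2)^{\otimes 2n}\cong\bigotimes_{i=1}^{n}(\mathbb{C}^2)^{\otimes 2}$, and writing $U=\bigotimes_{i=1}^{n}u_i$ with the $u_i$ independent single-qubit $2$-designs (as in the circuit ensemble of Eq.~\eqref{eq:randcirc_main}), the twirl factorizes:
\begin{align}
\mathcal{T}(P_1\otimes P_2) = \bigotimes_{i=1}^{n}\Ex_{u_i}[u_i^{\dagger\otimes 2}(P_1^{(i)}\otimes P_2^{(i)})u_i^{\otimes 2}],
\end{align}
where $P_j^{(i)}\in\{I,X,Y,Z\}$ denotes the $i$-th tensor factor of $P_j$. (For a $2$-design the twirl coincides with the Haar twirl, so it is immaterial that the conjugation is written $u^\dagger(\cdot)u$ rather than $u(\cdot)u^\dagger$ as in Eq.~\eqref{eq:paulimixing}; equivalently, $u\mapsto u^\dagger$ preserves the $2$-design property.) Applying Eq.~\eqref{eq:paulimixing} with $n=1$ to each factor, the $i$-th factor equals $I\otimes I$ if $P_1^{(i)}=P_2^{(i)}=I$, equals $\tfrac13\sum_{Q\in\{X,Y,Z\}}Q\otimes Q$ if $P_1^{(i)}=P_2^{(i)}\neq I$, and vanishes if $P_1^{(i)}\neq P_2^{(i)}$. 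Hence $\mathcal{T}(P_1\otimes P_2)=0$ unless $P_1=P_2$, and for $P_1=P_2=P$, collecting the nontrivial factors over $i\in\supp(P)$ with identities elsewhere,
\begin{align}
\mathcal{T}(P\otimes P) = \frac{1}{3^{|P|}}\sum_{\substack{Q\in\{I,X,Y,Z\}^{\otimes n}:\\ \supp(Q)=\supp(P)}}Q\otimes Q.
\end{align}

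With this in hand, both parts are immediate. For part 2, take $M=P$ in the first displayed equation and use $\Tr((Q\otimes Q)(\rho\otimes\rho))=\Tr(Q\rho)^2$. For part 1, expand $O\otimes O=\sum_{P_1,P_2}a_{P_1}a_{P_2}\,P_1\otimes P_2$; by linearity of $\mathcal{T}$ and the vanishing of $\mathcal{T}(P_1\otimes P_2)$ off the diagonal, only the terms with $P_1=P_2$ survive, yielding $\Ex_{U\sim\nu}[\Tr(OU\rho U^\dagger)^2]=\sum_P a_P^2\,\Ex_{U\sim\nu}[\Tr(PU\rho U^\dagger)^2]$. (The reality of the $a_P$ is not actually needed; it is assumed only because $O$ plays the role of a Hamiltonian elsewhere.) I expect no genuine obstacle: the one place demanding care is the bookkeeping in the isomorphism $(\mathbb{C}^2)^{\otimes 2n}\cong\bigotimes_i(\mathbb{C}^2)^{\otimes 2}$ — keeping straight which of the two copies of $\mathbb{C}^2$ carries which qubit index, so that the product structure of $\nu$ is correctly exploited and single-qubit Pauli mixing is applied to the matching pairs of factors.
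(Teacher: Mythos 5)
Your proposal is correct and follows essentially the same route as the paper's proof: both rewrite $\Tr(\cdot)^2$ as a trace over two tensor copies, factorize the second-moment twirl qubit by qubit using the product structure of $\nu$, and apply the single-qubit Pauli mixing formula (Eq.~\eqref{eq:paulimixing}) to each factor, handling the $U$ versus $U^\dagger$ issue by the invariance of the $2$-design twirl under taking adjoints. The only cosmetic difference is that you package the twirl into an operator $\mathcal{T}$ and compute it once on all Pauli pairs before specializing to parts 1 and 2, whereas the paper carries out the two computations separately.
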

\begin{proof}
We have
\begin{align}
     \underset{U \sim \nu}{\mathbb{E}}[\Tr(O UB U^{\dagger} )^2]&= \underset{U \sim \nu}{\mathbb{E}}[\Tr(O^{\otimes 2} U^{\otimes 2}B^{\otimes 2} U^{\dagger\otimes 2}  )]\\
     \nonumber
    &= \sum_{P,Q\in \{I,X,Y,Z\}^{\otimes n}} a_P a_Q \underset{U \sim \nu}{\mathbb{E}}[\Tr((P\otimes Q) U^{\otimes 2}B^{\otimes 2} U^{\dagger\otimes 2})]\\
     \nonumber
    &= \sum_{P,Q\in \{I,X,Y,Z\}^{\otimes n}} a_P a_Q  \underset{U \sim \nu}{\mathbb{E}}[\Tr(U^{\dagger\otimes 2}(P\otimes Q) U^{\otimes 2}B^{\otimes 2} )]\\
     \nonumber
    &= \sum_{P\in \{I,X,Y,Z\}^{\otimes n}} a_P^2  \underset{U \sim \nu}{\mathbb{E}}[\Tr(U^{\dagger\otimes 2}(P\otimes P) U^{\otimes 2}B^{\otimes 2} )]\\
     \nonumber
    &=\sum_{P\in \{I,X,Y,Z\}^{\otimes n}} a_P^2 \underset{U \sim \nu}{\mathbb{E}}[\Tr(P UB U^{\dagger})^2],
     \nonumber
\end{align}
where for the fourth equality we have used the fact that $U=\bigotimes^{n}_{i=1} u_i $ is a layer of single-qubit $2$-design unitaries, 
\begin{equation}
\ExU \left[f(U)\right]= \ExU \left[f(U^\dagger)\right]
\end{equation}
for any measurable function $f$, and the Pauli mixing property in Eq.~\eqref{eq:paulimixing} for each of the single-qubit unitaries to conclude that $\underset{u_i \sim \mu_H}{\mathbb{E}} [u_i^{\otimes 2} (P_1\otimes P_2) u_i^{\dagger \otimes 2}]=0$ for two different single-qubit Pauli $P_1$ and $P_2$.

Similarly, for $P\in\{I,X,Y,Z\}^{\otimes n}$ such that $P=P_1\otimes P_2\otimes\dotsb\otimes P_n$, we use the Pauli-mixing property in Eq.~\eqref{eq:paulimixing_2}, along with the fact that $U=\bigotimes_{i=1}^n u_i$ is a tensor product of single-qubit unitaries from a 2-design, to obtain
\begin{align}
    \underset{U \sim \nu}{\mathbb{E}}[\Tr(P UB U^{\dagger})^2]&=\underset{U \sim \nu}{\mathbb{E}}[\Tr(U^{\dagger\otimes 2} (P\otimes P) U^{\otimes 2}B^{\otimes 2})]\\
\nonumber&=\Tr\!\left(\underset{U\sim\nu}{\mathbb{E}}\left[\bigotimes_{i=1}^n u_i^{\otimes 2}(P_i\otimes P_i)u_i^{\dagger\otimes 2}\right]B^{\otimes 2}\right)\\
\nonumber    &=\Tr\!\left(\underset{U\sim\nu}{\mathbb{E}}\left[\bigotimes_{i\in\supp(P)} u_i^{\otimes 2}(P_i\otimes P_i)u_i^{\dagger\otimes 2}\right]B^{\otimes 2}\right)\\
    \nonumber&=\Tr\!\left(\left(\bigotimes_{i\in\supp(P)}\frac{1}{3}\sum_{Q_i\in\{X,Y,Z\}} Q_i\otimes Q_i \right)B^{\otimes 2}\right)\\
    \nonumber&=\frac{1}{3^{|P|}}\sum_{\substack{Q \in \{I,X,Y,Z\}^{\otimes n}:\\ \supp(Q)=\supp(P) }}\Tr(Q^{\otimes 2} B^{\otimes 2}),
    \nonumber
\end{align}
which is the desired result, because $\Tr(Q^{\otimes 2} B^{\otimes 2})=\Tr(Q B)^2$.
\end{proof}

\subsection{Quantum channels}
\label{quantumchannels}
\noindent A \emph{quantum channel} $\mathcal{N}:\mathcal{L}(\mathbb{C}^d)\rightarrow\mathcal{L}(\mathbb{C}^d)$ is a linear, completely positive, and trace-preserving map. Completely positive means that for all positive operators $\sigma \in \mathcal{L}(\mathbb{C}^d\otimes \mathbb{C}^D)$, for any $D\in \mathbb{N}$, the operator $(\mathcal{N} \otimes \mathcal{I})(\sigma)$ is positive. The trace-preserving property means that $\Tr(\mathcal{N}(A))=\Tr(A)$ for any $A\in\MatC{d}$. Here, $\mathcal{I}:\MatC{D}\rightarrow \MatC{D}$ denotes the identity map. Any quantum channel $\mathcal{N}$ can be represented in terms of at most $d^2$ Kraus operators $\{K_i\}^{d^2}_{i=1}$, i.e., 
\begin{equation}\mathcal{N}\left(\cdot\right)=\sum^{d^2}_{i=1} K_i \left(\cdot\right)K^\dagger_i
, 
\end{equation}
with the condition $\sum^{d^2}_{i=1} K^\dagger_i K_i= I$ to satisfy trace-preservation.
Given a quantum channel $\mathcal{N}$, we say that $\mathcal{N}$ is \emph{unital} if and only if it maps the identity operator to the identity operator, i.e., $\mathcal{N}(I)=I$.
Otherwise, we say that $\mathcal{N}$ is 
\emph{non-unital}. 
Given a quantum channel $\mathcal{N}:\mathcal{L}(\mathbb{C}^d)\rightarrow\mathcal{L}(\mathbb{C}^d)$, its adjoint map $\mathcal{N}^{\ast}:\mathcal{L}(\mathbb{C}^d)\rightarrow\mathcal{L}(\mathbb{C}^d)$ is defined as the linear map such that 
\begin{equation}
\hs{\mathcal{N}^{\ast}(A)}{B}=\hs{A}{\mathcal{N}(B)}
\end{equation}
for any $A,B \in \MatC{d}$. If $\{K_i\}^{d^2}_{i=1}$ is a set of Kraus operators for $\mathcal{N}$, then the adjoint channel $\mathcal{N}^*$ can be expressed as 
\begin{equation}
\mathcal{N}^*(\cdot)=\sum^{d^2}_{i=1} K^\dagger_i (\cdot) K_i.
\end{equation}
Note that $\mathcal{N}^*$ is always unital, $\mathcal{N}^*(I)=I$, inherited from the property of the channel being trace-preserving.
However the adjoint is not necessarily trace-preserving: it holds that $\mathcal{N}^{*}$ is trace preserving if and only if $\mathcal{N}$ is unital.
If the Kraus operators of the quantum channel $\mathcal{N}$ are Hermitian, then the adjoint channel coincides with the quantum channel $\mathcal{N}^*=\mathcal{N}$.
If $\mathcal{N}_1$, $\mathcal{N}_2$ are two quantum channels, then $(\mathcal{N}_1 \circ \mathcal{N}_2)^*=\mathcal{N}_2^* \circ \mathcal{N}_1^*$.
Moreover $(\mathcal{N}_1 \otimes \mathcal{N}_2)^*=\mathcal{N}_1^* \otimes \mathcal{N}_2^*$.
For any Hermitian operator $O$, we have~\cite{Bhatia2007PositiveDM}
\begin{equation}
    \norm{\mathcal{N}^{*}(O)}_\infty \le \norm{O}_\infty.
\end{equation}

\subsubsection{Pauli transfer matrix representation of a quantum channel}

\noindent In this subsection, we introduce the Pauli transfer matrix representation of a single-qubit quantum channel.
Let $\mathcal{N}:\mathcal{L}(\mathbb{C}^2)\rightarrow\mathcal{L}(\mathbb{C}^2)$ be a linear map.
Any linear map can be expressed in terms of its action on the Pauli basis, i.e.
\begin{align}
    \mathcal{N}(P)=\sum_{Q\in \{I,X,Y,Z\}} T_{Q,P} Q,
\end{align}
where $T_{Q,P} \coloneqq \frac{1}{2}\Tr(Q\mathcal{N}(P))$. 
Assuming that $\mathcal{N}$ represents a quantum channel, it inherently preserves Hermiticity, implying that $T_{Q,P}\in \mathbb{R}$. Furthermore, by employing the \emph{Hölder inequality}, we establish that $|T_{Q,P}|\le 1$ for all $P,Q\in\{I,X,Y,Z\}$.
Given that a quantum channel is trace-preserving and the Pauli matrices are all trace-less except for the identity, we deduce that $T_{I,P}=\delta_{I,P}$. Consequently, we have
\begin{align}
    \mathcal{N}(I)&= I + T_{X,I}X + T_{Y,I}Y + T_{Z,I}Z,\label{eq:noisedefPauli}\\
    \mathcal{N}(X)&= T_{X,X}X + T_{Y,X}Y + T_{Z,X}Z,\label{eq:noisedefPauli_X}\\
    \mathcal{N}(Y)&= T_{X,Y}X + T_{Y,Y}Y + T_{Z,Y}Z,\label{eq:noisedefPauli_Y}\\
    \mathcal{N}(Z)&= T_{X,Z}X + T_{Y,Z}Y + T_{Z,Z}Z.\label{eq:noisedefPauli_Z}
\end{align}
From Eqs.~\eqref{eq:noisedefPauli}-\eqref{eq:noisedefPauli_Z}, we can see that considering a non-unital noise channel is equivalent to assuming that at least one of the parameters $T_{X,I}$, $T_{Y,I}$, or $T_{Z,I}$ must be non-zero.

We define the Pauli transfer matrix $\mathrm{T}(\mathcal{N})$ of the channel $\mathcal{N}$ as the matrix with components defined as $[\mathrm{T}(\mathcal{N})]_{Q,P} \coloneqq \frac{1}{2}\Tr(Q\mathcal{N}(P))=T_{Q,P}$ for all $Q,P \in \{I,X,Y,Z\}$, i.e.,
\begin{equation}\label{eq:PTM_PTM}
\mathrm{T}(\mathcal{N})=\begin{bmatrix}
1 & 0 & 0 & 0 \\
T_{X,I} & T_{X,X} & T_{X,Y} & T_{X,Z} \\
T_{Y,I} & T_{Y,X} & T_{Y,Y} & T_{Y,Z} \\
T_{Z,I} & T_{Z,X} & T_{Z,Y} & T_{Z,Z} \\
\end{bmatrix}.
\end{equation}
It is important to note that any single-qubit quantum channel can be expressed in such a form. However, not every linear map of this form represents a valid quantum channel.
Furthermore, utilizing the definition of the adjoint map, we can easily verify that the adjoint map $\mathcal{N}^{*}$ is given by $ \mathcal{N}^{*}(P)=\sum_{Q\in \{I,X,Y,Z\}} T_{P,Q} Q$.
This results in the fact that the Pauli transfer matrix of the adjoint channel is the transpose of the Pauli transfer matrix of the channel, i.e.,
\begin{align}
    \mathrm{T}(\mathcal{N}^{*})=\mathrm{T}(\mathcal{N})^T.
\end{align}

Given two quantum channels $\mathcal{N}^{(A)}$ and $\mathcal{N}^{(B)}$, we have that the Pauli transfer matrix associated to their composition is given by the multiplication of their Pauli transfer matrices:
\begin{align}
    \mathrm{T}(\mathcal{N}^{(A)}\circ\mathcal{N}^{(B)})=\mathrm{T}(\mathcal{N}^{(A)})\cdot \mathrm{T}(\mathcal{N}^{(B)}).
\end{align}

\subsubsection{Normal form representation of a quantum channel}
\label{normalchannel}

\noindent We now present a quantum channel representation~\cite{King2001,BETHRUSKAI2002159} that will be useful when dealing with noisy random circuits. In words, it says that the Pauli transfer matrix of a single-qubit noise channel, up to unitary rotations, is diagonal in the sub-block corresponding to the non-identity Pauli matrices. 
We include here the lemma and proof of this representation for easy reference.
\begin{lemma}[Normal form of a quantum channel~\cite{King2001,BETHRUSKAI2002159}]
\label{le:normal}
    Any single-qubit quantum channel $\mathcal{N}$ can be written in the so called `normal' form:
    \begin{align}
        \mathcal{N}(\cdot)=U\mathcal{N}^{\prime}(V^{\dag}(\cdot)V)U^{\dag},
    \end{align}
    where $U$, $V$ are unitaries and $\mathcal{N}^{\prime}(\cdot)$ is a quantum channel with Pauli transfer matrix 
    \begin{equation}\label{eq:PTM_normalFORM}
\mathrm{T}(\mathcal{N}^{\prime})=\begin{bmatrix}
1 & 0 & 0 & 0 \\
t_X & D_X & 0 & 0 \\
t_Y & 0 & D_Y & 0 \\
t_Z & 0 & 0 & D_Z \\
\end{bmatrix},
\end{equation}
where $\bold{t} \coloneqq (t_X,t_Y,t_Z)$ and $\bold{D} \coloneqq (D_X,D_Y,D_Z) \in \mathbb{R}^3$, such that the entries of $\bold{D}$ have all the same sign. 
\end{lemma}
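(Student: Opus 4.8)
The plan is to carry out the entire argument at the level of Pauli transfer matrices, reducing the statement to a sign‑aware singular value decomposition of the $3\times 3$ block of $\mathrm{T}(\mathcal N)$ that encodes the action of $\mathcal N$ on traceless operators, with careful attention to orientation ($\mathrm{SO}(3)$ versus $\mathrm{O}(3)$).

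\emph{Setup and reduction.} Since $\mathcal N$ is trace preserving and Hermiticity preserving, Eqs.~\eqref{eq:noisedefPauli}--\eqref{eq:noisedefPauli_Z} give that $\mathrm{T}(\mathcal N)$ has the block form
\begin{align}
\mathrm{T}(\mathcal N)=\begin{bmatrix} 1 & \mathbf 0^{T}\\ \mathbf t & M\end{bmatrix},\qquad \mathbf t\in\mathbb R^{3},\ M\in\mathbb R^{3\times 3}.
\end{align}
I would then invoke the two standard facts about single‑qubit unitary conjugations: (i) for $W\in\mathrm U(2)$ the channel $\mathcal W(\cdot)=W(\cdot)W^{\dagger}$ is unital and trace preserving, so $\mathrm{T}(\mathcal W)=1\oplus R_W$ with $R_W$ orthogonal, and in fact $R_W\in\mathrm{SO}(3)$ (the map $W\mapsto R_W$ is continuous on the connected group $\mathrm U(2)$ and sends $I$ to $I$, reflecting the surjection $\mathrm{SU}(2)\to\mathrm{SO}(3)$); and (ii) conversely every $R\in\mathrm{SO}(3)$ equals some $R_W$. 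Using multiplicativity of $\mathrm{T}$ under composition, pre‑ and post‑composing $\mathcal N$ by unitary conjugations $U(\cdot)U^{\dagger}$ and $V^{\dagger}(\cdot)V$ yields a channel $\mathcal N'$ with
\begin{align}
\mathrm{T}(\mathcal N')=\begin{bmatrix} 1 & \mathbf 0^{T}\\ R_U\mathbf t & R_U M R_V\end{bmatrix},
\end{align}
where $R_U,R_V\in\mathrm{SO}(3)$ are arbitrary (up to transposes, which is immaterial as $\mathrm{SO}(3)$ is closed under transposition). Being a composition of $\mathcal N$ with unitary conjugations, $\mathcal N'$ is automatically CPTP, so it remains only to choose $R_U,R_V\in\mathrm{SO}(3)$ making $R_U M R_V$ diagonal with all entries of one sign, and then set $(t_X,t_Y,t_Z)\coloneqq R_U\mathbf t$; at that point $\mathrm{T}(\mathcal N')$ is exactly the matrix in \eqref{eq:PTM_normalFORM}.

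\emph{Diagonalizing with signs.} Start from an ordinary SVD $M=A\Sigma B^{T}$ with $A,B\in\mathrm O(3)$ and $\Sigma=\mathrm{diag}(s_1,s_2,s_3)$, $s_i\ge 0$. Absorbing a reflection $\mathrm{diag}(1,1,-1)$ into $A$ (resp.\ $B$) whenever $\det A=-1$ (resp.\ $\det B=-1$) makes both matrices rotations, moving any leftover sign onto $s_3$; since $\operatorname{sign}(\det M)=\det A\,\det B$, one ends with $R_L M R_R=\mathrm{diag}(s_1,s_2,\delta s_3)$ for suitable $R_L,R_R\in\mathrm{SO}(3)$, where $\delta=\operatorname{sign}(\det M)$ (convention $\operatorname{sign}(0)=+1$). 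If $\delta=+1$ the three entries are already $\ge 0$. If $\delta=-1$, then $\det M\neq 0$, so $s_1,s_2,s_3>0$, and left‑multiplying by the $\pi$‑rotation $\mathrm{diag}(-1,-1,1)\in\mathrm{SO}(3)$ turns $\mathrm{diag}(s_1,s_2,-s_3)$ into $\mathrm{diag}(-s_1,-s_2,-s_3)$, all entries $<0$. Either way $R_U M R_V=\mathrm{diag}(D_X,D_Y,D_Z)$ with all $D_Q$ of the same sign and $\{|D_Q|\}$ the singular values of $M$, which is the claimed normal form; the Bloch‑vector form \eqref{eq:normSSmain} is then just the restriction of $\mathrm{T}(\mathcal N')$ to the affine action on $\tfrac12(I+\mathbf w\cdot\boldsymbol\sigma)$.

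\emph{Main obstacle.} The only genuinely delicate part is the orientation bookkeeping: one must pass from the $\mathrm O(3)\times\mathrm O(3)$ freedom of the bare SVD to the $\mathrm{SO}(3)\times\mathrm{SO}(3)$ freedom that is physically available (only rotations are realized by single‑qubit unitaries), and then use the residual two‑fold sign‑flip freedom to collapse the ``special SVD'' pattern $\mathrm{diag}(s_1,s_2,\pm s_3)$ to a one‑signed diagonal. The key observation making this always possible is that $\det M<0$ forces every singular value to be strictly positive, so no entry is pinned to $0$ with an uncorrectable sign; the degenerate case $\det M=0$ requires only a one‑line separate remark.
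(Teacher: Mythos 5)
Your proof is correct and follows essentially the same route as the paper's: a real singular value decomposition of the $3\times 3$ block acting on the Bloch vector, promotion of the orthogonal factors to rotations by absorbing reflections into the diagonal (so the entries of $\bold{D}$ end up sharing a common sign), and realization of the rotations as single-qubit unitary conjugations via the $\mathrm{SU}(2)\to\mathrm{SO}(3)$ covering. Your sign bookkeeping (composing with $\mathrm{diag}(1,1,-1)$ and $\mathrm{diag}(-1,-1,1)$) is a slightly more explicit rendering of the paper's shortcut of replacing $R_i$ by $-R_i$, but the substance is identical.
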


\begin{proof}
Let us consider the Pauli transfer matrix of $\mathcal{N}$, which is characterized by the real $3\times 3$ matrix $B\in\mathcal{L}(\mathbb{R}^{3})$ and the vector $\boldsymbol{b}=(b_X,b_Y,b_Z)\in \mathbb{R}^3$:
\begin{align}
    \mathrm{T}(\mathcal{N})=\begin{bmatrix}
1 & 0 & 0 & 0 \\
b_X & B_{X,X} & B_{X,Y}  & B_{X,Z}  \\
b_Y & B_{Y,X} & B_{Y,Y}  & B_{Y,Z}  \\
b_Z & B_{Z,X}  & B_{Z,Y}  & B_{Z,Z}  \\
\end{bmatrix}.
\end{align}
Any single qubit quantum state $\rho$ can be written as $\rho=(I+ \bold{w}\cdot \boldsymbol{\sigma})/2$, where $\bold{w} \coloneqq (w_X,w_Y,w_Z) \in \mathbb{R}^3$ with $\norm{w}_2\le 1$ and $\boldsymbol{\sigma} \coloneqq (X,Y,Z)$. 
Then, we have
\begin{align}
    \mathcal{N}(\rho)= \mathcal{N}\!\left(\frac{I}{2}\right) + \frac{1}{2}\mathcal{N}\!\left(\bold{w}\cdot \boldsymbol{\sigma} \right)= \left(\frac{I}{2} + \frac{1}{2}\boldsymbol{b}\cdot \boldsymbol{\sigma}\right)  + \frac{1}{2}(B \bold{w})\cdot \boldsymbol{\sigma} = \frac{I}{2}   + \frac{1}{2}(\boldsymbol{b}+B \bold{w})\cdot \boldsymbol{\sigma}.
\end{align}
Next, because $B$ is real, we can perform a real singular value decomposition of $B$ and have $B=R_1 D R^T_2$, where $D$ is a diagonal matrix with the non-negative diagonal elements $(D_X,D_Y,D_Z) \in \mathbb{R}^3$ and $R_1, R_2$ are in general orthogonal $\mathrm{O}(3)$ matrices. Now, every orthogonal matrix has determinant equal to $\pm 1$. This fact, along with the fact that $\textrm{det}(-R)=(-1)^3\textrm{det}(R)=-\textrm{det}(R)$ for every $R\in\mathrm{O}(3)$, means that we can, without loss of generality, assume that $R_1$ and $R_2$ both have determinant equal to $1$. In other words, we can assume that $R_1$ and $R_2$ are both special-orthogonal matrices in $\textrm{SO}(3)$. The diagonal elements $(D_X,D_Y,D_Z)$ are then not necessarily non-negative, but they all have the same sign.
We now use the fact that for every special-orthogonal matrix $R \in \mathrm{SO}(3)$, there exists a unitary $U\in\mathrm{U}(2)$ such that~\cite{Landau1981Quantum}
\begin{align}
    (R \boldsymbol{v})\cdot \boldsymbol{\sigma} = U (\boldsymbol{v}\cdot \boldsymbol{\sigma})U^{\dagger},
\end{align}
for all $\boldsymbol{v}\in \mathbb{R}^3$. The previous identity can be easily verified by choosing $U \coloneqq \exp(-i\frac{\theta}{2}\hat{n}\cdot\boldsymbol{\sigma})$, where $\hat{n}$ and $\theta$ are, respectively, the unit-norm vector and the rotation angle which characterizes the special-orthogonal matrix $R \in \mathrm{SO}(3)$.
Thus, we have that 
\begin{align}
    \mathcal{N}(\rho)=U \left(\frac{I}{2}   + \frac{1}{2}(R^{T}_1\boldsymbol{b}+ DR^T_2\bold{w})\cdot \boldsymbol{\sigma}\right)U^{\dagger} =U \mathcal{N}^{\prime}\left(\frac{I+ (R^T_2\bold{w})\cdot \boldsymbol{\sigma}}{2}\right)U^{\dagger} =U \mathcal{N}^{\prime}\left( V^{\dag}\left(\frac{I+ \bold{w}\cdot \boldsymbol{\sigma}}{2} \right)V\right)U^{\dagger}, 
\end{align}
where $U$ and $V$ are the unitaries associated to the special-orthogonal matrices $R_1$ and $R_2$, and $\mathcal{N}^{\prime}$ is the linear map such that $\mathcal{N}^{\prime}(\frac{I+ \bold{w}\cdot \boldsymbol{\sigma}}{2} )=\frac{I+ (\bold{t}+ D \bold{w})\cdot \boldsymbol{\sigma}}{2}   $,
where $\bold{t} \coloneqq R^T_1\boldsymbol{b}$. 
Hence, we have shown that $\mathcal{N}(\rho)$ can be written as $\mathcal{N}(\rho)= U \mathcal{N}^{\prime}\left( V^{\dag}\rho V\right)U^{\dagger}$, where the Pauli transfer matrix of $\mathcal{N}^{\prime}$ is the one in Eq.~\eqref{eq:PTM_normalFORM}.
\end{proof}

Thus, every single-qubit quantum channel $\mathcal{N}$ can be expressed as $\mathcal{N}(\cdot)=U\mathcal{N}^{\prime}(V^{\dag}(\cdot)V)U^{\dag}$, where $U$, $V$ are unitaries, and $\mathcal{N}^{\prime}$ is a quantum channel such that it acts on a quantum state written in its Bloch sphere representation as 
\begin{align}
    \mathcal{N}^\prime \!\left(\frac{I+ \bold{w}\cdot \boldsymbol{\sigma}}{2}\right)= \frac{I}{2}   + \frac{1}{2}(\bold{t}+D \bold{w})\cdot \boldsymbol{\sigma},
    \label{eq:normSS}
\end{align}
where $\bold{w}\in \mathbb{R}^3$ with $\|\bold{w}\|_2\le 1$, $\bold{t} \coloneqq (t_X,t_Y,t_Z)\in \mathbb{R}^3$ and $D \coloneqq \mathrm{diag}(\bold{D})$ with $\bold{D} \coloneqq (D_X,D_Y,D_Z) \in \mathbb{R}^3$.

We now prove that the parameters of the normal form representation satisfy a particular constrain. Such constrain will be crucial in our following discussion. 
\begin{lemma}[Contraction coefficient in terms of the normal form parameters]
\label{le:contractionnormal}
For any single-qubit quantum channel, the parameters $\bold{t}, \bold{D} \in \mathbb{R}^3$ of its normal form representation satisfy
\begin{align}
    c\coloneqq\frac{1}{3}(t^2_X + D^2_X + t^2_Y + D^2_Y + t^2_Z + D^2_Z)\le 1,
    \label{eq:ineqnormal}
\end{align}
and the equality is saturated if and only if the channel is unitary. \textcolor{black}{Furthermore, it also holds $\norm{\bold{t}}_2\le 1$ and $D_P \le 1$ for all $P\in \{X,Y,Z\}$.}
\end{lemma}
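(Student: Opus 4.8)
The plan is to read the constraint straight off the requirement that $\mathcal{N}'$ maps quantum states to quantum states. Since conjugation by unitaries preserves positivity and trace-preservation, $\mathcal{N}'$ is positive and trace-preserving, so for each $j\in\{X,Y,Z\}$ the pure states $\frac{I\pm\sigma_j}{2}$ are mapped, by Eq.~\eqref{eq:normSS}, to operators with Bloch vectors $\bold{t}\pm D_j\mathbf{e}_j$, where $\mathbf{e}_j$ is the $j$-th standard basis vector; hence these Bloch vectors must have Euclidean norm at most $1$. Applying the parallelogram identity to the two resulting inequalities $\|\bold{t}\pm D_j\mathbf{e}_j\|_2\le 1$ gives
\begin{align}
\|\bold{t}\|_2^2+D_j^2\le 1 \qquad\text{for every }j\in\{X,Y,Z\}.
\end{align}
Taking any single $j$ already yields $\|\bold{t}\|_2\le 1$, the last claim of the lemma. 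Summing the three inequalities over $j$ gives $3\|\bold{t}\|_2^2+\|\bold{D}\|_2^2\le 3$, which is stronger than $\|\bold{t}\|_2^2+\|\bold{D}\|_2^2\le 3$, i.e.\ $c\le 1$.

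For the equality characterization, I would first observe that $c$ is an invariant of $\mathcal{N}$ itself, independent of the particular normal-form decomposition: in the notation of the proof of Lemma~\ref{le:normal}, $\|\bold{D}\|_2^2=\|B\|_F^2$ and $\|\bold{t}\|_2^2=\|\bold{b}\|_2^2$ since $R_1,R_2$ are orthogonal, so $c=\tfrac13(\|\bold{b}\|_2^2+\|B\|_F^2)$ is well defined. Now suppose $c=1$. Combining $\|\bold{t}\|_2^2+\|\bold{D}\|_2^2=3$ with the bound $3\|\bold{t}\|_2^2+\|\bold{D}\|_2^2\le 3$ forces $\bold{t}=0$ and $\|\bold{D}\|_2^2=3$; together with $D_j^2\le 1$ for each $j$ this forces $D_X^2=D_Y^2=D_Z^2=1$. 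Since the entries of $\bold{D}$ all share a common sign (Lemma~\ref{le:normal}), either $\bold{D}=(1,1,1)$ or $\bold{D}=(-1,-1,-1)$. In the first case $\mathcal{N}'$ is the identity channel, so $\mathcal{N}(\cdot)=UV^{\dagger}(\cdot)VU^{\dagger}$ is unitary. The converse is immediate: a unitary channel is unital ($\bold{b}=0$) and induces a rotation $B\in\mathrm{SO}(3)$, so $\|B\|_F^2=3$ and $c=1$.

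The one step needing genuine care — and the main obstacle — is excluding the case $\bold{D}=(-1,-1,-1)$, since everything used so far follows from mere positivity, which does \emph{not} forbid it. Here $\mathcal{N}'$ would be the qubit map $\rho\mapsto I-\rho$, and I would rule it out by complete positivity: its Choi matrix is $|01\rangle\!\langle01|+|10\rangle\!\langle10|-|00\rangle\!\langle11|-|11\rangle\!\langle00|$, whose corner $2\times 2$ block $\bigl(\begin{smallmatrix}0&-1\\-1&0\end{smallmatrix}\bigr)$ has eigenvalue $-1$, so this map is not completely positive; since $\mathcal{N}$ is CP if and only if $\mathcal{N}'$ is, this contradicts $\mathcal{N}$ being a quantum channel. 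Hence only $\bold{D}=(1,1,1)$ survives, and $c=1$ is equivalent to $\mathcal{N}$ being unitary.
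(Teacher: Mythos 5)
Your proof is correct, and it takes a somewhat different — and in places tighter — route than the paper's. For the inequality, the paper also evaluates the positivity constraint at $\bold{w}=\pm\mathbf{e}_j$, but it only retains the $j$-th coordinate, $(t_j\pm D_j)^2\le 1$, and then needs the common-sign property of $\bold{D}$ to write $t_j^2+D_j^2\le(t_j+\mathrm{sign}(t_j)D_j)^2\le 1$. You instead keep the full norm condition $\|\bold{t}\pm D_j\mathbf{e}_j\|_2\le 1$ and apply the parallelogram law, which yields the stronger per-component bound $\|\bold{t}\|_2^2+D_j^2\le 1$ without any sign assumption, and the aggregate bound $3\|\bold{t}\|_2^2+\|\bold{D}\|_2^2\le 3$ rather than just $c\le 1$. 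This also streamlines the equality case: you get $\bold{t}=0$ immediately by subtracting, whereas the paper has to argue through the possible value pairs $(t_j^2,D_j^2)\in\{(0,1),(1,0)\}$ and then exclude the latter using the full Bloch constraint. Finally, your explicit exclusion of $\bold{D}=(-1,-1,-1)$ via the Choi matrix of the universal-NOT map addresses a step the paper passes over rather quickly (it asserts that a Pauli transfer matrix equal to the identity "up to a possible minus sign" forces the identity channel without invoking complete positivity); your argument closes that small gap cleanly. The remark that $c$ is well defined independently of the chosen normal form (via invariance of $\|\bold{b}\|_2$ and $\|B\|_F$ under the orthogonal factors) is a useful addition for the converse direction that the paper leaves implicit.
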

\begin{proof}
    Because of the previous Lemma, any single-qubit quantum channel $\mathcal{N}$ can be expressed as $\mathcal{N}(\cdot)=U\mathcal{N}^{\prime}(V^{\dag}(\cdot)V)U^{\dag}$, where $U$, $V$ are unitaries, and $\mathcal{N}^{\prime}$ such that it holds Eq.~\eqref{eq:normSS}.
    Let $\rho$ be an arbitrary qubit quantum state. 
    Noting that $\mathcal{N}^{\prime}$ is a quantum channel, on account of being a composition of quantum channels, it holds that $\norm{\mathcal{N}^{\prime}(\rho)}_{\infty}\leq 1$. If we let $\bold{w}\in\mathbb{R}^3$ be the Bloch vector corresponding to $\rho$, then because $\norm{\mathcal{N}^{\prime}(\rho)}_{\infty}$ is equal to the largest eigenvalue of $\mathcal{N}^{\prime}(\rho)$, we find that
\begin{align}
       1\geq \norm{\mathcal{N}^{\prime}(\rho)}_{\infty}= \frac{1}{2}(1+ \norm{\bold{t}+ D \bold{w}}_2).
\end{align}
Hence, we get
\begin{align}
  (t_X+ D_X w_x)^2+(t_Y+ D_Y w_y)^2+(t_Z+ D_Z w_z)^2=\norm{\bold{t}+ D \bold{w}}^2_2\le 1.
    \label{eq:blochCONSTR}
\end{align}
Now, recall that $\norm{\bold{w}}_2\le 1$. If $\bold{w}=0$, then we get $\norm{\bold{t}}_2\le 1$. In particular by choosing $w=(\pm 1,0,0)$, we get
\begin{align}
    (t_X \pm D_X)^2\le 1,
    \label{eq:pmtxDX}
\end{align}
and similarly for $Y$ and $Z$ (from which follows that $D_P \le 1$ for all $P\in \{X,Y,Z\}$). Now, assume that the entries of $\bold{D}$ are all non-negative (remember that they have the same sign). Together with the previous equation, this implies that
\begin{align}
     t_X^{ 2} + D^2_X + t_Y^{ 2} + D^2_Y +t_Z^{ 2 } + D^2_Z \le (t_X + \mathrm{sign}(t_X)D_X )^2+(t_Y + \mathrm{sign}(t_Y) D_Y )^2+(t_Z + \mathrm{sign}(t_Z)D_Z )^2\le 3.
\end{align}
Similarly, if the entries of $\bold{D}$ are all negative, we have 
\begin{align}
     t_X^{ 2} + D^2_X + t_Y^{ 2} + D^2_Y +t_Z^{ 2 } + D^2_Z \le (t_X - \mathrm{sign}(t_X)D_X )^2+(t_Y - \mathrm{sign}(t_Y) D_Y )^2+(t_Z - \mathrm{sign}(t_Z)D_Z )^2\le 3.
\end{align}
This proves Eq.~\eqref{eq:ineqnormal}.

Finally, we show that Eq.~\eqref{eq:ineqnormal} is saturated if and only if $\mathcal{N}$ is unitary. If $\mathcal{N}$ is unitary, then also $\mathcal{N}^{\prime}$ is unitary. This implies that $\bold{t}=0$, because unitary channels are also unital. Moreover, it also implies that the purity of any state must remain the same, so the diagonal matrix $D\coloneqq\mathrm{diag}(\bold{D})$ must be norm-preserving, hence orthogonal. Therefore, we have $D=\pm \mathrm{diag}(1,1,1)$. This saturates inequality~\eqref{eq:ineqnormal}.
Now, let us assume that 
\begin{align}
    \frac{1}{3}(t^2_X + D^2_X + t^2_Y + D^2_Y + t^2_Z + D^2_Z)= 1.
    \label{eq:sat}
\end{align}
From the inequality~\eqref{eq:pmtxDX}, we get also that $t_X^{2} + D^2_X\le 1$, and the same for $Y$ and $Z$. Hence, Eq.~\eqref{eq:sat} implies $t_X^{2} + D^2_X= 1$, and the same for $Y$ and $Z$. Using this with Eq.~\eqref{eq:pmtxDX}, we have that the possible values for $t_X^{2}$ and $D_X^{2}$ are, respectively, $1$ and $0$, or vice versa. Similarly for $Y$ and $Z$.  From Eq.~\eqref{eq:blochCONSTR}, we deduce that the only possibility is that $\bold{t}=0$ and that $D_X^{2}=D_Y^{2}=D_Z^{2}=1$. Hence, we have that the Pauli transfer matrix of $\mathcal{N}^{\prime}$ is equal, up to a possible minus sign factor, to the identity matrix. This implies that $\mathcal{N}^{\prime}$ must be the identity channel and that $\mathcal{N}$ is unitary.
\end{proof}

Here, we give examples of normal form parameters $\bold{t} =(t_X,t_Y,t_Z)$ and $\bold{D} = (D_X,D_Y,D_Z)$ for standard noise channels.
The single-qubit depolarizing channel with parameter $p\in [0,1]$ can be defined as
\begin{align}
    \mathcal{N}_p^{(\mathrm{\mathrm{dep}})}(\sigma)=(1-p)\sigma + p\Tr(\sigma) \frac{I}{2}.
    \label{eq:depnoise}
\end{align}
Its normal form parameters are $\bold{t} = (0,0,0)$ and $\bold{D} = (1-p,1-p,1-p)$.
The amplitude damping channel $\mathcal{N}_q^{(\text{amp})}$, parameterized by $q \in [0,1]$, is given in the computational basis as
\begin{equation}
    \mathcal{N}_q^{(\text{amp})}(\sigma) = \begin{pmatrix}
        \sigma_{0,0} + q\sigma_{1,1} & \sqrt{1-q}\sigma_{0,1} \\
        \sqrt{1-q}\sigma_{1,0} & (1-q)\sigma_{1,1} \\
    \end{pmatrix},
    \label{eq:ampdampnoise}
\end{equation}
where $\sigma_{i,j} \coloneqq \bra{i}\sigma\ket{j}$. Here, $\bold{t} = (0, 0, q)$ and $\bold{D} = (\sqrt{1-q}, \sqrt{1-q}, 1-q)$.
The single-qubit dephasing channel $\mathcal{N}_p^{(\text{deph})}$ with $p \in [0,1]$ can be defined as
\begin{equation}
    \mathcal{N}_p^{(\text{deph})}(\sigma) = \begin{pmatrix}
        \sigma_{0,0} & (1-p)\sigma_{0,1} \\
        (1-p)\sigma_{1,0} & \sigma_{1,1} \\
    \end{pmatrix}.
\end{equation}
Its normal form parameters are $\bold{t} = (0, 0, 0)$ and $\bold{D} = (1-p, 1-p, 1)$.

\subsection{Circuit and noise model}
\label{sub:circuitmodel}

\noindent In our work, we examine $n$-qubit quantum circuits $\Phi$ formed by layers of two-qubit random unitary gates interleaved by local noise, with a final layer of random single qubit gates. For example, the standard brickwork circuit architecture is within our model (Figure~\ref{Fig:circ}). Mathematically,
\begin{align}\label{eq:randcirc}
\Phi\coloneqq \mathcal{V}^{\mathrm{single}}\circ\mathcal{N}^{\otimes n} \circ \mathcal{U}_{L} \circ \cdots  \circ \mathcal{N}^{\otimes n} \circ \mathcal{U}_{1}  ,
\end{align}
where $\mathcal{V}^{\mathrm{single}} \coloneqq V(\cdot)V^{\dagger}$ with $V \coloneqq \bigotimes^{n}_{i=1} u_i $ is a layer of single-qubit gates, $\mathcal{U}_{i} \coloneqq U_i(\cdot)U^{\dag}_i$ corresponds to the $n$-qubit unitary channel associated with the $i$-th unitary layer $U_i$ for $i\in [L]$, and $\mathcal{N}$ is a single-qubit quantum channel. \textcolor{black}{We point out that the single-qubit layer $\mathcal{V}^{\mathrm{single}}$ included in our model is not essential for our main results and can be omitted with minor modifications.}

\begin{figure}[h]
\centering
\includegraphics[width=0.65\textwidth]{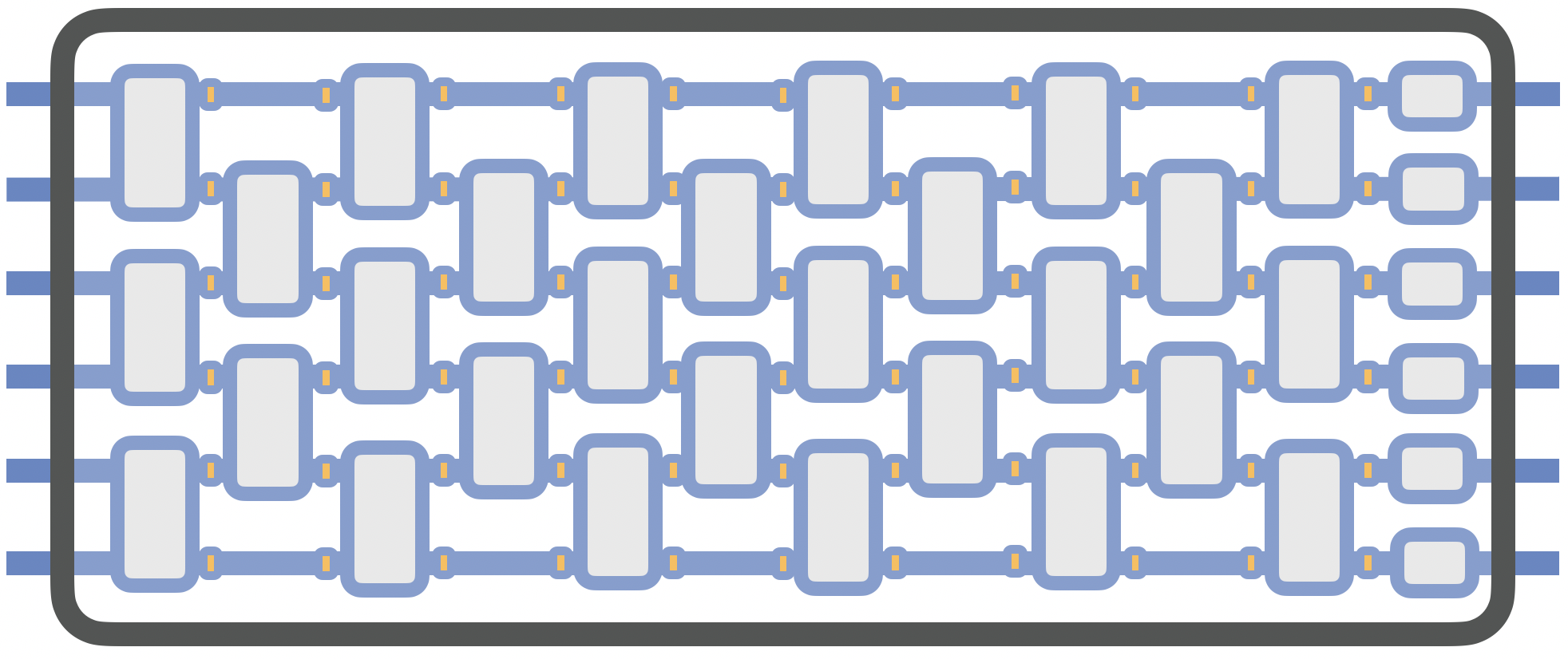}
\caption{Example of the architecture that our model encompasses: A brickwork circuit composed of two-qubit gates followed by local noise (depicted with yellow circles).}
\label{Fig:circ}
\end{figure}

\subsubsection{Assumption on the circuit distribution}
Firstly, we assume that each single-qubit gate in the layer $\bigotimes^{n}_{i=1} u_i $ is distributed according to a single-qubit $2$-design (e.g., Haar random). 
Moreover, we assume that each unitary layer $\mathcal{U}_{i} \coloneqq U_i(\cdot)U^{\dagger}_i$, for $i\in [L]$, consists of two-local qubit gates, each forming a local $2$-design. More precisely, we assume that each $U_i$ for $i\in [L]$ is distributed according to a $2$-local $2$-design layer distribution, defined as follows:

\begin{definition}[$2$-local $2$-design layer distribution]
\label{def:localdesignlayer}
We say that $\nu$ is a $2$-local $2$-design layer distribution if and only if it is a probability distribution over quantum circuits formed by local $2$-qubit gates, where each of them is distributed accordingly to a local $2$-design and each qubit is acted on by at least one of the gates.
\end{definition}

Moreover, we point out that we consider an arbitrary circuit geometry/architecture, i.e., we do not make any particular assumptions on the geometric dimensionality of our circuit, except when explicitly mentioned. Note that our model is in stark contrast to works~\cite{schumann2023emergence,ErrorMitigationObstructions} 
where the unitary layers are chosen as global $n$-qubit $2$-designs, and we expect that the model that we consider is more realistic.

\subsubsection{Noise model}
\label{sub:noisemodel}

Moreover, since before and after any noise channel $\mathcal{N}$ there is a gate that is distributed according to a $2$-design and in our work we consider up to second moment quantities, because of the normal form representation of the channel and unitary invariance, we can restrict the noise channels $\mathcal{N}$ to have a sparse Pauli transfer matrix of the form of Eq.~\eqref{eq:PTM_normalFORM}, characterized by two real vectors $\bold{t} \coloneqq (t_X,t_Y,t_Z)$ and $\bold{D} \coloneqq (D_X,D_Y,D_Z)$. 
In particular, the adjoint channel $\mathcal{N}^{*}$ acts on $Q\in \{X,Y,Z\}$ as 
\begin{align}
\label{eq:adjointeq}
    \mathcal{N}^{*}(Q)=t_{Q} I + D_{Q}Q=\sum_{a\in\{0,1\}} D_{Q}^{a}t_{Q}^{1-a}Q^a. 
\end{align}
Since we work with at most second-moment quantities, we often single out `for free' from each $2$-local $2$-design unitary layer $\{\mathcal{U}_{i}\}^L_{i=1}$ layers of single qubit Haar random gates, due to the invariance of the Haar measure and because each qubit is a acted on by at least one of the $2$-local $2$-design gates. Specifically, without loss of generality, we can consider equivalently circuits of the form
\begin{align}
    \Phi=(\mathcal{V}_{L}^{\,\mathrm{single}}\circ\mathcal{N}^{\otimes n} \circ \mathcal{U}_L) \circ \cdots   \circ (\mathcal{V}_{1}^{\,\mathrm{single}}\circ\mathcal{N}^{\otimes n} \circ \mathcal{U}_1)  ,
    \label{eq:randcirc23ss}
\end{align}
where $\{\mathcal{V}_k^{\mathrm{single}}\}^L_{k=1}$ are layers of single-qubit gates distributed according a single-qubit $2$-design. \textcolor{black}{However, these single-qubit layers do not play a fundamental role in our model and could be omitted. (In particular, the last single-qubit layer might also be omitted with minor modifications: if the circuit ends with a layer of noise rather than a single-qubit layer, this noise layer and the preceding unitary layer can be absorbed into the observable in the Heisenberg picture. Thus, we can reduce the scenario to one where the circuit terminates with a layer of single-qubit gates, due to unitary invariance, and with an observable of comparable locality.)}

We will often denote circuits derived from $\Phi$ by removing the last layer of single-qubit gates and the last layer of noise.  In this case, we use the notation
\begin{align}
    \Phi^{\prime} & \coloneqq (\mathcal{N}^{\otimes n} \circ \mathcal{U}_L) \circ \cdots   \circ (\mathcal{V}_{1}^{\,\mathrm{single}}\circ\mathcal{N}^{\otimes n} \circ \mathcal{U}_1), \\
\Phi^{\prime\prime} & \coloneqq  \mathcal{U}_L \circ \cdots   \circ (\mathcal{V}_{1}^{\,\mathrm{single}}\circ\mathcal{N}^{\otimes n} \circ \mathcal{U}_1)  .
\end{align}
Here, $\Phi^{\prime}$ denotes the circuit without the final layer of single-qubit gates, while $\Phi^{\prime \prime}$ denotes the circuit without the final layer of single-qubit gates layer and also without the final layer of noise.

We also often need to denote circuits derived from $\Phi$ by retaining certain layers from the start or from the end. In these cases, we use subscripts to indicate the relevant layers. That is, for $a\le b \in [L]$,
\begin{align}
\Phi_{[a,b]} & \coloneqq  (\mathcal{V}_b^{\mathrm{single}} \circ \mathcal{N}^{\otimes n} \circ \mathcal{U}_b) \circ \cdots \circ (\mathcal{V}_a^{\mathrm{single}} \circ\mathcal{N}^{\otimes n} \circ \mathcal{U}_{a}). 
\end{align}
%
%
Whenever we write an expectation value, $\Ex[\cdot]$ or $\Var[\cdot]$, without explicitly specifying the underlying distribution, we consider the probability distribution over the defined random circuit.

\newpage

\section{Observable expectation values of noisy random quantum circuits}
\label{exp_values}
In this section, we analyze expectation values of random quantum circuits under possibly non-unital noise. We make here a summary of our results that we analyze in detail in their respective subsections.
We consider a circuit architecture $\Phi$ as we described in subsection~\ref{sub:circuitmodel}, where the local noise channels are characterized by the parameters of their normal form representation $\bold{t} \coloneqq (t_X,t_Y,t_Z)$ and $\bold{D} \coloneqq (D_X,D_Y,D_Z)$, which we assume to be constants with respect to the number of qubits. 
Our first main theorem is the following.

\begin{theorem}[Variance of expectation values of random circuits with non-unital noise]
\label{th:variance}
Let $H \coloneqq \sum_{P\in \{I,X,Y,Z\}^{\otimes n}} a_P P$, with $a_P\in \mathbb{R}$ for $P\in \{I,X,Y,Z\}^{\otimes n}$, be an arbitrary Hamiltonian. Let $\rho$ be a quantum state. We assume that the noise is non-unital, specifically $\norm{\bold{t}}_2=\Theta(1)$. Then, at any depth of the noisy circuit $\Phi$, as defined in Eq.~\eqref{eq:randcirc}, we have
\begin{align}
       \Var[\Tr(H \Phi(\rho)  )] = \sum_{P\in \{I,X,Y,Z\}^{\otimes n}\setminus I^{\otimes n}} a_P^2 \exp(-\Theta(|P|)).
\end{align}  
\end{theorem}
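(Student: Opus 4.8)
The plan is to reduce the variance computation to a sum over Pauli components and then show each component is controlled by a quantity of the form $\Theta(c^{|P|})$ with appropriate upper and lower bounds. First I would note that since $\Ex_\Phi[\Tr(P\Phi(\rho))]=0$ for every non-identity Pauli $P$ (which follows from the first-moment formula applied to the final single-qubit layer $\mathcal{V}^{\mathrm{single}}$, as a layer of single-qubit $1$-designs averages any non-identity Pauli to zero), and $\Tr(I\,\Phi(\rho))=1$ deterministically, the variance is
\begin{align}
\Var_\Phi[\Tr(H\Phi(\rho))] = \sum_{\substack{P\in\{I,X,Y,Z\}^{\otimes n}\\ P\neq I_n}} a_P^2\,\Ex_\Phi[\Tr(P\Phi(\rho))^2].
\end{align}
So it suffices to prove that $\Ex_\Phi[\Tr(P\Phi(\rho))^2]=\exp(-\Theta(|P|))$ for each non-identity $P$, i.e. both $\le c_1^{|P|}$ and $\ge c_2^{|P|}$ for constants $0<c_2\le c_1<1$ depending only on the noise parameters.

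The upper bound is essentially Theorem~\ref{th:theor1} (applied with $\sigma$ the maximally mixed state, or directly to $\rho$), which already gives $\Ex_\Phi[\Tr(P\Phi(\rho))^2]\le O(c^{L+|P|})\le O(c^{|P|})$ using $c<1$ from Lemma~\ref{le:contractionnormal}. The lower bound is the substantive new part. Here I would mimic the peeling argument in the proof sketch of Theorem~\ref{th:theor1} but track the contribution more carefully, keeping only one well-chosen term at each step instead of maximizing. Concretely, averaging over the last single-qubit layer via Lemma~\ref{le:mixham} gives
\begin{align}
\Ex_\Phi[\Tr(P\Phi(\rho))^2] = \frac{1}{3^{|P|}}\sum_{\substack{Q:\ \supp(Q)=\supp(P)}}\Ex\!\left[\Tr(Q\,\mathcal{N}^{\otimes n}\circ\mathcal{U}_L\circ\Phi''(\rho))^2\right],
\end{align}
and then taking the adjoint of $\mathcal{N}^{\otimes n}$ using \eqref{eq:adjointeq} produces, for each $Q$, a sum over $a\in\{0,1\}^{|Q|}$ of terms weighted by $\prod_j (t_{Q_j}^{a_j}D_{Q_j}^{1-a_j})^2$. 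Since all these terms are nonnegative, I can lower bound the whole expression by retaining only the single term $a=(1,1,\dots,1)$, i.e. the ``$\mathbf{t}$'' branch where every Pauli on $\supp(Q)$ gets replaced by the identity. This term equals $\big(\prod_{j\in\supp(P)} t_{Q_j}^2\big)\,\Ex[\Tr(I_n\cdot\mathcal{U}_L\circ\Phi''(\rho))^2] = \prod_{j\in\supp(P)} t_{Q_j}^2$, because the adjoint layer of $\mathcal{U}_L$ fixes $I_n$ and $\Phi''$ is trace-preserving, so $\Tr(I_n\,(\cdots)(\rho))=1$. Summing over the $3^{|P|}$ choices of $Q$ and using $\sum_{Q_j\in\{X,Y,Z\}} t_{Q_j}^2 = \|\mathbf{t}\|_2^2$, the $\frac{1}{3^{|P|}}$ and the sum combine to give exactly $\big(\tfrac{1}{3}\|\mathbf{t}\|_2^2\big)^{|P|}$. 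Hence
\begin{align}
\Ex_\Phi[\Tr(P\Phi(\rho))^2] \ \ge\ \left(\frac{\|\mathbf{t}\|_2^2}{3}\right)^{|P|},
\end{align}
which is $\exp(-\Theta(|P|))$ precisely under the hypothesis $\|\mathbf{t}\|_2=\Theta(1)$ (and $\|\mathbf{t}\|_2\le 1$ always, by Lemma~\ref{le:contractionnormal}, so the base is $<1$ unless $\|\mathbf{t}\|_2^2=3$, impossible). Combining the two bounds yields $\Ex_\Phi[\Tr(P\Phi(\rho))^2]=\exp(-\Theta(|P|))$, and summing over $P$ gives the theorem; note the result is manifestly depth-independent since the lower-bound term never shrank with $L$ and the upper bound only improves with $L$.

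The main obstacle I anticipate is making the lower-bound peeling fully rigorous: one must check that discarding all but the all-$\mathbf{t}$ branch is legitimate at \emph{every} layer simultaneously, not just the last one, and that the intermediate quantities are genuinely nonnegative expectations of squares so that dropping terms can only decrease them. This requires setting up the recursion so that after peeling layer $L$ one is left with an object of the form $\Ex[\Tr(I_n\, \Phi''(\rho))^2]$ — but in fact the all-$\mathbf{t}$ branch collapses the Pauli to the identity in a \emph{single} step, so there is no genuine recursion needed and the argument is actually shorter than the upper bound; the only care needed is the bookkeeping of the $3^{|P|}$ factor against the sum over $Q$ with fixed support, and verifying the claim $\Ex[\Tr(I_n\cdot(\text{anything trace-preserving})(\rho))^2]=1$. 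A secondary point is handling the case $\supp(Q)=\supp(P)$ correctly when $P$ has identity tensor factors — but Lemma~\ref{le:mixham} already packages this. I would also double check that the hypothesis is used only in the lower bound (the upper bound needs just non-unitality, via $c<1$), so the stated $\Theta$ is tight exactly when $\|\mathbf{t}\|_2$ is bounded away from $0$.
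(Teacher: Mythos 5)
Your lower bound is exactly the paper's argument (Proposition~\ref{prop:lbvar}): decompose via Lemma~\ref{le:mixham}, take the adjoint of the last noise layer, keep only the nonnegative all-$\mathbf{t}$ branch $a=(1,\dots,1)$, and use trace preservation to evaluate it as $\prod_{j}t_{Q_j}^2$, which after summing over $Q$ with $\supp(Q)=\supp(P)$ gives $\bigl(\tfrac{1}{3}\norm{\mathbf{t}}_2^2\bigr)^{|P|}$. Your bookkeeping, including the multinomial collapse of the $3^{|P|}$ factor and the observation that no recursion is needed, is correct.

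The upper bound, however, has a genuine gap. Theorem~\ref{th:theor1} bounds $\Ex[\Tr(P\Phi(\rho-\sigma))^2]$ for a \emph{difference} of states, and under non-unital noise there is no fixed state $\sigma$ with $\Tr(P\Phi(\sigma))=0$ to subtract: taking $\sigma=I/2^n$ leaves you needing a bound on $\Ex[\Tr(P\Phi(I/2^n))^2]$, which is the same problem you started with (this reduction only works in the unital case, where $\Phi(I/2^n)=I/2^n$; cf.\ Proposition~\ref{prop:expdecayPunital}). More tellingly, the bound $O(c^{L+|P|})$ you claim, and your closing remark that ``the upper bound only improves with $L$,'' directly contradict your own lower bound $\bigl(\tfrac{1}{3}\norm{\mathbf{t}}_2^2\bigr)^{|P|}$, which is depth-independent: the single-state second moment does \emph{not} decay with depth under non-unital noise. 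The correct upper bound (the paper's Proposition~\ref{prop:upvar}) is the depth-independent $c^{|P|}$ with $c=\tfrac{1}{3}(\norm{\mathbf{t}}_2^2+\norm{\mathbf{D}}_2^2)<1$, obtained by running exactly your one-layer peeling but bounding every term $\Ex\bigl[\Tr\bigl(\bigotimes_{j}Q_j^{1-a_j}\,\mathcal{U}_L\circ\Phi''(\rho)\bigr)^2\bigr]\le 1$ instead of discarding all but one; the weights then sum to $\prod_{j\in\supp(Q)}(t_{Q_j}^2+D_{Q_j}^2)$ and the multinomial theorem gives $c^{|P|}$. So the fix is a one-line modification of machinery you already set up, but as written the upper half of your proof does not go through.
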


To prove Theorem~\ref{th:variance}, we make use of results we show in subsection~\ref{sub:local} and subsection~\ref{sub:global}, where we respectively show a lower bound on the variance (Proposition~\ref{prop:lbvar}) and a matching upper bound (Proposition~\ref{prop:upvar}). 

Theorem~\ref{th:variance} directly implies that the variance of expectation value of local observables (i.e., \emph{local expectation values}) can be significantly large, e.g., $\Var[\Tr(Z_1 \Phi(\rho)  )]=\Omega(1)$. This means that local expectation values can deviate significantly from their mean value $\Ex[\Tr(H \Phi(\rho)  )]$. This is in stark contrast to the behaviour of noiseless random quantum circuits or circuits with unital noise~\cite{napp2022quantifying,nibp}. Theorem~\ref{th:variance} also implies that the variance of expectation value of global observables (i.e., \emph{global expectation values}) are exponentially concentrated to their mean value, e.g., $\Var[\Tr(Z^{\otimes n} \Phi(\rho)  )]=\exp(-\Omega(n))$.

\textcolor{black}{Moreover, based on the results shown in subsection~\ref{sub:effective} and subsection~\ref{sub:tracedistance}, we establish the following theorem.}

\begin{theorem}[Average distance between two quantum states]
\label{th:effective}
Let \(P \in \{I, X, Y, Z\}^{\otimes n}\). Let \(\rho\) and \(\sigma\) be any quantum states. Consider \(\Phi\) as any noisy random quantum circuit of depth \(L\), as defined in Eq.~\eqref{eq:randcirc}. Assume that the noise is not a unitary channel. Then, it holds that 
\begin{align}
\label{eq:paulidecSUM}
    \Ex_{\Phi}[|\Tr(P \Phi(\rho)) - \Tr(P \Phi(\sigma))|] \leq \exp\!\left(-\Omega\!\left(L + |P|\right)\right),
\end{align}  
and for all observables \(O\), we have
\begin{align}
\label{eq:OdecSUM}
    \Ex_{\Phi}[|\Tr(O \Phi(\rho)) - \Tr(O \Phi(\sigma))|] \leq \norm{O}_{\infty} \exp\!\left(-\Omega\!\left(L\right)\right).
\end{align}  
This further implies that for \(L = \Omega(n)\), we have
\begin{align}
    \Ex_{\Phi}[\norm{\Phi(\rho) - \Phi(\sigma)}_{1}] \leq \exp(-\Omega(n)),
\end{align}  
where the expected value is taken over the $2$-design distribution of the two-qubits gates that compose the circuit  \(\Phi\).
\end{theorem}


The average trace distance upper bound is proven in subsection~\ref{sub:tracedistance}, where we prove also a worst-case trace distance upper bound (i.e., without the expected value) that holds in certain high-noise regime. Note that we cannot hope to prove a worst-case upper bound on the trace distance that is valid for every noise regime. This is because there are quantum error correction methods, such as the so-called \emph{quantum refrigerator} construction~\cite{refrigerator}, which can leverage non-unital noise to perform fault-tolerant quantum computation in a model similar to ours, up to depths that are exponential in the number of qubits. Thus, for these special classes of circuits, the trace distance remains of constant order. Moreover, it is known that this result is tight~\cite{fawzi2022lower}, as bounds on the worst-case convergence are known in the regime where the depth is exponential in the number of qubits.

\textcolor{black}{From a direct application of Eq.~\eqref{eq:OdecSUM}, it follows that with high probability over the choice of the random circuit, considering only the last \(O(\log(\varepsilon^{-1}))\) layers suffices for the estimation of expectation values with \(\varepsilon\) precision. In particular, we get the following:}
\begin{proposition}[Effective depth]
\label{prop:classim}
Let \(\varepsilon, \delta > 0\). Let \(O\) be any observable, and let \(\rho_0\) be any initial state. Consider a noisy quantum circuit \(\Phi\) of depth \(L\), as defined in Eq.~\eqref{eq:randcirc}. With probability at least \(1-\delta\) over the choice of the random circuit, we have
\begin{align}
    |\Tr(O \Phi(\rho_0)) - \Tr(O \Phi_{[L-\ell,L]}(\sigma_0))| \leq \varepsilon,
\end{align}
where \(\sigma_0\) is any preferred initial state (e.g., \(\sigma_0 \coloneqq \ketbra{0^n}{0^n}\)). Here, \(\Phi_{[L-\ell,L]}\) denotes the channel \(\Phi\) restricted to the last \(\ell\) layers, where \(\ell \coloneqq O(\log(\norm{O}_{\infty}/(\delta \varepsilon^2)))\).
\end{proposition}

Note that if \(O\) is local and the desired accuracy \(\varepsilon\) is constant in the number of qubits, then \(\Tr(O \Phi_{[L-\ell,L]}(\rho_0))\) can be computed efficiently classically via light-cone arguments. 
Moreover, if \(\Phi^{*}_{[L-\ell,L]}(O)\) is close to something proportional to the identity (which can be verified efficiently classically), then we can certify that our algorithm has succeeded. If \(O\) is a global Pauli observable, then we can just output zero for estimating the expectation value with inverse-polynomial precision (because of Eq.~\eqref{eq:paulidecSUM}). 
Collectively, these insights underpin a classical simulation algorithm capable of estimating Pauli expectation values of random quantum circuits affected by—possibly non-unital—noise, as explained in Subsection~\ref{sub:simu}. Its runtime depends polynomially on the inverse of the precision for one-dimensional architectures and quasipolynomially for higher-dimensional ones.


\subsection{Variance lower bound: Local expectation values with non-unital noise are not exponentially concentrated}
\label{sub:local}

In this subsection, we show that local expectation values of average quantum circuits with non-unital noise can be far from zero, in contrast to what happens with unital noise or in the noiseless case. Let us consider a circuit $\Phi$ as described in Subsection~\ref{sub:circuitmodel}, where the local noise channel is characterized by the parameters of its normal form representation $\bold{t} \coloneqq (t_X,t_Y,t_Z)$ and $\bold{D} \coloneqq (D_X,D_Y,D_Z)$, and we consider any circuit depth $L\ge 1$. 

\begin{proposition}[Lower bound on the variance]
\label{prop:lbvar}
Let $H \coloneqq \sum_{P\in \{I,X,Y,Z\}^{\otimes n}} a_P P$, with $a_P\in \mathbb{R}$ for all $P\in \{I,X,Y,Z\}^{\otimes n}$, be an arbitrary Hamiltonian. Let $\rho$ be a quantum state. Then, for any depth of the noisy circuit $\Phi$, we have
    \begin{align}
       \Var[\Tr(H \Phi(\rho)  )] \ge \sum_{P\in \{I,X,Y,Z\}^{\otimes n}\setminus I^{\otimes n}} a_P^2 \left(\frac{\|\bold{t}\|^2_2}{3}\right)^{|P|},
       \label{eq:lbpaulipurity}
       \end{align}
       where we note that $\norm{\bold{t}}_2$ is non-zero if the noise channel is non-unital.
\end{proposition}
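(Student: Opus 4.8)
\textbf{Proof proposal for Proposition~\ref{prop:lbvar}.}

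The plan is to lower-bound the variance by a single well-chosen term in a positive expansion. Since $\Ex[\Tr(P\Phi(\rho))]=0$ for every non-identity Pauli $P$ (which follows by averaging the last layer of single-qubit gates, using that it is a $1$-design by Lemma~\ref{le:qubitonedesign}, so the expectation is proportional to $\Tr(P)=0$), we have
\begin{align}
\Var[\Tr(H\Phi(\rho))]=\Ex\!\left[\Tr(H\Phi(\rho))^2\right]-\left(\sum_{P} a_P \Ex[\Tr(P\Phi(\rho))]\right)^2=\Ex\!\left[\Tr(H\Phi(\rho))^2\right]-a_I^2,
\end{align}
and expanding $H=\sum_P a_P P$ inside the square reduces the problem to controlling $\sum_{P,Q} a_P a_Q \Ex[\Tr(P\Phi(\rho))\Tr(Q\Phi(\rho))]$. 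First I would apply Lemma~\ref{le:mixham} to the last layer $\mathcal{V}^{\mathrm{single}}$ of single-qubit $2$-design gates: this kills all cross terms $P\neq Q$ and turns the second moment into $\sum_P a_P^2\,\Ex[\Tr(P\Phi(\rho))^2]$, so in particular $\Var[\Tr(H\Phi(\rho))]=\sum_{P\neq I} a_P^2\,\Ex[\Tr(P\Phi(\rho))^2]$. Hence it suffices to prove, for each fixed non-identity Pauli $P$, the single-Pauli lower bound $\Ex[\Tr(P\Phi(\rho))^2]\ge (\|\bold{t}\|_2^2/3)^{|P|}$, and then sum (the $P=I$ term contributes $a_I^2(\|\bold{t}\|_2^2/3)^0=a_I^2$, matching what was subtracted, so one actually gets the clean bound over all $P$).

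For the single-Pauli bound I would work in the Heisenberg picture and peel off layers from the \emph{front} rather than the back, since we want a lower bound and the initial state is unconstrained. Concretely, write $\Phi=\Phi_{\mathrm{rest}}\circ\mathcal{N}^{\otimes n}\circ\mathcal{U}_1$ and consider $\Ex[\Tr(P\Phi(\rho))^2]$. Averaging over the first unitary layer $\mathcal{U}_1$ (each gate a $2$-design) and over the single-qubit layer we may extract, the key observation is that the adjoint noise acts on each Pauli factor as $\mathcal{N}^{*}(Q)=t_Q I + D_Q Q$ for $Q\in\{X,Y,Z\}$ and $\mathcal{N}^{*}(I)=I$ (Eq.~\eqref{eq:adjointeq}). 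The term in which, on every qubit in $\supp(P)$, we keep only the $t_Q I$ piece of $\mathcal{N}^{*}$ produces a scalar $\prod_{j\in\supp(P)} t_{Q_j}^2$ times $\Tr(I\,\cdots)^2$-type contributions that are manifestly non-negative — in fact, after the single-qubit $2$-design average (Lemma~\ref{le:mixham}, part 2, which replaces $\Tr(P U\rho U^\dagger)^2$ by $3^{-|P|}\sum_{Q:\supp Q=\supp P}\Tr(Q\rho)^2\ge 0$), this ``all-identity descendant'' term contributes exactly $3^{-|P|}\|\bold{t}\|_2^{2|P|}$-worth after the multinomial sum over which of $X,Y,Z$ is retained. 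All the other descendant terms (those keeping at least one $D_Q Q$ factor, or mixed $I\otimes$ cross terms between the two copies) must be shown to be non-negative — this is where the structure of Lemma~\ref{le:mixham} combined with the normal-form sign conventions for $\bold{D}$ does the work: each such term reduces, after averaging, to a sum of squares $\Tr(Q'\text{-something})^2$ with non-negative prefactors.

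The main obstacle is exactly verifying the non-negativity of \emph{all} the cross/descendant terms so that discarding them only decreases the quantity. The delicate point is that at intermediate layers one does not have a product state, so after peeling a few layers one is lower-bounding $\Ex[\Tr(Q\,\Phi'(\rho))^2]$ for various $Q$ by the same kind of quantity at one fewer layer; I would set up the recursion carefully so that the retained ``$t$-only'' branch at each step multiplies by at least $\|\bold{t}\|_2^2/3$ per qubit of support (noting the support can only shrink or stay the same under the adjoint Clifford layers, since $\mathcal{N}^*$ applied with the identity Kraus branch does not create support), while every discarded branch is a genuine sum of squares. Iterating $L$ times and observing that the support of the propagated Pauli never exceeds $|P|$ in the $t$-only branch, one arrives at $\Ex[\Tr(P\Phi(\rho))^2]\ge(\|\bold{t}\|_2^2/3)^{|P|}$, independent of $L$; summing over $P$ with weights $a_P^2$ gives Eq.~\eqref{eq:lbpaulipurity}, and $\|\bold{t}\|_2\neq 0$ precisely when $\mathcal{N}$ is non-unital by Eqs.~\eqref{eq:noisedefPauli}--\eqref{eq:noisedefPauli_Z}.
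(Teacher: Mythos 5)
Your reduction to a single-Pauli bound via the mean-zero observation and Lemma~\ref{le:mixham} matches the paper, and the key mechanism you identify — retain only the branch of $\mathcal{N}^{*}$ in which every non-identity Pauli factor is replaced by $t_{Q}I$, discard the rest as non-negative, and sum with the multinomial theorem to get $3^{-|P|}\|\bold{t}\|_2^{2|P|}$ — is exactly the paper's argument. But your framing of how to execute it is off in a way that matters. You propose to peel layers from the \emph{front} and run a recursion over all $L$ layers, yet the computation you actually describe (applying $\mathcal{N}^{*}(Q)=t_QI+D_QQ$ directly to the observable $P$) is the \emph{back}-of-circuit step: one averages the final single-qubit layer, then takes the adjoint of the final noise layer onto the resulting Pauli $Q$. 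At that point the all-$t$ branch is already proportional to $I_n$, and $\Tr(I_n\,\Phi''(\rho))=1$ for any state and any remaining circuit — so the argument terminates after a single peeling step, with no recursion, no tracking of supports through intermediate layers, and depth-independence for free. A front-first peeling would instead require the non-unital contribution injected near $\rho$ to survive propagation through all subsequent noisy layers, which is precisely the hard direction and would not yield a depth-independent bound by the route you sketch.

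The point you flag as "the main obstacle" — non-negativity of all discarded terms — is in fact immediate and requires no appeal to the sign conventions of $\bold{D}$. After expanding $\mathcal{N}^{*\otimes n}(Q)=\sum_{a}\prod_j t_{Q_j}^{a_j}D_{Q_j}^{1-a_j}\bigotimes_j Q_j^{1-a_j}$, the distinct branches $a$ correspond to distinct Paulis, so a second application of the Pauli-mixing lemma (using a single-qubit $2$-design layer extracted from $\mathcal{U}_L$) kills all cross terms between branches and leaves $\sum_a\prod_j(t_{Q_j}^{a_j}D_{Q_j}^{1-a_j})^2\,\Ex[\Tr(\cdot)^2]$, a sum of squares with squared prefactors. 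Separately, your parenthetical claim that the $P=I$ term "matches what was subtracted, so one actually gets the clean bound over all $P$" is backwards: since $\Ex[\Tr(H\Phi(\rho))]=a_I$, one obtains $\Var[\Tr(H\Phi(\rho))]=\sum_{P\neq I}a_P^2\,\Ex[\Tr(P\Phi(\rho))^2]\ge\sum_{P\neq I}a_P^2(\|\bold{t}\|_2^2/3)^{|P|}$, which is the stated bound \emph{minus} $a_I^2$; you cannot add $a_I^2$ back to a lower bound on the variance (take $H=a_II$ to see the inclusive bound fail). This is really a defect of the proposition as stated (the paper's own proof asserts $\Ex[\Tr(I\,\Phi(\rho))]=0$, which is false), and the correct reading restricts the sum to non-identity Paulis; your attempted repair does not work.
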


\begin{proof}
Because our circuit ends with a layer of random single qubit gates $\otimes^{n}_{i=1} u_i $, it holds that $\Ex[\Tr(P \Phi(\rho)  )]=0$ for any $P\in \{I,X,Y,Z\}^{\otimes n} \setminus I^{\otimes n}$, which follows from Lemma~\ref{le:qubitonedesign}. We, therefore, 
have that
\begin{align}
    \Ex[\Tr(H \Phi(\rho)  )]=a_{I^{\otimes n}}.
\end{align}
We now focus on $\Ex[\Tr(H \Phi(\rho) )^2]$. First of all, using point 1 of Lemma~\ref{le:mixham}, we have
\begin{align}
 \Ex[\Tr(H \Phi(\rho) )^2]=\sum_{P\in \{I,X,Y,Z\}^{\otimes n}} a_P^2 \Ex[\Tr(P \Phi(\rho))^2] = a^2_{I^{\otimes n}} +\sum_{P\in \{I,X,Y,Z\}^{\otimes n}\setminus I^{\otimes n}} a_P^2 \Ex[\Tr(P \Phi(\rho))^2].
\end{align}
Let us now analyze each term $\Ex[\Tr(P \Phi(\rho))^2]$ in the sum above separately. Using point 2 of Lemma~\ref{le:mixham}, we obtain
\begin{align}
    \Ex[\Tr(P \Phi(\rho) )^2 ]=\frac{1}{3^{|P|}}\sum_{\substack{Q \in \{I,X,Y,Z\}^{\otimes n}:\\ \supp(Q)=\supp(P) } }\Ex[\Tr(Q \Phi^{\prime}(\rho) )^2],
\end{align}
which corresponds to `removing' the last layer of single-qubit gates and using the Pauli mixing property. Recall that $\Phi^{\prime}$ denotes the noisy circuit channel without the last layer of single qubit gates, while $\Phi^{\prime\prime}$ denotes the noisy circuit channel without the last layer of single qubit gates and last layer of noise, i.e., $\Phi'=\mathcal{N}^{\otimes n}\circ\Phi''$. Taking the adjoint of the noise, and using the fact that $\mathcal{N}^{\ast}$ is a unital channel, we obtain
\begin{align}
\nonumber
    \Ex[\Tr(Q \Phi^{\prime}(\rho) )^2]&= \Ex\left[\Tr(\mathcal{N}^{* \otimes n}(Q)\Phi^{\prime\prime}(\rho) )^2\right]\\
      \nonumber
      &=\Ex\left[\Tr(\left(\bigotimes_{j \in \supp(Q)} \mathcal{N}^{*}(Q_j)\right) \Phi^{\prime\prime}(\rho) )^2\right]\\
    \nonumber&=\Ex\left[\Tr(\left(\bigotimes_{j \in \supp(Q)} (t_{Q_j }I_j+ D_{Q_j}Q_j)\right) \Phi^{\prime\prime}(\rho) )^2\right]\\& =\Ex\left[\Tr(\left(\sum_{a\in \{0,1\}^{|Q|}}\bigotimes_{j \in \supp(Q)} (t^{a_j}_{Q_j }  D^{1-a_j}_{Q_j}Q^{1-a_j}_j)\right) \Phi^{\prime\prime}(\rho) )^2\right] 
    \nonumber 
    \\
    & =\sum_{a\in \{0,1\}^{|Q|}}\prod_{j \in \supp(Q)} (t^{a_j}_{Q_j } D^{1-a_j}_{Q_j})^2\Ex\left[\Tr(\left(\bigotimes_{k \in \supp(Q)}  Q^{1-a_k}_k\right) \Phi^{\prime\prime}(\rho) )^2\right]
    \label{eq:upstep}
\end{align}
where in the 
third step we have used the normal-form parametrization of the channel, specifically, Eq.~\eqref{eq:adjointeq}. The fifth step follows by observing that we can apply point $1$ of Lemma~\ref{le:mixham}, which we can do because $\Phi^{\prime\prime}$ ends with a $2$-local $2$-design unitary layer, hence we can single-out from it a layer of single qubit Haar random gates, due to the invariance of the Haar measure and because each qubit is a acted on by at least one of the $2$-qubit $2$-design gate. 

Now, we are left with a sum of positive terms and from such sum we can keep only the term corresponding to identity term, and we lower bound the remaining terms with zero. This implies that
\begin{align}
\nonumber
  &\sum_{a\in \{0,1\}^{|Q|}}\!\prod_{j \in \supp(Q)} \!t^{2a_j}_{Q_j } D^{2(1-a_j)}_{Q_j}\Ex\!\left[\Tr(\left(\bigotimes_{k \in \supp(Q)}  Q^{1-a_k}_k\right) \Phi^{\prime\prime}(\rho) )^2\right]
    \\
    \nonumber &\ge  |t_X|^{2|Q|_X}|t_Y|^{2|Q|_Y}|t_Z|^{2|Q|_Z}\Ex\left[\Tr(I_n \Phi^{\prime\prime}(\rho))^2\right]\\
    &=|t_X|^{2|Q|_X}|t_Y|^{2|Q|_Y}|t_Z|^{2|Q|_Z}, \label{eq:lbpauli}
\end{align}
where, in the last step, we have used simply that density matrices have unit trace.
Substituting, we get 
\begin{align}
    \Ex\left[\Tr(P \Phi(\rho))^2\right] \ge \frac{1}{3^{|P|}} \sum_{\substack{Q \in \{I,X,Y,Z\}^{\otimes n}:\\ \supp(Q)=\supp(P) }} |t_X|^{2|Q|_X}|t_Y|^{2|Q|_Y}|t_Z|^{2|Q|_Z} = \frac{1}{3^{|P|}} \left(|t_X|^2+|t_Y|^2+|t_Z|^2\right)^{|P|} 
\end{align}
where we have used the multinomial theorem in the last equality. By putting everything together and using the definition of variance, we can conclude.
\end{proof}
Note that if the noise is unital, i.e.,  $\|\bold{t}\|_2=0$, the previous lower bound becomes vacuous.
As an immediate corollary of the previous inequality we have the following.
\begin{corollary}[Local expectation values are not exponentially concentrated on average]
Let $P\in \{I,X,Y,Z\}^{\otimes n}$ be a Pauli operator with weight $|P|=\Theta(1)$. Let us assume that the noise is non-unital, specifically that  $\norm{\bold{t}}_2=\Theta(1)$. Then, we have
\begin{align}
    \Var[\Tr(P \Phi(\rho) )]= \Theta(1).
\end{align}
\end{corollary}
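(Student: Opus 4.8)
The statement to prove is the final Corollary: if $P$ is a Pauli of constant weight $|P| = \Theta(1)$ and the noise is non-unital with $\norm{\bold{t}}_2 = \Theta(1)$, then $\Var[\Tr(P\Phi(\rho))] = \Theta(1)$. The plan is to obtain this as a matching pair of bounds. The lower bound is already in hand: Proposition~\ref{prop:lbvar} (specialized to the single-term Hamiltonian $H = P$) gives $\Var[\Tr(P\Phi(\rho))] \ge (\norm{\bold{t}}_2^2/3)^{|P|}$, which under the hypotheses $|P| = \Theta(1)$ and $\norm{\bold{t}}_2 = \Theta(1)$ is a positive constant, hence $\Var[\Tr(P\Phi(\rho))] = \Omega(1)$. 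So the whole content of the corollary beyond Proposition~\ref{prop:lbvar} is the \emph{upper bound} $\Var[\Tr(P\Phi(\rho))] = O(1)$.

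For the upper bound, I would first note that $\Ex[\Tr(P\Phi(\rho))] = 0$ (the final single-qubit layer is a $1$-design, Lemma~\ref{le:qubitonedesign}), so $\Var[\Tr(P\Phi(\rho))] = \Ex[\Tr(P\Phi(\rho))^2]$. The cleanest route is to observe that this second moment is a special case of the bound behind Theorem~\ref{th:theor1}: taking $\sigma = 0$ there (or rather running the same ``peel off a layer'' argument with a single fixed input state $\rho$) one gets $\Ex[\Tr(P\Phi(\rho))^2] \le c^{|P|}\,\max_{Q}\Ex[\Tr(Q\Phi'(\rho))^2]$, where $\Phi'$ is the depth-$(L-1)$ circuit and $c < 1$. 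Since every $\Ex[\Tr(Q\Phi'(\rho))^2] \le 1$ trivially (it is the square of an expectation of a quantity bounded by $\norm{Q}_\infty \norm{\Phi'(\rho)}_1 = 1$, or more carefully $\Ex[\Tr(Q\Phi'(\rho))^2]\le \Ex[\norm{\Phi'(\rho)}_1^2\norm{Q}_\infty^2] = 1$), we conclude $\Var[\Tr(P\Phi(\rho))] \le c^{|P|} \le 1 = O(1)$. In fact $c^{|P|}$ is itself $O(1)$ for any $|P|$, so this direction does not even need $|P| = \Theta(1)$; only the lower bound does. Combining, $\Var[\Tr(P\Phi(\rho))] = \Theta(1)$.

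Concretely I would structure the write-up as: (i) cite Proposition~\ref{prop:lbvar} with $H = P$ for the $\Omega(1)$ half, plugging in the constancy of $|P|$ and $\norm{\bold{t}}_2$; (ii) for the $O(1)$ half, recall $\Ex[\Tr(P\Phi(\rho))] = 0$ so the variance equals the second moment, then invoke the contraction estimate proved in (the proof of) Theorem~\ref{th:theor1} — namely $\Ex[\Tr(P\Phi(\rho))^2]\le c^{|P|}\le 1$ with $c < 1$ by Lemma~\ref{le:contractionnormal} — or, even more simply, bound $|\Tr(P\Phi(\rho))|\le \norm{\Phi(\rho)}_1 = 1$ pointwise over the circuit ensemble so that $\Tr(P\Phi(\rho))^2 \le 1$ and hence its expectation is $\le 1$. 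The latter gives the upper bound essentially for free.

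I do not anticipate a genuine obstacle here, since both halves reduce to results already established in the excerpt (Proposition~\ref{prop:lbvar}, the proof of Theorem~\ref{th:theor1}, Lemma~\ref{le:contractionnormal}) plus the elementary fact $\abs{\Tr(P\sigma)}\le 1$ for a state $\sigma$. The only point requiring a little care is making sure the asymptotic notation is used consistently — i.e., that ``$|P| = \Theta(1)$'' and ``$\norm{\bold{t}}_2 = \Theta(1)$'' are exactly what is needed to turn the constant-in-$n$ lower bound $(\norm{\bold{t}}_2^2/3)^{|P|}$ into an $\Omega(1)$ statement, and that the $O(1)$ upper bound is unconditional. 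If anything is ``hard'' it is purely bookkeeping: deciding whether to route the upper bound through the Theorem~\ref{th:theor1} machinery (tighter, gives $c^{|P|}$) or through the trivial norm bound (gives $1$); for the corollary as stated, the trivial bound suffices.
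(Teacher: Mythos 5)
Your proposal is correct and follows essentially the same route as the paper: the lower bound is exactly Proposition~\ref{prop:lbvar} specialized to $H=P$, which gives $(\norm{\bold{t}}_2^2/3)^{|P|}=\Omega(1)$ under the stated hypotheses, and the paper likewise treats the $O(1)$ upper bound as immediate (it follows from the pointwise bound $|\Tr(P\Phi(\rho))|\le 1$, or equivalently from the $c^{|P|}\le 1$ estimate of Proposition~\ref{prop:upvar}). Your remark that only the lower bound actually uses $|P|=\Theta(1)$ and $\norm{\bold{t}}_2=\Theta(1)$ is accurate and consistent with how the paper presents the result.
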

We can easily translate the fact that the variance is large into the fact that the probability of deviating from the mean is large. For example, let $C \coloneqq \Tr(P \Phi(\rho) )$, with $\sup(|C|)\le 1$. By using the following probability inequality (see Lemma~\ref{le:probstat} in the last \emph{miscellaneous} section), we find that
\begin{align}
    \mathrm{Prob}\left(\left| C  - \mathbb{E}[C] \right| > \sqrt{\frac{\mathrm{Var}[C]}{2}} \right) \ge \frac{1}{8}\mathrm{Var}[C].
\end{align}
Note that for an inverse polynomially small non-unital noise rate, i.e., $\|\bold{t}\|_2 = \Omega\left(\frac{1}{\mathrm{poly}(n)}\right)$, we would get that 
\begin{equation}
\Var[\Tr(P \Phi(\rho))] = \Omega\left(\frac{1}{\mathrm{poly}(n)}\right), 
\end{equation}
which implies a lack of exponential concentration of local expectation values even for such small non-unital noise regime.

From a more technical perspective, we have shown that \emph{random quantum circuits} with non-unital noise have local expectation values that are not exponentially concentrated.
This is in stark contrast with the behavior of random quantum circuits in the noiseless regime or with local depolarizing noise~\cite{napp2022quantifying}, as summarized in Table~\ref{tab:costconcentration}. 
\begin{table}[h]
    \centering
    \caption{\textbf{Concentration of local expectation values for $\Omega(n)$-depth circuits}}
    \begin{tabular}{lcc}  \hline
    \textbf{Noise model}
    &\quad\quad\quad & \textbf{$\Var\!\left[\Tr\!\left(Z_1 \rho \right)\right]$}
    \\ \hline
    Noiseless~\cite{McClean_2018,napp2022quantifying} &\quad\quad\quad& $\exp(-\Theta(n))$ \\
        Unital noise~\cite{nibp}  &\quad\quad\quad & $\exp(-\Theta(n))$ \\
        
        Non-unital noise~[This work] 
        &\quad\quad\quad& $\Theta(1)$ \\ \hline
    \end{tabular}
    \label{tab:costconcentration}
    \begin{flushleft}
        Table~\ref{tab:costconcentration} illustrates that if a state is prepared by a non-unital noisy-random quantum circuit, the expectation value of local observables never exhibits exponential concentration at any depth around a fixed value. This stands in stark contrast to the noiseless and unital noise regimes.
    \end{flushleft}
\end{table}

\subsection{Variance upper bound: Global expectation values are exponentially concentrated}
\label{sub:global}

In this section, we show that expectation values of global observables are typically exponentially concentrated around their mean value.
As we have previously done, we consider a circuit model as described in Subsection~\ref{sub:circuitmodel}. We let
\begin{equation}\label{eq-noise_value}
   c \coloneqq \frac{1}{3}\left(\|\bold{t}\|^2_2+\|\bold{D}\|^2_2\right),
\end{equation}
where we recall that $\bold{t} \coloneqq (t_X,t_Y,t_Z)$ and $\bold{D} \coloneqq (D_X,D_Y,D_Z)$ are the local noise channel parameters of its normal form representation. We consider any circuit depth $L\ge 1$. From Lemma~$\ref{le:normal}$, we have that $c<1$ if and only if the channel is not unitary.

\begin{proposition}[Variance upper bound]
\label{prop:upvar}
Let $H \coloneqq \sum_{P\in \{I,X,Y,Z\}^{\otimes n}} a_P P$, with $a_P\in \mathbb{R}$ for any $P\in \{I,X,Y,Z\}^{\otimes n}$ be an arbitrary Hamiltonian. Let $\rho$ be any quantum state. Then, at any depth of the noisy circuit $\Phi$, as defined in Eq.~\eqref{eq:randcirc}, we have
    \begin{align}
       \Var[\Tr(H \Phi(\rho)  )] \le \sum_{P\in \{I,X,Y,Z\}^{\otimes n}\setminus I^{\otimes n}} a^2_P c^{|P|},
       \end{align}
       where the parameter $c$ is defined in
       Eq.~\eqref{eq-noise_value}.
\end{proposition}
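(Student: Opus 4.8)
The plan is to mirror the structure of the proof sketch of Theorem~\ref{th:theor1}: work in the Heisenberg picture, peel off the last unitary-plus-noise layer, and track how each step rescales the second moment. The key point is that the lower bound of Proposition~\ref{prop:lbvar} threw away all the non-identity terms in the sum over $a \in \{0,1\}^{|Q|}$; here I would instead keep \emph{all} of them and resum them with the multinomial theorem to obtain the constant $c$.

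First I would use $\Ex[\Tr(P\Phi(\rho))] = 0$ (from Lemma~\ref{le:qubitonedesign}, since the circuit ends with a layer of single-qubit $2$-designs), so that $\Var[\Tr(H\Phi(\rho))] = \Ex[\Tr(H\Phi(\rho))^2]$. Then by point~1 of Lemma~\ref{le:mixham}, $\Ex[\Tr(H\Phi(\rho))^2] = \sum_P a_P^2\,\Ex[\Tr(P\Phi(\rho))^2]$, so it suffices to show $\Ex[\Tr(P\Phi(\rho))^2] \le c^{|P|}$ for each Pauli $P$. For this I would prove, by induction on the number of layers, the stronger statement
\begin{align}
\max_{P \in \{I,X,Y,Z\}^{\otimes n}} \Ex[\Tr(P\,\Phi_{[1,m]}(\rho))^2] \le c^{\,w}, \nonumber
\end{align}
where $w$ is the weight of the maximizing Pauli — more precisely, one shows that peeling off one layer turns a weight-$|P|$ expectation into at most $c^{|P|}$ times a weight-$|Q|$ expectation on the shorter circuit. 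The inductive step reuses exactly the computation in the proof sketch: average over the final single-qubit layer to replace $P$ by a uniform $Q$ with $\supp(Q)=\supp(P)$; take the adjoint of $\mathcal{N}^{\otimes n}$ using Eq.~\eqref{eq:adjointeq} to expand into $\sum_{a\in\{0,1\}^{|Q|}}$; pull out the scalar factors $\prod_j (t_{Q_j}^{a_j}D_{Q_j}^{1-a_j})^2$; and bound the remaining expectation by its maximum over Paulis (on the circuit with one fewer layer, after singling out a single-qubit Haar layer from the preceding $2$-local layer so that point~1 of Lemma~\ref{le:mixham} applies). Summing the scalar factors gives $\frac{1}{3^{|P|}}\sum_{\supp(Q)=\supp(P)}(\|\bold t\|_2^2+\|\bold D\|_2^2)^{|P|} = c^{|P|}$ by the multinomial theorem, and since $|Q|\le|P|$ and $c\le 1$ (Lemma~\ref{le:contractionnormal}) the bound $c^{|P|}$ is enough.

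The base case is a depth-$0$ circuit (just the state $\rho$): $\Ex[\Tr(P\rho)^2] = \Tr(P\rho)^2 \le 1 = c^0$ when $P=I$, and for $P\neq I$ one uses $\Tr(P\rho)^2 \le 1 \le$ (nothing is lost because the induction only needs the bound at the weight of the maximizer, and at depth $0$ we may as well bound everything by $1$; the contraction kicks in at the first layer). I would also remark, as in the proof of Theorem~\ref{th:theor1}, that whenever the running maximum is attained by the identity the corresponding term contributes $\Tr(I\,\Phi''(\rho))^2 = 1$, which is already $\le c^0$, so the induction closes cleanly. The main obstacle is bookkeeping rather than conceptual: one must be careful that at each layer the weight can only decrease (the $Q^{1-a_k}$ factors remove sites from the support), so that the accumulated factor is $c^{|P_L|}c^{|P_{L-1}|}\cdots \ge c^{|P|}$ only if one is careful about which weight multiplies which $c$ — the safe route is to bound each per-layer factor $c^{|Q|}$ from above by $c^{|P_{\text{initial at that layer}}|}$ using $c\le 1$ and monotonicity, or simply to note that the very first peeled layer already yields the factor $c^{|P|}$ and all subsequent layers contribute factors $\le 1$. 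Since $c<1$ precisely when $\mathcal{N}$ is not unitary, the bound is nontrivial exactly in the regime of interest, and specializing to $|P|=\Theta(n)$ gives Corollary~\ref{global expectation}.
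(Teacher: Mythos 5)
Your proposal is correct and, once you discard the layer-by-layer induction in favor of your own ``safe route'' (peel a single noise-plus-single-qubit layer to extract the factor $c^{|P|}$, then bound the remaining expectation by $1$ using $|\Tr(Q\sigma)|\le 1$ for any Pauli $Q$ and state $\sigma$), it is exactly the paper's proof. The multi-layer induction you sketch first is unnecessary here and is really the argument for Proposition~\ref{prop:expdecayP}, where the traceless input $\rho-\sigma$ kills the identity term and lets the contraction accumulate across layers; for a single state the identity Pauli contributes $1$ after the first peel, so no further contraction is available and the single-peel bound $c^{|P|}$ is the right stopping point.
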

\begin{proof}
The proof follows the same initial steps of the proof of Proposition~\ref{prop:lbvar}. In particular, the mean is $\Ex[\Tr(H \Phi(\rho)  )]=a_{I^{\otimes n}}$, and we have
\begin{align}
    \Ex[\Tr(H \Phi(\rho) )^2]= a^2_{I^{\otimes n}} +\sum_{P\in \{I,X,Y,Z\}^{\otimes n}\setminus I^{\otimes n}} a_P^2 \Ex[\Tr(P \Phi(\rho))^2],
\end{align}
with
\begin{align}
    \Ex[\Tr(P \Phi(\rho)  )^2]=\frac{1}{3^{|P|}}\sum_{\substack{Q \in \{I,X,Y,Z\}^{\otimes n}:\\ \supp(Q)=\supp(P) } }\Ex[\Tr(Q^{\otimes 2} \Phi^{\prime}(\rho)^{\otimes 2} )].
\end{align}
Therefore, by taking the adjoint of the last layer of noise, as done in the proof of Proposition~\ref{prop:lbvar} (specifically, Eq.~\eqref{eq:upstep}, we have 
\begin{align}
    \Ex\left[\Tr(Q \Phi^{\prime}(\rho) )^2\right]&=\sum_{a\in \{0,1\}^{|Q|}}\prod_{j \in \supp(Q)} (t^{a_j}_{Q_j } D^{1-a_j}_{Q_j})^2\Ex\left[\Tr(\left(\bigotimes_{j \in \supp(Q)}  Q^{1-a_j}_j\right) \Phi^{\prime\prime}(\rho) )^2\right].
\end{align}
Now, using the fact that $\Tr(P\sigma)\le 1$ for every Pauli $P$ and state $\sigma$, we have
\begin{align}
       \sum_{a\in \{0,1\}^{|Q|}}\prod_{j \in \supp(Q)} (t^{a_j}_{Q_j } D^{1-a_j}_{Q_j})^2\Ex\left[\Tr(\left(\bigotimes_{j \in \supp(Q)}  Q^{1-a_j}_j\right) \Phi^{\prime\prime}(\rho) )^2\right] &\le  \sum_{a\in \{0,1\}^{|Q|}}\prod_{j \in \supp(Q)}t^{2a_j}_{Q_j } D^{2(1-a_j)}_{Q_j}
    \\&=\prod_{j \in \supp(Q)} (t^2_{Q_j }+ D^{2}_{Q_j}).
    \nonumber
\end{align}
Thus, by substituting, we find
\begin{align}
    \Ex[\Tr(P \Phi(\rho) )^2 ]&\le\frac{1}{3^{|P|}}\sum_{\substack{Q \in \{I,X,Y,Z\}^{\otimes n}:\\ \supp(Q)=\supp(P) } }\prod_{j \in \supp(Q)} (t^2_{Q_j }+ D^{2}_{Q_j}) \\
    \nonumber
    &=\frac{1}{3^{|P|}} (t^2_{ X }+ D^{2}_{ X}+t^2_{ Y }+ D^{2}_{ Y}+t^2_{Z }+ D^{2}_{ Z})^{|P|} \label{eq:ub-pauli-purity}\\
    \nonumber
    &=c^{|P|},
\end{align}
where we have used the multinomial theorem.
\end{proof}
As an immediate corollary of the previous inequality, we find the following corollary.

\begin{corollary}[Global expectation values are exponentially concentrated on average]
\label{global expectation}
Let $P\in \{I,X,Y,Z\}^{\otimes n}$ be a Pauli operator with weight $|P|=\Theta(n)$. Then, at any depth of the noisy circuit $\Phi$ defined in Eq.~\eqref{eq:randcirc}, and for any constant noise parameters, we have
\begin{align}
    \Var[\Tr(P \Phi(\rho) )]= \exp(-\Theta(n)).
\end{align}
\end{corollary}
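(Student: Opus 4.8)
The plan is to derive Corollary~\ref{global expectation} as a direct consequence of Proposition~\ref{prop:upvar} and Proposition~\ref{prop:lbvar} (equivalently, Theorem~\ref{th:variance}) specialized to a single Pauli $P$ with $|P|=\Theta(n)$. Since the Hamiltonian here is a single Pauli, in the notation of Proposition~\ref{prop:upvar} we have $a_P=1$ and all other coefficients zero, so the upper bound reads $\Var[\Tr(P\Phi(\rho))]\le c^{|P|}$, where $c=\frac{1}{3}(\|\bold t\|_2^2+\|\bold D\|_2^2)$ is the noise parameter from Eq.~\eqref{eq-noise_value}. By Lemma~\ref{le:contractionnormal} (the contraction-coefficient lemma), $c<1$ strictly whenever the noise channel is not unitary, which is our hypothesis; moreover since the noise parameters are assumed constant in $n$, $c$ is a constant strictly less than $1$. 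Hence $c^{|P|}=c^{\Theta(n)}=\exp(-\Theta(n))$, giving the claimed upper bound $\Var[\Tr(P\Phi(\rho))]\le\exp(-\Omega(n))$.

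For the matching lower bound — needed only if one reads the corollary's $\Theta$ literally rather than as $O$ — I would invoke Proposition~\ref{prop:lbvar}, again with $a_P=1$, which yields $\Var[\Tr(P\Phi(\rho))]\ge(\|\bold t\|_2^2/3)^{|P|}$. If the noise is non-unital with $\|\bold t\|_2=\Theta(1)$ (as in Theorem~\ref{th:variance}), then $\|\bold t\|_2^2/3$ is a positive constant, and one should also note $\|\bold t\|_2^2/3\le c\le 1$ so that this quantity is in $(0,1)$; raising it to the power $|P|=\Theta(n)$ gives $\exp(-\Omega(n))$ from below as well, so $\Var[\Tr(P\Phi(\rho))]=\exp(-\Theta(n))$. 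I would state the corollary's conclusion for the general non-unitary-noise case as the upper bound $\exp(-\Omega(n))$, and remark that under the additional non-unitality assumption the bound is tight, i.e., $\exp(-\Theta(n))$.

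Concretely the proof is three lines: (i) apply Proposition~\ref{prop:upvar} with the single-Pauli Hamiltonian $H=P$ to get $\Var[\Tr(P\Phi(\rho))]\le c^{|P|}$; (ii) use Lemma~\ref{le:contractionnormal} together with the constant-noise assumption to conclude $c$ is a fixed constant in $(0,1)$; (iii) substitute $|P|=\Theta(n)$ to get $c^{|P|}=\exp(-\Theta(n))$. For the lower direction, apply Proposition~\ref{prop:lbvar} with $H=P$ and use $\|\bold t\|_2=\Theta(1)$.

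There is essentially no obstacle here: all the analytic work has already been done in Propositions~\ref{prop:lbvar} and~\ref{prop:upvar} and in Lemma~\ref{le:contractionnormal}. The only point requiring a modicum of care is the bookkeeping around which noise hypothesis is in force — the upper bound needs only that the channel is non-unitary (so $c<1$), whereas a two-sided $\Theta(n)$ estimate additionally needs non-unitality ($\|\bold t\|_2>0$, constant) to make the Proposition~\ref{prop:lbvar} lower bound non-vacuous. I would make this distinction explicit in the one-paragraph proof so the reader is not misled into thinking the tight $\Theta$ holds for all non-unitary noise (it does not — for unital non-unitary noise the lower bound from Proposition~\ref{prop:lbvar} degenerates to zero, and indeed global expectation values can decay even faster).
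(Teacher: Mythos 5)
Your proof is correct and follows essentially the same route as the paper: the corollary is stated there as an immediate consequence of Proposition~\ref{prop:upvar} applied with $H=P$, using $c<1$ from Lemma~\ref{le:contractionnormal} and $|P|=\Theta(n)$, with the matching lower bound coming from Proposition~\ref{prop:lbvar} exactly as you describe (this is how Theorem~\ref{th:variance} is assembled). Your explicit remark that the two-sided $\Theta$ requires non-unitality ($\|\bold{t}\|_2=\Theta(1)$) while the upper bound needs only non-unitarity is a fair and worthwhile clarification of a point the paper's statement glosses over.
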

By combining the lower bound on the variance derived in the previous section (Proposition~\ref{prop:lbvar}) with the matching upper bound derived in this section (Proposition~\ref{prop:upvar}), we get a proof of Theorem~\ref{th:variance}.

\subsection{Effective shallow circuits}
\label{sub:effective}

In the previous Subsection~\ref{sub:local}, we have shown that local expectation values can have a large variance in the presence of non-unital noise. In this subsection, we identify even more compelling consequences of this feature. We show that such large variance can only be due to the last few layers of the circuit. Specifically, we prove that the layers preceding the last $\Theta(\log(n))$ do not significantly affect observable expectation values. Let us start with proving the following proposition valid for Pauli expectation values.


\begin{proposition}
\label{prop:expdecayP}
Let $P\in \{I,X,Y,Z\}^{\otimes n}$, let $\rho$ and $\sigma$ be quantum states, and let $L$ be depth of the noisy circuit $\Phi$ defined in Eq.~\eqref{eq:randcirc}. Then, we have
    \begin{align}
       \Ex[\Tr(P\Phi(\rho-\sigma))^2] \le 4 c^{|P|+L-1},
       \end{align}
       where the parameter $c$ is defined in Eq.~\eqref{eq-noise_value}.
\end{proposition}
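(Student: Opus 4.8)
The plan is to iterate the one-layer peeling argument from the proof sketch of Theorem~\ref{th:theor1}, but this time keeping track of the contraction factor $c$ at \emph{every} layer rather than only harvesting the $c^{|P|}$ factor from the last layer. Concretely, since we are computing a second moment and the local gates come from a $2$-design, we may replace all two-qubit gates by Clifford gates (by Lemma~\ref{le:Clifford3design}) and put each noise channel in its normal form (Lemma~\ref{le:normal}). Write $\Phi = \mathcal{V}^{\mathrm{single}}\circ\mathcal{N}^{\otimes n}\circ\mathcal{U}_L\circ\widetilde\Phi$ where $\widetilde\Phi$ has depth $L-1$. The computation in the proof sketch of Theorem~\ref{th:theor1} shows exactly that
\begin{align}
\Ex[\Tr(P\Phi(\rho-\sigma))^2] \le c^{|P|}\max_{Q\in\{I,X,Y,Z\}^{\otimes n}} \Ex[\Tr(Q\,\widetilde\Phi(\rho-\sigma))^2],
\label{eq:onestep_recursion}
\end{align}
and moreover one may assume the maximizing $Q$ is non-identity, since $\widetilde\Phi$ is trace-preserving and $\rho-\sigma$ is traceless, so the identity term contributes zero.

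\textbf{Iterating.} Applying \eqref{eq:onestep_recursion} once gives a factor $c^{|P|}\ge c$ (using $|P|\ge 1$ in the relevant case, and handling $|P|=0$ separately where the left side is zero). Now reapply the same one-layer estimate to $\widetilde\Phi$: the maximizing Pauli $Q$ has $|Q|\ge 1$, so peeling the next layer yields another factor $c^{|Q|}\ge c$, and so on. After stripping all $L$ layers (the last one supplied the $c^{|P|}$ factor, the remaining $L-1$ each supply at least a factor $c$), we are left with $\Ex[\Tr(Q'(\rho-\sigma))^2]$ for some Pauli $Q'$, i.e. the input-state expectation with no circuit at all. This is bounded by $\Tr(Q'(\rho-\sigma))^2 \le (\|Q'\|_\infty\|\rho-\sigma\|_1)^2 \le (1\cdot 2)^2 = 4$ by H\"older's inequality and $\|\rho-\sigma\|_1\le 2$. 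Collecting the factors gives $\Ex[\Tr(P\Phi(\rho-\sigma))^2]\le 4\,c^{|P|}\,c^{L-1} = 4c^{|P|+L-1}$, which is the claim. Since $c<1$ whenever the noise is not unitary (Lemma~\ref{le:contractionnormal}), this is a genuine exponential decay.

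\textbf{Main obstacle.} The only delicate point is bookkeeping the ``$\max$ over non-identity Paulis'' at each step: one must verify that after peeling a layer the argument that forces the maximizing Pauli to be non-identity still applies — this is fine because every truncated circuit $\widetilde\Phi$, $\widetilde{\widetilde\Phi}$, etc.\ remains trace-preserving and $\rho-\sigma$ remains traceless throughout, so the identity Pauli always yields $\Tr(I(\rho-\sigma))=0$. A secondary subtlety is the edge case $|P|=0$: then $\Tr(P\Phi(\rho-\sigma))=\Tr(\Phi(\rho-\sigma))=0$ trivially, so the bound holds vacuously. Everything else is a routine induction on $L$, so I would present it as: establish \eqref{eq:onestep_recursion}, note the non-identity reduction, then induct, closing with the H\"older bound on the base case. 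One should also double-check the constant: the factor $4$ comes precisely from $\|\rho-\sigma\|_1^2\le 4$, and no worse constant accumulates since each peeled layer contributes a multiplicative factor $\le 1$ with no additive overhead.
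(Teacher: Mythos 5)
Your proposal is correct and follows essentially the same route as the paper: establish the one-layer contraction $\Ex[\Tr(P\Phi(\rho-\sigma))^2]\le c^{|P|}\max_{Q\neq I_n}\Ex[\Tr(Q\widetilde\Phi(\rho-\sigma))^2]$ via the $2$-design/Clifford reduction and the normal form, note that tracelessness of $\rho-\sigma$ forces the maximizing Pauli to be non-identity at every stage, iterate to collect $c^{L-1}$ from the remaining layers, and close with H\"older's inequality giving the constant $4=\|\rho-\sigma\|_1^2$. The only addition beyond the paper's argument is your explicit treatment of the $|P|=0$ edge case, which is fine but not needed since the claim is vacuous there.
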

\begin{proof}
By removing the last layer of single qubits gates and using Lemma~\ref{le:mixham}, we have
\begin{align}
    \Ex[\Tr(P \Phi(\rho-\sigma)  )^2]=\frac{1}{3^{|P|}}\sum_{\substack{Q \in \{I,X,Y,Z\}^{\otimes n}:\\ \supp(Q)=\supp(P) } }\Ex[\Tr(Q \Phi^{\prime}(\rho-\sigma))^2].
\end{align}
Now using the exactly the same argument used in Eq.~\eqref{eq:upstep} in the proof of Proposition~\ref{prop:lbvar}, we obtain
\begin{align}
    \Ex\left[\Tr(Q \Phi^{\prime}(\rho-\sigma) )^2\right] & =\sum_{a\in \{0,1\}^{|Q|}}\prod_{j \in \supp(Q)} (t^{a_j}_{Q_j } D^{1-a_j}_{Q_j})^2\Ex\left[\Tr(\left(\bigotimes_{j \in \supp(Q)}  Q^{1-a_j}_j\right) \Phi^{\prime\prime}(\rho-\sigma) )^2\right].
\end{align}

Note that the expected value on the right-hand side can be bounded from above by 
$\max_{Q \in \{I,X,Y,Z\}^{\otimes n}}\Ex[\Tr(Q \Phi^{\prime\prime}(\rho-\sigma))^2]$.
Thus, we have
\begin{align}
     \Ex[\Tr(P \Phi(\rho-\sigma)  )^2]&\le \frac{1}{3^{|P|}}\sum_{\substack{Q \in \{I,X,Y,Z\}^{\otimes n}:\\ \supp(Q)=\supp(P) } }\sum_{a\in \{0,1\}^{|Q|}}\prod_{j \in \supp(Q)} (t^{a_j}_{Q_j } D^{1-a_j}_{Q_j})^2\max_{Q \in \{I,X,Y,Z\}^{\otimes n}}\Ex[\Tr(Q \Phi^{\prime\prime}(\rho-\sigma))^2]\\
     \nonumber
     &= \frac{1}{3^{|P|}}\sum_{\substack{Q \in \{I,X,Y,Z\}^{\otimes n}:\\ \supp(Q)=\supp(P) } }\prod_{j \in \supp(Q)} (t^2_{Q_j } + D^2_{Q_j}) \max_{Q \in \{I,X,Y,Z\}^{\otimes n}}\Ex[\Tr(Q \Phi^{\prime\prime}(\rho-\sigma))^2]\\
       \nonumber
     &= \frac{1}{3^{|P|}}(\|\bold{D}\|^2_2+\|\bold{t}\|^2_2)^{|P|} \max_{Q \in \{I,X,Y,Z\}^{\otimes n}}\Ex[\Tr(Q \Phi^{\prime\prime}(\rho-\sigma))^2]\\
       \nonumber
     &=c^{|P|} \max_{Q \in \{I,X,Y,Z\}^{\otimes n}}\Ex[\Tr(Q \Phi^{\prime\prime}(\rho-\sigma))^2],
       \nonumber
\end{align}
where we have used the multinomial theorem.
Moreover, we can assume that the maximum over the Pauli operators is not achieved by the identity, otherwise the right-hand side of the above inequality would be zero, because $\Phi^{\prime\prime}$ is trace preserving and $\rho-\sigma$ is traceless. Thus, we have
\begin{align}
     \Ex[\Tr(P \Phi(\rho-\sigma)  )^2]&\le c^{|P|} \max_{Q \in \{I,X,Y,Z\}^{\otimes n}\setminus I_n}\Ex[\Tr(Q \Phi^{\prime\prime}(\rho-\sigma))^2].
     \label{eq:effniceeq}
\end{align}
We can assume now that all the two-qubit gates in the circuit are Clifford (see Definition~\ref{def:Clifford}), as we are computing a second moment and the Cliffords form a $2$-design (Lemma~\ref{le:Clifford3design}). Thus, the two qubit gates of the circuit will also map Paulis to Paulis.
Moreover, we assume without loss of generality that before each layer of noise there is a layer of single-qubit $2$-design unitaries, as we are computing a second moment over $2$-design quantities and we can use the invariance of the Haar measure to do so.
Therefore, the Pauli $Q\neq I_n$ above will be mapped by the two-qubits Clifford to another Pauli still different from the identity. Since now we have a circuit that ends with a layer of single qubits $2$-design unitaries, which are preceded by a noise layer and a layer of two-qubits $2$-design gates, we are in the same situation we faced at the beginning of the proof. So reiterating the argument to the next layer, we have
\begin{align}
    \Ex[\Tr(Q \Phi^{\prime\prime}(\rho-\sigma))^2]&\le   \max_{R \in \{I,X,Y,Z\}^{\otimes n}\setminus I_n}c^{|R|} \Ex\left[\Tr(R \Phi_{[1,L-1]}^{\prime\prime}(\rho-\sigma))^2\right]\\
    \nonumber
    &\le   c \max_{R \in \{I,X,Y,Z\}^{\otimes n}\setminus I_n}\Ex\left[\Tr(R \Phi_{[1,L-1]}^{\prime\prime}(\rho-\sigma))^2\right],
    \nonumber
\end{align}
where we have used the notation $\Phi_{[1,k]}^{\prime\prime}  \coloneqq  \mathcal{U}_{k} \circ \cdots  \circ \mathcal{N}^{\otimes n} \circ \mathcal{U}_{1}$ and used the fact that the Pauli weight of $R$ is at least one. 

Recursively applying the above reasoning to all of the remaining layers of the circuit, and using the fact that for any Pauli operator $P$ we have $|\Tr(P(\rho-\sigma))|\le \norm{\rho-\sigma}_{1}$ (because of the H\"older inequality), we obtain 
\begin{align}
    \Ex[\Tr(Q \Phi^{\prime\prime}(\rho-\sigma))^2]&\le c^{L-1} \norm{\rho-\sigma}^2_{1}\le 4 c^{L-1},
\end{align}
where in the last step we have used using triangle inequality and the fact that quantum states have one-norm equal to one.
Substituting back in Eq.~\eqref{eq:effniceeq}, we conclude the proof.
\end{proof} 

Proposition~\ref{prop:expdecayP} implies that if $\rho$ and $\sigma$ are states created by the `first' part of the same circuit architecture with different parameters of the gates (e.g., the red part of the circuit in Fig.~\ref{Fig:circ2}), then if we implement on them a noisy quantum circuit of depth $L=\omega(\log(n))$, on average we have that the influence on the expectation value of the different gates in the first part of the circuit will be super-polynomially small.

We now show that the previous bounds for Pauli expectation values imply a bound for any observable \( O \).

\begin{proposition}
\label{prop:expdecayO}
Let \( O \) be an observable, \( \rho \) and \( \sigma \) be quantum states, and let \( L \) be the depth of the noisy circuit \( \Phi \). Then, we have
\begin{align}
    \Ex[|\Tr(O \Phi(\rho)) - \Tr(O \Phi(\sigma))|^2] \le 4 \left(\frac{\|O\|^2_2}{2^n} \right) c^L \le 4 \|O\|^2_\infty c^L,
\end{align}
where the parameter \( c \) is defined as in Eq.~\eqref{eq-noise_value}.
\end{proposition}

\begin{proof}
Let \( O = \sum_{P \in \{I, X, Y, Z\}^{\otimes n}} c_P P \) be the Pauli decomposition of \( O \). Then, we find
\begin{align}
    \Ex[\Tr(O \Phi(\rho - \sigma))^2] &= \Ex \left[ \left( \sum_{P \in \{I, X, Y, Z\}^{\otimes n}} c_P \Tr(P \Phi(\rho - \sigma)) \right)^2 \right] \\
    &= \sum_{P \in \{I, X, Y, Z\}^{\otimes n}} |c_P|^2 \Ex[\Tr(P \Phi(\rho - \sigma))^2],\nonumber
\end{align}
where in the second step we have used the fact that the cross-terms vanish due to Lemma~\ref{le:mixham}.

Using the bound from Proposition~\ref{prop:expdecayP}, we have
\begin{align}
    \Ex[\Tr(O \Phi(\rho - \sigma))^2] &\le \sum_{P \in \{I, X, Y, Z\}^{\otimes n}} |c_P|^2 \cdot 4 c^L \\
    &= \frac{\|O\|^2_2}{2^n} 4 c^L,
    \nonumber
\end{align}
where we have used the fact that \(\sum_{P \in \{I, X, Y, Z\}^{\otimes n}} |c_P|^2 = \frac{\|O\|^2_2}{2^n}\).
Finally, using the inequality \(\|O\|_2 \le \sqrt{2^n} \|O\|_\infty\), we get
\begin{align}
    \Ex[\Tr(O \Phi(\rho - \sigma))^2] \le 4 \|O\|^2_\infty c^L.
\end{align}
\end{proof}

\begin{corollary}
Let \( O \) be an observable, and let \( \rho \) and \( \sigma \) be quantum states. Consider a noisy circuit \( \Phi \) with depth \( L \). Then, the  bound 
\begin{align}
    \mathbb{E}_{\Phi}\left[\left\lvert \Tr(O \Phi(\rho)) - \Tr(O \Phi(\sigma)) \right\rvert\right] \leq 2 \left(\frac{\|O\|_2}{\sqrt{2^n}}\right) c^{L/2} \leq 2 \|O\|_\infty c^{L/2}
\end{align}
holds, where the parameter \( c \) is defined as in Eq.~\eqref{eq-noise_value}.
\end{corollary}
This corollary immediately proves Theorem~\ref{th:lastlogMAIN} in the main text through a straightforward relabeling of terms.

\textcolor{black}{Notably, from the proof of the previous proposition, it is evident that the assumption that the circuit must terminate with a layer of single-qubit random gates is not essential. The circuit could instead conclude with a layer of noise, as this noise can be effectively absorbed into the observable within the Heisenberg picture.} 

\subsection{Indistinguishability of quantum states affected by noisy quantum circuits}
\label{sub:tracedistance}

We now translate the results in the previous section in terms of the trace distance.

\subsubsection{Indistinguishability 
in terms of the trace distance}

\begin{proposition}[Average distance between 
two quantum states]
Let $\Phi$ be a noisy random quantum circuit with of depth $L$, as defined in Eq.~\eqref{eq:randcirc}. Then, the average trace distance between $\Phi(\rho)$ and $\Phi(\sigma)$, where $\rho$ and $\sigma$ are two arbitrary quantum states, decays exponentially in $L$ as
 \begin{align}
      \Ex[\norm{\Phi(\rho) - \Phi(\sigma)}_1] \le  2^{n+1} c^{\frac{L-1}{2}},
 \end{align}
 where we recall the definition of the parameter $c$ in Eq.~\eqref{eq-noise_value}. Thus, for any $\varepsilon>0$, assuming that $L\ge\frac{1}{\log(c^{-1})}\Omega\!\left(n+\log(\frac{1}{\varepsilon})\right)$, we have that $\Ex[\norm{\Phi(\rho) - \Phi(\sigma)}_1]\le \varepsilon$.
\end{proposition}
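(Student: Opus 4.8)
The plan is to convert the second-moment bound of Proposition~\ref{prop:expdecayP} into an $L^1$-distance bound by the standard route: trace norm $\to$ Hilbert--Schmidt norm (paying a $\sqrt{2^n}$ factor in $n$ qubits) $\to$ expansion in the Pauli basis $\to$ term-by-term control using the already-proven Pauli decay. First I would write, using $\norm{A}_1 \le \sqrt{2^n}\,\norm{A}_2$ for an $n$-qubit operator $A$ and Jensen's inequality,
\begin{align}
\Ex[\norm{\Phi(\rho)-\Phi(\sigma)}_1]
&\le \sqrt{2^n}\,\Ex[\norm{\Phi(\rho)-\Phi(\sigma)}_2]
\le \sqrt{2^n}\,\sqrt{\Ex[\norm{\Phi(\rho)-\Phi(\sigma)}_2^2]}.
\nonumber
\end{align}
Then I would expand the Hilbert--Schmidt norm squared in the Pauli basis: for any Hermitian traceless operator $M$ (here $M=\Phi(\rho-\sigma)$), $\norm{M}_2^2 = 2^{-n}\sum_{P\in\{I,X,Y,Z\}^{\otimes n}}\Tr(PM)^2$, and since $M$ is traceless the $P=I$ term vanishes.

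Next I would interchange expectation and the (finite) Pauli sum and apply Proposition~\ref{prop:expdecayP}, which gives $\Ex[\Tr(P\Phi(\rho-\sigma))^2]\le 4c^{|P|+L-1}$ for each $P$. Summing over all $4^n$ Paulis while grouping by weight $|P|=w$ (there are $\binom{n}{w}3^w$ Paulis of weight $w$), one gets
\begin{align}
\Ex[\norm{\Phi(\rho)-\Phi(\sigma)}_2^2]
&\le \frac{1}{2^n}\sum_{w=0}^{n}\binom{n}{w}3^w \cdot 4 c^{w+L-1}
= \frac{4c^{L-1}}{2^n}(1+3c)^n.
\nonumber
\end{align}
Since $c\le 1$ by Lemma~\ref{le:contractionnormal}, we have $1+3c\le 4$, so $(1+3c)^n/2^n \le 2^n$, giving $\Ex[\norm{\Phi(\rho)-\Phi(\sigma)}_2^2]\le 4\cdot 2^n c^{L-1}$. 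Plugging back, $\Ex[\norm{\Phi(\rho)-\Phi(\sigma)}_1]\le \sqrt{2^n}\cdot\sqrt{4\cdot 2^n c^{L-1}} = 2^{n+1}c^{(L-1)/2}$, which is exactly the claimed bound. The final sentence of the proposition then follows by elementary algebra: $2^{n+1}c^{(L-1)/2}\le\varepsilon$ iff $\frac{L-1}{2}\log(c^{-1})\ge (n+1)\log 2 + \log(1/\varepsilon)$, i.e. $L\ge 1+\frac{2}{\log(c^{-1})}\big((n+1)\log 2+\log(1/\varepsilon)\big) = \frac{1}{\log(c^{-1})}\Omega(n+\log(1/\varepsilon))$.

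I expect no serious obstacle here; the only mild subtlety is bookkeeping the normalization constants in the Pauli expansion of $\norm{\cdot}_2^2$ (the $2^{-n}$ factor) and making sure the combinatorial sum $\sum_w\binom{n}{w}3^w c^{w}$ telescopes cleanly to $(1+3c)^n$ via the binomial theorem, together with the crude but sufficient bound $1+3c\le 4$. One could alternatively keep $(1+3c)/2<2$ whenever $c<1$ strictly, which would give a slightly better base, but the stated bound only needs $c\le 1$. A cosmetic point worth a remark in the text: the bound is only nontrivial once $L$ exceeds roughly $2n\log 2/\log(c^{-1})$, consistent with the linear-depth threshold advertised in Theorem~\ref{th:effective}.
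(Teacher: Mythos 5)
Your proposal is correct and follows essentially the same route as the paper: trace norm to Hilbert--Schmidt norm (paying the $\sqrt{2^n}$ factor), Pauli expansion, term-by-term application of Proposition~\ref{prop:expdecayP}, the binomial theorem yielding $(1+3c)^n$, and the crude bound $1+3c\le 4$. The only cosmetic difference is the order in which you apply Jensen's inequality and the norm comparison (the paper bounds $(\Ex\norm{\cdot}_1)^2\le\Ex\norm{\cdot}_1^2\le 2^n\Ex\norm{\cdot}_2^2$ directly), which is immaterial.
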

\begin{proof}
We have
\begin{align}
       (\Ex[\norm{\Phi(\rho) - \Phi(\sigma)}_1])^2&\le \Ex[\norm{\Phi(\rho) - \Phi(\sigma)}^2_1]\\
       \nonumber
       &\le 2^n\, \Ex[\norm{\Phi(\rho) - \Phi(\sigma)}^2_2] \\
       \nonumber
       &=\sum_{P\in \{I,X,Y,Z\}^{\otimes n}}  \Ex[\Tr(P(\Phi(\rho)-\Phi(\sigma)))^2]\\
       \nonumber
       &\le 4 \sum_{P\in \{I,X,Y,Z\}^{\otimes n}}   c^{|P|+L-1}\\
       \nonumber
       &=4(1+3c)^{n}c^{L-1} .
       \nonumber
\end{align}
In the first step, we have used Jensen's inequality, in the second step we have used the fact that $\norm{\cdot}_1\leq 2^{\frac{n}{2}}\norm{\cdot}_2$, and in the third step we expressed $\Phi(\rho)-\Phi(\sigma)$ in the Pauli basis and used the fact that $\norm{A}^2_2=\Tr(A^{\dagger} A)$ for any matrix $A$. This, in particular, implies that 
\begin{equation}\norm{A}_2^2=\frac{1}{2^n}\sum_{P\in\{I,X,Y,Z\}^{\otimes n}}\Tr(PA)^2 
\end{equation}
for every Hermitian matrix $A$. Then, in the fourth step, we have used Proposition~\ref{prop:expdecayP}, and in the final step the binomial theorem. Finally, from the fact that $1+3c\leq 4$, on account of the fact that $c\leq 1$, we obtain
\begin{align}
    \Ex[\norm{\Phi(\rho) - \Phi(\sigma)}_1]\le 2(1+3c)^{\frac{n}{2}} c^{\frac{L-1}{2}}\le  2^{n+1} c^{\frac{L-1}{2}} .
\end{align}
The right-most expression in the above chain of inequalities is bounded from above by $\varepsilon$ if
\begin{align}
    L\ge \frac{1}{\log(c^{-1})}\left(2n+2\log(\frac{1}{\varepsilon})+3\right),
\end{align}
which implies the desired result.
\end{proof}

The previous proposition implies that for most of noisy circuit of depth larger than $L\ge \Omega(n)$, the trace distance between the two output states is bounded from above by $\exp(-\Theta(n))$, which means that the two output states cannot be distinguished between each other efficiently by performing arbitrary measurements on polynomially many copies of the state, because of the Holevo-Helstrom theorem~\cite{wilde_2013}.
\textcolor{black}{It seems plausible that the previous assumption on the depth \( L \geq \Omega(n) \) is an artifact of our proof technique, and we conjecture that this assumption might be relaxed to smaller depths with a more fine-grained analysis; we leave this for future work.}

One may wonder if it is possible to prove an upper bound on the worst-case trace distance (i.e., with no expected value) that decreases exponentially with the number of layers. \textcolor{black}{Although this is possible by making strong structural assumption on the noise, namely being unital and with the maximally mixed state being their unique fixed point (e.g., depolarizing noise)~\cite{aharonov1996limitations,DanielPaper}; this is not possible for arbitrary noise regime in general.
In fact, as an easy example, if the noise is dephasing, the circuit is made only by Toffoli gates, and the input states $\rho$ and $\sigma$ are two different computational basis states, then the trace distance between the two output states must remain constant for any depth (since Toffoli maps computational basis states in computational basis states and dephasing noise acts trivially on computational basis states).}
More interestingly, the so-called \emph{quantum refrigerator} construction~\cite{refrigerator} shows surprisingly how non-unital noise can be exploited to perform fault-tolerant quantum computations in a model similar to ours, up to exponential depth. Therefore, for these special classes of circuits, the trace distance remains of constant order. 
However, we show below that in a certain high noise regime, we can find a worst-case upper bound on the trace distance that decays exponentially in the number of qubits.

\subsubsection{Worst-case upper bound on the trace distance}
In this section we give a worst-case bound for the trace distance that holds whenever the noise parameters exceed certain thresholds.
We first introduce some technical tools before proving our worst-case trace distance upper bound. Our argument is based on the contraction coefficients of the quantum Wasserstein distance of order 1 ($W_1$ distance)~\cite{de2021quantum}.

Let $O^T_n \subset \mathcal{L}(\mathbb{C}^{2^n})$ be the subset of traceless self-adjoint linear operators. The $W_1$ distance is induced by the quantum $W_1$ norm, which is defined as follows~\cite{de2021quantum}:
\begin{equation}
    \|X\|_{W_1} = \frac{1}{2}\min\left\{\sum_{i=1}^n \|X^{(i)}\|_1 : X^{(i)} \in O^T_n , \Tr_i X^{(i)} = 0, X= \sum_{i=1}^n X^{(i)}\right\}.
\end{equation}
Hence, for two arbitrary states $\rho,\sigma$, the $W_1$ distance is defined as
\begin{equation}
   W_1(\rho,\sigma) \coloneqq  \|\rho-\sigma\|_{W_1}.
\end{equation}
The quantum $W_1$ norm and the trace norm are always within a factor of $n$, since
\begin{equation}
   \frac{1}{2} \|X\|_{1} \leq \|X\|_{W_1} \leq \frac{n}{2}\|X\|_{1}.
\end{equation}
We will employ the \emph{contraction coefficient} of a channel $\Phi$ with respect to the quantum $W_1$ distance, defined as
\begin{equation}
    \|\Phi\|_{W_1\rightarrow W_1} \coloneqq \max_{\rho \neq\sigma \in \mathcal{S}\left(\mathbb{C}^{2^n}\right)} \frac{\|\Phi(\rho) - \Phi(\sigma)\|_{W_1}}{\|\rho-\sigma\|_{W_1}} = \max_{\substack{X\in O^T_n,\\\|X\|_{W_1}=1}} \|\Phi(X)\|_{W_1}.
\end{equation}
The contraction coefficient is not in general bounded by 1, as the $W_1$ does not satisfy a data-processing inequality for all channels. Importantly, as showed in Ref.~\cite{de2021quantum}, if $\Phi$ is a layer of $k$-qubit gates, the contraction coefficient of $\Phi$ can be bounded by light-cone argument as follows
\begin{equation}
\label{eq:contraction}
    \|\Phi\|_{W_1 \rightarrow W_1} \leq \begin{cases}
         1 &\text{ if  $k = 1$}, \\
         \frac{3}{2}k &\text{ if $k>1$ } \text{ (\cite{de2021quantum}, Proposition 13)}.
    \end{cases}
\end{equation}
And thus a layer of two qubit gates has contraction coefficient at most $3$. 
If $\mathcal{N}$ is a single-qubit channel, the contraction coefficient of the tensor power channel $\mathcal{N}^{\otimes n}$ can be upper bounded by the diamond distance between $\mathcal{N}$ and a suitable 1-qubit channel $\mathcal{E}$~\cite{de2021quantum}, as follows. 
\begin{proposition}[Proposition 11,~\cite{de2021quantum}]
\label{prop:depalma}
Let $\Phi$ be a single qubit quantum channel with fixed point a quantum state $\tau$ and let $\mathcal{E}$ the single-qubit quantum channel that replaces any state with $\tau$. Then,
\begin{equation}
   \frac{1}{2}\|\Phi-\mathcal{E}\|_{1 \rightarrow 1} \leq \|\Phi^{\otimes n}\|_{W_1 \rightarrow W_1} \leq \|\Phi - \mathcal{E}\|_\diamond \leq 2 \|\Phi-\mathcal{E}\|_{1 \rightarrow 1},
\end{equation}
where we recall that for any single-qubit Hermitian-preserving linear map $\mathcal{F}$,
\begin{align}
    &\|\mathcal{F}\|_{1 \rightarrow 1} = \max_{\rho\in\mathcal{S}(\mathbb{C}^2)}\|\mathcal{F}(\rho)\|_1,\\
    &\|\mathcal{F}\|_\diamond = \max_{\rho\in\mathcal{S}(\mathbb{C}^2\otimes\mathbb{C}^2)}\|\mathcal{F}\otimes \mathcal{I}(\rho)\|_1.
\end{align}
\end{proposition}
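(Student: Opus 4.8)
The plan is to prove Proposition~\ref{prop:depalma} directly from the single-site decomposition that defines the quantum $W_1$ norm, exploiting that the replacer channel $\mathcal{E}$ kills any operator whose partial trace on the replaced site vanishes. First I would record two elementary facts. (i) If $Y\in O^T_n$ is traceless on a single site $j$, i.e.\ $\Tr_j Y=0$, then the one-term decomposition $Y=Y$ (with $X^{(j)}=Y$ and the rest zero) is admissible in the definition of $\|\cdot\|_{W_1}$, so $\|Y\|_{W_1}\le\tfrac12\|Y\|_1$; together with the universal bound $\tfrac12\|Y\|_1\le\|Y\|_{W_1}$ already recorded in the excerpt, this gives $\|Y\|_{W_1}=\tfrac12\|Y\|_1$ for such $Y$. (ii) Writing $\mathcal{E}_i=\tau_i\otimes\Tr_i(\cdot)$ and $\Phi_i$ for $\Phi$ acting on site $i$, one has $\mathcal{E}_i(Z)=0$ whenever $\Tr_i Z=0$, and since $\Tr_i$ commutes with any channel acting on the complement of $i$, $\mathcal{E}_i\circ\Phi^{\otimes[n]\setminus\{i\}}(X^{(i)})=0$ as soon as $\Tr_i X^{(i)}=0$; moreover $\Phi_i-\mathcal{E}_i$ is trace-annihilating on site $i$, so $(\Phi_i-\mathcal{E}_i)(\cdot)$ always outputs an operator traceless on site $i$.

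For the upper bound $\|\Phi^{\otimes n}\|_{W_1\to W_1}\le\|\Phi-\mathcal{E}\|_\diamond$, I would take $X\in O^T_n$ with $\|X\|_{W_1}=1$ and an optimal decomposition $X=\sum_{i=1}^n X^{(i)}$, $\Tr_i X^{(i)}=0$, $\tfrac12\sum_i\|X^{(i)}\|_1=1$. By fact (ii), $\Phi^{\otimes n}(X^{(i)})=(\Phi_i-\mathcal{E}_i)\circ\Phi^{\otimes[n]\setminus\{i\}}(X^{(i)})$, which is traceless on site $i$; applying fact (i), then bounding $\Phi_i-\mathcal{E}_i$ on an arbitrary ancilla by the diamond norm and using that channels are trace-norm contractions, $\|\Phi^{\otimes n}(X^{(i)})\|_{W_1}=\tfrac12\|\Phi^{\otimes n}(X^{(i)})\|_1\le\tfrac12\|\Phi-\mathcal{E}\|_\diamond\|X^{(i)}\|_1$. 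Summing over $i$ with the triangle inequality for $\|\cdot\|_{W_1}$ gives $\|\Phi^{\otimes n}(X)\|_{W_1}\le\|\Phi-\mathcal{E}\|_\diamond\cdot\tfrac12\sum_i\|X^{(i)}\|_1=\|\Phi-\mathcal{E}\|_\diamond$, and maximizing over $X$ finishes this inequality.

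For the lower bound $\tfrac12\|\Phi-\mathcal{E}\|_{1\to1}\le\|\Phi^{\otimes n}\|_{W_1\to W_1}$, I would pick a single-qubit state $\rho_0$ attaining $\|(\Phi-\mathcal{E})(\rho_0)\|_1=\|\Phi-\mathcal{E}\|_{1\to1}$ and test the contraction coefficient on $X\coloneqq(\rho_0-\tau)\otimes\tau^{\otimes(n-1)}\in O^T_n$. Since $X$ is traceless on site $1$, fact (i) gives $\|X\|_{W_1}=\tfrac12\|\rho_0-\tau\|_1\le1$. Using $\Phi(\tau)=\tau$ and $\mathcal{E}(\rho_0)=\tau$ one computes $\Phi^{\otimes n}(X)=(\Phi-\mathcal{E})(\rho_0)\otimes\tau^{\otimes(n-1)}$, again traceless on site $1$, so $\|\Phi^{\otimes n}(X)\|_{W_1}=\tfrac12\|(\Phi-\mathcal{E})(\rho_0)\|_1=\tfrac12\|\Phi-\mathcal{E}\|_{1\to1}$; dividing by $\|X\|_{W_1}\le1$ gives the claim.

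The remaining inequality $\|\Phi-\mathcal{E}\|_\diamond\le2\|\Phi-\mathcal{E}\|_{1\to1}$ is the one place where I would invoke a standard external fact: for any Hermiticity-preserving map on $\mathcal{L}(\mathbb{C}^d)$ the diamond norm exceeds the $1\to1$ norm by at most a factor $d$ (the diamond norm is attained with a $d$-dimensional ancilla, and tensoring with the identity on that ancilla costs at most $d$), applied with $\mathcal{F}=\Phi-\mathcal{E}$ and $d=2$. I do not expect any single hard estimate; the real work — and the main pitfall — is the $W_1$-norm bookkeeping, namely checking at each step that the operator in hand is traceless on the distinguished site so that fact (i) legitimately converts its $W_1$ norm into half its trace norm. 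Once that discipline is maintained, the argument reduces to the triangle inequality together with elementary contractivity properties of channels and the diamond norm.
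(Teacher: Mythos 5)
Your argument is correct, and it is essentially the same proof as in the source: note that the paper does not prove this statement itself but imports it verbatim from De Palma \emph{et al.}~\cite{de2021quantum}, and your two key moves --- the identity $\|Y\|_{W_1}=\tfrac12\|Y\|_1$ for operators traceless on a single site, the rewriting $\Phi^{\otimes n}(X^{(i)})=(\Phi_i-\mathcal{E}_i)\circ\Phi^{\otimes[n]\setminus\{i\}}(X^{(i)})$ for the upper bound, and the product test operator $(\rho_0-\tau)\otimes\tau^{\otimes(n-1)}$ for the lower bound --- are exactly the ones used there. Two small points deserve a sentence each rather than a parenthesis. First, in the upper bound you apply $\|(\Phi_i-\mathcal{E}_i)(W)\|_1\le\|\Phi-\mathcal{E}\|_\diamond\|W\|_1$ to a non-normalized Hermitian operator $W$ on an $(n-1)$-qubit ancilla, whereas the paper defines $\|\cdot\|_\diamond$ as a maximum over two-qubit \emph{states}; the reduction is standard for Hermiticity-preserving maps (convexity over the Hermitian trace-norm ball, whose extreme points are $\pm$ pure states, plus the Schmidt-rank argument that a qubit input needs only a qubit ancilla), but it is an invocation of Watrous-type stabilization, not a triviality. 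Second, your justification of $\|\Phi-\mathcal{E}\|_\diamond\le 2\|\Phi-\mathcal{E}\|_{1\to1}$ --- ``tensoring with the identity on the ancilla costs at most $d$'' --- does not by itself prove the bound with the state-restricted $1\to1$ norm used here: the naive block-column expansion loses extra constants, and one must separately argue that for Hermiticity-preserving maps the induced trace norm over all operators in the unit ball coincides with the maximum over states. The dimension-factor bound $\|\mathcal{F}\|_\diamond\le d\,\|\mathcal{F}\|_{1\to1}$ is a genuine (Smith/Paulsen-type) theorem that the original reference also cites rather than reproves, so treating it as external is legitimate; just state it as such rather than suggesting it follows from an easy ancilla expansion.
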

By exploiting the above result, we give an explicit upper bound of the contraction coefficient in terms of the parameters of the noise channel $\mathcal{N}$ expressed in the normal form. 
We remark that the adoption of the normal form comes without loss of generality: as discussed in Section\ \ref{normalchannel}, we can always write a single-qubit channel $\mathcal{M}$ as $\mathcal{M}(\cdot) = U\mathcal{N}(V^\dag(\cdot)V)U^\dag$, where $U,V$ are suitable single-qubit unitaries. By Eq.~\eqref{eq:contraction}, $U$ and $V$ do not alter the $W_1$ norm, thus they can be neglected in our analysis.

\begin{lemma}
\label{lem:contraction-coeff}
Let $\mathcal{N}$ be a single-qubit channel in normal form,
\begin{align}
\mathcal{N}\!\left(\frac{I+\boldsymbol{w}\cdot\boldsymbol{\sigma}}{2}\right)
=
\frac{I+(\boldsymbol{t}+D\boldsymbol{w})\cdot\boldsymbol{\sigma}}{2},
\qquad
D=\operatorname{diag}(D_X,D_Y,D_Z),
\end{align}
and assume that \(D_P\neq 1\) for every \(P\in\{X,Y,Z\}\). Then \(\mathcal N\) has the unique fixed point
\begin{align}
\tau=\frac I2+\sum_{P\in\{X,Y,Z\}}\frac{t_P}{2(1-D_P)}P .
\end{align}
Let \(\mathcal E_\tau\) be the replacer channel \(\mathcal E_\tau(X):=\operatorname{Tr}(X)\tau\), and define
\begin{align}
M_{\mathcal N}:=\max_{P\in\{X,Y,Z\}}\left|\frac{D_P}{1-D_P}\right|.
\end{align}
Then
\begin{align}
\|\mathcal N^{\otimes n}\|_{W_1\to W_1}
&\le
\|\mathcal N-\mathcal E_\tau\|_\diamond
\nonumber\\
&\le
2M_{\mathcal N}\|\mathcal N-\mathcal I\|_{1\to 1}
\le
4M_{\mathcal N}.
\label{eq:w1-contraction-coeff-normal-form}
\end{align}
Here \(\|\cdot\|_{1\to 1}\) denotes the state-restricted single-qubit norm used in Proposition~\ref{prop:depalma}, namely
\(\|\mathcal F\|_{1\to 1}:=\max_{\rho\in\mathcal S(\mathbb C^2)}\|\mathcal F(\rho)\|_1\).
\end{lemma}

\begin{proof}
A fixed point \(\tau=(I+\boldsymbol{\omega}\cdot\boldsymbol{\sigma})/2\) must satisfy
\(\boldsymbol{t}+D\boldsymbol{\omega}=\boldsymbol{\omega}\), i.e.,
\(t_P+D_P\omega_P=\omega_P\) for every \(P\in\{X,Y,Z\}\). Since \(D_P\neq 1\) for every \(P\), this affine fixed-point equation has the unique solution
\begin{align}
\omega_P=\frac{t_P}{1-D_P},
\end{align}
which gives
\begin{align}
\tau=\frac I2+\sum_{P\in\{X,Y,Z\}}\frac{t_P}{2(1-D_P)}P .
\end{align}
Existence of a fixed point follows because \(\mathcal N\) maps the compact convex Bloch ball into itself. Hence the above solution is a valid quantum state and is the unique fixed point of \(\mathcal N\).

By Proposition~\ref{prop:depalma}, applied to the fixed point \(\tau\), we have
\begin{align}
\|\mathcal N^{\otimes n}\|_{W_1\to W_1}
\le
\|\mathcal N-\mathcal E_\tau\|_\diamond .
\label{eq:depalma-application}
\end{align}
It remains to bound \(\|\mathcal N-\mathcal E_\tau\|_\diamond\) in terms of the normal-form coefficients. Let
\(\rho=I/2+\frac12\sum_{P\in\{X,Y,Z\}}w_PP\), with \(\|\boldsymbol w\|_2\le 1\). For every traceless Hermitian single-qubit operator \(A=(\boldsymbol a\cdot\boldsymbol\sigma)/2\), one has \(\|A\|_1=\|\boldsymbol a\|_2\). Therefore,
\begin{align}
\|\mathcal N(\rho)-\tau\|_1
&=
\left[
\sum_{P\in\{X,Y,Z\}}
\left(t_P+D_Pw_P-\frac{t_P}{1-D_P}\right)^2
\right]^{1/2}
\nonumber\\
&=
\left[
\sum_{P\in\{X,Y,Z\}}
\left(\frac{D_P}{1-D_P}\right)^2
\left(w_P-(t_P+D_Pw_P)\right)^2
\right]^{1/2}
\nonumber\\
&\le
M_{\mathcal N}
\left[
\sum_{P\in\{X,Y,Z\}}
\left(w_P-(t_P+D_Pw_P)\right)^2
\right]^{1/2}
\nonumber\\
&=
M_{\mathcal N}\|\rho-\mathcal N(\rho)\|_1 .
\label{eq:normal-form-to-id-distance}
\end{align}
Taking the maximum over single-qubit states gives
\begin{align}
\|\mathcal N-\mathcal E_\tau\|_{1\to 1}
\le
M_{\mathcal N}\|\mathcal N-\mathcal I\|_{1\to 1}.
\label{eq:one-to-one-bound-replacer}
\end{align}
Using again Proposition~\ref{prop:depalma},
\begin{align}
\|\mathcal N-\mathcal E_\tau\|_\diamond
&\le
2\|\mathcal N-\mathcal E_\tau\|_{1\to 1}
\nonumber\\
&\le
2M_{\mathcal N}\|\mathcal N-\mathcal I\|_{1\to 1}.
\label{eq:diamond-bound-replacer}
\end{align}
Finally, \(\|\mathcal N(\rho)-\rho\|_1\le 2\) for every state \(\rho\), and hence
\begin{align}
\|\mathcal N-\mathcal I\|_{1\to 1}\le 2 .
\end{align}
Combining Eqs.~\eqref{eq:depalma-application}--\eqref{eq:diamond-bound-replacer} proves the claim.
\end{proof}

The above bound applies to any local noise channel expressed in normal form and satisfying \(D_P\neq 1\) for all \(P\in\{X,Y,Z\}\). It should not be applied to channels with a conserved Bloch direction, such as pure dephasing, for which the fixed point is not unique. This worst-case contraction statement complements the average-case effective-depth result of the previous section. The following proposition extends to the non-unital case a result, Proposition IV.8 of Ref.~\cite{hirche2023quantum}, proven there for local depolarizing noise.

\begin{proposition}
\label{prop:worst-case-w1-contraction}
Let \(\mathcal N\) be a single-qubit channel in normal form,
\begin{align}
\mathcal{N}\!\left(\frac{I+\boldsymbol{w}\cdot\boldsymbol{\sigma}}{2}\right)
=
\frac{I+(\boldsymbol{t}+D\boldsymbol{w})\cdot\boldsymbol{\sigma}}{2},
\end{align}
and assume that \(D_P\neq 1\) for every \(P\in\{X,Y,Z\}\). Let \(\tau\) be the unique fixed point of \(\mathcal N\), let \(\mathcal E_\tau(X):=\operatorname{Tr}(X)\tau\), and define
\begin{align}
b_{\mathcal N}:=3\|\mathcal N-\mathcal E_\tau\|_\diamond .
\end{align}
Let \(\Phi=\mathcal L_L\circ\cdots\circ\mathcal L_1\) be a noisy quantum circuit of depth \(L\), where each noisy layer \(\mathcal L_j\) consists of a two-qubit unitary layer, a local noise layer \(\mathcal N^{\otimes n}\), and possibly a single-qubit unitary layer. Then, for any two quantum states \(\rho,\sigma\),
\begin{align}
\|\Phi(\rho)-\Phi(\sigma)\|_{W_1}
&\le
b_{\mathcal N}^L\|\rho-\sigma\|_{W_1},
\label{eq:contraction-w1}
\\
\|\Phi(\rho)-\Phi(\sigma)\|_1
&\le
n\,b_{\mathcal N}^L\|\rho-\sigma\|_1 .
\label{eq:contraction-tr}
\end{align}
Thus, if \(b_{\mathcal N}<1\), then
\begin{align}
L\ge \frac{\log(2n/\varepsilon)}{\log(b_{\mathcal N}^{-1})}
\qquad\Longrightarrow\qquad
\|\Phi(\rho)-\Phi(\sigma)\|_1\le \varepsilon
\end{align}
for all quantum states \(\rho,\sigma\).

Moreover, a sharper sufficient condition for \(b_{\mathcal N}<1\) is
\begin{align}
6M_{\mathcal N}\|\mathcal N-\mathcal I\|_{1\to 1}<1,
\qquad
M_{\mathcal N}:=\max_{P\in\{X,Y,Z\}}\left|\frac{D_P}{1-D_P}\right|.
\end{align}
Since \(\|\mathcal N-\mathcal I\|_{1\to 1}\le 2\), the purely coefficient-only condition
\begin{align}
12M_{\mathcal N}<1
\end{align}
also suffices. If \(0\le D_P<1\) for all \(P\in\{X,Y,Z\}\), this simpler condition is equivalent to
\begin{align}
D_P<\frac1{13}
\qquad\text{for all }P\in\{X,Y,Z\}.
\end{align}
\end{proposition}

\begin{proof}
Let \(\mathcal U=U(\cdot)U^\dagger\) be a layer of two-qubit unitaries. By Eq.~\eqref{eq:contraction}, equivalently by Proposition~13 of Ref.~\cite{de2021quantum}, we have
\begin{align}
\|\mathcal U\|_{W_1\to W_1}\le 3.
\end{align}
A layer of single-qubit unitaries has \(W_1\)-contraction coefficient at most \(1\). Hence, for one noisy layer
\(\mathcal L=\mathcal V^{\rm single}\circ\mathcal N^{\otimes n}\circ\mathcal U\), submultiplicativity and Lemma~\ref{lem:contraction-coeff} give
\begin{align}
\|\mathcal L\|_{W_1\to W_1}
&\le
\|\mathcal V^{\rm single}\|_{W_1\to W_1}
\|\mathcal N^{\otimes n}\|_{W_1\to W_1}
\|\mathcal U\|_{W_1\to W_1}
\nonumber\\
&\le
1\cdot \|\mathcal N-\mathcal E_\tau\|_\diamond\cdot 3
=
b_{\mathcal N}.
\label{eq:single-layer-contraction}
\end{align}
The same bound holds for any convention in which the single-qubit unitary layer, the local noise layer, and the two-qubit unitary layer are ordered differently, since the same contraction coefficients appear in the product. Iterating over the \(L\) noisy layers gives
\begin{align}
\|\Phi\|_{W_1\to W_1}\le b_{\mathcal N}^L,
\end{align}
which proves Eq.~\eqref{eq:contraction-w1}.

To pass to trace distance, we use
\begin{align}
\frac12\|X\|_1\le \|X\|_{W_1}\le \frac n2\|X\|_1 .
\end{align}
Therefore,
\begin{align}
\|\Phi(\rho)-\Phi(\sigma)\|_1
&\le
2\|\Phi(\rho)-\Phi(\sigma)\|_{W_1}
\nonumber\\
&\le
2b_{\mathcal N}^L\|\rho-\sigma\|_{W_1}
\nonumber\\
&\le
n b_{\mathcal N}^L\|\rho-\sigma\|_1,
\end{align}
which proves Eq.~\eqref{eq:contraction-tr}. Since \(\|\rho-\sigma\|_1\le 2\), the choice
\begin{align}
L\ge \frac{\log(2n/\varepsilon)}{\log(b_{\mathcal N}^{-1})}
\end{align}
implies \(\|\Phi(\rho)-\Phi(\sigma)\|_1\le\varepsilon\).

It remains to verify the sufficient conditions. Lemma~\ref{lem:contraction-coeff} gives
\begin{align}
\|\mathcal N-\mathcal E_\tau\|_\diamond
\le
2M_{\mathcal N}\|\mathcal N-\mathcal I\|_{1\to 1},
\end{align}
so
\begin{align}
b_{\mathcal N}
=
3\|\mathcal N-\mathcal E_\tau\|_\diamond
\le
6M_{\mathcal N}\|\mathcal N-\mathcal I\|_{1\to 1}.
\end{align}
Thus \(6M_{\mathcal N}\|\mathcal N-\mathcal I\|_{1\to 1}<1\) implies \(b_{\mathcal N}<1\). Since \(\|\mathcal N-\mathcal I\|_{1\to 1}\le 2\), the simpler condition \(12M_{\mathcal N}<1\) also suffices. Finally, if \(0\le D_P<1\), then \(D_P/(1-D_P)<1/12\) is equivalent to \(D_P<1/13\). This completes the proof.
\end{proof}

In particular, if \(b_{\mathcal N}<1\) is a constant independent of \(n\), then
\begin{align}
\|\Phi(\rho)-\Phi(\sigma)\|_1
\le
n\,2^{-\Omega(L)}\|\rho-\sigma\|_1 .
\end{align}
Since \(\|\rho-\sigma\|_1\le 2\) for quantum states, this also gives
\begin{align}
\|\Phi(\rho)-\Phi(\sigma)\|_1
\le
n\,2^{-\Omega(L)}
\end{align}
up to an irrelevant constant factor.

\subsection{Classical simulation of Pauli expectation values of noisy random quantum circuits}
\label{sub:simu}

\textcolor{black}{In this section, we address the problem of estimating Pauli expectation values of noisy random quantum circuits under arbitrary noise. Specifically, given an instance of a noisy circuit $\Phi$ in which the two-qubit gates are sampled uniformly at random within a fixed architecture, our goal is to estimate $\Tr(P\Phi(\rho_0))$ to accuracy $\varepsilon$ with high probability over the choice of the random circuit, where $P$ is a prescribed Pauli operator and $\rho_0$ is an initial state. Without loss of generality, we focus on the case where the Pauli operator is local, since contributions from large Pauli weights are exponentially suppressed and, therefore, negligible. }

We have seen that the presence of any non-unitary noise in the circuit renders the circuit effectively shallow for the purpose of estimating expectation values. In particular, for any inverse polynomial precision, the last logarithmically-many layers suffice. Specifically, a direct consequence of Proposition~\ref{prop:expdecayP} implies the following:

\begin{corollary}[Effective-depth picture]
\label{cor:lastlog}
    \textcolor{black}{Let $P\in \{I,X,Y,Z\}^{\otimes n}$, and $\rho_0$ any input state. Let $L$ be depth of the noisy circuit $\Phi$. Then, we have}
    \begin{align}
       \Ex_{\Phi_{[L-m,L]}}[|\Tr(P\Phi(\rho_0))-\Tr(P\Phi_{[L-m,L]}(\sigma_0))|^2] \le 4 c^{|P|+m-1},
       \end{align} 
   where $\sigma_0$ is any preferred state (e.g., $\sigma_0 \coloneqq \ketbra{0^n}{0^n}$).
   Here, $\Phi_{[L-m,L]}(\cdot)$ refers to the noisy circuit where only the last $m$ layers are considered.
\end{corollary}
 \textcolor{black}{This yields the following simple algorithm for estimating local expectation values: work in the Heisenberg picture and `propagate' the local Pauli $P$ only a few number of layers backwards; compute classically the matrix $P_m \coloneqq \Phi^{*}_{[L-m,L]}(P)$, and then evaluate $\Tr(P_m \sigma_0)$. Due to standard light-cone arguments, for any product state $\sigma_0$ (e.g., $\sigma_0 \coloneqq \ketbra{0^n}{0^n}$), the time-complexity of this algorithm is exponential in the number of qubits over which $P_m$ is supported. If the circuit is a $\mathrm{D}$-dimensional geometrical local circuit, $P_m$ is supported on at most $|P| (2m)^{\mathrm{D}}$. If the circuit architecture instead does not possess any geometrical locality, i.e., it has all-to-all connectivity, then $P_m$ is supported on at most $|P| 2^m$. Thus, for $\mathrm{D}$-dimensional geometrical local circuit architectures, this algorithm incurs a total time complexity bounded from above by $\exp(O(|P| m^{\mathrm{D}}))$, while for all-to-all connected architectures, the time complexity is $\exp(O(|P| 2^m ))$. We refer to Sec.~\ref{sec-definitions} for a formal definition of the light-cone of an observable and geometrical locality. }

\begin{proposition}[Average classical simulation of local expectation values]
\label{prop:classim}
 \textcolor{black}{Let $\varepsilon, \delta > 0$. Consider a Pauli operator $P$ and any initial state $\rho_0$. For a noisy quantum circuit $\Phi$ of depth $L$, sampled according to the described circuit distribution, there exists a classical algorithm that outputs a value $\hat{C}$ satisfying}
\begin{align}
	| \hat{C} - \Tr(P \Phi(\rho_0)) | \leq \varepsilon
\end{align}
with success probability at least $1 - \delta$ over the choice of the random circuit. Specifically, the classical algorithm involves computing $\hat{C} \coloneqq \Tr(P\Phi_{[L-m,L]}(\ketbra{0^n}{0^n}))$ with
\begin{align}
	m \coloneqq \left\lceil \frac{1}{\log(c^{-1})} \log\left(\frac{4}{\delta \varepsilon^2}\right)\right\rceil,
\end{align}
where $c$ is the noise parameter defined in Lemma~\ref{le:normal}.

The time complexity of this algorithm is given by:
\begin{align}
	\text{Runtime} &\leq 
	\begin{cases}
		\exp\!\left(O\!\left(|P| m^{\mathrm{D}}\right)\right) = \exp\!\left(O\!\left(\log^{\mathrm{D}}(\varepsilon^{-1})\right)\right), & \text{for $\mathrm{D}$-geometrically-local architectures}, \\[1ex]
		\exp\!\left(|P| \exp(O(m))\right) = \exp\!\left(\mathrm{poly}(\varepsilon^{-1})\right), & \text{for all-to-all connected architectures}.
	\end{cases}
	\label{eq:runt}
\end{align}
where in the last equation we assumed constant noise rate $c$, constant failure probability $\delta$ and $|P|=O(1)$.
\end{proposition}

\begin{proof}
Because of the Markov inequality, we have
\begin{align}
        \mathrm{Prob}\left(|\Tr(P\Phi(\rho_0))-\Tr(P\Phi_{[L-m,L]}(\rho_0))| > \varepsilon \right)
        &\le \frac{1}{\varepsilon^2}\Ex[|\Tr(P\Phi(\rho_0))-\Tr(P\Phi_{[L-m,L]}(\rho_0))|^2] \\
        \nonumber
        &\le \frac{4}{\varepsilon^2} c^{m},
\end{align}
where we have used Corollary~\ref{cor:lastlog} with $|P|\ge 1$. The right-hand side of this inequality is at most $ \delta$ if we choose 
\begin{equation}
m = \lceil \frac{1}{\log(c^{-1})}\log(\frac{4}{\delta \varepsilon^2})\rceil.
\end{equation} The algorithm consists of computing classically the matrix $P_m\coloneqq \Phi^{*}_{[L-m,L]}(P)$ and then evaluate $\Tr(P_m \rho_0)$. As we have previously described, this can be done via standard light-cone arguments, with a time complexity exponential in the number of qubits over which $P_m$ is supported. 
\end{proof}

\begin{figure}
\centering
\includegraphics[width=0.63\textwidth]{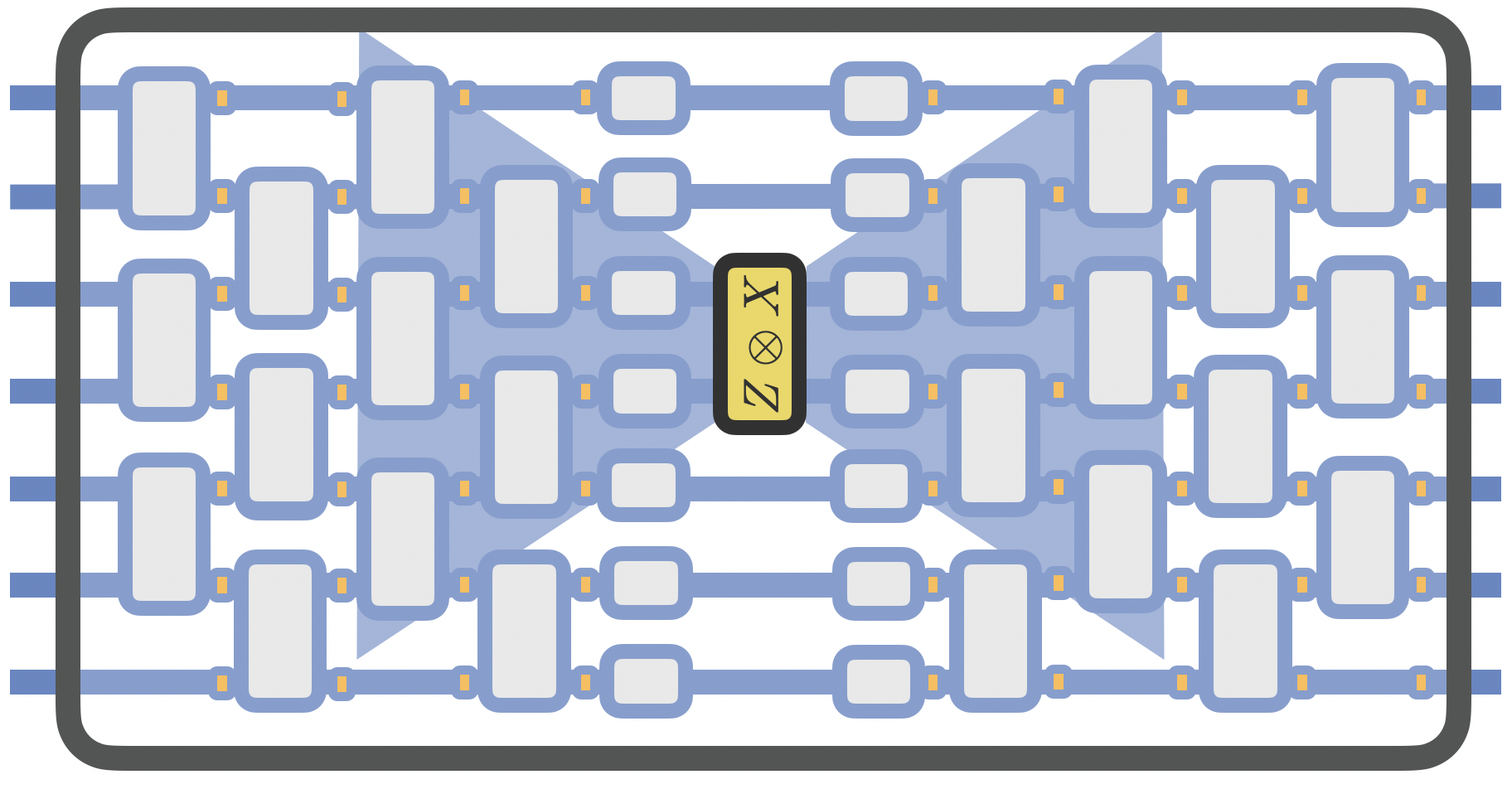}
\caption{A graphical representation of $\Phi^*_{[L-2,L]}(P)$ with respect to the local Pauli observable $P$ represented by the blue shaded area. 
The (noisy) gates outside the blue shaded area are contracted trivially due to the fact that the adjoint of every channel is unital, and thus cannot influence the expectation value of the Pauli. Even if the qubits in the system are $n$, the computation of $\Phi^*_{[L-2,L]}(P)$ is restricted to only a constant number of qubits.}
\label{Fig:circ33}
\end{figure}
\textcolor{black}{Note that for one-dimensional circuits $\mathrm{D}=1$ and local Pauli, the runtime depends only polynomially by $\varepsilon^{-1}$, while for higher dimension the time complexity depends quasi-polynomially by $\varepsilon^{-1}$.}

\textcolor{black}{Moreover, if $\Phi^{*}_{[L-m,L]}(P)$ is close to something proportional to the identity (which can be verified classically with the same time-complexity of above), then we can certify that our algorithm has succeeded. The intuition about this is that if $\Phi^{*}_{[L-m,L]}(P)$ were proportional to the identity, then keeping adding (adjoint) layers does not change the matrix because of the unitality of the adjoint channel.}
Specifically, at the end of the previous algorithm, we can check (efficiently in the effective dimension of the propagated observable) if the condition 
\begin{align}
    E \coloneqq \min_{q\in \mathbb{R}}\norm{\Phi^{*}_{[L-m,L]}(P)-q I}_{\infty}\le \varepsilon/2
\end{align}
is satisfied. If it is, then the previous algorithm succeeded with unit probability, as we are going to show in the next observation. 
\begin{observation}[Verification guarantees]
After running the algorithm described in Proposition~\ref{prop:classim}, if such condition is true:
\begin{align}
    E \coloneqq \min_{q\in \mathbb{R}}\norm{P_m-q I}_{\infty}\le \varepsilon/2,
\end{align}
where $P_l \coloneqq \Phi^{*}_{[L-m,L]}(P)$, then we can conclude that the algorithm in Proposition~\ref{prop:classim} succeeded with unit probability.
\textcolor{black}{Furthermore, such condition can be verified with the same time complexity of the algorithm in Proposition~\ref{prop:classim}. In particular, it holds that}
\begin{align}
    E= \frac{1}{2}\left(\lambda_{\max}(P_m)-\lambda_{\min}(P_m)\right),
\end{align}
where $\lambda_{\min}(P_m)$ and $\lambda_{\max}(P_m))$ are respectively the minimum and the maximum eigenvalue of $P_m$.
\end{observation}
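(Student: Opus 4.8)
The plan is to establish two things: first, the analytical formula $E = \tfrac{1}{2}|\lambda_{\max}(P_l) - \lambda_{\min}(P_l)|$ for the quantity $E \coloneqq \min_{q\in\mathbb{R}} \norm{P_l - qI}_\infty$; and second, that whenever $E \le \varepsilon/2$, the estimate $\hat{C} = \Tr(P_l \rho_0)$ returned by the algorithm of Proposition~\ref{prop:classim} satisfies $|\hat{C} - \Tr(P\Phi(\rho_0))| \le \varepsilon$ deterministically. For the first part, I would note that $P_l$ is Hermitian (being the image of a Hermitian operator under the adjoint of a quantum channel, which preserves Hermiticity), hence has real eigenvalues in the interval $[\lambda_{\min}(P_l), \lambda_{\max}(P_l)]$. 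For a shift $q\in\mathbb{R}$, the operator $P_l - qI$ has eigenvalues shifted by $-q$, so $\norm{P_l - qI}_\infty = \max(|\lambda_{\max}(P_l) - q|, |\lambda_{\min}(P_l) - q|)$. Minimizing this maximum of two absolute values over $q$ is a one-dimensional convex optimization whose solution is the midpoint $q^\star = \tfrac{1}{2}(\lambda_{\max}(P_l) + \lambda_{\min}(P_l))$, giving $E = \tfrac{1}{2}(\lambda_{\max}(P_l) - \lambda_{\min}(P_l))$. This also immediately gives the claimed computational cost: since $P_l$ is supported (via light-cone arguments) on $O((|P|+l)^{\mathrm{D}})$ qubits, its extremal eigenvalues can be found by diagonalizing a matrix of dimension $\exp(O((|P|+l)^{\mathrm{D}}))$.

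For the second part — the actual guarantee — the key observation is that the adjoint channel $\mathcal{N}^*$, and hence the adjoint of any composition of layers, is unital. Writing $q^\star$ for the optimal shift and using unitality of the full adjoint circuit $\Phi^*$, we have $\Phi^*(P) = \Phi^*_{[1,L-l]}\big(\Phi^*_{[L-l,L]}(P)\big) = \Phi^*_{[1,L-l]}(P_l)$. Now decompose $P_l = q^\star I + (P_l - q^\star I)$; applying $\Phi^*_{[1,L-l]}$ and using that it sends $I$ to $I$, we get $\Phi^*(P) = q^\star I + \Phi^*_{[1,L-l]}(P_l - q^\star I)$. Meanwhile $\Phi^*_{[L-l,L]}(P) = P_l = q^\star I + (P_l - q^\star I)$. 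Subtracting,
\begin{align}
    \Tr(P\Phi(\rho_0)) - \hat{C} = \Tr\!\big((\Phi^*(P) - P_l)\rho_0\big) = \Tr\!\Big(\big(\Phi^*_{[1,L-l]}(P_l - q^\star I) - (P_l - q^\star I)\big)\rho_0\Big).
\end{align}
Bounding by Hölder's inequality and the triangle inequality, this is at most $\norm{\Phi^*_{[1,L-l]}(P_l - q^\star I)}_\infty + \norm{P_l - q^\star I}_\infty$. Using the contractivity of adjoint channels on the operator norm, $\norm{\mathcal{M}^*(O)}_\infty \le \norm{O}_\infty$ (stated in the Quantum channels subsection), the first term is also at most $\norm{P_l - q^\star I}_\infty = E$, so the total error is at most $2E \le \varepsilon$.

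The main obstacle — really the only subtlety — is making sure the telescoping/unitality step is set up correctly, i.e., that $\Phi^*$ factors as $\Phi^*_{[1,L-l]} \circ \Phi^*_{[L-l,L]}$ with the outer (earlier-in-circuit) part being unital, and that no single-qubit final gate layer $\mathcal{V}^{\mathrm{single}}$ spoils this; since $\mathcal{V}^{\mathrm{single}}$ is already contained in $\Phi_{[L-l,L]}$ and all adjoints of CPTP maps are unital, this goes through. Everything else is routine: the eigenvalue formula is elementary convex analysis, and the error bound is two applications of $\norm{\mathcal{M}^*(O)}_\infty \le \norm{O}_\infty$ together with Hölder. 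I would present the eigenvalue computation first (it is self-contained), then the guarantee, and remark that this is a deterministic certificate whose failure to trigger does not contradict the high-probability success of the base algorithm.
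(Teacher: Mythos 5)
Your proposal is correct and follows essentially the same argument as the paper: the eigenvalue formula via the one-dimensional minimax over the shift $q$ is identical, and the error guarantee is the same two-term bound $|\Tr(P\Phi(\rho_0))-\hat{C}|\le 2\norm{P_l-q^\star I}_\infty$ obtained by subtracting the optimal multiple of the identity and applying H\"older twice. The only cosmetic difference is that you phrase the bound in the Heisenberg picture (unitality of $\Phi^*_{[1,L-l]}$ plus contractivity of adjoint channels in operator norm), whereas the paper works in the Schr\"odinger picture (trace preservation plus $\norm{\sigma}_1=1$ for the two intermediate states); these are dual formulations of the identical step.
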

\begin{proof}
Assuming that $E \coloneqq \min_{q\in \mathbb{R}}\norm{P_m-q I}_{\infty}\le \varepsilon/2$, we have
\begin{align}
    |\Tr(P\Phi(\rho_0))-\Tr(P\Phi_{[L-m,L]}(\rho_0))|&\le |\Tr(P\Phi(\rho_0))-q | + |q  -\Tr(P\Phi_{[L-m,L]}(\rho_0))| \\
    \nonumber
    &= |\Tr(P_l\Phi_{[1,L]}(\rho_0))-q | + |\Tr(P_m \rho_0)-q | \\
    \nonumber
    &=|\Tr((P_m-q I)\Phi_{[1,L]}(\rho_0))| + |\Tr((P_m -q I)\rho_0)| \\
    \nonumber
    &\le \norm{P_m-q I}_{\infty} + \norm{P_m -q I}_{\infty} \\
    &\le \varepsilon,
    \nonumber
\end{align}
where in the second to last step we have used H\"older inequality. This shows that if $E\le \varepsilon/2$, then the algorithm in Proposition~\ref{prop:classim} succeeds with unit probability.

Moreover, $E$ can be computed in polynomial time in the effective dimension of the observable $P_m$.
This can be seen for example by noting that $E$ depends only by the eigenvalues of $P_l$, since it suffices to compute its spectrum. In particular, we have
\begin{align}
    \min_{q\in\mathbb{R}}\norm{P_m-q I}_{\infty}= \min_{q\in\mathbb{R}} \max(|\lambda_{\min}(P_m)-q|,|\lambda_{\max}(P_m)-q|)=\frac{1}{2}\left(\lambda_{\max}(P_m)-\lambda_{\min}(P_m)\right).
\end{align}
This completes the proof.
\end{proof}
We note that such verification step can be also inserted at each step in which we take the adjoint of each of the last unitary layer. 
We refer to Algorithm~\ref{alg:sim} for a summary of the described steps.
\begin{algorithm}[H]
\textbf{Parameters}: $\varepsilon, \delta > 0$ (desired precision and success probability). \\
\textbf{Input}: Classical descriptions of $\mathcal{C}$ (noiseless circuit), noise channel $\N$ with parameter $c$ and observable of interest $P$.
\begin{algorithmic}[1]
\State \textbf{Initialize}: $P_0 = P$.
\State $l \coloneqq \lceil \frac{1}{\log(c^{-1})}\log(\frac{4}{\delta \varepsilon^2})\rceil$. 

\For{ $t = 1 \text{ to } m$, }
    \State $P_t \gets \Phi^*_{[L-t,L]}(P_{t-1})$. 
    \State $E_t \gets \frac{1}{2}\left(|\lambda_{\max}(P_t)-\lambda_{\min}(P_t)| \right)$ \Comment{Check early-break condition}
    \If{$ 2 E_t \le \varepsilon$}
    \State Output $\Tr(P_t \ketbra{0^n}{0^n}).$
    \State Break
    \EndIf
\EndFor

\State Output $\Tr(P_m \ketbra{0^n}{0^n})$
 
\end{algorithmic}
\caption{Computing local expectation values on noisy circuit \label{alg:sim}}
\end{algorithm}

The above algorithm is efficient if $|P|=O(\log(n))$ (i.e.,  its time complexity runs polynomially in the number of qubits), and it is no longer efficient if $|P|=\omega(\log(n))$.
However, in this high Pauli-weight regime, we do not need to run any algorithm, since we can just output zero and this succeeds with high probability over the choice of the circuit and with an inverse-polynomial accuracy, due to the following observation. 

\begin{observation}[Output zero if the Pauli is Global]
Let $\varepsilon=\Theta(1/\mathrm{poly}(n))$, $P\in \{I,X,Y,Z\}^{\otimes n}$, $\rho_0$ an arbitrary initial state, and let $\Phi$ be a noisy quantum circuit of any depth sampled according to the described circuit distribution. If $|P|=\omega(\log(n))$, then the probability that expectation value $\Tr(P\Phi(\rho_0))$ is larger than $\varepsilon$ is negligible: 
\begin{align}
    \mathrm{Prob}\left(|\Tr(P\Phi(\rho_0))| \ge \varepsilon \right)\le \mathrm{negl}(n),
\end{align}
where $\mathrm{negl}(n)$ denotes a negligible function, i.e.,
a function that grows more slowly than any inverse polynomial in the number of qubits $n$.
\end{observation}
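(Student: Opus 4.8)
The plan is to prove the "output zero if the Pauli is global" observation by a direct application of Chebyshev's inequality, leveraging the variance upper bound already established in Proposition~\ref{prop:upvar} (equivalently Corollary~\ref{global expectation}). First I would recall that, because the circuit $\Phi$ ends with a layer of single-qubit $2$-design gates, Lemma~\ref{le:qubitonedesign} gives $\Ex_{\Phi}[\Tr(P\Phi(\rho_0))] = 0$ for every nontrivial Pauli $P$, so the mean is exactly zero and there is nothing to subtract. Next I would invoke Proposition~\ref{prop:upvar} with $H = P$, which yields $\Var_{\Phi}[\Tr(P\Phi(\rho_0))] \le c^{|P|}$, where $c = \frac{1}{3}(\|\bold{t}\|_2^2 + \|\bold{D}\|_2^2) < 1$ is the noise parameter guaranteed to be strictly less than $1$ by Lemma~\ref{le:normal} whenever the noise is not a unitary channel.

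Then I would apply Chebyshev's inequality to the random variable $C \coloneqq \Tr(P\Phi(\rho_0))$: for any $\varepsilon > 0$,
\begin{align}
    \mathrm{Prob}\!\left(|C| \ge \varepsilon\right) = \mathrm{Prob}\!\left(|C - \Ex[C]| \ge \varepsilon\right) \le \frac{\Var[C]}{\varepsilon^2} \le \frac{c^{|P|}}{\varepsilon^2}.
\end{align}
Now I would plug in the hypotheses $|P| = \omega(\log(n))$ and $\varepsilon = \Theta(1/\mathrm{poly}(n))$. Since $c < 1$ is a fixed constant, $c^{|P|} = \exp(-\Omega(|P|)) = \exp(-\omega(\log n)) = n^{-\omega(1)}$, i.e.\ it decays faster than any inverse polynomial in $n$. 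Dividing by $\varepsilon^2 = \Theta(1/\mathrm{poly}(n))$ only multiplies by a polynomial factor, which is absorbed: $c^{|P|}/\varepsilon^2 = n^{-\omega(1)} \cdot \mathrm{poly}(n) = n^{-\omega(1)} = \mathrm{negl}(n)$. This establishes $\mathrm{Prob}(|\Tr(P\Phi(\rho_0))| \ge \varepsilon) \le \mathrm{negl}(n)$, so outputting the estimate $\hat{C} = 0$ succeeds with overwhelming probability over the circuit ensemble, requiring no computation at all.

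I do not anticipate a genuine obstacle here, since all the heavy lifting (the variance bound, the strict inequality $c < 1$, the zero mean) is already in place; the only care needed is the bookkeeping in the asymptotic argument — specifically making precise that $\exp(-\omega(\log n))$ beats any fixed $\mathrm{poly}(n)$ denominator, which is exactly the definition of a negligible function. One minor subtlety worth a sentence in the full write-up is that the observation as stated does not explicitly assume the noise is non-unitary; but since a unitary channel would make $\Phi$ a genuine circuit with no contraction, the intended reading (consistent with the rest of the section) is that $\mathcal{N}$ is not a unitary channel, so that $c < 1$, and I would state this assumption explicitly. If one additionally wanted a statement uniform in how "global" $P$ is, one could phrase the conclusion quantitatively as $\mathrm{Prob}(|\Tr(P\Phi(\rho_0))| \ge \varepsilon) \le c^{|P|}/\varepsilon^2$ and then specialize.
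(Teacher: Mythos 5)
Your proposal is correct and follows essentially the same route as the paper: zero mean from the final layer of single-qubit $2$-design gates, the variance bound $\Var[\Tr(P\Phi(\rho_0))]\le c^{|P|}$ from Proposition~\ref{prop:upvar}, Chebyshev's inequality, and the observation that $c^{|P|}/\varepsilon^2$ is negligible when $|P|=\omega(\log n)$ and $\varepsilon=\Theta(1/\mathrm{poly}(n))$. Your explicit remark that one must assume the noise is not a unitary channel (so that $c<1$) is a sensible clarification of an assumption the paper leaves implicit here.
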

\begin{proof}
Recalling that $\Ex[\Tr(P\Phi(\rho_0))]=0$, the Chebyshev inequality implies that
\begin{align}
    \mathrm{Prob}\left(|\Tr(P\Phi(\rho_0))| \ge \varepsilon \right) \le \frac{1}{\varepsilon^2}\Var\left[\Tr(P\Phi(\rho_0))\right] \le \frac{c^{|P|}}{\varepsilon^2} ,
\end{align}
where we have used Proposition~\ref{prop:upvar}. If $\varepsilon$ is at most inverse-polynomially small and $c^{|P|}$ with $|P|=\omega(\log(n))$ is super-polynomially small, then $c^{|P|}/(\varepsilon^2)$ will be negligible.
\end{proof}
Taken together, the results of this subsection give a classical simulation algorithm for estimating Pauli expectation values of (possibly non-unital) noisy random quantum circuits. If the required precision is constant in the number of qubits $n$, then the running time of the algorithm is efficient for any circuit architecture. If the required precision scales inverse-polynomially, then the algorithm runs in polynomial time for $1$-D architectures, while in quasi-polynomial time in higher constant dimensionality (e.g., $2$-D). 

\subsection{Classical simulation via depth- and Pauli-weight-truncated light cones}
\label{app:improved-simulation}
In the previous section we introduced an algorithm that, for any fixed architecture in constant dimension \(D\) and any initial state \(\rho_0\), has a runtime of \(\exp(O(\log(\varepsilon^{-1}))^{D})\), where \(\varepsilon\) is the desired accuracy of the system, whereas the runtime for all-to-all architectures was \(\exp(O(\varepsilon^{-1}))\). Thus, this algorithm is always efficient if the required precision is constant with respect to the number of qubits \(n\). However, if the desired precision scales inverse-polynomially with \(n\), the algorithm has a runtime of \(\exp(O(\log(n))^{D}) = n^{O(\log(n))^{D-1}}\) for any constant dimension \(D\), and exponential time for all-to-all connected architectures.

Here, we present an algorithm that improves upon these runtimes. Specifically, the runtime of this algorithm is \(n^{O((D-1)\log\!\log(n))}\) for geometrically local architectures in any dimension \(D\), and \(n^{O(\log(n))}\) for all-to-all connected architectures, independent of the depth and initial state. Since \(\log\!\log n\) is extremely small for practically relevant \(n\), the first runtime can be effectively regarded as polynomial in practice. The idea is simple: in addition to retaining only the last few layers and applying a light-cone argument as in the previous section, we can further restrict attention to Pauli strings of small weight inside the light cone. As we have already shown that high-weight Pauli terms contribute exponentially little to the expectation value, they can be safely discarded. Specifically, we combine the theorems proven in the previous section regarding effective depth and exponential suppression of Pauli weight with ideas from~\cite{angrisani2024classicallyestimatingobservablesnoiseless}, originally developed for the noiseless case, and generalize them to the possibly noisy setting.
We first introduce some useful definitions and lemmas.

\begin{definition}[Truncation map]
\label{def:truncation-map}
We define the truncation map \(T^{(>k)}: \mathcal{B}(\mathcal{H}_n) \to \mathcal{B}(\mathcal{H}_n)\) as the linear map that acts on any Pauli basis element \(P \in \{I, X, Y, Z\}^{\otimes n}\) as follows:
\begin{equation}
T^{(>k)}(P) \coloneqq 
\begin{cases} 
P & \text{if } |P| \le k, \\
0 & \text{if } |P| > k,
\end{cases}
\end{equation}
where \(|P|\) denotes the Pauli weight of \(P\) and \(k \in \mathbb{R}\) is a specified threshold. This definition extends to possibly non-Pauli operators by linearity.
\end{definition}

It is important to observe a few properties of the truncation map.

\begin{lemma}[Basic properties of the truncation map]
\label{le:trunmapprop}
We have
\begin{enumerate}
    \item \(\|T^{(>k)}(O)\|_2 \leq \|O\|_2\), for any observable \(O\).
    \item The truncation map commutes with any layer of single-qubit Clifford gates, i.e.,
    \begin{equation}
    \mathcal{C}^{\mathrm{single}} \circ T^{(>k)} = T^{(>k)} \circ \mathcal{C}^{\mathrm{single}},
    \end{equation}
    where \(\mathcal{C}^{\mathrm{single}} \coloneqq C(\cdot)C^{\dagger}\), with \(C \coloneqq \bigotimes_{i=1}^{n} C_i\), is a layer of single-qubit Clifford gates.
\end{enumerate}
\end{lemma}

\begin{proof}
The inequality \(\|T^{(>k)}(O)\|_2 \leq \|O\|_2\) can be shown by expanding \(O\) in the Pauli basis.
The commutativity property can be verified by explicitly applying the linear maps on both sides to an operator expanded in the Pauli basis, and using that single qubits Clifford layers do not change the Pauli weight of a given Pauli.
\end{proof}

We now present a Lemma, which is shown in Ref.~\cite{huang2023learningpredictarbitraryquantum}. 
\begin{lemma}[Low-degree approximation, Corollary 13 in Ref.~\cite{huang2023learningpredictarbitraryquantum} (Restated)]
\label{le:robert}
Let \(O\) be an operator and \(k \in \mathbb{R}\). Let \(\mathcal{D}\) be a distribution over quantum states that is invariant under single-qubit Clifford gates. Then, we have
\begin{align}
    \mathbb{E}_{\rho \sim \mathcal{D}} \left\lvert \Tr(O \rho) - \Tr(O^{(k)} \rho) \right\rvert^2 \le \left(\frac{2}{3}\right)^k \frac{\|O\|^2_{2}}{2^n},
\end{align}
where \(O^{(k)} \coloneqq T^{(>k)}(O)\).
\end{lemma}

\begin{lemma}[Linear map 2-norm bound]
\label{le:avg-inv-2norm}
Let $\nu$ be a distribution over the unitary group, satisfying
\begin{align}
    \forall P,Q \in \{I,X,Y,Z\}^{\otimes n} \text{ such that } P\neq Q : \bbE_{U\sim \nu} [U^{\dag \otimes 2} (P \otimes Q) U^{\otimes 2}] = 0.
\end{align}
For any linear map \(\Phi\) and any Hermitian operator \(O\), we have
\begin{align}
    \mathbb{E}_{U \sim \nu} \| \Phi^{\dagger}(U O U^{\dagger}) \|_2^2 \leq \left(\max_{s \in \calP_n} \| \Phi^\dag(s) \|_2^2\right) \|O\|_2^2.
\end{align}
In particular, if $\Phi$ is a quantum channel, we have
\begin{align}
    \mathbb{E}_{U \sim \nu} \| \Phi^{\dagger}(U O U^{\dagger}) \|_2^2 \leq \|O\|_2^2.
\end{align}
\end{lemma}

\begin{proof}
Expanding \(O\) and \(\Phi^{\dagger}(U O U^{\dagger})\) in the Pauli basis \(\mathcal{P}_n \coloneqq \left\{\frac{I}{\sqrt{2}}, \frac{X}{\sqrt{2}}, \frac{Y}{\sqrt{2}}, \frac{Z}{\sqrt{2}}\right\}^{\otimes n}\), we find
\begin{align}
    \mathbb{E}_{U \sim \nu} \| \Phi^{\dagger}(U O U^{\dagger}) \|_2^2 &= \sum_{t \in \mathcal{P}_n} \mathbb{E}_{U \sim \nu} \left[\Tr(t \Phi^{\dagger}(U O U^{\dagger}))\right]^2 \\
    \nonumber
    &= \sum_{t \in \mathcal{P}_n} \mathbb{E}_{U \sim \nu} \left[\Tr(\Phi(t) U O U^{\dagger})\right]^2 \\
    \nonumber
    &= \sum_{s, t \in \mathcal{P}_n} \Tr(s \Phi(t))^2 \mathbb{E}_{U \sim \nu} \left[\Tr(s U O U^{\dagger})\right]^2 \\
     \nonumber
     &= \sum_{s \in \mathcal{P}_n}  \left( \sum_{t \in \mathcal{P}_n} \Tr(s \Phi(t))^2\right) \mathbb{E}_{U \sim \nu} \left[\Tr(s U O U^{\dagger})\right]^2 \\
     \nonumber
    &\le  \sum_{s \in \mathcal{P}_n}  \left( \max_{s' \in \mathcal{P}_n}  \sum_{t \in \mathcal{P}_n} \Tr(s' \Phi(t))^2\right) \mathbb{E}_{U \sim \nu} \left[\Tr(s U O U^{\dagger})\right]^2 \\
     \nonumber
    &\le  \left( \max_{s' \in \mathcal{P}_n}  \sum_{t \in \mathcal{P}_n} \Tr(\Phi^{\dagger}(s') t)^2\right) \sum_{s \in \mathcal{P}_n}  \mathbb{E}_{U \sim \nu} \left[\Tr(s U O U^{\dagger})\right]^2 \\
     \nonumber
    &= \left(\max_{s' \in \mathcal{P}_n} \| \Phi^{\dagger}(s') \|_2^2\right) \mathbb{E}_{U \sim \nu} \| U O U^{\dagger} \|_2^2 \\
     \nonumber
    &= \left(\max_{s \in \mathcal{P}_n} \| \Phi^{\dagger}(s) \|_2^2\right) \mathbb{E}_{U \sim \nu} \| O \|_2^2.
     \nonumber
\end{align}
Here, we have used  Lemma~\ref{le:mixham} in the third step and the unitary invariance of the 2-norm in the last step. Therefore,
\begin{align}
    \mathbb{E}_{U \sim \nu} \| \Phi^{\dagger}(U O U^{\dagger}) \|_2^2 &= \left(\max_{s \in \mathcal{P}_n} \| \Phi^{\dagger}(s) \|_2^2\right) \| O \|_2^2 \\
    \nonumber
    &\leq \left(\max_{s \in \mathcal{P}_n} 2^n \| \Phi^{\dagger}(s) \|_\infty^2\right) \| O \|_2^2 \\
    \nonumber
    &= \left(\max_{P \in \left\{I, X, Y, Z\right\}^{\otimes n}} \| \Phi^{\dagger}(P) \|_\infty^2\right) \| O \|_2^2 \\
    \nonumber
    &\leq \left(\max_{P \in \left\{I, X, Y, Z\right\}^{\otimes n}} \| P \|_\infty^2\right) \| O \|_2^2 \\
    \nonumber
    &= \| O \|_2^2,
\end{align}
where in the second step we have used  the fact that \(\|A\|_2 \le \sqrt{\mathrm{rank}(A)} \|A\|_\infty\), in the fourth step we applied the Russo-Dye Theorem~\cite{Bhatia2007PositiveDM} (which states \(\|\Phi^{\dagger}(A)\|_\infty \le \|A\|_\infty\) for any quantum channel \(\Phi\)), and in the final step we noted that the operator norm of any Pauli matrix is 1.
\end{proof}
Note that if we remove the single-qubit random layers, the previous inequality becomes false, i.e., \(\| \Phi^{\dagger}( O) \|_2 \leq \|O\|_2\) does not hold in general, as can be verified by taking the single-qubit channel \(\Phi(\cdot) \coloneqq \Tr(\cdot) \ketbra{0}{0}\) and $O\coloneqq \ketbra{0}{0}$. 
\footnote{More generally, the same counterexample demonstrates that for \(1 \le p \le \infty\), the inequality \(\| \Phi^{\dagger}(O) \|_p \leq \|O\|_p\) can hold for all quantum channels \(\Phi\) and observables \(O\) only if \(p = \infty\). Specifically, for \(p = \infty\), the inequality \(\| \Phi^{\dagger}(O) \|_\infty \leq \|O\|_\infty\) is satisfied for all channels and observables (Russo-Dye Theorem~\cite{Bhatia2007PositiveDM}). Furthermore, it can be shown that for \(1 \le p < \infty\), the inequality \(\| \Phi^{\dagger}(O) \|_p \leq \|O\|_p\) holds for all observables \(O\) if and only if the channel \(\Phi\) is unital. This result follows by: 1)  for \(p > 1\), the inequality \(\| \Phi(\rho) \|_p \leq \|\rho\|_p\) holds for all matrices \(\rho\) if and only if the channel \(\Phi\) is unital, 2) from a duality argument, that is, let \(p, q \in \mathbb{R}\) be such that \(p^{-1} + q^{-1} = 1\), then \(\Phi\) is \(p\)-norm contractive if and only if \(\Phi^{\dagger}\) is \(q\)-norm contractive.}

We now recall our definition of noisy random circuit.
\begin{definition}[Noisy Circuit Model]
\label{def:circuitmod}
We consider $n$-qubit quantum circuits \(\Phi\) consisting of layers of two-qubit gates interleaved by local (single-qubit) noise, with a final layer of single-qubit gates. All gates are assumed to be drawn from a $2$-design, and we make no assumptions about geometric locality, except where explicitly mentioned.
We express our circuits as
\begin{align}
\Phi \coloneqq \mathcal{V}^{\mathrm{single}} \circ \mathcal{N}^{\otimes n} \circ \mathcal{U}_{L} \circ \cdots \circ \mathcal{N}^{\otimes n} \circ \mathcal{U}_{1},
\label{eq:randcirc_main}
\end{align}
where \(\mathcal{V}^{\mathrm{single}} \coloneqq V(\cdot)V^{\dagger}\), with \(V \coloneqq \bigotimes_{i=1}^{n} U_i\), is a layer of single-qubit gates, \(L\) represents the number of layers (also referred to as circuit depth), \(\mathcal{U}_{i}\) corresponds to the channel associated with the \(i\)-th unitary circuit layer for \(i \in [L] \coloneqq \{1, 2, \dotsc, L\}\), and \(\mathcal{N}\) is a single-qubit quantum channel.
\end{definition}

\begin{definition}[Weight and Depth Truncated Adjoint Circuit]
\label{def:truncwd}
Let \(\Phi \coloneqq \Phi_L \circ \cdots \circ \Phi_1\) be a noisy quantum circuit, where \(\{\Phi_j\}_{j=1}^L\) represents the sequence of noisy circuit layers, as for the previous definition. Let \(\Phi_{[L-m,L]}(\cdot)\) be the noisy circuit where only the last \(m\) layers are considered.

We define the \(k\)-weight truncated adjoint circuit restricted to the last \(m\)-layers \((\Phi_{[L-m,L]})_{k\mathrm{-trunc}}^{\dagger}\) as
\begin{equation}
(\Phi_{[L-m,L]})_{k\mathrm{-trunc}}^{\dagger} \coloneqq (T^{(>k)} \circ \Phi_{L-m}^{\dagger}) \circ \cdots \circ (T^{(>k)} \circ \Phi_L^{\dagger}),
\end{equation}
where \(T^{(>k)}\) is the truncation map defined in Definition~\ref{def:truncation-map}.
\end{definition}

We are now ready to state the main results of this section. We begin with the following proposition, which combines the effective depth of noisy random circuits with Pauli-weight truncation. 
While the $m$ factor multiplying the second term on the RHS can be removed using a refinement of the techniques in Ref.~\cite{angrisani2024classicallyestimatingobservablesnoiseless}, this improvement does not change the asymptotic scaling of our classical simulation result and is therefore omitted here for simplicity.

\begin{proposition}[Truncation in depth and weight approximation]
\label{th:mainth}
Let \(\rho_0\) be an initial state, \(O\) an observable, and \(\Phi \coloneqq \Phi_L \circ \cdots \circ \Phi_1\) a noisy quantum circuit, where \(\{\Phi_j\}_{j=1}^L\) represents the sequence of noisy circuit layers as in Definition~\ref{def:circuitmod}.
Define \(O_{\Phi}^{(k,m)} \coloneqq (\Phi_{[L-m,L]})_{k\mathrm{-trunc}}^{\dagger}(O)\), which represents the Heisenberg-evolved observable \(O\) with the last \(m\) noisy circuit layers $k$-weight truncated (layer-by-layer), as in Definition~\ref{def:truncwd}.
We have
\begin{align}
\label{eq:uppboundTRUNC}
\mathbb{E}_{\Phi}\left[ \left| \Tr(O\Phi(\rho_0)) - \Tr(O_{\Phi}^{(k,m)} \sigma_0) \right| \right] \le 2 \norm{O}_{\infty} \exp(-\alpha m) + m \left(\frac{2}{3}\right)^{k/2} 
\|O\|_{\infty},
\end{align}
where $\sigma_0$ is any preferred initial state. 
Here, the average \(\mathbb{E}_{\Phi}\) is taken with respect to the $2$-design distribution of every two-qubit gate that composes the circuit, and \(\alpha > 0\) is a quantity that depends only on the noise parameters.
\end{proposition}

\begin{proof}
We have 
\begin{align}
\mathbb{E}_{\Phi}\left[ \left| \Tr(O\Phi(\rho_0)) -  \Tr(O_{\Phi}^{(k,m)} \sigma_0) \right| \right] &\le 
\mathbb{E}_{\Phi}\left[ \left| \Tr(O\Phi(\rho_0)) - \Tr(O\Phi_{[L-m]}(\rho_0)) \right| \right] + \mathbb{E}_{\Phi}\left[ \left| \Tr(O\Phi_{[L-m]}(\rho_0)) - \Tr(O_{\Phi}^{(k,m)} \sigma_0) \right| \right]
\nonumber
\\
&\le  2 \norm{O}_{\infty} \exp(-\alpha m) +  \mathbb{E}_{\Phi}\left[ \left| \Tr(O\Phi_{[L-m]}(\rho_0)) - \Tr(O_{\Phi}^{(k,m)} \sigma_0) \right| \right],
\end{align}
where we have used  triangle inequality and Proposition~\ref{prop:expdecayO}.
We now focus on the second term. 
We have
\begin{align}
\label{eq:1pr}
     \mathbb{E}_{\Phi}\left[ \left| \Tr(O\Phi_{[L-m]}(\rho_0)) - \Tr(O_{\Phi}^{(k,m)} \sigma_0) \right| \right]&=  \mathbb{E}_{\Phi}\left[ \left| \Tr(O\Phi_{[L-m]}(\rho_0)) - \Tr((\Phi_{[L-m,L]})_{k\mathrm{-trunc}}^{\dagger}(O) \rho_0) \right| \right]\\
     \nonumber
     &=  \mathbb{E}_{\Phi}\left[ \left| \Tr(\Phi^{\dagger}_{[L-m]}(O)\rho_0) - \Tr((\Phi_{[L-m,L]})_{k\mathrm{-trunc}}^{\dagger}(O) \rho_0) \right| \right]\\
      \nonumber
     &= \mathbb{E}_{\Phi}\left[ \left| \Tr\!\left(\left(\Phi^{\dagger}_{[L-m]}(O) - (\Phi_{[L-m,L]})_{k\mathrm{-trunc}}^{\dagger}(O)\right) \rho_0\right) \right| \right].
      \nonumber
\end{align}
We notice that we can rewrite \(\Phi^{\dagger}_{[L-m]}(O) - (\Phi_{[L-m,L]})_{k\mathrm{-trunc}}^{\dagger}(O)\) using a telescopic sum as 
\begin{align}
    &\Phi^{\dagger}_{[L-m]}(O) - (\Phi_{[L-m,L]})_{k\mathrm{-trunc}}^{\dagger}(O) 
    \\
    \nonumber
    &= \Phi_{L-m}^{\dagger} \circ \cdots \circ \Phi_L^{\dagger}(O) - (T^{(>k)} \circ \Phi_{L-m}^{\dagger}) \circ \cdots \circ (T^{(>k)} \circ \Phi_L^{\dagger})(O) \\
     \nonumber
    &= \bigcirc_{a=L-m}^{L} \Phi_{a}^{\dagger}(O) - \bigcirc_{b=L-m}^{L} (T^{(>k)} \circ \Phi_{b}^{\dagger})(O) \\
     \nonumber
    &= \sum_{j=0}^{m-1} \left( \Phi_{[L-m,L-j]}^{\dagger} \circ \bigcirc_{b=L-j+1}^{L} (T^{(>k)} \circ \Phi_{b}^{\dagger})(O) - \Phi_{[L-m,L-j-1]}^{\dagger} \circ \bigcirc_{b=L-j}^{L} (T^{(>k)} \circ \Phi_{b}^{\dagger})(O) \right).
     \nonumber
\end{align}
Here, the notation \(\bigcirc_{a=L-m}^{L}\) denotes the composition of maps from \(a = L-m\) to \(L\) in the forward direction, while the notation \(\Phi_{[a,b]}^{\dagger}\) with \(a \le b\) means \(\Phi_a^{\dagger} \circ \cdots \circ \Phi_b^{\dagger}\).
Substituting the telescoping sum into Eq.~\eqref{eq:1pr} and applying the triangle inequality, we get
\begin{align}
&\mathbb{E}_{\Phi}\left[ \left| \Tr(O\Phi_{[L-m]}(\rho_0)) - \Tr(O_{\Phi}^{(k,m)} \sigma_0) \right| \right] 
\\
\nonumber
&= \mathbb{E}_{\Phi}\left[ \left| \Tr\!\left(\left(\Phi^{\dagger}_{[L-m]}(O) - (\Phi_{[L-m,L]})_{k\mathrm{-trunc}}^{\dagger}(O)\right) \rho_0\right) \right| \right] \\
 \nonumber
&= \mathbb{E}_{\Phi}\left[ \left| \sum_{j=0}^{m}\Tr\!\left( \left( \Phi_{[L-m,L-j]}^{\dagger} \circ \bigcirc_{b=L-j+1}^{L} (T^{(>k)} \circ \Phi_{b}^{\dagger})(O) - \Phi_{[L-m,L-j-1]}^{\dagger} \circ \bigcirc_{b=L-j}^{L} (T^{(>k)} \circ \Phi_{b}^{\dagger})(O) \right) \rho_0\right) \right| \right] \\
 \nonumber
&\le \sum_{j=0}^{m-1} \mathbb{E}_{\Phi}\left[ \left| \Tr\!\left(\left( \Phi_{[L-m,L-j]}^{\dagger} \circ \bigcirc_{b=L-j+1}^{L} (T^{(>k)} \circ \Phi_{b}^{\dagger})(O) - \Phi_{[L-m,L-j-1]}^{\dagger} \circ \bigcirc_{b=L-j}^{L} (T^{(>k)} \circ \Phi_{b}^{\dagger})(O) \right) \rho_0\right) \right| \right] 
\\
 \nonumber
&= \sum_{j=0}^{m-1} \mathbb{E}_{\Phi}\left[ \left| \Tr\!\left(\left( \Phi_{L-j}^{\dagger} \circ \bigcirc_{b=L-j+1}^{L} (T^{(>k)} \circ \Phi_{b}^{\dagger})(O) - (T^{(>k)} \circ \Phi_{L-j}^{\dagger}) \bigcirc_{b=L-j+1}^{L} (T^{(>k)} \circ \Phi_{b}^{\dagger})(O) \right) \Phi_{[L-m,L-j-1]}(\rho_0)\right) \right| \right] \\
 \nonumber
&= \sum_{j=0}^{m-1} \mathbb{E}_{\Phi}\left[ \left| \Tr\!\left(\left( \Phi_{L-j}^{\dagger} (O_{\Phi}^{(k,L-j+1)}) - (T^{(>k)} \circ \Phi_{L-j}^{\dagger})(O_{\Phi}^{(k,L-j+1)}) \right) \Phi_{[L-m,L-j-1]}(\rho_0)\right) \right| \right] \\
&\le \sum_{j=0}^{m-1} \left( \mathbb{E}_{\Phi}\left[ \left| \Tr\!\left(\left( \Phi_{L-j}^{\dagger} (O_{\Phi}^{(k,L-j+1)}) - (T^{(>k)} \circ \Phi_{L-j}^{\dagger})(O_{\Phi}^{(k,L-j+1)}) \right) \Phi_{[L-m,L-j-1]}(\rho_0)\right) \right|^2 \right] \right)^{1/2},
 \nonumber
\end{align}
where in the last step we have used  Jensen's inequality.

We now focus on each term in the sum. Since we are dealing with second moment quantities for each unitary layer, we can simplify the expression by adding independent single-qubit Clifford gates layers before and after each layer. Moreover, by Lemma~\ref{le:trunmapprop} (point 2), the truncation map \( T^{(>k)} \) commutes with any single-qubit Clifford layer. Thus, we the distribution associated to the state \(\Phi_{[L-m,L-j-1]}(\rho_0)\) is invariant under single-qubit Clifford rotations. Applying Lemma~\ref{le:robert}, we obtain
\begin{align}
     \mathbb{E}_{\Phi}\left[ \left| \Tr\!\left(\left( \Phi_{L-j}^{\dagger} (O_{\Phi}^{(k,L-j+1)}) - (T^{(>k)} \circ \Phi_{L-j}^{\dagger})(O_{\Phi}^{(k,L-j+1)}) \right) \Phi_{[L-m,L-j-1]}(\rho_0)\right) \right|^2 \right] \le \left(\frac{2}{3}\right)^k \frac{1}{2^n} \mathbb{E}_{\Phi} \left\| O_{\Phi}^{(k,L-j+1)} \right\|_2^2.
\end{align}
Here, the expected value is taken only over the layers that appear in \(O_{\Phi}^{(k,L-j+1)}\). We recall the definition of this term. For each \(j \in \{1, \ldots, m-1\}\), we have
\begin{align}
    O_{\Phi}^{(k,L-j+1)} &= (T^{(>k)} \circ \Phi_{L-j}^{\dagger}) \circ \cdots \circ (T^{(>k)} \circ \Phi_L^{\dagger})(O) \\
    \nonumber
    &= T^{(>k)} \circ \Phi_{L-j}^{\dagger}(O_{\Phi}^{(k,L-j+2)}),
\end{align}
where \(O_{\Phi}^{(k,L+1)} \coloneqq O\). Thus, we have
\begin{align}
    \mathbb{E}_{\Phi} \left\| O_{\Phi}^{(k,L-j+1)} \right\|_2^2 &= \mathbb{E}_{\Phi} \left\| T^{(>k)} \circ \Phi_{L-j}^{\dagger}(O_{\Phi}^{(k,L-j+1)}) \right\|_2^2 \\
    \nonumber
    &\le \mathbb{E}_{\Phi} \left\| \Phi_{L-j}^{\dagger}(O_{\Phi}^{(k,L-j-1)}) \right\|_2^2 \\
     \nonumber
    &\le \mathbb{E}_{\Phi} \left\| O_{\Phi}^{(k,L-j-2)} \right\|_2^2 \\
     \nonumber
    &\le \left\| O \right\|_2^2,
\end{align}
where in the second step we have used  that the truncation map always contracts the Frobenius norm (Lemma~\ref{le:trunmapprop}), in the third step we have used  the invariance of the circuit layer distribution under single-qubit Clifford layers and used Lemma~\ref{le:avg-inv-2norm}, and in the last step we applied this argument for every layer.
Thus, we obtain
\begin{align}
     \mathbb{E}_{\Phi}\left[ \left| \Tr\!\left(\left( \Phi_{L-j}^{\dagger} (O_{\Phi}^{(k,L-j+1)}) - (T^{(>k)} \circ \Phi_{L-j}^{\dagger})(O_{\Phi}^{(k,L-j+1)}) \right) \Phi_{[L-m,L-j-1]}(\rho_0)\right) \right|^2 \right] &\le \left(\frac{2}{3}\right)^k \frac{1}{2^n} \left\| O \right\|_2^2.
\end{align}
Therefore, we get
\begin{align}
\mathbb{E}_{\Phi}\left[ \left| \Tr(O \Phi_{[L-m]}(\rho_0)) - \Tr(O_{\Phi}^{(k,m)} \sigma_0) \right| \right] &\le \sum_{j=0}^{m-1} \left(\frac{2}{3}\right)^{k/2} \frac{1}{2^{n/2}} \left\| O \right\|_2 \\
 \nonumber
&\le m \left(\frac{2}{3}\right)^{k/2} \left\| O \right\|_\infty,
 \nonumber
\end{align}
where in the last step we have used  the fact that \(\|A\|_2 \le \sqrt{\text{rank}(A)} \|A\|_\infty\).
Thus, we can conclude the proof.
\end{proof}

\begin{corollary}[Error scaling]
\label{cor:errscal}
Let \(\varepsilon_A, \varepsilon_B > 0\) be accuracy parameters and let \(\delta \in (0,1)\) be the failure probability.
Consider an initial state \(\rho_0\), an observable \(O\), and a noisy quantum circuit architecture \(\Phi \coloneqq \Phi_L \circ \cdots \circ \Phi_1\), where \(\{\Phi_j\}_{j=1}^L\) represents the sequence of noisy circuit layers. Here, \(\alpha > 0\) is a quantity that depends only on the noise parameters (see Proposition~\ref{prop:expdecayO} for details).
Define \(O_{\Phi}^{(k,m)} \coloneqq (\Phi_{[L-m,L]})_{k\mathrm{-trunc}}^{\dagger}(O)\), which represents the Heisenberg-evolved observable \(O\) with the last \(m\) noisy circuit layers \(k\)-weight truncated (layer-by-layer), as described in Definition~\ref{def:truncwd}.
Assume 
\begin{equation}
    m \coloneqq \left\lceil \alpha^{-1} \log\left(\frac{4\norm{O}_{\infty}}{\delta \varepsilon_A}\right) \right\rceil ,\,\, k \coloneqq \left\lceil \frac{2}{\log(2/3)} \log\left(\frac{m \norm{O}_\infty}{\delta \varepsilon_B}\right) \right\rceil,
    \end{equation}where \(m\) and \(k\) satisfy \(L \ge m\) and \(n \ge k\) (i.e., the truncated depth is less than the entire circuit depth and the truncated weight is less than the number of qubits). Let $\sigma_0$ be any preferred initial state. Given a randomly chosen instance of the circuit \(\Phi\) where the two-qubit gates in each layer are distributed according to a \(2\)-design distribution, we can guarantee that the quantity \( \Tr(O_{\Phi}^{(k,m)} \sigma_0)\) is close to the expectation value \(\Tr(O\Phi(\sigma_0))\) by \(\varepsilon_A + \varepsilon_B\), i.e.,
\begin{equation}
\left| \Tr(O\Phi(\rho_0)) - \Tr(O_{\Phi}^{(k,m)} \sigma_0) \right| \le \varepsilon_A + \varepsilon_B,
\end{equation}
with probability at least \(1-\delta\) over the random choice of \(\Phi\).
\end{corollary}

\begin{proof}
By the previous theorem, with \(k\) and \(m\) chosen as defined, we have
\begin{equation}
\mathbb{E}_{\Phi}\left[ \left| \Tr(O\Phi(\rho_0)) - \Tr(O_{\Phi}^{(k,m)} \sigma_0) \right| \right] \le (\varepsilon_A + \varepsilon_B)\delta.
\end{equation}
Applying Markov's inequality, we obtain
\begin{equation}
\mathrm{Prob}\left( \left| \Tr(O\Phi(\rho_0)) - \Tr(O_{\Phi}^{(k,m)} \sigma_0) \right| > \varepsilon_A + \varepsilon_B \right) \le \frac{(\varepsilon_A + \varepsilon_B)\delta}{\varepsilon_A + \varepsilon_B} = \delta.
\end{equation}
Therefore, with probability at least \(1 - \delta\), the inequality 
\begin{equation}
\left| \Tr(O\Phi(\rho_0)) - \Tr(O_{\Phi}^{(k,m)} \sigma_0) \right| \le \varepsilon_A + \varepsilon_B
\end{equation}
holds. This completes the proof.
\end{proof}

\begin{remark}
If we set the accuracy \(\varepsilon_A = \mathrm{poly}(n^{-1})\), \(\varepsilon_B = \mathrm{poly}(n^{-1})\) and the failure probability \(\delta = \mathrm{poly}(n^{-1})\), then the parameters \(m\) and \(k\) grow logarithmically with the number of qubits \(n\), i.e., \(m = O(\log(n))\) and \(k = O(\log(n))\).
\end{remark}
\begin{theorem}[Classical Estimation of Expectation Values]
Consider a quantum circuit \(\Phi\) sampled uniformly at random with respect to a fixed architecture, which may include any type of local (non-unitary) noise and any number of layers. Given a Pauli operator \(P \in \{I, X, Y, Z\}^{\otimes n}\) with Pauli weight \(|P| = O(1)\), and an initial state \(\rho_0\), we aim to estimate the expectation value \(\Tr(P \Phi(\rho_0))\) with accuracy \(\varepsilon\), where \(\varepsilon\) is an inverse-polynomial in \(n\), specifically \(\varepsilon = \mathrm{poly}(n^{-1})\).

There exists a classical algorithm that accomplishes this estimation with a success probability of at least \(1 - \delta\) over the choice of the random circuit \(\Phi\), where \(\delta > 0\) is a constant or an inverse-polynomial in \(n\). The runtime of this algorithm is:
\begin{itemize}
    \item \(n^{O((D-1)\log\!\log(n))}\mathrm{poly}(n)\) for \(D\)-dimensional geometrically local architectures.
    \item \(n^{O(\log(n))}\) for all-to-all connected architectures.
\end{itemize}
\end{theorem}

\begin{proof}
To apply Corollary~\ref{cor:errscal}, we need to determine the time complexity of computing the estimator \(\Tr(O_{\Phi}^{(k,m)} \sigma_0)\), where we choose \(\sigma_0\) to be the zero state \(\ketbra{0^n}{0^n}\). Specifically, we first want to find the time complexity of evaluating \(O_{\Phi}^{(k,m)}\), which is the Pauli \(P\) (with \(|P| = O(1)\)) Heisenberg-evolved with the last \(m\) layers of the circuit and \(k\)-truncated,
\begin{align}
    P_{\Phi}^{(k,m)} = (\Phi_{[L-m,L]})_{k\text{-trunc}}^{\dagger}(P) = (T^{(>k)} \circ \Phi_{L-m}^{\dagger}) \circ \cdots \circ (T^{(>k)} \circ \Phi_L^{\dagger})(P) = T^{(>k)} \circ \Phi_{L-m}^{\dagger}(P_{\Phi}^{(k,L-m+1)}).
\end{align}

First, note that such Heisenberg-evolved local Pauli is supported on at most \(n_{\mathrm{eff}} \coloneqq |P|m^D\) qubits due to light-cone arguments, where \(D\) is the dimension of the circuit (for all-to-all connectivity, this would be \(n_{\mathrm{eff}} \coloneqq |P|2^m\)). The operator \(P_{\Phi}^{(k,m)}\) is supported only on Pauli operators in the light cone with Pauli weight at most \(k\) (this is true not only for \(P_{\Phi}^{(k,m)}\) but also for \(P_{\Phi}^{(k,1)}, \dots, P_{\Phi}^{(k,m-1)}\)). Let \(\mathcal{P}^{(\le k)}_{\mathrm{light}}\) denote this set of Pauli operators. 
The number of such Pauli operators is upper bounded by
\begin{align}
    |\mathcal{P}^{(\le k)}_{\mathrm{light}}| &= \sum_{w=0}^k \binom{n_{\mathrm{eff}}}{w} 3^w \le \sum_{w=0}^k \frac{n_{\mathrm{eff}}^w}{w!} 3^w \le n_{\mathrm{eff}}^k \sum_{w=0}^k \frac{3^w}{w!} \le n_{\mathrm{eff}}^k \sum_{w=0}^\infty \frac{3^w}{w!} = e^3 n_{\mathrm{eff}}^k,
\end{align}
where \(e\) is the base of the natural logarithm.

The operator \(T^{(>k)} \circ \Phi_{L}^{\dagger}(P) = T^{(>k)}(\Phi_{L}^{\dagger}(P))\) is supported in the light cone and, in particular, on Pauli operators with weight at most \(k\). Thus, it can be expressed as
\begin{align}
   P_{\Phi}^{(k,1)} = T^{(>k)} (\Phi_{L}^{\dagger}(P)) = \frac{1}{d} \sum_{Q \in \{I, X, Y, Z\}^{\otimes n_{\mathrm{eff}}}} \Tr(Q \Phi_{L}^{\dagger}(P)) T^{(>k)}(Q) = \frac{1}{d} \sum_{Q \in \mathcal{P}^{(\le k)}_{\mathrm{light}}} \Tr(Q \Phi_{L}^{\dagger}(P)) Q.
\end{align}
Since \(\Phi_L\) is a circuit layer where each qubit is acted upon by at most one gate in \(\Phi_L\), the term \(\Tr(Q \Phi_{L}^{\dagger}(P))\) can be computed in \(O(n)\) time, leveraging the tensor product structure of the Pauli matrices and the locality of the noise/gates. Therefore, computing all such coefficients \(\Tr(Q \Phi_{L}^{\dagger}(P))\) takes \(O(n |\mathcal{P}^{(\le k)}_{\mathrm{light}}|)\) time.
We then evaluate
\begin{align}
    P_{\Phi}^{(k,2)} = \frac{1}{d} \sum_{Q \in \mathcal{P}^{(\le k)}_{\mathrm{light}}} \Tr(Q \Phi_{L}^{\dagger}(P)) T^{(>k)} (\Phi_{L-1}^{\dagger}(Q)).
\end{align}
For each term \(T^{(>k)} (\Phi_{L-1}^{\dagger}(Q))\), we repeat the previous process. Once computed, these terms can be rearranged in the Pauli basis of \(\mathcal{P}^{(\le k)}_{\mathrm{light}}\) to find coefficients \(c_Q\) such that:
\begin{align}
    P_{\Phi}^{(k,2)} = \sum_{Q \in \mathcal{P}^{(\le k)}_{\mathrm{light}}} c_Q Q.
\end{align}
This procedure takes \(O(n |\mathcal{P}^{(\le k)}_{\mathrm{light}}|^2)\) time.

Repeating this procedure for all \(m\) layers results in an overall time complexity of \(O(n m |\mathcal{P}^{(\le k)}_{\mathrm{light}}|^2)\).
Now we can estimate for each of this Pauli the expectation value with the state \(\ketbra{0^n}{0^n}\), which takes additional $ |\mathcal{P}^{(\le k)}_{\mathrm{light}}|$ time complexity.

For \(m = O(\log(n))\) and \(k = O(\log(n))\) (which is the interesting regime since the desired accuracy is inverse-polynomially small), we have
\begin{align}
    |\mathcal{P}^{(\le k)}_{\mathrm{light}}| \le O(n_{\mathrm{eff}}^k) = O(m^{D k}) = O(\log(n)^{D \log(n)}) = n^{O(D \log(\log(n)))}.
\end{align}
The \(D = 1\) case can be analyzed separately, yielding \( |\mathcal{P}^{(\le k)}_{\mathrm{light}}| = \mathrm{poly}(n)\).
For all-to-all connectivity, we have
\begin{align}
    |\mathcal{P}^{(\le k)}_{\mathrm{light}}| = O(n_{\mathrm{eff}}^k) = O(2^{m k}) \le n^{O(\log n)}.
\end{align}
The time complexity is dominated by $ |\mathcal{P}^{(\le k)}_{\mathrm{light}}|$, so we can conclude the proof.
\end{proof}

\newpage
\section{Quantum machine learning under non-unital noise: Barren plateaus}
In this section, we rigorously show that non-unital noise induces absence of barren plateaus for local cost functions, in contrast to the unital scenario~\cite{nibp}. Specifically, in Subsection~\ref{sub:absenceBP} we establish that the gates in the last $\Theta(\log(n))$ layers are trainable, whereas those preceding them are not.
This complements the results presented in the previous section by rigorously showing the significance of the last $\Theta(\log(n))$ layers.
Moreover, we establish that global cost functions exhibit barren plateaus.
In Subsection~\ref{sub:unital}, we also present an improved upper bound on the onset of barren plateaus in the unital noise scenario compared to the one shown in Ref.~\cite{nibp}.

The results we show in this section are in stark contrast with the behavior of quantum circuits in the noiseless regime or with local depolarizing noise~\cite{napp2022quantifying,alex2021random}, as summarized in Table~\ref{tab:sensi}. 

\begin{table}[h]
    \centering
    \caption{\textbf{Trainability w.r.t.~the last $g(n)$-layers }}
    \begin{tabular}{lccc}  \hline
 \textbf{Noise model}
    & $ g(n)=\omega(\log(n))$ & $ \,\,\Theta(\log(n))$ \,\,& $\Theta(1)$
    \\ \hline
    Noiseless~\cite{McClean_2018,napp2022quantifying} & \textcolor{black}{\xmark}& \textcolor{black}{\xmark}&  \textcolor{black}{\xmark}\\
        Unital noise~\cite{nibp}  & \textcolor{black}{\xmark}& \textcolor{black}{\xmark}&  \textcolor{black}{\xmark}\\
        
        Non-unital noise~[This work] 
        & \textcolor{black}{\xmark} & \textcolor{teal}{\cmark} &  \textcolor{teal}{\cmark} \\ \hline
    \end{tabular}
    \label{tab:sensi}
    \begin{flushleft}
        Table~\ref{tab:sensi} shows that the last $\Theta(\log(n))$ layers of a non-unital noise circuit are the only trainable layers. This behavior is notably absent in the unital and noiseless noise regime for circuits with depth $\omega(\log(n))$: in these cases the gates in all the layers are not trainable.
    \end{flushleft}
\end{table}

\subsection{Preliminaries on barren plateaus}
In this section, we introduce concepts that will be crucial for our discussion.
We use an analogous circuit model described in subsection~\ref{sub:circuitmodel}, namely we consider $n$-qubit quantum circuits $\Phi$ of the form
\begin{align}
    \Phi=(\mathcal{V}_{L}^{\,\mathrm{single}}\circ\mathcal{N}^{\otimes n} \circ \mathcal{U}_L) \circ \cdots   \circ (\mathcal{V}_{1}^{\,\mathrm{single}}\circ\mathcal{N}^{\otimes n} \circ \mathcal{U}_1)  ,
    \label{eq:randcirc23}
\end{align}
where $L$ represents the number of layers, also referred to as circuit depth, $\{\mathcal{V}_k^{\mathrm{single}}\}^L_{k=1}$ are layers of single-qubit gates distributed according a single-qubit $2$-design, $\mathcal{U}_{i} \coloneqq U_i(\cdot)U^{\dag}_i$ corresponds to the $n$-qubit unitary channel associated with the unitary layer $U_i$ for $i\in [L]$ which is formed by two-qubits gates, and $\mathcal{N}$ is a single-qubit quantum channel. 
Recall that we assume that the two-qubit gates in the circuit are distributed according to a two-qubit $2$-design (see Definition~\ref{def:localdesignlayer}). 
For example, our model encompasses the brickwork architecture in Fig.~\ref{Fig:circ}. Remember that, because of the unitary invariance of the two-qubit $2$-design layers, one can add `for free' layers of single-qubit Haar random gates, since we will be considering only up to second moment quantities. Thus, in the above equation, the layer of single-qubit gates (apart from the last one, $\mathcal{V}_{L}^{\,\mathrm{single}}$) can be removed.  

We now assume that the circuit is also also dependent on variational parameters $\boldsymbol{\theta}  \coloneqq  (\theta_1, \dots, \theta_m) \in \mathbb{R}^{m}$, which parameterize some of the two-qubit gates, which come from the set $\{\exp(-i \theta_{\mu}H_\mu)\}^m_{\mu=1}$, where $H_\mu$ are two-local Hermitian operators with $\norm{H_\mu}_{\infty}\le 1$. Specifically, we assume for simplicity that these parameterized gates are positioned at the start of the unitary layer $\mathcal{U}_i$ for $i\in [L]$. It is important to note that while we introduce these parameterized gates, they do not impact our model due to left-right invariance of the two-qubit $2$-design layers we consider, and so they can be considered part of one of the unitary layers $\mathcal{U}_i$ for $i\in [L]$; their introduction is primarily to facilitate the discussion on partial derivatives and barren plateaus.

In quantum machine learning jargon, the term \textit{cost function} is usually referred to as an expectation value of an Hermitian operator over a `parameterized' quantum state.
\begin{definition}[Cost function]
Let \(H\) be an Hermitian operator. Let $\rho_0$ be a quantum state and $\Phi$ be a noisy quantum circuit as defined previously. We define the cost function \(C\!\left(\boldsymbol{\theta}\right)\) associated with \(H\) and $\Phi(\rho_0)$ as
\begin{align}
    C\!\left(\boldsymbol{\theta}\right) \coloneqq  \Tr(H\Phi(\rho_0)).
\end{align}
\end{definition}
We will often omit the $\boldsymbol{\theta}$-dependence and write simply $C$ instead of $C\!\left(\boldsymbol{\theta}\right)$. 
As usual, when we write expected values or variances, it will always be with respect to the distribution from which we sample the gates that compose our quantum circuit.

Next, we introduce the notion of lack of barren plateaus.
\begin{definition}[Lack of barren plateaus]
\label{def:BP}
We say a cost function \(C\) lacks barren plateaus if and only if
\begin{align}
    \Ex\left[\|\nabla_{\boldsymbol{\theta}}C\|^2_2\right]=\Omega\left(\frac{1}{\mathrm{poly}(n)}\right),
\end{align}
where $\nabla_{\boldsymbol{\theta}}C  \coloneqq (\frac{\partial C }{\partial \theta_1} ,\cdots, \frac{\partial C}{\partial \theta_m} )$ is the gradient of the cost function.
\end{definition}
Hence, we assert that a cost function has barren plateaus if and only if the variance of the $2$-norm of the gradient is at least super-polynomially small.
We define now the notion of \emph{trainability} of a parametrized gate, which is useful to identify which gate in the circuit influences significantly the cost function (on average).
\begin{definition}[Trainability of the cost function with respect to a parameter]
    We say that a cost function $C$ is trainable with respect to the parameter $\theta_\mu$ if and only if
    \begin{align}
        \Var\!\left[\partial_\mu C\right] =\Omega\left(\frac{1}{\mathrm{poly}(n)}\right),
    \end{align}
    where we have denoted $\partial_\mu C \coloneqq  \frac{\partial C}{\partial \theta_\mu} $.
\end{definition}
We point out that partial derivatives of expectation values are not only important for the consideration of barren plateaus, but also to understand which gate in the circuit has significant influence on the expectation value.

\subsubsection{Review of previous results}
In the noiseless scenario, initial observations by McClean \textit{et al.}~\cite{McClean_2018} pointed out that if the parameter distribution underlying the parametrized quantum circuit forms a global $2$-design with respect to the Haar measure of $n$-qubit unitaries, then any associated cost function exhibits barren plateaus.
Furthermore, when modeling a noiseless parametrized quantum circuit (often referred to as an \emph{ansatz}) as a `local random quantum circuit,' composed of geometrically local two-qubit gates, where each gate is distributed according to the Haar measure, barren plateaus start to manifest at $O(n)$ depth. This is because studies by Brandão \textit{et al.}~\cite{Brand_o_2016} have demonstrated that at linear $O(n)$ depth in one-dimensional architectures, the distribution over such circuits becomes `approximately' a $2$-design.
Similar results have been extended to higher-dimensional quantum circuit architectures. Specifically, it has been shown by Harrow \textit{et al.}~\cite{Harrow_2023} that the `approximate' $2$-design property emerges at $O(n^{1/\mathrm{D}})$, where $\mathrm{D}$ represents the dimension of the lattice of the circuit.
Discussions concerning the relationship between barren plateaus and approximate notions of $2$-design can be found in Ref.~\cite{Holmes_2022}.
The influence of the locality of observables on the onset of barren plateaus has 
been explored in Ref.~\cite{Cerezo_2021}. 

In one-dimensional architectures, it has been observed that while cost functions associated with $O(1)$-local observables do not exhibit barren plateaus at logarithmic depth, cost functions associated with global observables manifest barren plateaus even at constant depth.
Furthermore, these results have been generalized in 
Ref.~\cite{napp2022quantifying}, where it has been noted that the gradient of the cost function decays exponentially with respect to the circuit depth and the Hamiltonian locality. These findings were established under the assumption that the $2$-qubit gates composing the circuits are distributed according to a unitary $2$-design.
Methods for avoiding or mitigating barren plateaus in noiseless scenarios have been proposed, primarily relying on specific heuristic-based initialization strategies~\cite{ Grant_2019,Sack_2022,Mele_2022,PhysRevResearch.5.L032040,rudolph2023synergy, Jain_2022, Holmes_2022,shi2024avoiding}, as well as by constraining the expressibility of the ansatz~\cite{schatzki2022theoretical, zhang2020trainability, Volkoff_2021, Pesah_2021,liu2022mitigating, Meyer_2023, park2023hamiltonian,park2024hardwareefficient,Zhang_2024}. This constraint can be achieved, for instance, through the utilization of symmetries~\cite{Larocca_2022,Meyer_2023}---from an intuitive perspective, these strategies aim to limit the expressiveness of the ansatz, rendering it less akin to a global $2$-design with respect to the Haar measure over the full $n$-qubit unitary group. Furthermore, it has been argued/conjectured that if one can prove absence of barren plateaus, then one should also be able to classically simulate the ansatz class~\cite{cerezo2023does}, either with purely classical resources or after an initial data acquisition phase, which
may require a quantum computer.

In the context of noisy scenarios, an important observation has been pointed out 
in Ref.~\cite{nibp}, revealing that both expectation values and gradients experience exponential decay in the circuit depth. Consequently, at linear depth, the expectation values and gradients of cost functions decay exponentially with respect to the number of qubits. This phenomenon has been dubbed `noise-induced barren plateaus'.
The results hold even without using randomness of the gates, i.e., for any fixed circuit.
Significantly, this latter study assumed the presence of a local depolarizing noise model, which is unital in nature. Strikingly, even when employing error mitigation strategies, it appears challenging to effectively counteract the emergence of noise-induced barren plateaus, as argued in Ref.~\cite{wang2021error}.

\textcolor{black}{A concurrent and independent work~\cite{singkanipa2024unitalnoisevariationalquantum} has studied the impact of noise beyond unital on the barren plateaus phenomenon. However, this study is restricted to the so-called HS-contractive noise, whereas, to the best of our knowledge, our work is the first to rigorously address a general kind of possibly non-unital local noise.}

As we will prove, the last \( O(\log(n)) \) layers of the circuit are trainable, meaning they do not suffer from (sub-)exponentially vanishing partial derivatives. This results in the norm of the gradient of the cost function being sufficiently large. However, we will also show that the partial derivatives taken before these logarithmic many layers in the circuit are negligible, which is why we refer to this as an `effective log-depth circuit'.


\subsection{Gradients: useful lemmas}
\label{gradients1}

We now give a formula to compute directly the partial derivative, which can be useful to handle calculations. However, one might also take it as an equivalent definition of partial derivative. 

\begin{lemma}[Partial derivative]
\label{le:deriv}
Let $\mu \in [m]$. Consider a parameterized $2$-qubit gate $ \exp(-i \theta_{\mu}H_\mu)$, positioned at the start of unitary layer $\mathcal{U}_{k}$, where $k\in[L]$ is the index of the layer where the gate is positioned in the circuit. We have
\begin{align}
\partial_\mu C= i\Tr\!\left(\Phi_{[1,k-1]}(\rho_0) \left[H_\mu ,\Phi^{*}_{[k,L]}(H)\right]\right),
\label{eq:der}
\end{align}
where we have denoted
\begin{align}
\Phi_{[a,b]} & \coloneqq  (\mathcal{V}_b^{\mathrm{single}} \circ \mathcal{N}^{\otimes n} \circ \mathcal{U}_b) \circ \cdots \circ (\mathcal{V}_a^{\mathrm{single}} \circ\mathcal{N}^{\otimes n} \circ \mathcal{U}_{a}), 
\end{align}
for $a\le b \in [L]$.
\end{lemma}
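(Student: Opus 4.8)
\textbf{Proof plan for Lemma~\ref{le:deriv}.}

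The plan is to compute the partial derivative directly from the definition $C(\boldsymbol{\theta}) = \Tr(H\Phi(\rho_0))$ by isolating the single gate $\exp(-i\theta_\mu H_\mu)$ and differentiating the resulting expression. First I would split the circuit channel $\Phi$ into the part acting before the gate, the gate itself, and the part acting after the gate. Since the parameterized gate $\exp(-i\theta_\mu H_\mu)$ sits at the start of the unitary layer $\mathcal{U}_k$, we may write $\Phi = \Phi_{[k,L]}' \circ \mathcal{G}_\mu \circ \Phi_{[1,k-1]}$, where $\mathcal{G}_\mu(\cdot) = e^{-i\theta_\mu H_\mu}(\cdot)e^{i\theta_\mu H_\mu}$ is the channel implementing the single parameterized gate and $\Phi_{[k,L]}'$ denotes the part of layers $k$ through $L$ with that gate stripped off (the remaining gates of layer $k$, composed with the noise layer, the single-qubit layer, and all subsequent layers). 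Writing $\rho_{k-1} \coloneqq \Phi_{[1,k-1]}(\rho_0)$, we then have $C = \Tr\!\big(H\,\Phi_{[k,L]}'\big(e^{-i\theta_\mu H_\mu}\rho_{k-1}e^{i\theta_\mu H_\mu}\big)\big)$.

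Next I would pass to the Heisenberg picture for the trailing part: using $\Tr(H\,\Phi_{[k,L]}'(A)) = \Tr(\Phi_{[k,L]}'^{*}(H)\,A)$, we get $C = \Tr\!\big(\Phi_{[k,L]}'^{*}(H)\, e^{-i\theta_\mu H_\mu}\rho_{k-1}e^{i\theta_\mu H_\mu}\big)$. Now differentiate with respect to $\theta_\mu$ using $\frac{d}{d\theta_\mu} e^{-i\theta_\mu H_\mu} = -iH_\mu e^{-i\theta_\mu H_\mu}$ and the product rule, obtaining two terms whose combination is $\Tr\!\big(\Phi_{[k,L]}'^{*}(H)\,(-i)[H_\mu, e^{-i\theta_\mu H_\mu}\rho_{k-1}e^{i\theta_\mu H_\mu}]\big)$. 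Using cyclicity of the trace to move the commutator onto $\Phi_{[k,L]}'^{*}(H)$ (that is, $\Tr(A[B,\rho]) = \Tr([A,B]\rho) $ up to sign bookkeeping, here $\Tr(A[H_\mu,\rho]) = -\Tr([H_\mu,A]\rho)$), one arrives at $\partial_\mu C = i\,\Tr\!\big([H_\mu,\Phi_{[k,L]}'^{*}(H)]\, e^{-i\theta_\mu H_\mu}\rho_{k-1}e^{i\theta_\mu H_\mu}\big)$. The final cosmetic step is to absorb the conjugation by $e^{-i\theta_\mu H_\mu}$ back: since $\Phi_{[k,L]} = \Phi_{[k,L]}' \circ \mathcal{G}_\mu$, we have $\Phi_{[k,L]}^{*}(H) = \mathcal{G}_\mu^{*}(\Phi_{[k,L]}'^{*}(H)) = e^{i\theta_\mu H_\mu}\Phi_{[k,L]}'^{*}(H)e^{-i\theta_\mu H_\mu}$, and likewise $[H_\mu,\cdot]$ commutes through the conjugation because $H_\mu$ commutes with $e^{-i\theta_\mu H_\mu}$; combining these identifies the expression with $i\,\Tr\!\big(\Phi_{[1,k-1]}(\rho_0)\,[H_\mu,\Phi^{*}_{[k,L]}(H)]\big)$, which is the claimed formula.

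I expect no serious obstacle here — the result is the standard parameter-shift-style identity, and the only care needed is bookkeeping: getting the sign of the commutator correct under trace cyclicity, and being careful that the conjugation by $e^{\pm i\theta_\mu H_\mu}$ (which does not simply cancel because the gate is \emph{inside} $\Phi_{[k,L]}$, not outside it) is handled consistently when rewriting $\Phi_{[k,L]}'^{*}$ in terms of $\Phi_{[k,L]}^{*}$. One should also check the edge cases $k=1$ (where $\Phi_{[1,0]}$ is interpreted as the identity channel, so $\rho_{k-1} = \rho_0$) and $k=L$. All of these are routine.
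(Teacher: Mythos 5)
Your proposal is correct and follows essentially the same route as the paper's proof: split the circuit at the parameterized gate, pass the trailing layers to the Heisenberg picture, differentiate the conjugation by $e^{-i\theta_\mu H_\mu}$ to produce the commutator, and use trace cyclicity for the sign bookkeeping. If anything, your version is slightly more careful than the paper's about where the gate sits (distinguishing $\Phi'_{[k,L]}$ from $\Phi_{[k,L]}$ and landing exactly on $\Phi_{[1,k-1]}(\rho_0)$ as in the lemma statement), so no changes are needed.
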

\begin{proof}
We can write the cost function as
\begin{align}
C= \Tr\!\left(\Phi(\rho_0) H\right)=\Tr\!\left(\Phi_{[1,k]}\circ \Phi_{[k,L]}(\rho_0) H\right)=\Tr\!\left(\Phi_{[k,L]}(\rho_0) \Phi^*_{[1,k]}(H)\right).
\end{align}
By taking the partial derivative with respect to the parameter $\theta_\mu$, we 
have
\begin{align}
\partial_\mu C& =\Tr\!\left(\Phi_{[1,k]}(\rho_0) \partial_\mu(\Phi^*_{[k,L]}(H))\right)\\
\nonumber
&=i\Tr\!\left(\Phi_{[1,k]}(\rho_0) H_\mu \Phi^*_{[k,L]}(H)\right)-i\Tr\!\left(\Phi_{[1,k]}(\rho_0)  \Phi^*_{[k,L]}(H) H_\mu \right)\\
\nonumber
&= i\Tr\!\left(\Phi_{[1,k]}(\rho_0) \left[H_\mu ,\Phi^*_{[k,L]}(H)\right]\right),
\nonumber
\end{align}
where we have used the fact that $\partial_\mu \exp(-i\theta_\mu H_\mu)=-i H_\mu \exp(-i\theta_\mu H_\mu)$.
\end{proof}
From now on, when using the above formula for the partial derivatives, since we are considering second moment quantities, we will ignore the parametrized gates in the circuit, since they can be absorbed in the two-qubit $2$-design layers. 
We now show that the expected value of the partial derivative with respect any parameter is zero.

\begin{lemma}
\label{le:zeroDer}
The expected value of the partial derivative of the cost function is $0$ with respect any parameter, i.e.,
\begin{align}
    \Ex[\partial_\mu C]=0.
    \label{eq:meanGrad}
\end{align}
\end{lemma}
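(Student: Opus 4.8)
\textbf{Proof plan for Lemma~\ref{le:zeroDer}.}

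The plan is to use the partial derivative formula from Lemma~\ref{le:deriv} together with the fact that the circuit contains a layer of single-qubit $2$-design (in particular, $1$-design) gates $\mathcal{V}_k^{\mathrm{single}}$ immediately after the unitary layer $\mathcal{U}_k$ in which the parameterized gate $\exp(-i\theta_\mu H_\mu)$ sits. Concretely, writing $k$ for the index of the layer containing the gate labeled by $\mu$, Lemma~\ref{le:deriv} gives
\begin{align}
\partial_\mu C = i\,\Tr\!\left(\Phi_{[1,k-1]}(\rho_0)\,\bigl[H_\mu,\Phi^{*}_{[k,L]}(H)\bigr]\right).
\nonumber
\end{align}
The key observation is that $\Phi^{*}_{[k,L]} = \mathcal{U}_k^{*}\circ(\mathcal{V}_k^{\mathrm{single}})^{*}\circ\mathcal{N}^{*\otimes n}\circ\cdots$, so the operator $\Phi^{*}_{[k,L]}(H)$ (as it appears inside the commutator, with the parameterized gate absorbed into $\mathcal{U}_k$ as remarked after Lemma~\ref{le:deriv}) has the form $U_k^{\dagger}\,(V_k)^{\dagger}\,\widetilde{O}\,V_k\,U_k$ for some operator $\widetilde{O}$ that depends only on the layers $k+1,\dots,L$ and is therefore independent of $V_k$ and $U_k$.

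First I would isolate the expectation over the single-qubit layer $\mathcal{V}_k^{\mathrm{single}}$, i.e. over $V_k = \bigotimes_{i=1}^{n} u_i$ with each $u_i$ drawn from a single-qubit $1$-design, conditioning on all other gates. By Lemma~\ref{le:qubitonedesign}, a layer of single-qubit $1$-design gates is a global $1$-design, so
\begin{align}
\Ex_{V_k}\!\left[(\mathcal{V}_k^{\mathrm{single}})^{*}(\widetilde{O})\right] = \frac{\Tr(\widetilde{O})}{2^n}\,I_n.
\nonumber
\end{align}
Hence, after taking this expectation, $\Ex_{V_k}[\Phi^{*}_{[k,L]}(H)]$ is proportional to $\mathcal{U}_k^{*}(I_n) = I_n$, which commutes with $H_\mu$. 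Therefore $\Ex_{V_k}\bigl[[H_\mu,\Phi^{*}_{[k,L]}(H)]\bigr] = 0$. Plugging this back, and using linearity of expectation together with the fact that $\Phi_{[1,k-1]}(\rho_0)$ does not depend on the gates in layers $\geq k$, I would conclude $\Ex[\partial_\mu C] = \Ex_{\text{rest}}\bigl[i\Tr(\Phi_{[1,k-1]}(\rho_0)\cdot 0)\bigr] = 0$ by the tower property of conditional expectation.

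I do not expect a serious obstacle here; the only point requiring a little care is the bookkeeping needed to justify that the parameterized gate can indeed be absorbed into the layer $\mathcal{U}_k$ (so that the $1$-design layer $\mathcal{V}_k^{\mathrm{single}}$ sits strictly between the parameterized gate and the rest of the circuit on the Heisenberg side), and that there is genuinely a single-qubit $1$-design layer available at position $k$ — this is guaranteed by the circuit model of Eq.~\eqref{eq:randcirc23}, where every unitary layer is followed by $\mathcal{V}_k^{\mathrm{single}}$, and, as noted in the preliminaries, extra single-qubit $1$-design layers can be introduced "for free" by invariance of the $2$-design. One should also handle the boundary case $k=L$ (where $\Phi^{*}_{[k,L]} = (\mathcal{V}_L^{\mathrm{single}})^{*}\circ\mathcal{N}^{*\otimes n}\circ\mathcal{U}_L$, which is completely analogous) and note that the case where the light cone of the gate is trivial gives $\partial_\mu C = 0$ deterministically. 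Alternatively, one can bypass the commutator formula entirely: write $C = \Tr(H\Phi(\rho_0))$ directly, note that the final single-qubit layer $\mathcal{V}_L^{\mathrm{single}}$ is a $1$-design, so $\Ex[\Phi(\rho_0)] \propto I_n$ regardless of $\boldsymbol{\theta}$, hence $\Ex[C]$ is a constant function of $\boldsymbol{\theta}$ and $\Ex[\partial_\mu C] = \partial_\mu \Ex[C] = 0$ after justifying the interchange of derivative and expectation (valid since $C$ is a trigonometric polynomial in $\theta_\mu$ with bounded coefficients). I would present the first argument as the main proof since it localizes the randomness to the relevant layer.
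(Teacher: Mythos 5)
Your proposal is correct and follows essentially the same route as the paper: apply the commutator formula of Lemma~\ref{le:deriv}, average over a single-qubit $1$-design layer inside $\Phi^{*}_{[k,L]}$ so that the averaged operator becomes proportional to $I_n$ (using unitality of the adjoint channels), and conclude that the commutator with $H_\mu$ vanishes. The only cosmetic difference is that the paper averages over the final single-qubit layer $\mathcal{V}_L^{\mathrm{single}}$ whereas you average over $\mathcal{V}_k^{\mathrm{single}}$; both work for the same reason, and your minor slip in the ordering of $(\mathcal{V}_k^{\mathrm{single}})^{*}$ and $\mathcal{N}^{*\otimes n}$ is harmless since $\mathcal{N}^{*\otimes n}(I_n)=I_n$.
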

\begin{proof}
Due to left and right invariance, the $2$-qubit parameterized unitaries can be absorbed in the $2$-design unitaries.
Moreover, by Eq.~\eqref{eq:der}, we have
\begin{align}
\partial_\mu C= i\Tr\!\left(\Phi_{[1,k]}(\rho_0) \left[H_\mu ,\Phi^{*}_{[k,L]}(H)\right]\right).
\end{align}
Since $\Phi^{*}_{[k,L]}(H)$ ends with a layer of single-qubits $2$-design gates and these form a $1$-design (Lemma~\ref{le:qubitonedesign}) by taking the expected value only over that layer, we have
\begin{align}
    \Ex\left[\partial_\mu C\right]&=i\Tr\!\left(\Phi_{[k,L]}(\rho_0) \left[H_\mu ,\Ex\left(\Phi^{*}_{[k,L]}(H)\right)\right]\right)
    \\
    \nonumber
    &=i\Tr\!\left(\Phi_{[k,L]}(\rho_0) \left[H_\mu ,\Tr\!\left(\Phi^{*}_{[k,L]}(H)\right)\frac{I_n}{2^n}\right]\right)\\&=0,
    \nonumber
\end{align}
where we have used the first moment formula (Eq.~\eqref{eq:1momHaar}) and the fact that any operator commutes with the identity.
\end{proof}
The previous Lemma implies that $\Var[\partial_\mu C]=\Ex[(\partial_\mu C)^2]$, so we care only about the latter quantity from now on. We now present a lemma, similar in spirit to Lemma~\ref{le:mixham}, which will be useful to deal with upper and lower bounds of partial derivatives.

\begin{lemma}[Pauli mixing to Gradients]
\label{le:mixhamPARTIAL}
Let $H_\mu$ be a $2$-local Hamiltonian.
Let $f(\cdot) \coloneqq i\Tr\!\left(\Phi_{[1,k]}(\rho_0) \left[H_\mu ,\Phi^{*}_{[k,L]}(\cdot)\right]\right)$ be an operator function. Then we have
\begin{itemize}
    \item Let  $H \coloneqq \sum_{P\in \{I,X,Y,Z\}^{\otimes n}} a_P P$, with $a_P\in \mathbb{R}$ for any $P\in \{I,X,Y,Z\}^{\otimes n}$. We have
    \begin{align}
    \Ex[(f(H))^2]= \sum_{P\in \{I,X,Y,Z\}^{\otimes n}} a^2_P \Ex[(f(P))^2].
\end{align}

\item Moreover, for any $P\in \{I,X,Y,Z\}^{\otimes n}$, we have 
\begin{align}
    \Ex[(f(P))^2]=\frac{1}{3^{|P|}}\sum_{\substack{Q \in \{I,X,Y,Z\}^{\otimes n}:\\ \supp(Q)=\supp(P) }} \Ex[(f(Q))^2].
\end{align}
\end{itemize}

\end{lemma}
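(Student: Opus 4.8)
\textbf{Proof plan for Lemma~\ref{le:mixhamPARTIAL}.}
The plan is to mirror the proof of Lemma~\ref{le:mixham} essentially verbatim, since the only structural feature we used there was that the innermost operation applied to the observable is a layer of single-qubit $2$-design gates. Here the role of that layer is played by $\mathcal{V}_L^{\mathrm{single}}$, the final single-qubit layer inside $\Phi^{*}_{[k,L]}$. Concretely, write $\Phi^{*}_{[k,L]}(\cdot) = \mathcal{G}^{*}\circ \mathcal{V}_L^{\mathrm{single}\,*}(\cdot)$, where $\mathcal{G}^{*}$ collects the adjoint of everything preceding the last single-qubit layer (in the Heisenberg picture this is everything \emph{after} it), and $\mathcal{V}_L^{\mathrm{single}}$ is a tensor product $\bigotimes_{i=1}^n u_i$ of single-qubit gates drawn from a $2$-design. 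The key observation is that $f$ is linear in its argument, so $f(H)^2 = f(H)^{\otimes 2}$ applied diagonally, and the whole expression $\Ex[f(P)f(Q)]$ becomes an expectation of a quantity that is quadratic in the pair $(u_i)_{i}$ through $\mathcal{V}_L^{\mathrm{single}\,*\otimes 2}(P\otimes Q)$.

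First I would carry out the expansion $\Ex[(f(H))^2] = \sum_{P,Q} a_P a_Q \Ex[f(P)f(Q)]$ and isolate the average over the final single-qubit layer $\mathcal{V}_L^{\mathrm{single}}$ (using that the remaining randomness in $\mathcal{G}$ and in $\Phi_{[1,k]}$ is independent of it, so we may take the $\mathcal{V}_L$-average first, conditionally). The quantity $\Ex_{\mathcal{V}_L}[\mathcal{V}_L^{\mathrm{single}\,*\otimes 2}(P\otimes Q)] = \Ex_{U\sim\nu}[U^{\dagger\otimes 2}(P\otimes Q)U^{\otimes 2}]$ with $U = \bigotimes_i u_i$ is governed by the single-qubit Pauli-mixing property, exactly as in the proof of Lemma~\ref{le:mixham}: it vanishes unless $P=Q$, giving the first claim. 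For the second claim, specializing to $P=Q$ and using Eq.~\eqref{eq:paulimixing_2} on each tensor factor in $\supp(P)$ replaces $P\otimes P$ by $\bigotimes_{j\in\supp(P)}\frac{1}{3}\sum_{Q_j\in\{X,Y,Z\}}Q_j\otimes Q_j$, which upon pushing the sum back out is $\frac{1}{3^{|P|}}\sum_{Q:\,\supp(Q)=\supp(P)} Q\otimes Q$; since $f$ is linear this yields $\Ex[(f(P))^2] = \frac{1}{3^{|P|}}\sum_{Q:\,\supp(Q)=\supp(P)}\Ex[(f(Q))^2]$ (the cross terms $\Ex[f(Q)f(Q')]$ with $Q\neq Q'$ but equal support vanish by the same single-qubit mixing argument applied one more time, or are simply never generated because the tensor-factor sums only produce diagonal $Q_j\otimes Q_j$ terms).

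The one point requiring a little care — and the main (mild) obstacle — is justifying that we may take the expectation over $\mathcal{V}_L^{\mathrm{single}}$ first and in isolation: $f$ also contains $\Phi_{[1,k]}(\rho_0)$ and the factor $H_\mu$, and $\Phi^{*}_{[k,L]}$ contains further $2$-design layers beyond $\mathcal{V}_L^{\mathrm{single}}$. Independence of the gate ensembles across layers makes this routine: condition on all gates except those in $\mathcal{V}_L^{\mathrm{single}}$, apply the single-qubit second-moment identity to get a Kronecker $\delta_{P,Q}$ (resp.\ the support-sum identity) pointwise, and then take the outer expectation. This is the same maneuver as in Lemma~\ref{le:mixham}, and no new estimate is needed. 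I would also remark, as the lemma statement's preamble already notes, that the parametrized gates $\exp(-i\theta_\mu H_\mu)$ can be absorbed into the $2$-design layers and hence play no role here. Assembling these steps gives both displayed identities.
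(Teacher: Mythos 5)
Your proposal is correct and follows essentially the same route as the paper's proof: expand $\Ex[(f(H))^2]$ as a double sum over Pauli pairs, condition on all gates except the final single-qubit layer $\mathcal{V}_L^{\mathrm{single}}$ inside $\Phi^{*}_{[k,L]}$, and apply the single-qubit Pauli-mixing identities to obtain the Kronecker delta for the first claim and the $\tfrac{1}{3^{|P|}}\sum_{Q:\,\supp(Q)=\supp(P)}$ resolution for the second, exactly as in Lemma~\ref{le:mixham}. The point you flag as needing care (pulling the $\mathcal{V}_L$-average through the commutators and the other independent randomness) is handled the same way in the paper, by linearity and independence of the gate ensembles.
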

\begin{proof}
We have 
\begin{align}
\label{eq:backw}
(f(H))^2 & = \left( i\Tr\!\left(\Phi_{[1,k]}(\rho_0) \left[H_\mu ,\Phi^{*}_{[k,L]}(H)\right]\right) \right)^2 \\
\nonumber
& = \left( \sum_{P\in \{I,X,Y,Z\}^{\otimes n}} a_P \Tr\!\left(\Phi_{[1,k]}(\rho_0) \left[\Phi^{*}_{[k,L]}(P),H_\mu \right]\right) \right)^2 \\
\nonumber
& = \sum_{P,Q\in \{I,X,Y,Z\}^{\otimes n}} a_P  a_Q \Tr\!\left(\Phi_{[1,k]}(\rho_0) \left[\Phi^{*}_{[k,L]}(P),H_\mu \right]\right) \Tr\!\left(\Phi_{[1,k]}(\rho_0) \left[\Phi^{*}_{[k,L]}(Q),H_\mu \right]\right)\\
\nonumber
& = \sum_{P,Q\in \{I,X,Y,Z\}^{\otimes n}} a_P  a_Q \Tr\!\left(\Phi_{[1,k]}(\rho_0)^{\otimes 2} \left(\left[\Phi^{*}_{[k,L]}(P),H_\mu \right]\otimes \left[\Phi^{*}_{[k,L]}(Q),H_\mu \right]\right)\right).
\nonumber
\end{align}
We now consider the expected value of this quantity with respect to the final unitary layer in $\Phi^{*}_{[k,L]}$ (specifically, the layer that acts directly on $P$ and $Q$). Such expected value reduces to computing the expected value
\begin{align}
    \Ex\left(\left[\Phi^{*}_{[k,L]}(P),H_\mu \right]\otimes \left[\Phi^{*}_{[k,L]}(Q),H_\mu \right]\right).
\end{align}
By expanding the two commutators, we have
\begin{align}
\begin{aligned}
&\left[\Phi^{*}_{[k,L]}(P),H_\mu \right] \otimes \left[\Phi^{*}_{[k,L]}(Q),H_\mu \right] \\
&= (\Phi^{*}_{[k,L]}(P) \otimes \Phi^{*}_{[k,L]}(Q))(H_\mu\otimes H_\mu) - ( I_n \otimes H_\mu)(\Phi^{*}_{[k,L]}(P) \otimes  \Phi^{*}_{[k,L]}(Q)) ( H_\mu \otimes I_n) \\
&\quad -  ( H_\mu  \otimes I_n) (\Phi^{*}_{[k,L]}(P) \otimes  \Phi^{*}_{[k,L]}(Q)) ( I_n \otimes H_\mu) + (H_\mu\otimes H_\mu)(\Phi^{*}_{[k,L]}(P) \otimes \Phi^{*}_{[k,L]}(Q)) .
\end{aligned}
\label{eq:substGRAD}
\end{align}
Consequently, our attention can be directed solely towards the  expression
\begin{align}
\Ex(\Phi^{*}_{[k,L]}(P)\otimes \Phi^{*}_{[k,L]}(Q) ) &= \Ex\left(\Phi^{*\otimes 2}_{[k,L]}\left(\Ex_{\mathcal{V}}(\mathcal{V}^{\mathrm{single}}(P)\otimes \mathcal{V}^{\mathrm{single}}(Q) )\right)\right)\\
\nonumber
&= \delta_{P,Q}\Ex\left(\Phi^{*\otimes 2}_{[k,L]}\left(\Ex_{\mathcal{V}}(\mathcal{V}^{\mathrm{single}}(P)\otimes \mathcal{V}^{\mathrm{single}}(P) )\right)\right)\\
\nonumber
&= \delta_{P,Q}\frac{1}{3^{|P|}}\sum_{\substack{R \in \{I,X,Y,Z\}^{\otimes n}:\\ \supp(R)=\supp(P) }}\Ex\left(\Phi^{*\otimes 2}_{[k,L]}\left(R\otimes R)\right)\right)\\
\nonumber
&= \delta_{P,Q}\frac{1}{3^{|P|}}\sum_{\substack{R \in \{I,X,Y,Z\}^{\otimes n}:\\ \supp(R)=\supp(P) }}\Ex\left(\Phi^{*}_{[k,L]}(R)\otimes \Phi^{*}_{[k,L]}(R)\right)\nonumber,
\end{align}
where in the first step we singled out `for free' a layer of Haar random gates from $\Phi^{*}_{[k,L]}$, in the second step we applied the Pauli mixing formula Eq.~\eqref{eq:paulimixing} for each of the single qubits gates (similarly as done in Lemma~\ref{le:mixham}). Therefore, by substituting in Eq.~\eqref{eq:substGRAD} and repeating the steps backwards, we can conclude. 
\end{proof}
Because of the previous lemma, computing the variance of a cost function partial derivative defined with respect an Hermitian operator reduces to computing the variance of a cost function partial derivative defined with respect a Pauli operator. Consequently, we have the following corollary.
\begin{corollary}[Partial derivative variance of an Hamiltonian]
\label{le:varHAM}
    Let $\mu \in [m]$ be the index of the parameter $\theta_\mu$ which parametrize a gate in the $k$-th layer. Let  $H \coloneqq \sum_{P\in \{I,X,Y,Z\}^{\otimes n}} a_P P$, with $a_P\in \mathbb{R}$ for any $P\in \{I,X,Y,Z\}^{\otimes n}$. We have
\begin{align}\Var[\partial_\mu C]= \sum_{P\in \{I,X,Y,Z\}^{\otimes n}} a^2_P \Var[\partial_\mu C_P], 
\end{align}
where $C_P\coloneqq \Tr(P\Phi(\rho_0))$ with $\Phi$ and $\rho_0$ are respectively the noisy quantum circuit and the initial state. 
\end{corollary}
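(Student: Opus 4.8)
\textbf{Proof plan for Corollary~\ref{le:varHAM}.}

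The plan is to combine Lemma~\ref{le:mixhamPARTIAL} with Lemma~\ref{le:zeroDer} in essentially the same way the variance of a Hamiltonian expectation value (Theorem~\ref{th:variance}) was reduced to the Pauli case. First I would recall from Lemma~\ref{le:zeroDer} that $\Ex[\partial_\mu C]=0$, which holds for the full Hamiltonian $H$ as well as for each Pauli $P$ individually (since the argument only uses that $\Phi^*_{[k,L]}$ ends with a layer of single-qubit $2$-design gates, independent of what operator it acts on). Hence $\Var[\partial_\mu C]=\Ex[(\partial_\mu C)^2]$ and likewise $\Var[\partial_\mu C_P]=\Ex[(\partial_\mu C_P)^2]$, so it suffices to prove the identity for the second moments.

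Next I would introduce the operator function $f(\cdot)\coloneqq i\Tr\!\left(\Phi_{[1,k]}(\rho_0)\left[H_\mu,\Phi^{*}_{[k,L]}(\cdot)\right]\right)$ exactly as in Lemma~\ref{le:mixhamPARTIAL}, and observe that by Lemma~\ref{le:deriv} (the partial-derivative formula) we have $\partial_\mu C=f(H)$ and $\partial_\mu C_P=f(P)$, where the parametrized gate $\exp(-i\theta_\mu H_\mu)$ sits at the start of layer $\mathcal{U}_k$. Then the first bullet point of Lemma~\ref{le:mixhamPARTIAL} gives directly
\begin{align}
\Ex[(\partial_\mu C)^2]=\Ex[(f(H))^2]=\sum_{P\in\{I,X,Y,Z\}^{\otimes n}} a_P^2\,\Ex[(f(P))^2]=\sum_{P\in\{I,X,Y,Z\}^{\otimes n}} a_P^2\,\Ex[(\partial_\mu C_P)^2].
\end{align}
Finally, using $\Var[\partial_\mu C]=\Ex[(\partial_\mu C)^2]$ and $\Var[\partial_\mu C_P]=\Ex[(\partial_\mu C_P)^2]$ established in the first step, one concludes $\Var[\partial_\mu C]=\sum_P a_P^2\,\Var[\partial_\mu C_P]$, which is the claim.

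There is essentially no hard step here: the corollary is a bookkeeping consequence of two lemmas already proved. The only point worth being careful about is the absorption of the parametrized two-qubit gates into the surrounding $2$-design layers — since we are dealing with second moments, the gates $\exp(-i\theta_\mu H_\mu)$ can be merged into the adjacent $2$-local $2$-design unitary layers without changing any expectation, so the distribution over which $\Ex$ and $\Var$ are taken is unaffected by where exactly the parametrized gate sits within layer $k$. I would state this explicitly (as is done just after Lemma~\ref{le:deriv}) to make the reduction to Lemma~\ref{le:mixhamPARTIAL} clean, and then the proof is a two-line invocation of the cross-term vanishing in the Pauli-mixing expansion.
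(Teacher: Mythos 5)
Your proposal is correct and follows exactly the paper's route: the paper's own proof is the one-line remark that the corollary "follows immediately from Lemma~\ref{le:zeroDer} and Lemma~\ref{le:mixhamPARTIAL}", which is precisely the combination you spell out (zero mean so variance equals second moment, then the cross-term-free Pauli decomposition of $\Ex[(f(H))^2]$). Your extra remark about absorbing the parametrized gate into the adjacent $2$-design layer matches the paper's own caveat stated just after Lemma~\ref{le:deriv}.
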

\begin{proof}
    This follows immediately from Lemma~\ref{le:zeroDer} and Lemma~\ref{le:mixhamPARTIAL}.
\end{proof}
We now show a worst-case upper bound on the $\alpha$-th order partial derivative that will be useful later. 
\begin{lemma}[$\alpha$-th order Partial derivative upper bound]
\label{le:derupp}
The $\alpha$-th order partial derivative with respect to the parameter $\theta_\mu$ is upper bounded by
\begin{align}
   \left|\partial^{\alpha}_\mu C\right|\le 2^{\alpha} \norm{H}_{\infty}\norm{H_\mu }^{\alpha}_{\infty}
    \label{eq:meanGrad}
\end{align}
\end{lemma}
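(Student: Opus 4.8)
The plan is to prove Lemma~\ref{le:derupp} by induction on $\alpha$, differentiating the cost function repeatedly with respect to $\theta_\mu$ and using the operator-norm submultiplicativity together with the fact that conjugation by a unitary and the action of the adjoint of a channel are both $\|\cdot\|_\infty$-contractions (the latter is the inequality $\norm{\mathcal{N}^{*}(O)}_\infty \le \norm{O}_\infty$ quoted in Section~\ref{quantumchannels}).

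First I would write the cost function in the form used in the proof of Lemma~\ref{le:deriv}, namely $C = \Tr\!\left(\Phi_{[k,L]}(\rho_0)\,\Phi^*_{[1,k]}(H)\right)$, but now it is cleaner to isolate the single parametrized gate $W_\mu \coloneqq \exp(-i\theta_\mu H_\mu)$ explicitly: there exist channels $\mathcal{A}$ (everything after the gate, in the Schr\"odinger picture, acting on $\rho_0$ up to the gate) and $\mathcal{B}$ (everything before, in the Heisenberg picture acting on $H$) such that $C = \Tr\!\left(W_\mu \mathcal{A}(\rho_0) W_\mu^\dagger\,\mathcal{B}^*(H)\right)$, where $\mathcal{A}$ and $\mathcal{B}^*$ do not depend on $\theta_\mu$. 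Then $\partial_\mu^\alpha C = \Tr\!\left(\partial_\mu^\alpha\!\big(W_\mu \mathcal{A}(\rho_0) W_\mu^\dagger\big)\,\mathcal{B}^*(H)\right)$. Using $\partial_\mu W_\mu = -iH_\mu W_\mu$ and the Leibniz rule, $\partial_\mu^\alpha\!\big(W_\mu X W_\mu^\dagger\big) = (-i)^\alpha \sum_{j=0}^\alpha \binom{\alpha}{j}(-1)^{\alpha-j} H_\mu^{\,j}\, W_\mu X W_\mu^\dagger\, H_\mu^{\,\alpha-j}$, i.e.\ an iterated commutator $[H_\mu,[\cdots[H_\mu, W_\mu X W_\mu^\dagger]\cdots]]$ up to an $i^\alpha$ phase, which is the $\alpha$-fold adjoint action $(\mathrm{ad}_{H_\mu})^\alpha$ applied to $W_\mu X W_\mu^\dagger$.

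Next I would bound the trace by H\"older's inequality, $|\partial_\mu^\alpha C| \le \norm{\partial_\mu^\alpha\!\big(W_\mu \mathcal{A}(\rho_0)W_\mu^\dagger\big)}_1\,\norm{\mathcal{B}^*(H)}_\infty$. For the second factor, $\mathcal{B}^*$ is the adjoint of a composition of quantum channels, hence itself the adjoint of a channel, so $\norm{\mathcal{B}^*(H)}_\infty \le \norm{H}_\infty$. For the first factor I use the explicit $2^\alpha$-term sum above: each term has trace norm $\norm{H_\mu^{\,j} W_\mu \mathcal{A}(\rho_0) W_\mu^\dagger H_\mu^{\,\alpha-j}}_1 \le \norm{H_\mu}_\infty^{\,\alpha}\,\norm{W_\mu \mathcal{A}(\rho_0) W_\mu^\dagger}_1 = \norm{H_\mu}_\infty^{\,\alpha}$, since $W_\mu \mathcal{A}(\rho_0)W_\mu^\dagger$ is a density matrix and the $\binom{\alpha}{j}$ coefficients sum to $2^\alpha$; thus $\norm{\partial_\mu^\alpha(W_\mu \mathcal{A}(\rho_0)W_\mu^\dagger)}_1 \le 2^\alpha \norm{H_\mu}_\infty^{\,\alpha}$. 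Combining gives $|\partial_\mu^\alpha C| \le 2^\alpha \norm{H}_\infty \norm{H_\mu}_\infty^{\,\alpha}$, as claimed. (The $\alpha=1$ case is exactly the commutator formula of Lemma~\ref{le:deriv} with the bound $|\Tr(\rho[H_\mu,A])| \le 2\norm{A}_\infty\norm{H_\mu}_\infty$.)

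The only mild subtlety — and the step I would double-check — is the bookkeeping that lets one factor out a fixed parametrized gate $W_\mu$ with channels on both sides that are $\theta_\mu$-independent, since in the circuit model the parametrized gate sits inside a larger unitary layer; this is handled exactly as in Lemma~\ref{le:deriv} by splitting $\Phi = \Phi_{[1,k]}\circ\Phi_{[k,L]}$ and placing $W_\mu$ at the boundary. Everything else is routine: the Leibniz expansion of $\partial_\mu^\alpha$, H\"older's inequality, and the two contraction facts ($\norm{\mathcal{N}^*(O)}_\infty \le \norm{O}_\infty$ for the Heisenberg side, trace-preservation and positivity for the Schr\"odinger side). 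No obstacle of substance is expected here; the lemma is a clean operator-norm estimate used only as a technical input downstream.
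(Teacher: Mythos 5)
Your proof is correct and is essentially the paper's argument in dual form: the paper pushes all $\alpha$ derivatives onto the Heisenberg-evolved observable and bounds $\norm{\partial_\mu^\alpha \Phi^*_{[k,L]}(H)}_\infty \le 2^\alpha\norm{H}_\infty\norm{H_\mu}_\infty^\alpha$ by induction on the iterated commutator, whereas you expand the same iterated commutator via Leibniz on the Schr\"odinger side and bound it in trace norm against a density matrix; the ingredients (commutator formula for $\partial_\mu$, H\"older, contraction of channels/adjoints in the respective norms) are identical and both yield the factor $2^\alpha$. The only blemish is the garbled description of $\mathcal{A}$ and $\mathcal{B}$ (the parenthetical for $\mathcal{A}$ says "after the gate" but means "before"), which the displayed formula makes unambiguous.
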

\begin{proof}
We have $C=\Tr(\Phi(\rho_0)H)=\Tr\!\left(\Phi_{[1,k]}(\rho_0) \Phi^*_{[k,L]}(H)\right)$. Thus
\begin{align}
    \partial^{\alpha}_\mu C=\Tr\!\left(\Phi_{[1,k]}(\rho_0) \,\partial^{\alpha}_\mu\Phi^*_{[k,L]}(H)\right).
\end{align}
Because of H\"older inequality, we have $|\partial^{\alpha}_\mu C|\le \norm{\partial^{\alpha}_\mu\Phi^*_{[k,L]}(H)}_{\infty}$, where we also used that the one-norm of a quantum state is one. We now prove by induction that
\begin{align}
    \norm{\partial^{\alpha}_\mu\Phi^*_{[k,L]}(H)}_{\infty}\le 2^{\alpha} \norm{H}_{\infty}\norm{H_\mu }^{\alpha}_{\infty}.
\end{align}
For $\alpha=1$, we have
\begin{align}
\partial_\mu\Phi^*_{[k,L]}(H)= i \left[H_\mu ,\Phi^{*}_{[k,L]}(H)\right],
\end{align}
where we have used the fact that $\partial_\mu \exp(-i\theta_\mu H_\mu)=-i H_\mu \exp(-i\theta_\mu H_\mu)$ as done in the proof of Lemma~\ref{le:deriv}.
Thus, 
\begin{align}
\norm{\partial_\mu\Phi^*_{[k,L]}(H)}_{\infty}\le 2 \norm{H_\mu}_{\infty}\norm{\Phi^{*}_{[k,L]}(H)}_{\infty}\le 2 \norm{H_\mu}_{\infty}\norm{H}_{\infty},
\end{align}
where we have used the triangle inequality, submultiplicativity of the $p$-norms and in the last step we have used the inequality $\norm{\Phi^{*}(O)}_\infty \le \norm{O}_\infty$ (see, e.g., Ref.~\cite{Bhatia2007PositiveDM}) valid for all operators $O$. This shows the base case.
For $\alpha>1$, we have
\begin{align}
\norm{\partial^\alpha_\mu\Phi^*_{[k,L]}(H)}_{\infty}
=  \norm{\partial^{\alpha-1}_\mu i \left[H_\mu ,\Phi^{*}_{[k,L]}(H)\right]}_{\infty}=  \norm{  \left[H_\mu ,\partial^{\alpha-1}_\mu \Phi^{*}_{[k,L]}(H)\right]}_{\infty} \le 2 \norm{H_\mu}_{\infty}\norm{\partial^{\alpha-1}_\mu \Phi^{*}_{[k,L]}(H)}_{\infty},
\end{align}
where in the last step we have used triangle inequality and submultiplicativity. We can conclude by the using the induction step.
\end{proof}
We introduce a precise definition of the standard notion of the light-cone of an observable with respect to a quantum channel (typically representing a quantum circuit).

\begin{definition}[Light cone]
Let $H$ be an Hermitian operator and $\Phi$ be a quantum channel. The light-cone of $H$ with respect to  $\Phi$ is defined as 
    \begin{align}
       \mathrm{Light}(\Phi,H) \coloneqq  \supp
        (\Phi^*(H)),
    \end{align}
    where $\supp(\cdot)$ is defined in our notation section.
\end{definition}

\begin{figure}[h]
\centering
\includegraphics[width=0.43\textwidth]{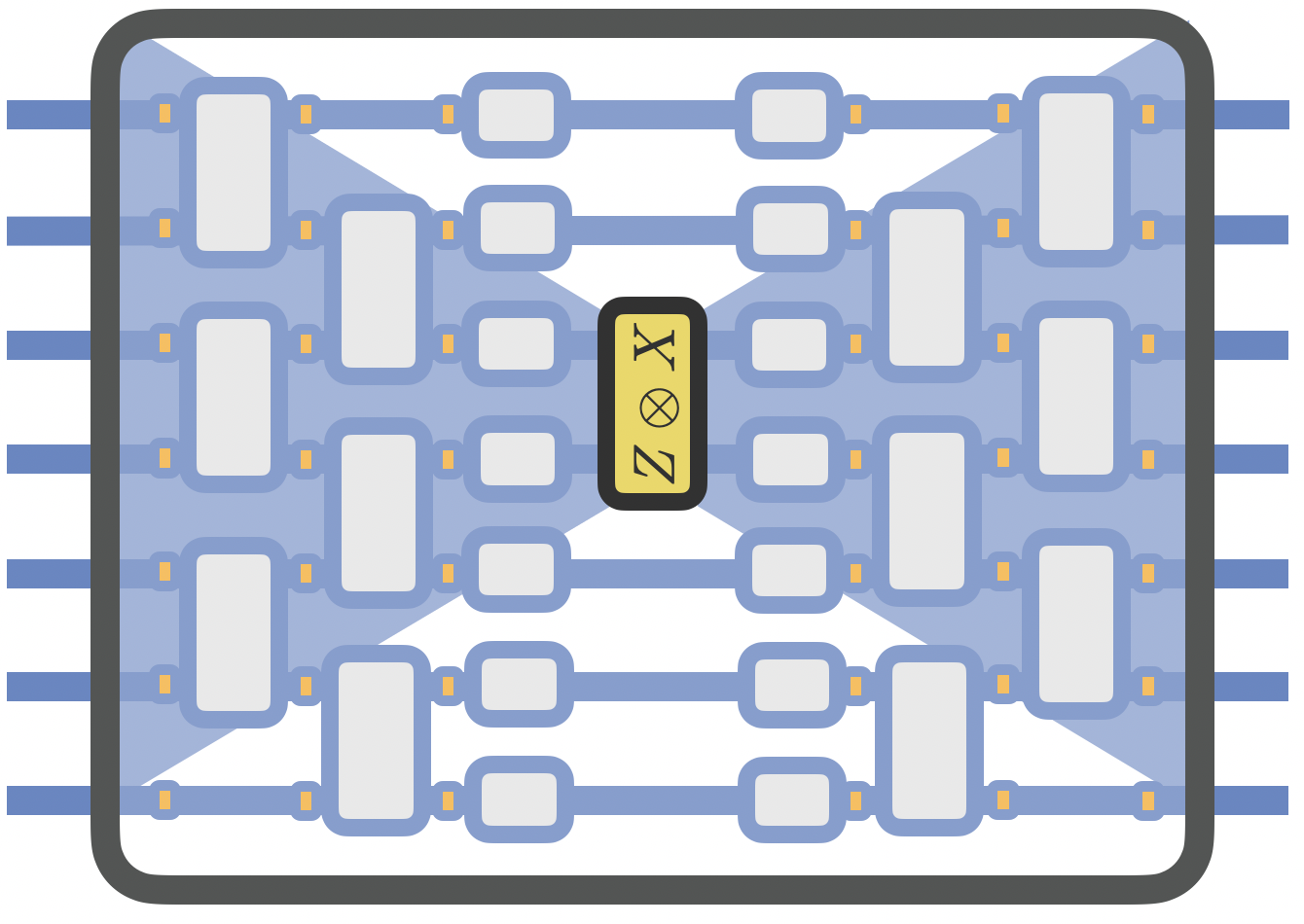}
\caption{The light-cone of a local Pauli observable with respect to  $\Phi_{[1,L]}$ is the set of qubits within the shaded area. The (noisy) gates outside the blue shaded area are contracted trivially due to the fact that any adjoint channel is unital, and thus cannot influence the expectation value of the Pauli.}
\label{Fig:circ33}
\end{figure}
In the subsequent subsection, we need to consider the light cone not with respect to only a specific quantum circuit, but with respect to a family of quantum circuits provided by the support of a considered random quantum circuits probability distribution, denoted as $\mathcal{F}$ (recall that the support of a random variable is defined as the set of all values for which the probability density function is strictly greater than zero). Formally, we define:
\begin{align}
       \mathrm{Light}_{\mathcal{F}}(H) \coloneqq \bigcup_{\Phi \in \mathcal{F}}\supp
        (\Phi^*(H)).
\end{align}
In particular, we consider the family $\mathcal{F}_k$ of quantum circuits corresponding to the support of the probability distribution associated with $\Phi_{[k,L]}$, where $k\in [L]$. To streamline the notation, we refer to $\mathrm{Light}_{\mathcal{F}_k}(H)$ as the `light-cone of $H$ with respect to $\Phi_{[k,L]}$'. 
In Fig.~\ref{Fig:circ33}, we provide a graphical example for a one-dimensional geometrical local quantum circuit.
We now give the following Lemma, which will be useful later on.
\begin{lemma}[Partial derivative is zero outside the light cone]
\label{le:outside}
Let $H$ be an Hermitian operator.
Let $\mu \in [m]$. Consider a parameterized $2$-qubit gate $ \exp(-i \theta_{\mu}H_\mu)$, positioned in the $k$-th layer, such that its Hamiltonian generator $H_{\mu}$ has support outside the light cone of $\Phi_{[k,L]}$ with respect to $H$. Then, the partial derivative is zero $\partial_\mu C=0.$.
\end{lemma}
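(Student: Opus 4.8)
The plan is to use the partial-derivative formula from Lemma~\ref{le:deriv}, namely
\[
\partial_\mu C = i\Tr\!\left(\Phi_{[1,k-1]}(\rho_0)\left[H_\mu,\Phi^{*}_{[k,L]}(H)\right]\right),
\]
and show that the commutator inside vanishes identically. First I would recall the definition of the light-cone of $H$ with respect to $\Phi_{[k,L]}$: it is the union $\mathrm{Light}_{\mathcal{F}_k}(H) = \bigcup_{\Psi\in\mathcal{F}_k}\supp(\Psi^{*}(H))$ over all circuits $\Psi$ in the support of the distribution of $\Phi_{[k,L]}$. By hypothesis, $\supp(H_\mu)$ is disjoint from this set, so in particular $\supp(H_\mu)\cap\supp(\Phi^{*}_{[k,L]}(H))=\emptyset$ for the specific realization $\Phi$ at hand (and indeed for every realization).

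The key step is then the elementary algebraic fact that two operators acting on disjoint sets of qubits commute. More precisely, if we write $H_\mu = H_\mu' \otimes I_{\bar S}$ where $S=\supp(H_\mu)$ and $\bar S=[n]\setminus S$, and $\Phi^{*}_{[k,L]}(H) = I_S \otimes G$ where $G$ is supported on $\bar S$ (using that $\supp(\Phi^{*}_{[k,L]}(H))\subseteq \bar S$), then $H_\mu \Phi^{*}_{[k,L]}(H) = H_\mu'\otimes G = \Phi^{*}_{[k,L]}(H)\, H_\mu$, so the commutator $[H_\mu,\Phi^{*}_{[k,L]}(H)]$ is the zero operator. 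Consequently $\partial_\mu C = i\Tr(\Phi_{[1,k-1]}(\rho_0)\cdot 0) = 0$, which is the claim.

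I would also remark on why this extends to the ``family'' version of the light cone, i.e., why it is the right object: the statement should hold for \emph{every} circuit in the support of the distribution, so that taking expectations or variances over the ensemble is harmless; since $\supp(H_\mu)$ is disjoint from $\bigcup_{\Psi}\supp(\Psi^{*}(H))$, the commutator vanishes for each such $\Psi$ individually, and the argument above applies verbatim to each. One small point worth spelling out is that a Pauli decomposition makes the disjoint-support commuting fact completely transparent: writing $\Phi^{*}_{[k,L]}(H) = \sum_R c_R R$ with each $R$ supported in $\bar S$, and $H_\mu = \sum_T d_T T$ with each $T$ supported in $S$, every pair $R,T$ acts on disjoint qubit sets hence commutes as a tensor product of single-qubit Paulis, so $[T,R]=0$ termwise.

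The main (and only) obstacle is purely one of bookkeeping: making sure the notion of support is used consistently — in particular that $\supp(\Phi^{*}_{[k,L]}(H))$ is exactly the set of qubits on which $\Phi^{*}_{[k,L]}(H)$ acts nontrivially, and that the hypothesis ``$H_\mu$ has support outside the light cone'' is interpreted as $\supp(H_\mu)\cap \mathrm{Light}_{\mathcal{F}_k}(H)=\emptyset$ rather than merely $\supp(H_\mu)\not\subseteq \mathrm{Light}_{\mathcal{F}_k}(H)$. With that clarified, the proof is a two-line computation and there is no analytic difficulty.
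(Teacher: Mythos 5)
Your proposal is correct and follows essentially the same route as the paper: apply the commutator formula for $\partial_\mu C$ from Lemma~\ref{le:deriv} and observe that $[H_\mu,\Phi^{*}_{[k,L]}(H)]=0$ because the supports are disjoint by the definition of the light cone. The extra detail you supply (the tensor-product/Pauli-decomposition justification of why disjointly supported operators commute, and the remark that the vanishing holds for every realization in the ensemble) is a harmless elaboration of what the paper leaves implicit.
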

\begin{proof}
In words, the partial derivative must be zero since the cost function does not depend effectively by the gates outside the light-cone (because they contract trivially).
However, formally this can be seen as follows. By using Eq.~\eqref{eq:der}, we have
     \begin{align}
\partial_\mu C= i\Tr\!\left(\Phi_{[1,k]}(\rho_0) \left[H_\mu ,\Phi^{*}_{[k,L]}(H)\right]\right).
\end{align}
Since $H_{\mu}$ has support outside the light cone of $\Phi_{[k,L]}$ with respect to $H$, by definition of light-cone it follows that $\left[H_\mu ,\Phi^{*}_{[k,L]}(H)\right]=0$.
\end{proof}

\subsection{Absence of barren plateaus, but only few layers are trainable}
\label{sub:absenceBP}
In this subsection, we will show that the gates in the last $\Theta(\log(n))$ layers (in the light cone of a local observable) are the only trainable gates of the circuits. This also implies that the expectation value of an observable can be significantly influenced only by such gates in the last layers.
We first state our main claims here, which will then be detailed further. Leveraging Corollary~\ref{le:varHAM}, we focus on cost functions associated with Pauli observables instead of general Hermitian operator, without loss of generality.
In Subsection~\ref{subsub:ubPAR}, we present the following upper bound assuming constant noise parameters:
\begin{theorem}[Layers before $\Theta(\log(n))$ layers are not trainable]
\label{th:ubvarianceINT}
Let $C\coloneqq \Tr(P\Phi(\rho_0))$ be the cost function, where $P\in \{I,X,Y,Z\}^{\otimes n}$, $\rho_0$ is an arbitrary initial state, and $\Phi$ is a noisy quantum circuit of depth $L$. We assume that the noise is not a unitary channel.
Let $\mu$ denote a parameter of a gate in the $k$-th layer of the circuit. Then, we have
\begin{align}
        \Var[\partial_\mu C]\le \exp(-\Omega(|P|+L-k)).
\end{align}
\end{theorem}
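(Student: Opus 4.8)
\textbf{Proof plan for Theorem~\ref{th:ubvarianceINT}.}

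By Corollary~\ref{le:varHAM} it suffices to treat the case where the observable is a single Pauli $P\in\{I,X,Y,Z\}^{\otimes n}$, so I will prove the bound for $C = \Tr(P\Phi(\rho_0))$. The plan is to combine the partial-derivative formula of Lemma~\ref{le:deriv} with the contraction machinery developed in Section~\ref{sub:effective}. First, since $\Ex[\partial_\mu C]=0$ (Lemma~\ref{le:zeroDer}), we have $\Var[\partial_\mu C]=\Ex[(\partial_\mu C)^2]$, and by Lemma~\ref{le:deriv}, $\partial_\mu C = i\Tr(\Phi_{[1,k]}(\rho_0)[H_\mu,\Phi^{*}_{[k,L]}(P)])$, where $H_\mu$ is a $2$-local generator with $\norm{H_\mu}_\infty\le 1$ supported on (at most) two qubits, say on the set $S_\mu$ with $|S_\mu|\le 2$.

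The first half of the argument controls the contribution of the \emph{last} $L-k$ layers. Working in the Heisenberg picture, I expand $[H_\mu,\Phi^{*}_{[k,L]}(P)]$ and note that the commutator is traceless and supported — after applying $\Phi^*_{[k,L]}$ — only inside the light cone; crucially the commutator vanishes unless $\Phi^{*}_{[k,L]}(P)$ has nontrivial support on $S_\mu$. Then I write $(\partial_\mu C)^2$ as a second moment over the gates, using the expansion in Eq.~\eqref{eq:backw}--\eqref{eq:substGRAD} from the proof of Lemma~\ref{le:mixhamPARTIAL}: this reduces $\Ex[(\partial_\mu C)^2]$ to a sum of terms each governed by $\Ex[\Tr(Q\,\Phi^{*}_{[k,L]}{}^{\!*}(\cdots))\cdots]$ type quantities, i.e. of second moments of Pauli expectation values on the sub-circuit from layer $k$ to $L$. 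Running the `peeling' recursion of Proposition~\ref{prop:expdecayP} (Theorem~\ref{th:theor1}) backwards from layer $L$ to layer $k$ — taking the adjoint of each noise layer, which contributes a factor $c$ per layer because the maximum over Paulis cannot be achieved by the identity (the relevant operators are traceless), and using that Clifford layers map Paulis to Paulis — yields a contraction factor $c^{\,L-k}$ (up to an $O(1)$ multiplicative constant absorbing the $\norm{H_\mu}_\infty\le 1$ bounds and the traceless-operator norm bounds via Hölder, $\norm{[H_\mu,\cdot]}_\infty \le 2\norm{H_\mu}_\infty\norm{\cdot}_\infty$ as in Lemma~\ref{le:derupp}). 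This gives $\Var[\partial_\mu C]\le O(c^{\,L-k})$ times a factor depending only on $|P|$.

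The second half extracts the extra $c^{|P|}$ factor. Here I use that the recursion in Proposition~\ref{prop:expdecayP} already produces, in its \emph{first} step, a factor $c^{|P|}$: removing the final single-qubit layer and the final noise layer acting on $P$ contributes $\tfrac{1}{3^{|P|}}(\norm{\bold D}_2^2+\norm{\bold t}_2^2)^{|P|}=c^{|P|}$ via the multinomial theorem (exactly as in the displayed computation in the proof sketch of Theorem~\ref{th:theor1}). Since the sub-circuit $\Phi_{[k,L]}$ still ends with a genuine noise layer followed by a single-qubit $2$-design layer (by the circuit structure in Eq.~\eqref{eq:randcirc23}), this first-step gain of $c^{|P|}$ applies here too, before the per-layer $c$ factors from the remaining $L-k-1$ layers kick in. Combining, $\Var[\partial_\mu C]\le \exp(-\Omega(|P|+L-k))$, where the implicit constant depends only on the noise parameters through $\log(c^{-1})$, with $c<1$ by Lemma~\ref{le:contractionnormal} since the noise is not unitary.

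The main obstacle I anticipate is the bookkeeping in the second-moment expansion of $(\partial_\mu C)^2$: unlike the clean $\Tr(P\Phi(\rho-\sigma))^2$ quantity of Theorem~\ref{th:theor1}, the commutator structure produces four cross terms (Eq.~\eqref{eq:substGRAD}) and one must verify that after averaging over the single-qubit layer \emph{immediately preceding} layer $k$ and applying Pauli mixing (Lemma~\ref{le:mixhamPARTIAL}), the resulting quantity is again a bona fide second moment of the form to which the backward recursion of Proposition~\ref{prop:expdecayP} applies — in particular that the relevant maximizing Pauli is never the identity, so each of the $L-k$ noise layers genuinely contributes its $c$ factor rather than $1$. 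A secondary technical point is ensuring the constant absorbing $\norm{H_\mu}_\infty$, $\norm{\rho_0}_1=1$, and the $|S_\mu|\le 2$ locality does not smuggle in hidden $n$-dependence; Hölder's inequality and $\norm{\mathcal{N}^*(O)}_\infty\le\norm{O}_\infty$ handle this cleanly.
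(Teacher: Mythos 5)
Your plan is correct and follows essentially the same route as the paper's proof of this bound (the paper's ``Proof method 2'' of Proposition~\ref{th:ubvariance}): zero mean via Lemma~\ref{le:zeroDer}, the commutator formula of Lemma~\ref{le:deriv}, Pauli mixing via Lemma~\ref{le:mixhamPARTIAL}, a first-step factor $c^{|P|}$ from the multinomial theorem, a backward peeling recursion contributing $c$ per layer, and a final H\"older bound $|f(Q)|\le 2\norm{H_\mu}_\infty$. The one point you flag as an obstacle resolves exactly as in the paper: the maximizing Pauli cannot be the identity not because of tracelessness but because the adjoint circuit is unital, so $\Phi^{*}(I)=I$ and the commutator $[H_\mu,I]$ vanishes, making that term zero.
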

This result immediately implies that the gates before the last $\Theta(\log(n))$ layers are not trainable. Moreover, it directly implies barren plateaus for global cost functions.
\begin{corollary}[Global cost function induced barren plateaus]
\label{th:ubvarianceINT}
Let $C$ be the cost function associated with a Pauli $P$ with $|P|=\Theta(n)$. Then, we have $\Var[\partial_\mu C]\le \exp(-\Omega(n))$.
\end{corollary}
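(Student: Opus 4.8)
The statement to prove is Corollary~\ref{th:ubvarianceINT}: for a cost function $C = \Tr(P\Phi(\rho_0))$ associated with a Pauli $P$ of weight $|P| = \Theta(n)$, the partial derivative variance satisfies $\Var[\partial_\mu C] \le \exp(-\Omega(n))$.

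The plan is to obtain this as an immediate consequence of Theorem~\ref{th:ubvarianceINT} (the general upper bound $\Var[\partial_\mu C]\le \exp(-\Omega(|P|+L-k))$). Since that theorem is stated just above, I am entitled to use it directly. The only thing to check is that substituting $|P| = \Theta(n)$ into the bound yields the claimed scaling. Concretely: by Theorem~\ref{th:ubvarianceINT} we have $\Var[\partial_\mu C] \le \exp(-\Omega(|P| + L - k))$ for any parameter $\mu$ sitting in the $k$-th layer; since $L \ge k$, the quantity $L-k$ is non-negative, so the exponent is at most $-\Omega(|P|)$, and therefore $\Var[\partial_\mu C] \le \exp(-\Omega(|P|))$. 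Plugging in $|P| = \Theta(n)$ gives $\Var[\partial_\mu C] \le \exp(-\Omega(n))$, which is the claim.

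A secondary, and essentially cosmetic, point is to connect this to the barren plateau terminology: since every partial derivative variance is exponentially small in $n$, and the gradient has $m = \mathrm{poly}(n)$ components, the expected squared gradient norm $\Ex[\|\nabla_{\boldsymbol\theta}C\|_2^2] = \sum_\mu \Var[\partial_\mu C] \le m \exp(-\Omega(n)) = \exp(-\Omega(n))$ is also exponentially small (using $\Ex[\partial_\mu C]=0$ from Lemma~\ref{le:zeroDer}, so that $\Ex[(\partial_\mu C)^2] = \Var[\partial_\mu C]$). Hence global cost functions exhibit barren plateaus in the sense of Definition~\ref{def:BP}.

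There is no real obstacle here — the content is entirely in Theorem~\ref{th:ubvarianceINT}, and this corollary is a one-line specialization. If anything, the only thing to be careful about is the handling of the $L-k$ term: one should note that the bound is monotone and remains valid (only stronger) as $L-k$ grows, so dropping it is harmless, and that the statement is meant to hold uniformly over the choice of parameter $\mu$, which it does since the bound in Theorem~\ref{th:ubvarianceINT} holds for every layer index $k \in [L]$.
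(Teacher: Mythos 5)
Your proof is correct and matches the paper's treatment exactly: the paper also obtains this corollary as an immediate specialization of the preceding partial-derivative upper bound (explicitly $\Var[\partial_\mu C]\le 4c^{|P|+L-k-1}$ with $c<1$), dropping the non-negative $L-k$ contribution and substituting $|P|=\Theta(n)$. The additional remark about summing over the $\mathrm{poly}(n)$ gradient components is consistent with how the paper uses the result, so nothing is missing.
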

In subsubsection~\ref{subsub:ubPAR}, we establish the following lower bound assuming constant noise parameters.
\begin{theorem}[Last $\Theta(\log(n))$ layers do matter]
\label{th:variance_PD}
Let $C\coloneqq \Tr(P\Phi(\rho_0))$ be the cost function, where $P\in \{I,X,Y,Z\}^{\otimes n}$, $\rho_0$ is an arbitrary state, and $\Phi$ is a non-unital noisy quantum circuit of depth $L$ . We assume noise is not a replacer channel. Let $\mu$ denote a parameter of a gate in the $k$-th layer of the circuit. Then, if the support of such a gate is contained in the light cone of $\Phi_{[k,L]}$ with respect to the Pauli $P$, we have
\begin{align}
       \Var[\partial_\mu C] \ge \exp  \!\left(- O(|P|(L-k))\right),
\end{align}  
otherwise, if the support of the parametrized gate is outside the light cone, we have $\Var[\partial_\mu C] =0$.
\end{theorem}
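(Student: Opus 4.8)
The plan is to prove Theorem~\ref{th:variance_PD} by reducing, as in Corollary~\ref{le:varHAM} and Lemma~\ref{le:mixhamPARTIAL}, to the case where the observable is a Pauli $P$ and then lower bounding $\Var[\partial_\mu C_P]=\Ex[(\partial_\mu C_P)^2]$ directly using the Heisenberg-picture formula from Lemma~\ref{le:deriv}, namely $\partial_\mu C_P = i\Tr(\Phi_{[1,k]}(\rho_0)[H_\mu,\Phi^*_{[k,L]}(P)])$. The case in which the gate's support lies outside the light cone of $\Phi_{[k,L]}$ with respect to $P$ is immediate from Lemma~\ref{le:outside}, giving $\Var[\partial_\mu C]=0$. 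So the work is entirely in the in-light-cone case.

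First I would isolate the last $L-k$ layers: the key point is that $\Phi^*_{[k,L]}(P)$ is, up to the noise contractions, a random Pauli supported on the backward light cone, and because the circuit is geometrically local in constant dimension $\mathrm{D}$, that light cone involves only $O((|P|+L-k)^{\mathrm{D}})$ qubits. The strategy is to condition on the event that the propagated Pauli `reaches' the gate labelled by $\mu$ with a non-identity action on at least one of the two qubits it acts on — this is what makes the commutator $[H_\mu,\Phi^*_{[k,L]}(P)]$ nonzero and hence $\partial_\mu C_P$ generically nonzero. Concretely, I would: (i) expand $\Phi^*_{[k,L]}(P)$ layer by layer in the Heisenberg picture, using that Clifford $2$-designs map Paulis to Paulis (Lemma~\ref{le:Clifford3design}) and that $\mathcal{N}^*(Q)=t_Q I + D_Q Q$ picks up a factor that is bounded below by a constant $\min(|t_Q|,|D_Q|)>0$ at each step — here is where the hypotheses `noise is not a replacer channel' (so some $D_Q\neq 0$) and `noise is non-unital' (so some $t_Q\neq 0$) enter, guaranteeing the per-step contraction is at worst $\exp(-O(|P|))$ rather than zero; (ii) use the Pauli-mixing identities of Lemma~\ref{le:mixhamPARTIAL} to reduce $\Ex[(\partial_\mu C_P)^2]$ to a sum of nonnegative terms indexed by Paulis $Q$ with $\supp(Q)=\supp(P)$, and keep only a single favourable term; (iii) for that term, track the weight of the propagated Pauli through the $L-k$ layers inside the light cone, lower bounding the probability that it remains non-trivial on the relevant qubits by a product of $O(L-k)$ constant factors, and lower bounding the surviving noise coefficients by $\exp(-O(|P|(L-k)))$ since the weight can grow by a constant amount per layer within the light cone. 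Finally I would note that once the commutator is a fixed nonzero Hermitian operator of bounded operator norm, averaging $\Tr(\Phi_{[1,k]}(\rho_0)[H_\mu,\cdot])^2$ over the first $k$ layers contributes only a further constant (or at worst $\exp(-O(\cdot))$) factor that can be absorbed, because we only need a lower bound and we can choose $\rho_0$-independent favourable Pauli components; this yields $\Var[\partial_\mu C]\ge \exp(-O(|P|(L-k)))$.

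The main obstacle I anticipate is step (iii): controlling the backward-propagated Pauli weight and showing that, with probability bounded below by $\exp(-O(L-k))$ over the random Clifford layers, the Pauli is still supported (non-trivially) on one of the two qubits on which $H_\mu$ acts, \emph{and simultaneously} that the accumulated noise coefficients along this trajectory are at least $\exp(-O(|P|(L-k)))$. These two requirements are in tension — forcing the Pauli to reach a specific gate tends to require it to spread, which increases its weight and hence shrinks the noise factor — so the estimate has to be done along a carefully chosen family of `minimal' trajectories within the light cone, and one has to argue that such trajectories carry non-negligible probability mass. I expect this to be handled by a union-bound / single-path argument: fix one geometric path of gates from $P$'s support to $\mu$'s support of length $O(L-k)$, demand the Pauli propagates along exactly that path (an event of probability $\exp(-O(L-k))$ under the $2$-design layers since each two-qubit gate has a constant probability of mapping a given single-qubit Pauli to a prescribed one), and note the weight along this path stays $O(|P|+L-k)$, so the noise coefficients contribute at least $\exp(-O(|P|(L-k)))$. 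A secondary, more routine obstacle is bookkeeping the geometric-locality constants (the light-cone size $O((|P|+L-k)^{\mathrm{D}})$ and the per-layer branching) so that everything collapses into the clean bound $\exp(-O(|P|(L-k)))$; this I would relegate to the Supplementary Material, together with the verification that the choice of which Pauli component $Q$ to retain is consistent with the Pauli-mixing reduction.
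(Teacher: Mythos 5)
Your overall architecture is the same as the paper's: dispose of the out-of-light-cone case via Lemma~\ref{le:outside}, reduce to Pauli observables with Lemma~\ref{le:mixhamPARTIAL}, propagate the Pauli backwards through the last $L-k$ layers while conditioning the random two-qubit gates on a favourable ``path'' event, and finally average the resulting commutator term over the first $k$ layers. However, two of your steps have genuine gaps.

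First, your step (iii) is internally inconsistent and, as stated, under-conditions. You claim the propagated Pauli's weight ``stays $O(|P|+L-k)$'' and then conclude the accumulated noise coefficients are $\exp(-O(|P|(L-k)))$; but a weight of order $|P|+L-k$ per layer over $L-k$ layers gives $\exp(-O((|P|+L-k)(L-k)))$, which is $\exp(-O((L-k)^2))$ for constant $|P|$ and breaks the claimed bound. To get the stated scaling you must force the weight to remain $O(|P|)$ at \emph{every} layer, and a single-path conditioning of probability $\exp(-O(L-k))$ does not achieve this: you must also condition the gates touching the remaining $|P|-1$ Pauli components at every layer (the paper fixes these to identity Cliffords to ``protect'' the Pauli from spreading). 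The correct conditioning event therefore fixes $O(|P|)$ gates per layer and has probability $|\mathrm{C}_2|^{-O(|P|(L-k))}$, which is where the $\exp(-O(|P|(L-k)))$ in the final bound actually comes from; your $\exp(-O(L-k))$ path probability is too optimistic about what needs to be conditioned on.

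Second, your final step --- ``averaging over the first $k$ layers contributes only a further constant factor that can be absorbed'' --- is exactly where the non-unitality hypothesis must be invoked, and you have instead placed that hypothesis in the per-layer backward contraction, where only $D_Q\neq 0$ (not a replacer channel) is needed. After the backward propagation you are left with a quantity of the form $\Ex\bigl[\Tr\bigl(\Phi_{[1,k]}(\rho_0)\,i[H_\mu,Q]\bigr)^2\bigr]$ for a Pauli $Q$; since $i[R,Q]$ is again a Pauli for each Pauli component $R$ of $H_\mu$, this is a sum of second moments of Pauli expectation values over a depth-$k$ noisy random circuit. For unital noise these second moments decay exponentially in $k$, so the term is \emph{not} a constant that can be absorbed; the lower bound $\Ex[\Tr(\Phi_{[1,k]}(\rho_0)Q')^2]\ge(\|\bold{t}\|_2^2/3)^{|Q'|}$ of Proposition~\ref{prop:lbvar}, which is nonvacuous only when $\|\bold{t}\|_2>0$, is what rescues this step and contributes the $\|\bold{t}\|_2^{2(|P|+1)}$ factor in the paper's final expression. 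Without identifying this, your argument would appear to prove a lower bound that is false in the unital case.
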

Note that the variance upper and lower bounds are matched for local cost functions (i.e., $|P|=O(1)$).
Moreover, the latter theorem leads to the following corollary affirming the absence of barren plateaus for local cost functions:
\begin{corollary}[Absence of barren plateaus for local cost functions]
\label{th:bpINTappe}
Let $C$ be a cost function associated with a local Pauli $P$. We assume non-unital noise and that is not a replacer channel. Then, we have 
\begin{align}
    \Ex\!\left[\norm{\nabla C}^2_{2}\right]\ge \Omega(1).
\end{align}
\end{corollary}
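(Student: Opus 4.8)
\textbf{Proof plan for Corollary~\ref{th:bpINTappe}.}
The plan is to derive the lower bound on $\Ex[\norm{\nabla C}_2^2]$ directly from the per-parameter lower bound of Theorem~\ref{th:variance_PD} together with the fact that $\Ex[\partial_\mu C]=0$ (Lemma~\ref{le:zeroDer}). First I would write $\Ex[\norm{\nabla C}_2^2] = \sum_{\mu=1}^m \Ex[(\partial_\mu C)^2] = \sum_{\mu=1}^m \Var[\partial_\mu C]$, where the last equality uses Lemma~\ref{le:zeroDer}. Since every term is nonnegative, it suffices to exhibit \emph{one} parameter $\mu^\star$ for which $\Var[\partial_{\mu^\star} C] = \Omega(1)$, and the corollary follows.

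Second, I would select $\mu^\star$ to be (a parameter of) a gate in the \emph{last} layer $k=L$ whose support is contained in the light cone of $\Phi_{[L,L]}$ with respect to $P$. Such a gate exists: $P$ is local with $|P|=\Theta(1)$, the architecture is geometrically local in constant dimension, and by Definition~\ref{def:localdesignlayer} every qubit is acted upon by at least one gate of each layer, so in particular some two-qubit gate of layer $L$ touches $\supp(P)$ and hence lies in the light cone. Applying Theorem~\ref{th:variance_PD} with $k=L$ gives $\Var[\partial_{\mu^\star} C] \ge \exp(-O(|P|\cdot(L-L))) = \exp(-O(0)) = \Omega(1)$, using that $|P|=O(1)$. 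Plugging this single term into the sum yields $\Ex[\norm{\nabla C}_2^2] \ge \Var[\partial_{\mu^\star} C] \ge \Omega(1)$, which is the claim.

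The only genuine subtlety — and the step I would be most careful about — is the existence and admissibility of the last-layer gate $\mu^\star$: one must check that the variational-parameter bookkeeping of Subsection~\ref{sub:absenceBP} (parametrized gates sit at the start of a unitary layer $\mathcal{U}_i$, each gate generator $H_\mu$ two-local with $\norm{H_\mu}_\infty \le 1$) is compatible with placing a trainable gate in layer $L$ inside the light cone of $P$, and that Theorem~\ref{th:variance_PD}'s hypotheses (non-unital noise, not a replacer channel, constant-dimensional geometrically local circuit) are exactly those assumed in the corollary. Since the corollary's hypotheses are stated to match, and since the assumption that the circuit carries at least one free parameter per layer is implicit in the variational setup, this is routine; no hard estimate is required beyond quoting Theorem~\ref{th:variance_PD}. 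If one wishes to avoid relying on a single term, an alternative is to sum the lower bounds of Theorem~\ref{th:variance_PD} over all in-light-cone parameters of the last $\Theta(1)$ layers, which only strengthens the constant but does not change the asymptotics.
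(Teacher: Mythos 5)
Your proposal is correct and follows essentially the same route as the paper's own proof (Corollary~\ref{cor:lbGrad}): drop all but one term of $\sum_\mu \Ex[(\partial_\mu C)^2]$, pick a last-layer parametrized gate lying in the light cone of $P$, and invoke the zero-mean property of the partial derivative together with the per-parameter lower bound of Proposition~\ref{th:variance_PDlb} at $k=L$ to get an $\Omega(1)$ contribution. Your extra care about the existence/admissibility of the last-layer gate $\mu^\star$ is a point the paper glosses over but is handled the same way; no further changes are needed.
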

However, this absence of barren plateaus in arbitrarily deep quantum circuit is only due to the last $\Theta(\log(n))$ layers which significantly influence the expectation value of local observables. 
Furthermore, in Subsection~\ref{sub:unital}, we show improved upper bound for the onset of 
barren plateaus in the unital noise scenario, improving upon previous works~\cite{nibp}.

\subsubsection{Partial derivative upper bound: Layers before the last $O(\log(n))$ are not trainable} 
\label{subsub:ubPAR}
We are now going to show the upper bound on the partial derivative. Here, we do not make any assumption on the geometrical locality of the circuit. 
\begin{proposition}[Partial derivative upper bound]
\label{th:ubvariance}
Let $C\coloneqq \Tr(P\Phi(\rho_0))$ be the cost function, where $P\in \{I,X,Y,Z\}^{\otimes n}$, $\rho_0$ is an arbitrary initial state, and $\Phi$ is a noisy quantum circuit of depth $L$. We assume that the noise is not a unitary channel. 
Let $\mu$ denote a parameter of a gate in the $k$-th layer of the circuit. Then, we have
\begin{align}
        \Var[\partial_\mu C]\le 4 
 c^{(|P|+L-k-1)}.
\end{align}
\end{proposition}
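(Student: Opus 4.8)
The plan is to mirror the proof of Theorem~\ref{th:theor1} (equivalently Proposition~\ref{prop:expdecayP}), but starting from the partial-derivative expression rather than a bare Pauli expectation value. First I would invoke Lemma~\ref{le:zeroDer} to reduce the variance to the second moment, $\Var[\partial_\mu C] = \Ex[(\partial_\mu C)^2]$, and then use Lemma~\ref{le:deriv} to write $\partial_\mu C = i\Tr(\Phi_{[1,k]}(\rho_0)[H_\mu, \Phi^*_{[k,L]}(P)])$. Since we are computing a second moment over $2$-designs, I would assume all two-qubit gates are Clifford (Lemma~\ref{le:Clifford3design}) and the noise channels are in normal form. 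The key observation is that the circuit $\Phi_{[k,L]}$, read in the Heisenberg picture, is a noisy random circuit of depth $L-k$ acting on the Pauli $P$ in exactly the same way analyzed in Proposition~\ref{prop:expdecayP} — the only difference being that the traceless operator against which we pair things is not $\rho-\sigma$ but rather the commutator structure encoded by $H_\mu$.

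The central computation would reuse the machinery in Lemma~\ref{le:mixhamPARTIAL}. Concretely, I would expand $(\partial_\mu C)^2$ as in Eq.~\eqref{eq:backw}, take the expectation over the final single-qubit layer inside $\Phi^*_{[k,L]}$, apply the Pauli-mixing formula Eq.~\eqref{eq:paulimixing} to collapse cross terms, then peel off the last noise layer $\mathcal{N}^{*\otimes n}$ using the normal-form action Eq.~\eqref{eq:adjointeq}. As in the proof of Proposition~\ref{prop:expdecayP}, peeling off one noise-plus-unitary layer applied to a Pauli $P$ produces a contraction factor $c^{|P|}$, times an expectation of the same form for a circuit of depth one less acting on a Pauli $Q \ne I$ (the identity case drops out because $\Phi^{\prime\prime}_{[\dots]}$ is trace-preserving, or here because the commutator $[H_\mu, cI] = 0$ kills the constant term). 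Since the Clifford layers map non-identity Paulis to non-identity Paulis of weight $\ge 1$, each of the remaining $L-k-1$ layers before the $k$-th contributes at least a factor $c$. Finally I would bound the leftover expression: after all contractions we are left with something of the form $\Ex[(i\Tr(\Phi_{[1,k]}(\rho_0)[H_\mu, R]))^2]$ for some Pauli $R$, which is bounded by $\|[H_\mu,R]\|_\infty^2 \le (2\|H_\mu\|_\infty\|R\|_\infty)^2 = 4$ using $\|H_\mu\|_\infty \le 1$, the triangle inequality and submultiplicativity, together with the fact that $\Phi_{[1,k]}(\rho_0)$ is a state of unit trace norm. Collecting the contraction factors $c^{|P|} \cdot c^{L-k-1}$ and the prefactor $4$ gives the claimed bound $4c^{|P|+L-k-1}$.

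The one point requiring a little care — and the main place where this differs from Proposition~\ref{prop:expdecayP} — is the very first step where $H_\mu$ enters: the operator being contracted through the circuit is $\Phi^*_{[k,L]}(P)$, but the $H_\mu$ sits \emph{outside} the layers $[k,L]$, so the commutator with $H_\mu$ is taken only at the end, after all the contraction has happened. I would therefore organize the argument so that the recursion is applied purely to $\mathbb{E}$ of $\Phi^*_{[k,L]}$ acting on $P$, treating $H_\mu$ and $\Phi_{[1,k-1]}(\rho_0)$ as fixed ``boundary data'' pulled out in front via Lemma~\ref{le:mixhamPARTIAL}; only at the terminal step do I pass to $\|\cdot\|_\infty$ and use $\|[H_\mu, R]\|_\infty \le 2$. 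This is essentially the content of Lemma~\ref{le:mixhamPARTIAL} combined with Lemma~\ref{le:derupp} for $\alpha=1$, so the obstacle is bookkeeping rather than anything conceptually new: one must verify that the cross terms in the tensor-product expansion of the two commutators (Eq.~\eqref{eq:substGRAD}) all reduce, after Pauli mixing, to the diagonal $R\otimes R$ contribution and hence inherit the same contraction factor.
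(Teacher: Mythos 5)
Your proposal is correct and follows essentially the same route as the paper's own argument (its ``Proof method 2''): reduce to the second moment via Lemma~\ref{le:zeroDer}, express the derivative as a commutator via Lemma~\ref{le:deriv}, use Lemma~\ref{le:mixhamPARTIAL} together with the normal form and the Clifford reduction to peel off layers and accumulate the factor $c^{|P|+L-k-1}$, and close with $\|[H_\mu,Q]\|_\infty^2\le 4$ via H\"older and submultiplicativity, exactly as in Eq.~\eqref{eq:usefulunit}. Your handling of the boundary term (treating $H_\mu$ and $\Phi_{[1,k]}(\rho_0)$ as fixed data and noting that the identity contribution is killed by the commutator) matches the paper's treatment, so there is no gap.
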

\begin{proof}
\textbf{Proof method 1:}
We show first a shorter and more immediate proof method, which yields a slightly worse upper bound $O(c^{(|P| +L-k-1)/3})$ albeit always with the desired exponential scaling. Let $K\coloneqq L-k$.
We have $\Var[\partial_\mu C]=\Ex[(\partial_\mu C)^2]$ due to Eq.~\eqref{eq:meanGrad}. 
Due to the Taylor remainder theorem and for any three differentiable functions $f(x)$, this relationship can be expressed using standard finite-difference formulas (see, e.g., \href{https://www.dam.brown.edu/people/alcyew/handouts/numdiff.pdf}{here}):
\begin{align}
    |\partial f(x)| \le \left|\frac{f(x+h)-f(x-h)}{2h}\right| + \frac{h^2}{6}|\mathrm{sup}(\partial^3 f)|,
\end{align}
for any $h\in [0,\infty)$.
Hence
\begin{align}
    (\partial f(x))^2& \le \left|\frac{f(x+h)-f(x-h)}{2h}\right|^2 + \frac{h^4}{36}|\mathrm{sup}(\partial^3 f)|^2 +2 \left|\frac{f(x+h)-f(x-h)}{2h}\right|\frac{h^2}{3}|\mathrm{sup}(\partial^3 f)|\\
    \nonumber
    &\le \left|\frac{f(x+h)-f(x-h)}{2h}\right|^2 + \frac{h^4}{36}|\mathrm{sup}(\partial^3 f)|^2 + \frac{h}{3}\left|\mathrm{sup}(f)\right||\mathrm{sup}(\partial^3 f)|.
    \nonumber
\end{align}
By using this for our function $C$ with respect to the parameter $\theta_\mu$, taking the expected values both terms and using Lemma~\ref{le:derupp}, we have
\begin{align}
\label{eq:derrrr}
   \Ex(\partial_\mu C )^2&\le \Ex \left|\frac{\Tr(P \Phi_{[L-K+1,L]}(\rho))-\Tr(P \Phi_{[L-K+1,L]}(\sigma))}{2h}\right|^2 + 2^6 \frac{h^4}{36}\norm{H_\mu}^6_{\infty} + 2^3 \frac{h}{3}\norm{H_\mu}^3_{\infty}\\
   \nonumber
    &\le \frac{1}{4h^2}c^{K+|P|-1} + 4 h^4\norm{H_\mu}^6_{\infty} + 4 h\norm{H_\mu}^3_{\infty}\\
    \nonumber
    &\le \frac{1}{4}c^{(K+|P|-1 )/3} +8 c^{(K+|P|-1)/3}\\
    &\le 9 c^{(K+|P|-1)/3},
    \nonumber
\end{align}
where we have defined $\rho$ and $\sigma$ to be defined as the state $\Tr(P \Phi_{[1,L-K]}(\rho_0))$ computed respectively respectively in $\theta_\mu +h$ and $\theta_\mu -h$, and we have chosen $h\coloneqq c^{(K+|P|)/3}$.

\noindent \textbf{Proof method 2:} We now establish a tighter upper bound using the partial derivative formula involving the commutator (Eq.~\eqref{eq:der}). The proof follows a similar spirit to the one used in Proposition~\ref{prop:expdecayP} (i.e., effective depth). Instead of using Lemma~\ref{le:mixham} (as in the effective depth proof), we employ the analogous Lemma~\ref{le:mixhamPARTIAL}.
First, we have
\begin{align}
    \Var[\partial_\mu C] = \Ex[(\partial_\mu C)^2] = \Ex[f_0(P)^2],
\end{align}
where we define the function
\begin{align}
    f_j(\cdot)\coloneqq i\Tr\!\left(\Phi_{[1,k]}(\rho_0) \left[H_\mu ,\Phi^{*}_{[k,L-j]}(\cdot)\right]\right).
\end{align}
Using Lemma~\ref{le:mixhamPARTIAL} and averaging over the last layer of single qubit gates in $\Phi^{*}_{[k,L]}$, we have
\begin{align}
   \Ex[(f_0(P)^2] = \frac{1}{3^{|P|}}\sum_{\substack{Q \in \{I,X,Y,Z\}^{\otimes n}:\\ \supp(Q)=\supp(P) }} \Ex[(f_0(Q))^2].
\end{align}
Let us focus on $\Ex[(f_0(Q))^2]$. Taking the adjoint of the last layer of noise on $Q$, we get
\begin{align}
    \mathcal{N}^{* \otimes n}(Q) = \bigotimes_{j \in \supp(Q)} (t_{Q_j }I_j+ D_{Q_j}Q_j) = \sum_{a\in \{0,1\}^{|Q|}}\bigotimes_{j \in \supp(Q)} (t^{a_j}_{Q_j }  D^{1-a_j}_{Q_j}Q^{1-a_j}_j).
\end{align}
We define the function $f_j^{\prime}(\cdot)$ as
\begin{align}
    f^{\prime}_j(\cdot)\coloneqq i\Tr\!\left(\Phi_{[1,k]}(\rho_0) \left[H_\mu ,\Phi^{\prime *}_{[k,L-j]}(\cdot)\right]\right),
\end{align}
where $\Phi^{\prime}_{[k,L-j]}$ is equal to $\Phi_{[k,L-j]}$ but without the last layer of single qubit gates and noise.
Applying Lemma~\ref{le:mixhamPARTIAL} again, 
we have
\begin{align}
    \Ex[(f_0(Q))^2] & = \sum_{a\in \{0,1\}^{|Q|}}\prod_{j \in \supp(Q)} t^{2a_j}_{Q_j}D^{2(1-a_j)}_{Q_j} \Ex[(f_0^{\prime}(\bigotimes_{j \in \supp(Q)}Q^{1-a_j}_j))^2]\\
    \nonumber
    &\le \sum_{a\in \{0,1\}^{|Q|}}\prod_{j \in \supp(Q)} t^{2a_j}_{Q_j}D^{2(1-a_j)}_{Q_j} \max_{R\in \{I,X,Y,Z\}^{\otimes n}} \Ex[(f_0^{ \prime}(R))^2].
    \nonumber
\end{align}
Substituting, we arrive at
\begin{align}
   \Ex[(f_0(P)^2] &= \frac{1}{3^{|P|}}\sum_{\substack{Q \in \{I,X,Y,Z\}^{\otimes n}:\\ \supp(Q)=\supp(P) }} \Ex[(f_0(Q))^2]\\ 
   \nonumber
   &\le \frac{1}{3^{|P|}}\sum_{\substack{Q \in \{I,X,Y,Z\}^{\otimes n}:\\ \supp(Q)=\supp(P) }}\sum_{a\in \{0,1\}^{|Q|}}\prod_{j \in \supp(Q)} t^{2a_j}_{Q_j}D^{2(1-a_j)}_{Q_j} \max_{R\in \{I,X,Y,Z\}^{\otimes n}} \Ex[(f_0^{ \prime}(R))^2]\\
   \nonumber
   &=c^{|P|} \max_{R\in \{I,X,Y,Z\}^{\otimes n}} \Ex[(f_0^{ \prime}(R))^2],
   \nonumber
\end{align}
where we have used the multinomial theorem together with the fact that
\begin{align}
    c=\frac{1}{3^{|P|}}(\|\bold{D}\|^2_2+\|\bold{t}\|^2_2).
\end{align}
As in the proof of Proposition~\ref{prop:expdecayP}, we can assume that the maximum over Pauli in the latter equation is achieved by a Pauli different from the identity (otherwise the RHS would be zero). Moreover, we can assume now that all the two-qubit gates in the circuit are Clifford, as we are computing a second moment, and the Cliffords form a $2$-design. Thus, the two qubit gates of the circuit will also map Pauli to Pauli.
Therefore, the Pauli above will be mapped by the two-qubits Clifford to another Pauli still different from the identity. Since now we have a circuit that ends with a layer of single qubits $2$-design unitaries, which are preceded by a noise layer and a layer of two-qubits $2$-design gates, we are in the same situation we faced at the beginning of the proof with
\begin{align}
    \Ex[(f_0(P)^2]&\le c^{|P|} \max_{Q\in \{I,X,Y,Z\}^{\otimes n} \setminus I_n} \Ex[(f_{1}(Q)^2].
\end{align}
So reiterating the argument to the next layers, and using that the Pauli weight of the considered Pauli at each iteration is at least one, we have
\begin{align}
    \Ex[(f_0(P)^2]&\le c^{|P|+L-k-1} \max_{Q\in \{I,X,Y,Z\}^{\otimes n} \setminus I_n} \Ex[(f_{L-k}(Q)^2].
    \end{align}
By using the definition of $f_{L-k}(\cdot)$, we have
\begin{align}
\label{eq:usefulunit}
    \Ex[(f_0(P)^2]&\le c^{|P|+L-k-1}\Ex\max_{Q\in \{I,X,Y,Z\}^{\otimes n} \setminus I_n} (i\Tr\!\left(\Phi_{[1,k]}(\rho_0) \left[H_\mu ,Q\right]\right))^2\\
    \nonumber
    &\le c^{|P|+L-k-1}\max_{Q\in \{I,X,Y,Z\}^{\otimes n} \setminus I_n} \Ex\norm{\Phi_{[1,k]}(\rho_0)}^2_1 \norm{\left[H_\mu ,Q\right]}_{\infty}^2\\
    \nonumber
    &\le 4 c^{|P|+L-k-1}\max_{Q\in \{I,X,Y,Z\}^{\otimes n} \setminus I_n}  \norm{H_\mu}_{\infty}^2\norm{Q}_{\infty}^2\\
    \nonumber
    &\le 4c^{|P|+L-k-1},
    \nonumber
\end{align}
where we have used the H\"older inequality in the second step, submultiplicativity of the infinity norm in the third step, and in the last step the fact that all the involved norms are $\le 1 $.

\end{proof}
We point out that the previous statement 
could also have been proved using the Parameter Shift Rule~\cite{crooks2019gradients} assuming a restricted class of parameterized gates, that is, of the form $\exp(i\theta_\mu H_\mu)$, with $H_\mu$ such that $H_\mu^2=I$. But for the sake of generality we decided to use the proof methods presented. In this connection, note that the parameter shift rule also applies to noisy circuits, as can be seen, e.g., by making use standard Stinespring dilation arguments.

We point out that by applying Chebyshev's inequality, the upper bound on the variance can be translated into the probability statement:
\begin{align}
    \mathrm{Prob}\left(\left| \partial_\mu C \right| > \varepsilon \right) \le \frac{\mathrm{Var}[\partial_\mu C]}{\varepsilon^2} \le \frac{4}{\varepsilon^2} c^{|P|+L-k-1}.
\end{align}
This equation implies that the probability of sampling a point in the parameters space such that the absolute value of the partial derivative is greater than $\varepsilon$ decays exponentially with both the Pauli weight $|P|$ and $L-k$, which is the distance from the end of the circuit to the layer where the partial derivative is taken.

The previous proposition also directly implies that cost functions associated with global Pauli operators (i.e., $\Theta(n)$ Pauli weight) have all partial derivatives exponentially vanishing in the number of qubits.
\begin{corollary}[Global cost function induced barren plateaus]
\label{th:ubvarianceGLOB}
Let $C$ be a cost function associated with a global Pauli operator, i.e., with $|P|=\Theta(n)$. Let $\mu$ denote a parameter of a gate in any of the layers. Then, we have
\begin{align}
        \Var[\partial_\mu C]\le \exp(-\Omega(n)).
\end{align}
\end{corollary}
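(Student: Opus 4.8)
The plan is to obtain the statement as an immediate consequence of Proposition~\ref{th:ubvariance}. That proposition already gives, for a parameter $\mu$ of a gate in the $k$-th layer of a depth-$L$ noisy circuit and for $C = \Tr(P\Phi(\rho_0))$ with $P \in \{I,X,Y,Z\}^{\otimes n}$, the bound
\begin{align}
    \Var[\partial_\mu C] \le 4\, c^{|P|+L-k-1},
\end{align}
where $c$ is the noise parameter of Eq.~\eqref{eq-noise_value}. The only additional structural input needed is that the noise is not a unitary channel, which by Lemma~\ref{le:normal} (equivalently Lemma~\ref{le:contractionnormal}) guarantees $c < 1$ strictly; this hypothesis is inherited from Proposition~\ref{th:ubvariance} and is exactly what makes the exponent genuinely decaying rather than vacuous.

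First I would dispose of the layer-index contribution. Since $k \in [L]$ we have $L-k-1 \ge -1$, so $c^{L-k-1} \le c^{-1}$, a constant independent of $n$ because the noise parameters, and hence $c$, are assumed constant in $n$. This yields $\Var[\partial_\mu C] \le 4 c^{-1}\, c^{|P|}$. Next I would feed in the hypothesis $|P| = \Theta(n)$: writing $|P| \ge a n$ for some constant $a>0$ and all large $n$, we get $c^{|P|} \le c^{an} = \exp(-a n \log(c^{-1}))$, and since $\log(c^{-1}) > 0$ this is $\exp(-\Omega(n))$. Combining the two estimates gives $\Var[\partial_\mu C] \le 4 c^{-1} \exp(-\Omega(n)) = \exp(-\Omega(n))$, which is the claim. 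The bound holds uniformly in $k$, so it applies to a parameter of a gate in \emph{any} layer, as stated.

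There is essentially no hard step here, since the substantive work was already carried out in Proposition~\ref{th:ubvariance}. The only points requiring a little care are (i) invoking non-unitarity of the noise to secure the strict inequality $c<1$, without which the bound would not decay, and (ii) handling the mildly awkward case $k = L$, where $L-k-1 = -1$, so that the stray factor $c^{-1}$ is absorbed into the constant prefactor rather than the decaying exponential. Everything else is a one-line substitution of $|P| = \Theta(n)$ into the established inequality.
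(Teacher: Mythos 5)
Your proposal is correct and follows the same route as the paper, which simply notes that the corollary "directly implies" from Proposition~\ref{th:ubvariance} by substituting $|P|=\Theta(n)$ and using $c<1$ for non-unitary noise. Your extra care with the $k=L$ case (absorbing the stray $c^{-1}$ into the constant) is a harmless refinement of a detail the paper leaves implicit.
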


\subsubsection{Partial derivative lower bound: the last $\Theta(\log(n))$ layers are the only trainable}
\label{subsub:lbPAR}
Here, we establish a lower bound on the partial derivative variance, valid any fixed circuit architecture (e.g., in constant dimension or all-to-all connectivity). The proof technique employed here is novel and may be of independent interest, allowing lower bounds on other second or third moment quantities of noisy random quantum circuits.
In summary, dealing with a second-moment quantity and considering all $2$-qubit gates in the circuit as local $2$-designs (effectively Clifford gates), we condition on specific \emph{Clifford choices} among various combinations to obtain a non-trivial lower bound.

Here, we assume that the noise is non-unital (i.e., $\|\bold{t}\|_2=\Theta(1)$) and that the noise is not a replacer channel (i.e., the noise parameter $\|\bold{D}\|_2$ is a non-zero constant).

\begin{proposition}[Partial derivative lower bound]
\label{th:variance_PDlb}
Let $C\coloneqq \Tr(P\Phi(\rho_0))$ be the cost function, where $P\in \{I,X,Y,Z\}^{\otimes n}$, $\rho_0$ is an arbitrary initial state, and $\Phi$ is a non-unital noisy quantum circuit of depth $L$. We also assume that the noise is not a replacer channel (otherwise, any partial derivative would be zero). Let $\mu$ denote the parameter $\theta_{\mu}$ of the gate $\exp(-i \theta_{\mu} H_{\mu})$ in the $k$-th layer of the circuit. Then, if the support of such a gate is contained in the light cone of $\Phi_{[k,L]}$ with respect to the Pauli $P$, we have
\begin{align}
       \Var[\partial_\mu C] \ge \exp  \!\left(- \Theta(|P|(L-k))\right),
\end{align}  
otherwise, if the support of the parametrized gate is outside the light cone, we have $\Var[\partial_\mu C] =0$.
\end{proposition}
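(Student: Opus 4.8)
\textbf{Proof plan for Proposition~\ref{th:variance_PDlb}.}
The vanishing outside the light cone is immediate from Lemma~\ref{le:outside}, so the work is the lower bound. The plan is to exhibit a single, carefully chosen event in the probability space of the random circuit — a specific assignment of Clifford gates in the layers $k+1,\dots,L$ together with a specific value of the parameter $\theta_\mu$ — on which the partial derivative $\partial_\mu C$ is provably bounded below in absolute value by $\exp(-\Theta(|P|(L-k)))$, and then combine this with $\Ex[\partial_\mu C]=0$ (Lemma~\ref{le:zeroDer}) and a one-sided second-moment argument. Concretely, since all two-qubit gates are drawn from a $2$-design we may assume they are Clifford (Lemma~\ref{le:Clifford3design}); the event ``all gates in layers $k+1,\dots,L$ equal a prescribed Clifford tuple'' has probability at least $|\mathrm{Cl}(4)|^{-N}$ where $N$ is the number of two-qubit gates in the light cone across those layers, and because the circuit is geometrically local in constant dimension $\mathrm{D}$, the light cone of $P$ over $L-k$ layers contains $N = O(|P|(L-k))$ two-qubit gates. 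Hence this event has probability $\exp(-\Theta(|P|(L-k)))$.

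The core of the argument is the choice of the prescribed Cliffords and the analysis of $\partial_\mu C$ on that event. Using the commutator formula (Lemma~\ref{le:deriv}), $\partial_\mu C = i\Tr(\Phi_{[1,k]}(\rho_0)[H_\mu, \Phi^*_{[k,L]}(P)])$, where on the conditioning event $\Phi^*_{[k,L]}$ is a fixed channel: it is the adjoint of the (Clifford-fixed) unitary layers interleaved with the fixed normal-form noise channels $\mathcal{N}^{\otimes n}$. I would choose the Clifford layers so that, reading backwards, the adjoint $\Phi^{*}_{[k,L]}(P)$ stays a single Pauli string at each step: each adjoint Clifford layer maps a Pauli to a Pauli, and each adjoint noise layer $\mathcal{N}^*$ acting on a Pauli $Q$ produces $\prod_j(t_{Q_j}I + D_{Q_j}Q_j)$ — I select the Clifford just before each noise layer so that the relevant single-qubit Pauli factors lie in $\{X,Y,Z\}$ and then keep only the leading term, which is either the $D$-scaled surviving Pauli or a $t$-scaled identity factor. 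By a pigeonhole/explicit-construction choice one can arrange that $\Phi^*_{[k,L]}(P) = \lambda\, R + (\text{lower-order Pauli terms})$ with $|\lambda| \ge (\min\{|t_\star|,|D_\star|\})^{O(|P|(L-k))} = \exp(-\Theta(|P|(L-k)))$ (using that both $\|\bold t\|_2$ and $\|\bold D\|_2$ are nonzero constants), where $R$ is a nonidentity Pauli supported on the gate $\mu$. Then pick $\theta_\mu = 0$ (absorbing the parametrized gate into the identity) so that $[H_\mu, \Phi^*_{[k,L]}(P)]$ has a nonzero component $\lambda[H_\mu,R]$; choosing $H_\mu$ appropriately (or arguing that for \emph{some} parametrized gate in the light cone $[H_\mu,R]\ne 0$) gives $\Tr(\Phi_{[1,k]}(\rho_0)[H_\mu,\Phi^*_{[k,L]}(P)])$ of magnitude $\gtrsim|\lambda|$ — here I would also need $\Phi_{[1,k]}(\rho_0)$ to have nonnegligible overlap with $[H_\mu,R]$, which holds generically and can be forced by further conditioning on the first $k$ layers (which only multiplies the probability by another $\exp(-\Theta(|P|(L-k)))$ factor that is harmless, or better: one conditions only on the part of layers $1,\dots,k$ inside the backward light cone of $\mu$, which is $O(1)$ gates). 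Finally, since $\Ex[\partial_\mu C]=0$, we have $\Var[\partial_\mu C] = \Ex[(\partial_\mu C)^2] \ge \Pr[\text{event}]\cdot(\text{value on event})^2 \ge \exp(-\Theta(|P|(L-k)))\cdot\exp(-\Theta(|P|(L-k))) = \exp(-\Theta(|P|(L-k)))$.

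The main obstacle I anticipate is the bookkeeping in the Clifford-choice step: ensuring that the backward-propagated Pauli simultaneously (i) never collapses to the identity before reaching layer $k$, (ii) lands on a Pauli that does not commute with $H_\mu$ at the location of gate $\mu$, and (iii) has nonzero trace-overlap against $\Phi_{[1,k]}(\rho_0)$ after the further conditioning. Each adjoint noise step both rescales by a constant and can mix in identity factors, so one must track the ``weight'' of the string carefully; the geometric locality is what keeps the number of relevant gates, and hence the exponent, at $O(|P|(L-k))$ rather than something depending on $n$. A clean way to handle (ii) and (iii) is to note that it suffices to prove the bound for \emph{at least one} parametrized gate per layer (since Corollary~\ref{th:bpINTappe} only needs $\Ex[\|\nabla C\|_2^2]=\Omega(1)$, summing over all parameters), which gives enough freedom to pick a convenient target Pauli $R$ and a compatible generator $H_\mu$; and to handle the replacer-channel exclusion, one uses precisely that $\|\bold D\|_2\neq 0$ so that the $D$-scaled surviving-Pauli branch is available at every noise layer.
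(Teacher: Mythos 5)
Your high-level strategy -- replace the $2$-design gates by random Cliffords, condition on a hand-picked Clifford assignment at cost $|\mathrm{Cl}(4)|^{-\#\text{fixed}}$, and lower-bound the second moment by the contribution of that event -- is the same as the paper's. However, there are two concrete gaps in the execution.

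First, and most seriously, the overlap with $\Phi_{[1,k]}(\rho_0)$. You write that $\Tr\!\left(\Phi_{[1,k]}(\rho_0)\,[H_\mu,R]\right)$ is nonnegligible ``generically'' and can be forced by conditioning on the gates of layers $1,\dots,k$ in the backward light cone of $\mu$, ``which is $O(1)$ gates.'' That count is wrong: the backward light cone of a gate at layer $k$ through $k$ layers of a geometrically local circuit contains $\Theta(k^2)$ gates already in one dimension, so this conditioning costs $\exp(-\Theta(k^2))$ and destroys the claimed bound whenever $k$ is large (which is exactly the regime of interest, $k = L - \Theta(\log n)$). The ``generic'' claim is also false without further input: for unital noise $\Phi_{[1,k]}(\rho_0)$ concentrates on the maximally mixed state and the overlap with any traceless observable is exponentially small in $k$ on average -- and indeed, if your argument went through without using non-unitality at this step, it would contradict the paper's own unital-noise upper bound $\Var[\partial_\mu C]\le 4c^{|P|+L-1}$. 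The paper instead leaves the first $k$ layers fully random, expands $H_\mu=\sum_R b_R R$ so that $i[R,Q]$ is again a Pauli $Q_R$ of weight at most $|P|+1$, and invokes the expectation-value variance lower bound $\Ex[(\Tr(Q_R\,\Phi_{[1,k]}(\rho_0)))^2]\ge(\|\bold{t}\|_2^2/3)^{|Q_R|}$, which is independent of $k$ and is precisely where $\|\bold{t}\|_2\neq 0$ enters. This step is the heart of the proposition and is missing from your argument.

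Second, the gate count for the conditioning event in layers $k+1,\dots,L$. Conditioning on \emph{all} gates in the light cone of $P$ over $L-k$ layers is too expensive: for $|P|=O(1)$ in one dimension that light cone contains $\Theta((L-k)^2)$ gates, giving probability $\exp(-\Theta((L-k)^2))$ rather than the required $\exp(-\Theta(|P|(L-k)))$. The paper fixes only $O(|P|)$ gates per layer -- a single SWAP/identity path carrying one Pauli factor onto the support of $H_\mu$, plus identity ``protectors'' on the qubits carrying the remaining $|P|-1$ factors -- and leaves every other gate random; those gates act on qubits where the propagated Pauli is the identity and therefore conjugate it trivially no matter what they are. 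Your bookkeeping needs this refinement to obtain the stated exponent.
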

\begin{proof}
If the support of the parametrized gate is outside the light cone, $\partial_\mu C = 0$, as stated in Lemma~\ref{le:outside}. Therefore, we focus on the case in which the gate is within the light cone.
By employing Lemma~\ref{le:zeroDer} and~\ref{le:deriv} to express the variance of the partial derivative, we have
\begin{align}
    \Var[\partial_\mu C] = \Ex[(\partial_\mu C)^2] = \Ex[f_0(P)^2].
\end{align}
Here, $f_j(\cdot)\coloneqq i\Tr\!\left(\Phi_{[1,k]}(\rho_0) \left[H_\mu ,\Phi^{*}_{[k,L-j]}(\cdot)\right]\right)$. 
By applying Lemma~\ref{le:mixhamPARTIAL} and averaging over the last layer of single-qubit gates in $\Phi^{*}_{[k,L]}$, we arrive at:
\begin{align}
    \Var[\partial_\mu C]= \frac{1}{3^{|P|}}\sum_{\substack{R \in \{I,X,Y,Z\}^{\otimes n}:\\ \supp(R)=\supp(P) }} \Ex[(f_0(R))^2]\ge \frac{1}{3^{|P|}}\!\left(\Ex[(f_0(P_X))^2] + \Ex[(f_0(P_Y))^2] + \Ex[(f_0(P_Z))^2]\right),
\end{align}
where $P_X\coloneqq \bigotimes_{j \in \supp(P)} X_j$, and $P_Y$, $P_Z$ are similarly defined. Focusing on $\Ex[(f_0(P_X))^2]$, we define the function $f_j^{\prime}(\cdot)$ as
\begin{align}
    f^{\prime}_j(\cdot)\coloneqq i\Tr\!\left(\Phi_{[1,k]}(\rho_0) \left[H_\mu ,\Phi^{\prime *}_{[k,L-j]}(\cdot)\right]\right),
\end{align}
with $\Phi^{\prime}_{[k,L-j]}$ identical to $\Phi_{[k,L-j]}$ but lacking the last layer of single-qubit gates and noise. Taking the adjoint of the last layer of noise on $P_X$ and applying Lemma~\ref{le:mixhamPARTIAL} again, we obtain
\begin{align}
    \Ex[(f_0(P_X))^2] & = \sum_{a\in \{0,1\}^{|P_X|}}\prod_{j \in \supp(P_X)} t^{2a_j}_{(P_{X})_{j}}D^{2(1-a_j)}_{(P_{X})_{j}} \Ex[(f_0^{\prime}(\bigotimes_{j \in \supp(P_X)}(P_{X})_j^{1-a_j}))^2]
    \nonumber
    \\
    &\ge D_X^{2|P|} \Ex[(f_0^{\prime}(P_X)^2].
    \label{eq:prooflbpar}
\end{align}
Now, we delve into the technical part of this proof. As is customary when dealing with second-moment quantities, we treat our circuits as random Clifford circuits.
The ultimate goal is to ensure that the commutator $\left[H_\mu ,\Phi^{*}_{[k,L-j]}(\cdot)\right]$ is non-zero for some Clifford gate instances. To achieve this, we fix \emph{some} of the $2$-qubit Clifford gates in the circuit. However, caution is required not to fix all the Clifford gates, as this would result in unfavorable scaling. Specifically, each $2$-qubit Clifford $C_{\mathrm{fixed}}$ that we fixed in the circuit contributes a factor of $|\mathrm{C}_2|^{-1}$, where $|\mathrm{C}_2|$ is the size of the $2$-qubit Clifford group $\mathrm{C}_2$. This is expressed by the lower bound
\begin{align}
    \mathbb{E}_{C\sim \mathrm{C}_2} [g(C)]=\frac{1}{|\mathrm{C}_2|}\sum_{C \in \mathrm{C}_2} (g(C))^2 \ge  \frac{1}{|\mathrm{C}_2|}  (g(C_{\mathrm{fixed}}))^2,
\end{align}
for any real function $g(\cdot)$.
Now, we proceed to fix the Cliffords in the circuit.
The strategy involves fixing a few Cliffords such that: 1) one of the single-qubit Pauli in the Pauli decomposition of $P_X$ is connected with $H_{\mu}$ by a path of Clifford gates (which will be responsible for making the commutator non-zero), 2) the chosen Cliffords `protect the Pauli', i.e., they ensure the Pauli weight $|P|$ does not increase throughout the application of the unitary layer, 3) at each layer iteratively, as done in Eq.~\eqref{eq:prooflbpar}, we select only the Pauli operators that have all $X$ in their support that arises when we take the adjoint of the noise. 
Thus, this would give rise to the lower bound
\begin{align}
\Ex[(f_0(P_X))^2] \ge \frac{1}{|\mathrm{C}_2|^{\text{\# fixed clifford}}} D_X^{2|P|(L-k+1)}\Ex[(f_k( \mathcal{V}^{\mathrm{single}}(\tilde{P}_X))^2],
\end{align}
where $\tilde{P}_X$ represents the $P_X$ Pauli operator that has been mapped by all the Clifford circuit, and $\mathcal{V}^{\mathrm{single}}$ is a single-qubit layer of random gates. The term $(D_X^{2|P|})^{(L-k+1)}$ represents the factor obtained at each of the $L-k+1$ layers when encountering a layer of noise, applying Eq.~\eqref{eq:prooflbpar}, and utilizing the fact that the Pauli weight does not increase in the `Clifford path'.
Note that
\begin{align}
    f_k(\mathcal{V}^{\mathrm{single}}(\tilde{P}_X))=\left(\Phi_{[1,k]}(\rho_0) i \left[H_\mu ,\mathcal{V}^{\mathrm{single}}(\tilde{P}_X)\right]\right).
\end{align}
We now show the existence of this particular Clifford gates choice that satisfy the listed desiderata, by fixing some of the $2$-qubit gates to be the identity gate or the SWAP gate (noting that the SWAP gate is Clifford, as it can be expressed as a combination of $3$ CNOT gates).

Since $H_{\mu}$ is in the light-cone with respect to $\Phi_{[k,L]}$, by definition, there must exist a path of $2$-qubit (Clifford) gates that connects $H_\mu$ with one of the single-qubit Paulis $X$ appearing in the tensor product decomposition of $P_X$. Let us choose one such path connecting $H_{\mu}$ with a specific single-qubit Pauli $X$.
We can fix each $2$-qubit gate in this path to be the SWAP gate or the identity gate, in such a way that the $X$ gate has now support overlapping with $H_{\mu}$. The number of gates in this Clifford path is $L-k+1$.
Next, we fix other Cliffords in the circuit to be trivial Identity $I$ Clifford gates, specifically those connecting with the remaining Paulis $X$ in the tensor product decomposition of $P_X$. See Figure~\ref{Fig:circ_coolproof} for an example. Consequently, the non-trivial Pauli $\tilde{P}_X$ remains the same up to permutations of its tensor factor: in particular one of its $X$ Paulis is now swapped to a position where it acts non-trivially with $H_{\mu}$. The count of fixed Cliffords in the circuit is then given by
\begin{align}
    \text{\# fixed Cliffords} \leq |P|(L-k+1),
\end{align}
since, at each layer (of which there are $L-k+1$), we fix at most one gate for each of the single-qubit Paulis in the Pauli decomposition of $P_X$.
Now, we utilize the last layer of single-qubit random gates to map the resulting Pauli, which now shares support with $H_{\mu}$, to a Pauli that does not commute with $H_{\mu}$. Note that such a Pauli exists, as any operator commuting with all the Paulis should be the identity. However, $H_{\mu}$ cannot be the identity because that would contradict the assumption that the support of $H_{\mu}$ is in the light-cone. We denote such new resulting Pauli as $Q$.
\begin{figure}[h]
\centering
\includegraphics[width=0.59\textwidth]{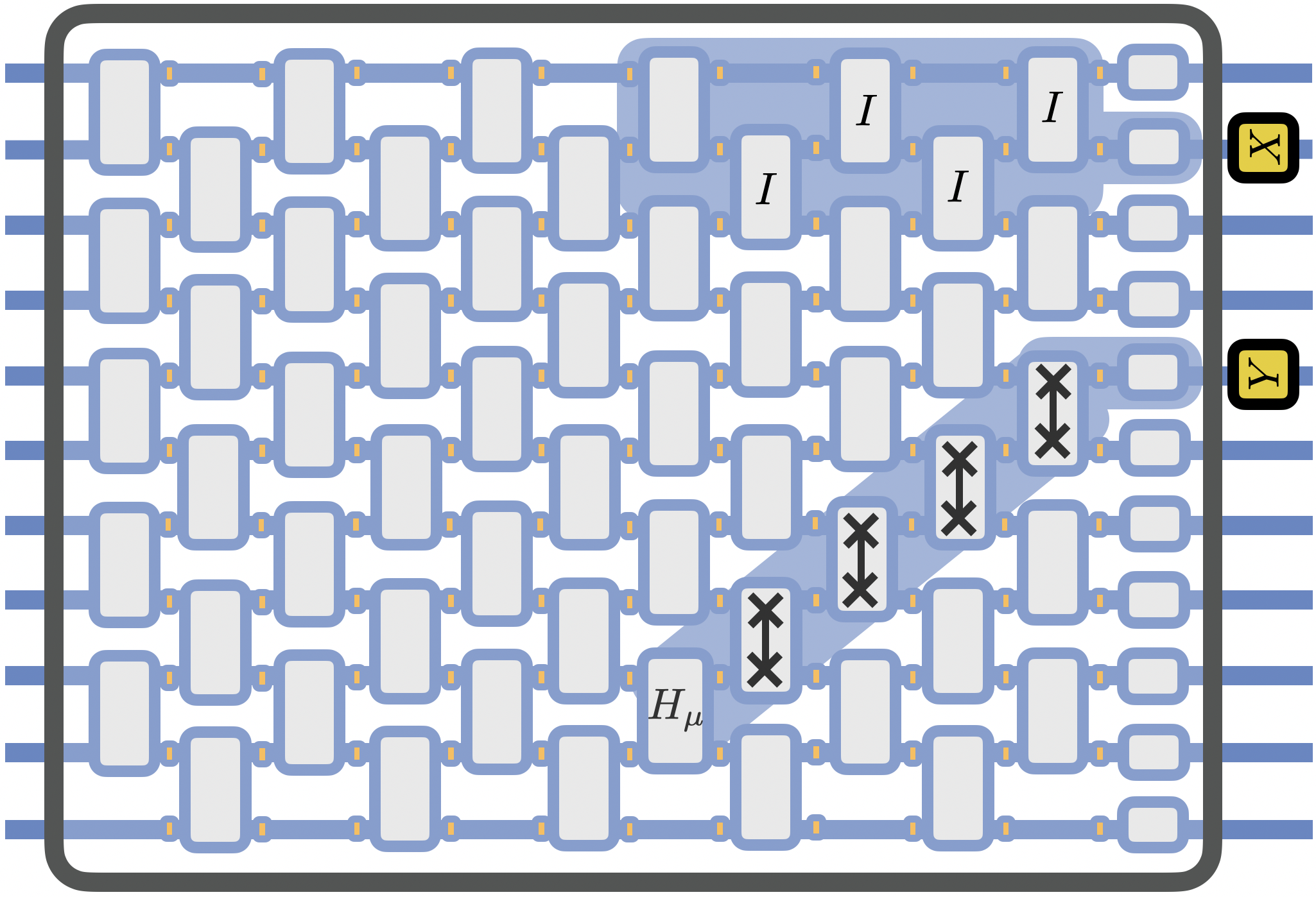}
\caption{Example of Clifford path choices. The shaded region indicates the fixed Clifford gates. Note that we choose some Clifford gates to be SWAP gates so that they connect one of the Paulis to $H_{\mu}$. We protect the other remaining Pauli from spreading across the circuit with identity Clifford gates.} 
\label{Fig:circ_coolproof}
\end{figure}
Thus, by repeating the same for $P_Y$ and $P_Z$, and mapping them at the end to $Q$, we have the lower bound
\begin{align}
    \Var[\partial_\mu C]\ge \frac{1}{|\mathrm{C}_2|^{|P|(L-k+1)}} \frac{1}{3^{|P|}}\!\left( D_X^{2|P|(L-k+1)} +  D_Y^{2|P|(L-k+1)} +  D_Z^{2|P|(L-k+1)}\right) \Ex (\Tr\!\left(\Phi_{[1,k]}(\rho_0) i \left[H_\mu ,Q\right]\right))^2.
    \nonumber
\end{align}
Since $H_\mu$ is a $2$-qubit gate, it can be expanded in its $2$-qubits Pauli decomposition, as
\begin{align}
    H_\mu= \sum_{R \in \{I,X,Y,Z\}^{\otimes 2}} b_R R,
\end{align}
Thus, substituting and using Lemma~\ref{le:mixham}, we have
\begin{align}
   \Ex (\Tr\!\left(\Phi_{[1,k]}(\rho_0) i \left[H_\mu ,Q\right]\right))^2&=\sum_{R \in \{I,X,Y,Z\}^{\otimes 2}} b^2_R \Ex (\Tr\!\left(\Phi_{[1,k]}(\rho_0) Q_R \right))^2,
\end{align}
where we defined $Q_R \coloneqq i \left[R ,Q\right]$.
Here we note that $Q_R$ is a Hermitian operator, and in particular, it is a Pauli operator (or the zero operator). Thus, we have
\begin{align}
   \Ex (\Tr\!\left(\Phi_{[1,k]}(\rho_0) i \left[H_\mu ,Q\right]\right))^2&=\sum_{R \in \{I,X,Y,Z\}^{\otimes 2}} b^2_R \Ex (\Tr\!\left(\Phi_{[1,k]}(\rho_0) Q_R \right))^2\\
   \nonumber
   &\ge   \sum_{R \in \{I,X,Y,Z\}^{\otimes 2}} b^2_R (1- \delta_{[Q,R],\textbf{0}}) \norm{\bold{t}}^{2|Q_R|}_2,\\
      \nonumber
   & \ge \norm{\bold{t}}^{2(|P|+1)}_2  \sum_{R \in \{I,X,Y,Z\}^{\otimes 2}} b^2_R (1- \delta_{[Q,R],\textbf{0}})\\
      \nonumber
   & \ge  \norm{\bold{t}}^{2(|P|+1)}_2   b^2_{\tilde{R}  }
   \nonumber
\end{align}
where in the second step we have used the lower bound on variance expectation values derived in Proposition~\ref{prop:lbvar} and introduced $ \delta_{[Q,R],\textbf{0}}$ which is one if $Q$ and $R$ commute and zero otherwise, in the third step we observed that $|Q_R|\le |P|+1$ and used that $\norm{\bold{t}}_2\le 1$ (Lemma~\ref{le:normal}). In the last step we have used that, since $H_{\mu}$ does not commute with $Q$, there must exists $\tilde{R}\in \{I,X,Y,Z\}^{\otimes 2}$ such that $b_{\tilde{R}}\neq 0$ and $[\tilde{R},Q]\neq 0$.
Now we observe that
\begin{align}
     b^2_{\tilde{R}  }  \ge  \frac{1}{16} \sum_{R \in \{I,X,Y,Z\}^{\otimes 2}} b^2_R= \frac{1}{64}\norm{H_\mu}^2_2 \ge \frac{1}{64} \norm{H_{\mu}}^2_{\infty}.
\end{align}
Putting everything together, we have
\begin{align}
    \Var[\partial_\mu C]&\ge \frac{1}{|\mathrm{C}_2|^{|P|(L-k+1)}} \frac{1}{3^{|P|}}\!\left( D_X^{2|P|(L-k+1)} +  D_Y^{2|P|(L-k+1)} +  D_Z^{2|P|(L-k+1)}\right)\frac{1}{64} \norm{H_{\mu}}^2_{\infty} \norm{\bold{t}}^{2(|P|+1)}_2.
    \nonumber
\end{align}
We note that if all the entries of $\bold{D}=(D_X, D_Y, D_Z)$ are equal to zero, then the noise is a 
replacer channel.
This because it holds that 
\begin{equation}
\mathcal{N}\!\left(\frac{I+ \bold{w}\cdot \boldsymbol{\sigma}}{2}\right)= \frac{I}{2}   + \frac{1}{2}(\bold{t}+D \bold{w})\cdot \boldsymbol{\sigma},
\end{equation}
where $D=\mathrm{diag}(\bold{D})$, $\mathcal{N}$ is a single-qubit noise channel, and $\frac{I+ \bold{w}\cdot \boldsymbol{\sigma}}{2}$ represents a density matrix for $\norm{w}_2\le 1$ (see Eq.~\eqref{eq:normSS}). 
However for assumption, the noise cannot be a replacer channel and so at least one of the entries of $\bold{D}$, say $D_X$, should be nonzero. Further lower bounding, we get
\begin{align}
    \Var[\partial_\mu C]\ge\sum_{Q \in \{X,Y,Z\}} \frac{1}{64}   \left(\frac{D^2_Q \norm{\bold{t}}^2 _2}{3|\mathrm{C}_2|}\right)^{|P|(L-k+1)} \norm{H_{\mu}}^2_{\infty}. 
    \nonumber
\end{align}
This quantity has the claimed scaling, so we can conclude the proof.
\end{proof}
It might be useful to give the same scaling without the asymptotic notation. 
\begin{remark}[Scaling without asymptotic notation]
The lower bound (without the asymptotic notation) we found in the previous Proposition~\ref{th:variance_PDlb} on the partial derivative with respect to the parameter $\theta_\mu$ of the gate $\exp(-i \theta_{\mu} H_{\mu})$ in the $k$-layer of a $L$-depth circuit is:
\begin{align}
    \Var[\partial_\mu C]\ge\sum_{Q \in \{X,Y,Z\}} \frac{1}{64}   \left(\frac{D^2_Q \norm{\bold{t}}^2 _2}{3|\mathrm{C}_2|}\right)^{|P|(L-k+1)} \norm{H_{\mu}}^2_{\infty},  
    \nonumber
\end{align}
where $\bold{D}$ and $\bold{t}$ are the noise parameters (see Lemma~\ref{le:normal}), $|\mathrm{C}_2|$ is the size of the $2$-qubits Clifford group.
\end{remark}
We note that even if the non-unital noise rate $\norm{\bold{t}}_2$ is polynomially small in the number of qubits, then the derived lower bound still indicates absence of barren plateaus (due to the last few layers).
Proposition~\ref{th:variance_PDlb} readily leads to the following conclusion regarding a lower bound on the expected value of the $2$-norm of the gradient.
\begin{corollary}[Lower bound on the expected value of the $2$-norm of the gradient]
\label{cor:lbGrad}
Let us consider a cost function associated to a local Pauli $P$ with $|P|=\Theta(1)$, and the same assumption as in Proposition~\ref{th:variance_PDlb}. Then, we have
    \begin{align}
        \Ex[\norm{\nabla C}_2^2]\ge \Omega\left(1\right).
    \end{align}
\end{corollary}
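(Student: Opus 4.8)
The plan is to derive Corollary~\ref{cor:lbGrad} as a direct consequence of the partial-derivative lower bound in Proposition~\ref{th:variance_PDlb} (equivalently Theorem~\ref{th:variance_PD}), by summing the contributions of a well-chosen subset of trainable parameters. First I would recall that, since the local cost function has $|P|=\Theta(1)$, Proposition~\ref{th:variance_PDlb} gives, for any parameter $\theta_\mu$ associated with a two-qubit gate in the $k$-th layer \emph{whose support lies in the light cone of $\Phi_{[k,L]}$ with respect to $P$}, the bound $\Var[\partial_\mu C]\ge \exp(-\Theta(|P|(L-k)))=\exp(-\Theta(L-k))$. In particular, focusing on parameters belonging to the \emph{last} layer ($k=L$), or more generally to the last $\Theta(1)$ layers, this lower bound is $\Omega(1)$: there is a constant $\gamma>0$, depending only on the noise parameters $\bold{D},\bold{t}$, the fixed locality $|P|$, and $|\mathrm{C}_2|$, such that $\Var[\partial_\mu C]\ge\gamma$ for every such $\mu$ in the light cone.

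Next I would assemble the gradient norm. By definition $\Ex[\norm{\nabla C}_2^2]=\sum_{\mu=1}^m \Ex[(\partial_\mu C)^2]=\sum_{\mu=1}^m \Var[\partial_\mu C]$, where the last equality uses Lemma~\ref{le:zeroDer}, i.e.\ $\Ex[\partial_\mu C]=0$. Discarding all but the parameters in, say, the final layer $k=L$ that lie in the light cone of $P$, we get
\begin{align}
\Ex[\norm{\nabla C}_2^2]\ \ge\ \sum_{\mu\in S} \Var[\partial_\mu C]\ \ge\ |S|\,\gamma,
\end{align}
where $S$ denotes that set of light-cone parameters in the last layer. Since the circuit architecture is geometrically local in constant dimension and each qubit is acted on by at least one gate in every layer (Definition~\ref{def:localdesignlayer}), the light cone of the local observable $P$ intersected with the last layer contains at least one parametrized two-qubit gate, so $|S|\ge 1$. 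Hence $\Ex[\norm{\nabla C}_2^2]\ge\gamma=\Omega(1)$, which is the claim.

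The only real subtlety — and the step I would treat most carefully — is justifying that the index set $S$ is nonempty, i.e.\ that there genuinely is a trainable parameter satisfying the light-cone hypothesis of Proposition~\ref{th:variance_PDlb}. This is where one must invoke the convention that the parametrized gates $\exp(-i\theta_\mu H_\mu)$ sit at the start of each unitary layer $\mathcal{U}_i$ with $H_\mu$ two-local, together with the fact that for a local Pauli $P$ the set $\mathrm{Light}_{\mathcal{F}_L}(P)=\supp(P)$ (up to the one-layer spread of the last unitary and single-qubit layer) is nonempty and overlaps the support of at least one two-qubit gate. One should also note that the hypotheses of Proposition~\ref{th:variance_PDlb} are exactly the standing assumptions of the corollary: non-unital noise ($\norm{\bold{t}}_2=\Theta(1)$) and noise that is not a replacer channel ($\norm{\bold{D}}_2$ a nonzero constant). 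Everything else is bookkeeping: one does not even need the full strength of the $L-k$ dependence, only the $k=L$ (or $L-k=O(1)$) slice, so no delicate estimate is required beyond what Proposition~\ref{th:variance_PDlb} already supplies.
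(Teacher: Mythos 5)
Your proposal is correct and follows essentially the same route as the paper: the paper likewise writes $\Ex[\norm{\nabla C}_2^2]=\sum_\mu\Ex[(\partial_\mu C)^2]$, uses the zero-mean property of the partial derivatives, discards everything except a single last-layer parameter whose gate lies in the light cone of $P$, and applies Proposition~\ref{th:variance_PDlb} with $L-k=0$ to get an $\Omega(1)$ bound. Your extra care in checking that such a light-cone gate actually exists in the last layer is a point the paper only asserts parenthetically, but it is the same argument.
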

\begin{proof}
The proof follows immediately by focusing only on the last parameter $\theta_m$ of the last layer (which gate is in the light cone of $P$). In particular, we have
    \begin{align}
        \Ex[\norm{\nabla C}_2^2]\ge \sum^{m}_{\mu=1}\Ex[\partial_\mu C^2]\ge \Ex[\partial_m C^2] = \Var[\partial_m C^2] \ge \Omega\left(1\right),
    \end{align}
where we have used the fact that the partial derivative has zero mean and Proposition~\ref{th:variance_PDlb}.
\end{proof}
We can now rephrase our previous result in terms of a probability statement.
\begin{corollary}[Probability statement]
Assuming that the cost function has a number of free parameters upper bounded by \(O\!\left(\mathrm{poly}(n)\right)\), we have
\begin{align}
    \mathrm{Prob}\left( \norm{\nabla C}_2^2   > \Omega(1) \right)\ge \Omega({1}/{\mathrm{poly}(n)}).
\end{align}
\end{corollary}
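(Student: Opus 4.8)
The plan is to promote the \emph{first-moment} lower bound $\Ex[\norm{\nabla C}_2^2] = \Omega(1)$ from Corollary~\ref{cor:lbGrad} into a lower bound on the probability that $\norm{\nabla C}_2^2$ is large, by pairing it with a crude but \emph{deterministic} polynomial upper bound on $\norm{\nabla C}_2^2$ and then invoking a reverse-Markov (Paley--Zygmund-type) argument. The only two ingredients needed are the pointwise derivative bound from Lemma~\ref{le:derupp} and the hypothesis that the number of free parameters is $O(\mathrm{poly}(n))$.

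Concretely, I would first record the deterministic bound. Applying Lemma~\ref{le:derupp} with $\alpha = 1$, with $H = P$ a local Pauli (so $\norm{H}_\infty = 1$) and with $\norm{H_\mu}_\infty \le 1$, every partial derivative satisfies $|\partial_\mu C| \le 2$ for \emph{every} circuit in the ensemble. Hence, writing $m = O(\mathrm{poly}(n))$ for the number of variational parameters,
\[
Y \coloneqq \norm{\nabla C}_2^2 = \sum_{\mu=1}^m (\partial_\mu C)^2 \le 4m =: M = O(\mathrm{poly}(n))
\]
holds surely over the choice of circuit. Setting $a \coloneqq \Ex[Y]$, Corollary~\ref{cor:lbGrad} gives $a = \Omega(1)$, and since $0 \le Y \le M$ we may split the expectation over the events $\{Y \le a/2\}$ and $\{Y > a/2\}$:
\[
a = \Ex[Y] \le \frac{a}{2}\,\mathrm{Prob}\!\left(Y \le \tfrac{a}{2}\right) + M\,\mathrm{Prob}\!\left(Y > \tfrac{a}{2}\right) \le \frac{a}{2} + M\,\mathrm{Prob}\!\left(Y > \tfrac{a}{2}\right).
\]
Rearranging yields $\mathrm{Prob}(Y > a/2) \ge a/(2M) = \Omega(1/\mathrm{poly}(n))$, and because the threshold $a/2$ is itself $\Omega(1)$, this is exactly the claimed statement $\mathrm{Prob}(\norm{\nabla C}_2^2 > \Omega(1)) \ge \Omega(1/\mathrm{poly}(n))$.

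I do not expect any real obstacle here; the argument is a two-line deduction from results already in hand. The only point requiring a little care is that the upper bound $M$ must be genuinely independent of the sampled circuit instance, which is guaranteed because the estimate $|\partial_\mu C| \le 2$ from Lemma~\ref{le:derupp} is worst-case over the ensemble, and because the parameter count is assumed polynomial. If one wished to sharpen $M$ one could instead bound $\Ex[\norm{\nabla C}_2^2] = \sum_\mu \Ex[(\partial_\mu C)^2]$ using the per-layer decay of Proposition~\ref{th:ubvariance}, but this refinement is unnecessary for the stated conclusion.
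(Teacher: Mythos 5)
Your proposal is correct and follows essentially the same route as the paper: combine the first-moment lower bound of Corollary~\ref{cor:lbGrad} with the deterministic bound $\norm{\nabla C}_2^2\le 4m$ obtained from Lemma~\ref{le:derupp}, then apply a reverse-Markov argument. The only cosmetic difference is that the paper packages this last step as a standalone inequality (Lemma~\ref{le:probstat2} in the Miscellaneous section), whereas you re-derive it inline by splitting the expectation over the events $\{Y\le a/2\}$ and $\{Y>a/2\}$ — the computation is identical.
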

\begin{proof}
By applying the probability inequality in Lemma~\ref{le:probstat2} with $f\coloneqq \norm{\nabla C}_2^2$ and utilizing Corollary~\ref{cor:lbGrad}, we arrive at
\begin{align}
    \mathrm{Prob}\left( \norm{\nabla C}_2^2  > \Omega(1) \right) \ge \frac{\Omega(1)}{\sup(|\norm{\nabla C}_2^2|)}.
\end{align}
In order to conclude, we need to establish an upper bound for $\sup(|\norm{\nabla C}_2^2|)$.
This upper bound can be derived as 
\begin{align}
    \sup(|\norm{\nabla C}_2^2|)&\le m \max_{\mu \in [m]} (\partial_\mu C)^2  
    \le  4 m \max_{\mu \in [m]}\norm{ H_\mu}^2_\infty \norm{ P}^2_\infty\le 4 m.
\end{align}
In this equation, $m$ represents the number of parameters, and we employ Lemma~\ref{le:derupp} in the final step.
\end{proof}
In summary, we have shown that the probability of sampling an instance of a circuit in which the gradient is larger than a constant is not exponentially small. 
However, it is important to stress that achieving a large average gradient norm can be accomplished by focusing on the last layers. In fact, the components corresponding to the initial layers of a linear depth circuit are exponentially small.

\subsection{Improved upper bounds for unital noise}
\label{sub:unital}
In this section, we present improved upper bounds on the barren plateaus phenomenon in the context of random quantum circuit ansatz with unital noise. Our derived bounds are tighter compared to those presented in Ref.~\cite{nibp}. Notably, our approach leverages the randomness of the circuit, whereas~\cite{nibp} relied solely on the contraction property of the unital noise channel analyzed.
We start by showing the variance of expectation values in the case of random quantum circuits with unital noise. Up to our knowledge, this was not known before. The noiseless case was instead addressed in Ref.\  \cite{napp2022quantifying}. In the following, due to Lemma~\ref{le:mixham}, we can focus on Pauli observables without loss of generality.
\begin{proposition}[Improved expectation values concentration for unital noise]
\label{prop:expdecayPunital}
Let $P\in \{I,X,Y,Z\}^{\otimes n}$, $\rho_0$ be any quantum state, and $L$ be the depth of the noisy circuit $\Phi$ defined in Eq.~\eqref{eq:randcirc} in arbitrary dimension. Specifically, we assume that the noise is unital. Then, we have
\begin{align}
    \Var[\Tr(P\Phi(\rho_0))] \le 4 c^{|P|+L-1},
\end{align}
where the parameter $c$ is defined in Eq.~\eqref{eq-noise_value}.
\end{proposition}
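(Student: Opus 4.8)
The plan is to obtain this bound as an almost immediate consequence of the effective-depth estimate already proved in Proposition~\ref{prop:expdecayP}, exploiting the one structural feature that distinguishes unital noise: the maximally mixed state $I_n/2^n$ is a fixed point of the entire circuit. First I would dispose of the case $P = I_n$, where $\Tr(I_n\Phi(\rho_0))=1$ deterministically, so $\Var[\Tr(I_n\Phi(\rho_0))]=0$ and the claim is trivial. For $P\neq I_n$, the final layer of single-qubit gates in $\Phi$ is a global $1$-design (Lemma~\ref{le:qubitonedesign}), hence $\Ex[\Tr(P\Phi(\rho_0))]=0$ and $\Var[\Tr(P\Phi(\rho_0))]=\Ex[\Tr(P\Phi(\rho_0))^2]$.

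Next I would use that, when $\mathcal{N}$ is unital, every unitary layer $\mathcal{U}_i$ and every noise layer $\mathcal{N}^{\otimes n}$ fixes $I_n/2^n$, so $\Phi(I_n/2^n)=I_n/2^n$ and therefore $\Tr(P\Phi(I_n/2^n))=2^{-n}\Tr(P)=0$ for $P\neq I_n$. Consequently $\Tr(P\Phi(\rho_0))=\Tr\!\big(P\,\Phi(\rho_0-I_n/2^n)\big)$, and since both $\rho_0$ and $I_n/2^n$ are quantum states, Proposition~\ref{prop:expdecayP} applied with $\rho=\rho_0$ and $\sigma=I_n/2^n$ gives $\Ex[\Tr(P\Phi(\rho_0-I_n/2^n))^2]\le 4c^{|P|+L-1}$, which is exactly the claim. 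The conclusion of Proposition~\ref{prop:expdecayP} requires nothing beyond $c\le 1$, which holds for every single-qubit channel by Lemma~\ref{le:contractionnormal}, so the reduction is valid even in the degenerate unitary-noise case (where the bound is anyway vacuous).

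Alternatively — and this is the route that genuinely uses the randomness of the circuit and yields the slightly cleaner constant $c^{|P|+L-1}$ — I would re-run the argument of Proposition~\ref{prop:expdecayP} directly, noting that unitality collapses the per-layer step. Since $\bold{t}=0$ for unital $\mathcal{N}$, Eq.~\eqref{eq:adjointeq} gives $\mathcal{N}^{*}(Q)=D_Q Q$ for $Q\in\{X,Y,Z\}$; hence, after passing to the Heisenberg picture and averaging out a layer of single-qubit $2$-designs via Lemma~\ref{le:mixham}, peeling one noise layer off a propagated Pauli $Q$ merely rescales it, producing $\prod_{j\in\supp(Q)}D_{Q_j}^2\,\Ex[\Tr(Q\,(\cdots))^2]$, and summing over the $3^{|P|}$ Paulis with $\supp(Q)=\supp(P)$ produces the factor $3^{-|P|}\|\bold{D}\|_2^{2|P|}=c^{|P|}$ together with $\max_{R\neq I}\Ex[\Tr(R\,(\cdots))^2]$. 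The crucial point is that, unlike the non-unital case, the propagated Pauli never acquires an identity component from the noise, and since the two-qubit gates can be taken Clifford (second moments, $2$-design) and Cliffords send non-identity Paulis to non-identity Paulis, the tracked Pauli stays non-identity through every layer — which is precisely why the maximally-mixed-state subtraction used in Proposition~\ref{prop:expdecayP} is unnecessary here. Iterating over the remaining $L-1$ layers, each contributing a factor $c^{|R|}\le c$ because the Pauli weight is at least one, and finally bounding $\max_{R\neq I}\Tr(R\rho_0)^2\le 1$, yields $\Ex[\Tr(P\Phi(\rho_0))^2]\le c^{|P|+L-1}\le 4c^{|P|+L-1}$.

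Neither route presents a serious obstacle; the work is essentially bookkeeping. The one genuine subtlety is the $P=I_n$ edge case, where the identity $\Tr(P\Phi(\rho_0))=\Tr(P\Phi(\rho_0-I_n/2^n))$ fails (the right side is $0$, the left is $1$) but the variance vanishes regardless, so the statement still holds; and in the direct proof one must verify, layer by layer, that the $2$-design/Clifford substitution and the extraction of single-qubit $2$-design layers are licensed exactly as in the proof of Proposition~\ref{prop:expdecayP}.
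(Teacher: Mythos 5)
Your first route is exactly the paper's proof: subtract the maximally mixed state using unitality ($\Phi(I_n/2^n)=I_n/2^n$) and invoke Proposition~\ref{prop:expdecayP}; your explicit handling of the $P=I_n$ edge case is a small improvement in rigor over the paper, which tacitly assumes $P$ traceless. The alternative direct derivation you sketch is also sound but unnecessary given the reduction.
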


\begin{proof}
Because our circuit ends with a layer of random single-qubit gates, it holds that $\Ex[\Tr(P \Phi(\rho))]=0$, following from Lemma~\ref{le:qubitonedesign}.
Thus, we focus on $\Ex[\Tr(P\Phi(\rho_0))^2]$. We have
\begin{align}
    \Ex[\Tr(P\Phi(\rho_0))^2]=\Ex\left[\Tr\left(P\Phi\!\left(\rho_0-\frac{I_n}{2^n}\right)\right)^2\right],
\end{align}
where we have used the unitality of the noise channels to get $\Phi(I_n)=I_n$, and the fact that the Pauli operators are traceless. The claim follows by applying the effective depth Theorem~\ref{prop:expdecayP}.
\end{proof}

We now show an upper bound on the partial derivative variance in the case of unital noise. Due to Lemma~\ref{le:mixhamPARTIAL}, we can focus on Pauli observables without loss of generality.
\begin{proposition}[Improved upper bound on the partial derivative for unital noise]
\label{th:ubvarianceUNIT}
Let $C\coloneqq \Tr(P\Phi(\rho_0))$ be the cost function, where $P\in \{I,X,Y,Z\}^{\otimes n}$, $\rho_0$ is an arbitrary initial state, and $\Phi$ is a quantum circuit of depth $L$ in arbitrary dimension. We assume that the noise is unital. Let $\mu$ denote a parameter of any $2$-qubit gate $\exp(-i\theta_\mu H_\mu)$ in the circuit such that $\norm{H_\mu}_{\infty}\le 1$. Then, we have
\begin{align}
    \Var[\partial_\mu C]\le 4 c^{|P|+L-1}.
\end{align}
\end{proposition}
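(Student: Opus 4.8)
Looking at Proposition \ref{th:ubvarianceUNIT}, I need to prove that for unital (non-unitary) noise, the partial derivative variance with respect to a parameter of a $2$-qubit gate in the $k$-th layer satisfies $\Var[\partial_\mu C]\le 4c^{|P|+L-1}$. Notice the bound is stated with $L$ in the exponent (not $L-k$), which is stronger than the non-unital analogue in Proposition \ref{th:ubvariance}; this is possible precisely because unitality lets us `absorb the lower part of the circuit into the identity component'.

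\textbf{Plan of proof.} The strategy is to reduce the partial derivative variance to an expectation-value variance of the kind already controlled by Proposition \ref{prop:expdecayPunital} (or equivalently the effective-depth Proposition \ref{prop:expdecayP}). First I would invoke Lemma \ref{le:zeroDer} to replace $\Var[\partial_\mu C]$ by $\Ex[(\partial_\mu C)^2]$, and Lemma \ref{le:mixhamPARTIAL} to reduce to the case where $P$ is a Pauli (so $H$ in the cost function is a single Pauli operator). Then I would use the parameter-shift-style bound from the first proof method of Proposition \ref{th:ubvariance}: writing $C$ as a function of $\theta_\mu$ and applying the finite-difference inequality together with the derivative bounds of Lemma \ref{le:derupp}, one gets
\begin{align}
\Ex[(\partial_\mu C)^2]\le \frac{1}{4h^2}\Ex\!\left[\left(\Tr(P\Phi^{+}(\rho_0))-\Tr(P\Phi^{-}(\rho_0))\right)^2\right] + O(h^4) + O(h),
\nonumber
\end{align}
where $\Phi^{\pm}$ denotes the circuit with $\theta_\mu$ shifted by $\pm h$. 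But instead of optimizing $h$ crudely, the key observation for the unital case is that $\Phi^{+}(\rho_0)-\Phi^{-}(\rho_0)$ is a traceless Hermitian operator, and writing it in the Pauli basis and applying unitality ($\Phi(I_n)=I_n$) means the full depth-$L$ circuit acts on a traceless input. This lets me apply the effective-depth bound of Proposition \ref{prop:expdecayP} (or Theorem \ref{th:theor1}) to get $\Ex[\Tr(P\Phi^{+}(\rho_0-I_n/2^n) - \ldots)^2]\le 4c^{|P|+L-1}$, since the noise is not unitary so $c<1$.

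\textbf{Alternative (cleaner) route.} Actually, the cleanest approach mirrors Proposition \ref{th:ubvarianceUNIT}'s likely intended proof: use the commutator formula (Lemma \ref{le:deriv}) to write $\partial_\mu C = i\Tr(\Phi_{[1,k-1]}(\rho_0)[H_\mu,\Phi^*_{[k,L]}(P)])$. Because the noise is \emph{unital}, the adjoint layers $\mathcal{N}^{*\otimes n}$ appearing inside $\Phi^*_{[k,L]}$ are trace-preserving as well (unital channels have trace-preserving adjoints), and more importantly $\Phi_{[1,k-1]}$ is a genuine channel with $\Phi_{[1,k-1]}(I_n)=I_n$; so I can replace $\rho_0$ by $\rho_0 - I_n/2^n$ in the trace without changing the value (the identity part commutes with $H_\mu$ inside the commutator, or contributes a trace-zero term). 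Then bound $|\partial_\mu C|\le \|\Phi_{[1,k-1]}(\rho_0-I_n/2^n)\|_1\cdot\|[H_\mu,\Phi^*_{[k,L]}(P)]\|_\infty$ — no, this loses the depth dependence. Better: square it, apply Lemma \ref{le:mixhamPARTIAL} to push through the expectation, then run the \emph{same peeling argument} as in Proposition \ref{th:ubvariance} (proof method 2), recursing layer by layer and picking up a factor $c$ per layer; the difference is that for unital noise the recursion can be continued all the way through layers $1,\ldots,k$ as well (unlike the non-unital worst-case where the first $k-1$ layers give no gain), because unitality guarantees the Pauli stays non-identity under the adjoint dynamics in the relevant traceless subspace. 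Iterating over all $L$ layers plus the $|P|$ Pauli-weight factor gives $c^{|P|+L-1}$, and the constant $4$ comes from $\|\rho_0-I_n/2^n\|_1\le 2$.

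\textbf{Main obstacle.} The delicate point is justifying that one may continue the layer-by-layer peeling recursion through \emph{all} $L$ layers rather than stopping at layer $k$: in the commutator expression the `input side' is $\Phi_{[1,k-1]}(\rho_0)$, and one must argue the maximization over Paulis on that side never collapses to the identity (which would kill the bound prematurely but also would not give the claimed power of $L$). Using unitality to subtract $I_n/2^n$ handles this cleanly, since a traceless operator fed into a trace-preserving circuit stays traceless, so the identity-Pauli term always vanishes — exactly as in the proof of Proposition \ref{prop:expdecayP}. I expect the bulk of the remaining work to be routine bookkeeping of the multinomial/Pauli-mixing factors, already carried out in Propositions \ref{prop:upvar}, \ref{prop:expdecayP}, and \ref{th:ubvariance}, so the write-up can largely cite those and emphasize only the unitality step that upgrades $L-k$ to $L$.
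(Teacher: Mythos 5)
Your second (``cleaner'') route is essentially the paper's own proof: peel the top $L-k$ layers exactly as in the non-unital partial-derivative bound to collect $c^{|P|+L-k-1}$, then use unitality (subtracting $I_n/2^n$, i.e.\ invoking Proposition~\ref{prop:expdecayPunital} applied to $\Phi_{[1,k]}$ with the Pauli $i[R,Q]$ after expanding $H_\mu$ in the two-qubit Pauli basis) to extract the remaining $4c^{k}$ from the bottom layers, which is precisely how the paper upgrades $L-k$ to $L$. Your first finite-difference sketch should indeed be discarded: the $1/(4h^2)$ prefactor against the $O(h)$ Taylor remainder forces the optimized bound down to $O(c^{(|P|+L-1)/3})$, which does not reach the claimed exponent.
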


\begin{proof}
By repeating the same steps of the proof of Proposition~\ref{prop:upvar}, namely the upper bound on the variance for non-unital noise, we get
\begin{align}
    \Var[\partial_\mu C]=\Ex[(\partial_\mu C)^2]&\le c^{|P|+L-k-1}\max_{Q\in \{I,X,Y,Z\}^{\otimes n} \setminus I_n} \Ex (\Tr\!\left(\Phi_{[1,k]}(\rho_0) i \left[H_\mu ,Q\right]\right))^2 .
\end{align}
Since $H_\mu$ is a $2$-qubit gate, it can be expanded in its $2$-qubits Pauli decomposition, 
as
\begin{align}
    H_\mu= \sum_{R \in \{I,X,Y,Z\}^{\otimes 2}} b_R R.
\end{align}
Thus, substituting and using Lemma~\ref{le:mixham}, we have
\begin{align}
   \Ex (\Tr\!\left(\Phi_{[1,k]}(\rho_0) i \left[H_\mu ,Q\right]\right))^2&=\sum_{R \in \{I,X,Y,Z\}^{\otimes 2}} b^2_R \Ex (\Tr\!\left(\Phi_{[1,k]}(\rho_0) Q_R \right))^2.
\end{align}
Here, we note that $Q_R \coloneqq i \left[R ,Q\right]$ is a Hermitian operator, and in particular, it is a Pauli operator (or the zero operator). Thus, we have
\begin{align}
   \Ex (\Tr\!\left(\Phi_{[1,k]}(\rho_0) i \left[H_\mu ,Q\right]\right))^2&=\sum_{R \in \{I,X,Y,Z\}^{\otimes 2}} b^2_R \Ex (\Tr\!\left(\Phi_{[1,k]}(\rho_0) Q_R \right))^2\\
   \nonumber
   &\le c^{k}  \sum_{R \in \{I,X,Y,Z\}^{\otimes 2}} b^2_R,
   \nonumber
\end{align}
where in the last step we have used Proposition~\ref{prop:expdecayPunital} and that the Pauli weight of the non-zero Pauli is lower bounded by one.
Now we observe that
\begin{align}
     \sum_{R \in \{I,X,Y,Z\}^{\otimes 2}} b^2_R= \frac{1}{4}\norm{H_\mu}^2_2 \le \norm{H_{\mu}}^2_{\infty}\le 1,
\end{align}
which concludes the proof.
\end{proof}
This result improves upon the partial derivative variance upper bound presented in Ref.\  \cite{nibp}, where the upper bound scaled as
\begin{align}
     \Var[\partial_\mu C]=O(n^{1/2}2^{-\alpha L}),
\end{align}
for some positive constant $\alpha$. It is noteworthy that this latter upper bound has no dependence on the Pauli weight, unlike ours. Furthermore, it includes a $n^{1/2}$ factor in front of the exponential, making it meaningful only at depths $\Omega(\log(n))$.
Moreover, our result is more general than that shown in Ref.~\cite{nibp} also because it extends to any unital noise, whereas the results shown in Ref.~\cite{nibp} apply only to primitive unital noise, which is only a particular type of unital noise (e.g., dephasing is not included in this class).

\newpage
\section{Purity and kernel methods under non-unital noise}
When a circuit is interspersed with primitive, unital noise, the decay in purity can be investigated by employing well-known entropy accumulation techniques (see, for instance, 
Refs.\ \cite{razborov2003upper, kempe2008upper, limitations2021, optimal, hirche2022contraction}). However, this approach has limited applicability under non-unital noise, as the noise channel can potentially decrease the entropy of the system. Here, we address this gap in the literature by providing upper and lower bounds on the purity of a noisy circuit, leveraging prior techniques along with the tools developed in the present work.
As an application, we employ our upper bounds to investigate the limitations of quantum kernel methods under non-unital noise.

\subsection{Purity of average and worst-case circuits}
In this section, we explore the decay in purity under non-unital noise. We propose two distinct approaches: first, we provide upper and lower bounds for average-case circuits under possibly non-unital noise; second, we provide upper bounds for worst-case circuits, under the further assumption that the noise channel can be decomposed into a depolarizing channel followed by an arbitrary channel.

\subsubsection{Purity of an average-case noisy circuit}
We now upper and lower bound the expected purity of the output state of a noisy circuit, as defined in Eq.~\eqref{eq:randcirc}.

\begin{proposition}[Average-case upper and lower bounds on the purity]
Let $\rho$ be a quantum state. Then, at any depth of the noisy circuit $\Phi$, we have
\label{prop:avg-purity}
     \begin{equation}
         \left(\frac{1+{\|\bold{t}\|^2_2}}{2}\right)^n \leq \mathbb{E}\Tr[\Phi(\rho)^2] \leq \left(\frac{1+{\|\bold{t}\|^2_2 + \|\bold{D}\|^2_2}}{2}\right)^n.
     \end{equation}
\end{proposition}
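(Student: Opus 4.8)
The plan is to reduce everything to the per-Pauli second-moment estimates that were already extracted inside the proofs of Proposition~\ref{prop:lbvar} and Proposition~\ref{prop:upvar}, and then to sum over the Pauli basis with the binomial theorem. First I would rewrite the purity in the Pauli basis: since $\Tr[\Phi(\rho)^2]=\norm{\Phi(\rho)}_2^2$ and, for any Hermitian $n$-qubit operator $H$, $\norm{H}_2^2=\tfrac{1}{2^n}\sum_{P\in\{I,X,Y,Z\}^{\otimes n}}\Tr(PH)^2$ (the identity used in the proof in Subsection~\ref{sub:tracedistance}), taking expectations and swapping the finite sum with $\Ex$ gives
\[
\Ex\Tr[\Phi(\rho)^2]=\frac{1}{2^n}\sum_{P\in\{I,X,Y,Z\}^{\otimes n}}\Ex\!\left[\Tr(P\Phi(\rho))^2\right].
\]

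Next I would plug in the single-Pauli bounds. Specializing $H=P$ in the proof of Proposition~\ref{prop:lbvar} yields $\Ex[\Tr(P\Phi(\rho))^2]\ge(\norm{\bold{t}}_2^2/3)^{|P|}$, and specializing $H=P$ in the proof of Proposition~\ref{prop:upvar} yields $\Ex[\Tr(P\Phi(\rho))^2]\le c^{|P|}$, with $c=\tfrac{1}{3}(\norm{\bold{t}}_2^2+\norm{\bold{D}}_2^2)$; for $P=I_n$ both inequalities hold trivially, since $\Tr(I_n\Phi(\rho))=1$ and $(\cdot)^{0}=1$. (The two propositions are phrased as variance statements for a Hamiltonian, but their proofs establish precisely these term-by-term bounds $(\norm{\bold{t}}_2^2/3)^{|P|}\le\Ex[\Tr(P\Phi(\rho))^2]\le c^{|P|}$, which is what I will cite.) Using that there are exactly $\binom{n}{w}3^{w}$ Pauli operators of weight $w$, the binomial theorem gives
\[
\sum_{P}\Bigl(\tfrac{\norm{\bold{t}}_2^2}{3}\Bigr)^{|P|}=\sum_{w=0}^{n}\binom{n}{w}3^{w}\Bigl(\tfrac{\norm{\bold{t}}_2^2}{3}\Bigr)^{w}=(1+\norm{\bold{t}}_2^2)^{n},
\qquad
\sum_{P}c^{|P|}=\sum_{w=0}^{n}\binom{n}{w}(3c)^{w}=(1+\norm{\bold{t}}_2^2+\norm{\bold{D}}_2^2)^{n}.
\]
Dividing by $2^n$ produces the claimed lower and upper bounds on $\Ex\Tr[\Phi(\rho)^2]$, uniformly in the depth.

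There is essentially no serious obstacle: the argument is pure bookkeeping once the per-Pauli estimates are in hand. The only point deserving a careful line is the remark that those estimates, including the trivial $P=I_n$ case, are exactly what the cited proofs produce (the proofs of Propositions~\ref{prop:lbvar} and~\ref{prop:upvar} both reduce $\Ex[\Tr(P\Phi(\rho))^2]$ to a product over $\supp(P)$ before summing, so the per-Pauli bound is already there), together with checking the weight-counting factor $\binom{n}{w}3^{w}$. Everything else is the two binomial sums above.
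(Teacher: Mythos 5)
Your proposal is correct and follows essentially the same route as the paper's proof: expand the purity in the Pauli basis, insert the per-Pauli second-moment bounds $(\|\bold{t}\|_2^2/3)^{|P|}\le\Ex[\Tr(P\Phi(\rho))^2]\le c^{|P|}$ extracted from the proofs of Propositions~\ref{prop:lbvar} and~\ref{prop:upvar}, and sum with the binomial theorem using the $\binom{n}{w}3^{w}$ weight count. Your explicit handling of the $P=I_n$ term is a small point of extra care the paper leaves implicit.
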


\begin{proof}
We first recall that the purity can be expressed in the Pauli basis as 
    \begin{align}
    \Tr[\Phi(\rho)^2] = \Tr[\mathbb{F}\Phi(\rho)^{\otimes 2}] = \frac{1}{2^n}\sum_{P\in\{I,X,Y,Z\}^{\otimes n}}  \Tr[P^{\otimes 2}\Phi(\rho)^{\otimes 2}]
    =\frac{1}{2^n} \sum_{P\in\{I,X,Y,Z\}^{\otimes n}} \Tr[P\Phi(\rho)]^2.
    \end{align}
    Hence, plugging the upper and lower bound on the expected second moments (Eqs.\ \ref{eq:lbpaulipurity},\ \ref{eq:ub-pauli-purity}) yields the desired results
    \begin{align}
        \mathbb{E}\Tr[\Phi(\rho)^2] = \frac{1}{2^n}\sum_{P\in\{I,X,Y,Z\}^{\otimes n}}  \mathbb{E} \Tr[P\Phi(\rho)]^2 \leq \sum_{P\in\{I,X,Y,Z\}^{\otimes n}}  \left(\frac{\|\bold{t}\|^2_2 + \|\bold{D}\|^2_2}{3}\right)^{|P|}
        \\
        \nonumber
        = \frac{1}{2^n}\sum_{k=0}^n \binom{n}{k} \left({\|\bold{t}\|^2_2 + \|\bold{D}\|^2_2}\right)^{k} = \left(\frac{1+{\|\bold{t}\|^2_2 + \|\bold{D}\|^2_2}}{2}\right)^n,
    \end{align}
    and 
    \begin{align}
        \mathbb{E}\Tr[\Phi(\rho)^2] = \frac{1}{2^n}\sum_{P\in\{I,X,Y,Z\}^{\otimes n}}  \mathbb{E} \Tr[P\Phi(\rho)]^2 \geq \sum_{P\in\{I,X,Y,Z\}^{\otimes n}}  \left(\frac{\|\bold{t}\|^2_2 }{3}\right)^{|P|}
        \\ 
        \nonumber= \frac{1}{2^n}\sum_{k=0}^n \binom{n}{k}   {\|\bold{t}\|^{2k}_2 } = \left(\frac{1+{\|\bold{t}\|^2_2}}{2}\right)^n,
    \end{align}
which ends the proof.
\end{proof}

The upper bound in the above Proposition is exponentially small in \(n\) whenever $\|\bold{t}\|_2^2+\|\bold{D}\|_2^2<1.$
Under this additional assumption,
\begin{equation}
    \mathbb{E}\Tr[\Phi(\rho)^2] \leq 2^{-\Omega(n)}.
\end{equation}
We observe that the bounds given in Proposition \ref{prop:avg-purity} hold also for reduced states, that is states obtained by performing a partial trace on the output state of the circuit. In particular, for any arbitrary state $\rho$, we let $\rho_S \coloneqq  \Tr_{[n]\setminus S} [\rho]$ the reduced state on a subset $S$ of the qubits of size $|S|=k$. Then we have
\begin{equation}
    \left(\frac{1+{\|\bold{t}\|^2_2}}{2}\right)^k \leq \mathbb{E}\Tr[\Phi(\rho)_S^2] \leq \left(\frac{1+{\|\bold{t}\|^2_2 + \|\bold{D}\|^2_2}}{2}\right)^k.  \label{eq:purity-1qubit}
\end{equation}


\subsubsection{Purity of a worst-case noisy circuit}
\label{subsub:purityy}
In this section, we consider a layered circuit $\mathcal{C}$ of the form 
\begin{align}
     \C=\widetilde{\mathcal{N}}^{\otimes n} \circ \mathcal{U}_L\circ \cdots   \circ\widetilde{\mathcal{N}}^{\otimes n}  \circ \mathcal{U}_1,
\end{align}
where we do not make any assumption on the structure of each unitary layer $\mathcal{U}_i =U_i^\dag(\cdot) U_i$.
In contrast, we will make a further assumption on the noise channel. In particular, we will model the local noise as the composition of two single-qubit channels, namely a local depolarizing channel ${\mathcal{N}_p^{(\mathrm{dep})}}(X) = p \frac{I}{2} \Tr[X] + (1-p) X$ and arbitrary noise channel $\mathcal{N}$ expressed in the normal form, i.e.,
\begin{align}
    \widetilde{\mathcal{N}} = {\mathcal{N}} \circ {\mathcal{N}_p^{(\mathrm{dep})}}, 
\end{align}
where $\mathcal{N}(I+\bold{w}\cdot \boldsymbol{\sigma}) = I+(\bold{t}+D\bold{w})\cdot \boldsymbol{\sigma}$.
Under this stronger assumption, we provide two upper bounds on the purity of a worst-case noisy circuit.
We also remark that order of ${\mathcal{N}}$ and $ {\mathcal{N}_p^{(\mathrm{dep})}}(X)$ does not play a central role in our analysis, therefore the same results could be derived inverting their order.

Let us recall the definition of quantum relative entropy and quantum sandwiched Rényi divergence~\cite{MDS+13,WWY14}.
Let $\rho, \sigma$ be two quantum states. If $\operatorname{supp}(\rho) \subseteq \operatorname{supp}(\sigma)$, we define the quantum relative entropy as
\begin{align}
    D(\rho \| \sigma):=\operatorname{Tr}(\rho \log \rho)-\operatorname{Tr}(\rho \log \sigma) .
\end{align}
For a parameter $\alpha \in(0,1) \cup(1, \infty)$, the quantum Rényi divergence of order $\alpha$ is defined as
\begin{align}
    D_\alpha(\rho \| \sigma):=\frac{1}{\alpha-1} \log \operatorname{Tr}\left[\left(\sigma^{\frac{1-\alpha}{2\alpha}}\rho\sigma^{\frac{1-\alpha}{2\alpha}}\right)^{\alpha}\right] .
     \label{eq:relativeentropy}
\end{align}
This definition applies when $\operatorname{supp}(\rho) \subseteq \operatorname{supp}(\sigma)$, for $\alpha \in (1, \infty)$. In the limit $\alpha \rightarrow 1$, the quantum Rényi divergence reduces to the quantum relative entropy, i.e., $\lim _{\alpha \rightarrow 1} D_\alpha(\rho \| \sigma)=$ $D(\rho \| \sigma)$.
The $\infty$-relative entropy is defined, for $\operatorname{supp}(\rho) \subseteq \operatorname{supp}(\sigma)$, as
\begin{equation}
D_{\infty }(\rho \| \sigma):=\inf \left\{\gamma: \rho \leq 2^\gamma \sigma\right\} .
\end{equation}
It is useful to recall that for $\alpha>\beta>0$, the Rényi divergences satisfy the monotonicity property, i.e., $D_\alpha(\rho \| \sigma) \geq D_\beta(\rho \| \sigma)$.

By mean of the \emph{data-processed triangle inequality} (\cite{Christandl_2017}, Theorem 3.1), the authors of Refs.\ \cite{limitations2021, optimal}, obtained an upper bound on the purity of the output of a non-unital channel, which is exponentially small in $n$ when the unital component of the noise `dominates' the non-unital one. We rephrase such result within our model, giving an explicit expression in terms of $p$ and $\bold{t}$.

\begin{corollary}[Worst-case upper bound on the purity]
\label{cor:purity-worst2}
Let $\rho$ an arbitrary quantum state and assume $p>0$ and $\|\bold{t}\|_2 \neq 1$. Then for any constant noise parameters, we have
\begin{equation}
    D_2\left(\mathcal{C}(\rho)\bigg\|\frac{I}{2^n}\right) \leq n\left((1-p)^{2L} + \|\bold{t}\|_2\frac{1 - (1-p)^{2L}}{2p - p^2}\right)\coloneqq  n\cdot \delta_L. 
\end{equation}
This implies the following upper bound on the purity
\begin{equation}
    \Tr[\mathcal{C}(\rho)^2] \leq 2^{ n(\delta_L -1)}.
\end{equation}
\end{corollary}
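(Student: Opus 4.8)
The plan is to derive Corollary~\ref{cor:purity-worst2} from the data-processed triangle inequality of Christandl and M\"uller-Hermes~\cite{Christandl_2017} applied to the Rényi-$2$ divergence, together with the decomposition $\widetilde{\mathcal{N}} = \mathcal{N} \circ \mathcal{N}_p^{(\mathrm{dep})}$ of the local noise. First I would set up the relevant single-qubit quantities: write the channel $\widetilde{\mathcal{N}}$ in its action on the Bloch vector, noting that $\mathcal{N}_p^{(\mathrm{dep})}$ contracts the Bloch vector by $(1-p)$ and $\mathcal{N}$ then applies $\bold{w} \mapsto \bold{t} + D\bold{w}$, so the composed map has contraction matrix $(1-p)D$ and translation $\bold{t}$. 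The key point is that one layer $\mathcal{V}^{\mathrm{single}} \circ \widetilde{\mathcal{N}}^{\otimes n} \circ \mathcal{U}$ can be analyzed as a tensor power of single-qubit channels (up to the unitary layers, which only act by conjugation and do not change divergences to $I/2^n$, since $I/2^n$ is fixed by any unitary channel).

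The core estimate is a per-qubit, per-layer recursion on the Rényi-$2$ divergence from the maximally mixed state. Using the data-processed triangle inequality in the form used in Ref.~\cite{limitations2021,optimal}, one obtains for a single qubit that applying $\widetilde{\mathcal{N}}$ to a state $\tau$ gives
\begin{align}
    D_2\!\left(\widetilde{\mathcal{N}}(\tau)\,\Big\|\,\frac{I}{2}\right) \le (1-p)^2\, D_2\!\left(\tau\,\Big\|\,\frac{I}{2}\right) + D_\infty\!\left(\widetilde{\mathcal{N}}\!\left(\frac{I}{2}\right)\,\Big\|\,\frac{I}{2}\right),
\end{align}
where the factor $(1-p)^2$ comes from the contraction coefficient of the depolarizing part for the Rényi-$2$ divergence, and the second term measures how far the noise moves the maximally mixed state; this latter term is controlled by $\|\bold{t}\|_2$ since $\widetilde{\mathcal{N}}(I/2) = (I + \bold{t}\cdot\boldsymbol{\sigma})/2$. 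Iterating this recursion over the $L$ layers and summing the geometric series $\sum_{j=0}^{L-1} (1-p)^{2j} = \frac{1-(1-p)^{2L}}{1-(1-p)^2} = \frac{1-(1-p)^{2L}}{2p-p^2}$, and bounding the initial divergence $D_2(\tau_0 \| I/2) \le 1$ for a single-qubit state, yields the single-qubit bound $\delta_L = (1-p)^{2L} + \|\bold{t}\|_2 \frac{1-(1-p)^{2L}}{2p-p^2}$. By additivity of the Rényi divergence under tensor products and the fact that tracing out qubits (or equivalently restricting to product test states) cannot increase it, the $n$-qubit bound $D_2(\mathcal{C}(\rho)\|I/2^n) \le n\delta_L$ follows, where the initial state $\rho$ need not be product because one can first bound $D_2$ after the very first noise layer, at which point the state is a tensor power of single-qubit outputs of $\widetilde{\mathcal{N}}$ plus the additivity argument over subsequent layers; more carefully, one invokes the data-processed triangle inequality directly at the $n$-qubit level with $\sigma = I/2^n$ being a fixed point of each unital adjoint and a tensor power. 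Finally, the purity bound follows from $\Tr[\mathcal{C}(\rho)^2] = 2^{-n}\, 2^{D_2(\mathcal{C}(\rho)\|I/2^n)} \le 2^{-n} 2^{n\delta_L} = 2^{n(\delta_L-1)}$, using the identity $D_2(\omega\|I/d) = \log(d\Tr[\omega^2])$.

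The main obstacle I anticipate is handling the interplay between the non-product initial state $\rho$ and the tensor-power structure needed for additivity of $D_2$: the clean per-qubit recursion only applies to single-qubit channels, so I need to either (i) argue that after the first layer $\widetilde{\mathcal{N}}^{\otimes n}$ the relevant divergence already contracts appropriately regardless of input correlations — which follows because the data-processed triangle inequality gives $D_2(\Lambda(\rho)\|\Lambda(I/d)) \le$ [contraction]$\cdot D_2(\rho\|I/d) + D_\infty(\Lambda(I/d)\|I/d)$ for any channel $\Lambda$ and the contraction coefficient of $\widetilde{\mathcal{N}}^{\otimes n}$ for $D_2$ toward $I/2^n$ factorizes multiplicatively across the tensor factors — or (ii) invoke the precise statement from Ref.~\cite{optimal} which already packages this. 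A secondary subtlety is justifying that the unitary layers $\mathcal{U}_i$ are truly harmless: since $\mathcal{U}_i(I/2^n) = I/2^n$ and $D_2$ is invariant under the reversible channel $\mathcal{U}_i$, each $\mathcal{U}_i$ can be absorbed without cost, and the final single-qubit layer $\mathcal{V}^{\mathrm{single}}$ likewise. Once these structural points are nailed down, the remaining computation is the routine geometric-series summation already displayed above.
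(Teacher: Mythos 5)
Your proposal is correct and follows essentially the same route as the paper: both rest on the data-processed triangle inequality as packaged in Lemma C.1 of Ref.~\cite{optimal} (iterated over the $L$ layers, with the $(1-p)^2$ contraction coming from the depolarizing factor and the additive term $D_\infty(\widetilde{\mathcal{N}}^{\otimes n}(I/2^n)\|I/2^n) = n\log(1+\|\bold{t}\|_2)$ computed via tensor-product additivity), followed by the geometric-series summation and the identity $\Tr[\omega^2]=2^{-n+D_2(\omega\|I/2^n)}$. Your fallback option (ii) for handling the non-product initial state — invoking the $n$-qubit-level statement from Ref.~\cite{optimal} directly, with the unitary layers absorbed since they fix $I/2^n$ — is precisely what the paper does.
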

\begin{proof}
We first recall that $\Tr[\rho^2] = 2^{-n + D_2\left(\rho\|{I}/{2^n}\right)}$, then first bound implies the second.
We note the following
\begin{equation}
    D_{\infty}\left(\mathcal{N}^{\otimes n}\left(\frac{I}{2^n}\right)\bigg\| \frac{I}{2^n}\right) = n D_{\infty}\left(\mathcal{N}\left(\frac{I}{2}\right)\bigg\| \frac{I}{2}\right) = n\log(1+\|\bold{t}\|_2) \leq  n \|\bold{t}\|_2 ,
\end{equation}
where the second identity is a special case of Lemma 23 in Ref.~\cite{rubboli2023mixed}.
Moreover,  Lemma C.1 in Ref.~\cite{optimal} ensures 
\begin{equation}
    D_2\left(\mathcal{C}(\rho)\bigg\| \frac{I}{2^n}\right) \leq (1-p)^{2L}  D_2\left(\rho\bigg\| \frac{I}{2^n}\right) + \sum_{t=0}^L (1-p)^{2t} D_{\infty}\left(\mathcal{N}^{\otimes n}\left(\frac{I}{2^n}\right)\bigg\| \frac{I}{2^n}\right) .
\end{equation}
Then the desired upper bound on $D_2\left(\mathcal{C}(\rho)\|{I}/{2^n}\right)$ immediately follows. 
\end{proof}
Note that the term $\delta_L$ converges exponentially fast to $\|\bold{t}\|_2/({2p - p^2})$, and thus in this regime the bound is non-trivial if $\|\bold{t}\|_2\leq 2p - p^2$. 

\subsection{Quantum machine learning under non-unital noise:  Kernel methods}
\label{sec:kernels}
Quantum kernel methods offer a hopeful avenue for advancing quantum machine learning. However, despite certain positive results, as documented in Ref.\  \cite{liu2021rigorous}, these methods remain susceptible to trainability challenges. In particular, the work of \citet{thanasilp2022exponential} has demonstrated that various factors, such as circuit randomness  and unital noise, can potentially compromise their trainability, in analogy to the phenomenon of barren plateaus for cost functions. 
Here we incorporate both unital and non-unital noise in our analysis and we show that fidelity kernels exponentially concentrate even at constant depth. This starkly improves the result of Ref.\ (\cite{thanasilp2022exponential}, Theorem 3) , which predicts exponential concentration at linear depth for unital noise.

\subsubsection{Preliminaries on quantum kernel methods}
Consider an $n$-qubit data-embedding channel $\Phi_{\boldsymbol{x}}$ parametrized by a point $\boldsymbol{x}\in\mathcal{X}$, so that
\begin{equation}
\label{eq:embedding}
    \rho(\boldsymbol{x}) = \Phi_{\boldsymbol{x}}(\rho_0),
\end{equation}
where $\rho_0$ is the initial state of the circuit, usually set as $\rho_0 = \ketbra{0^n}{0^n}$.
A kernel $\kappa : \mathcal{X}\times\mathcal{X}\rightarrow \mathbb{R}^+$ is a similarity measure between pair of points ${\boldsymbol{x}},{\boldsymbol{y}}\in\mathcal{X}$. 
In particular, quantum kernels rely on the quantum embedding scheme described in the Equation~\ref{eq:embedding} above.
We consider the fidelity quantum kernel \cite{havlivcek2019supervised, schuld2021supervised}, defined as
\begin{equation}
    \kappa^{FQ}({\boldsymbol{x}},{\boldsymbol{y}}) = \Tr[\rho({\boldsymbol{x}})\rho({\boldsymbol{y}})].
\end{equation}
Kernel-based learning methods are notable for their capacity to transform data from the original space $\mathcal{X}$ into a higher-dimensional feature space, which in our case coincides with the a $2^n$-dimensional Hilbert space. In this new feature space, inner products are computed, enabling the training of decision boundaries like support vector machines, as explained in reference~\cite{schuld2021supervised}.

\begin{figure}[h]
\label{Fig_final/:noisy-circuit}
\centering

\includegraphics[width=0.7\textwidth]{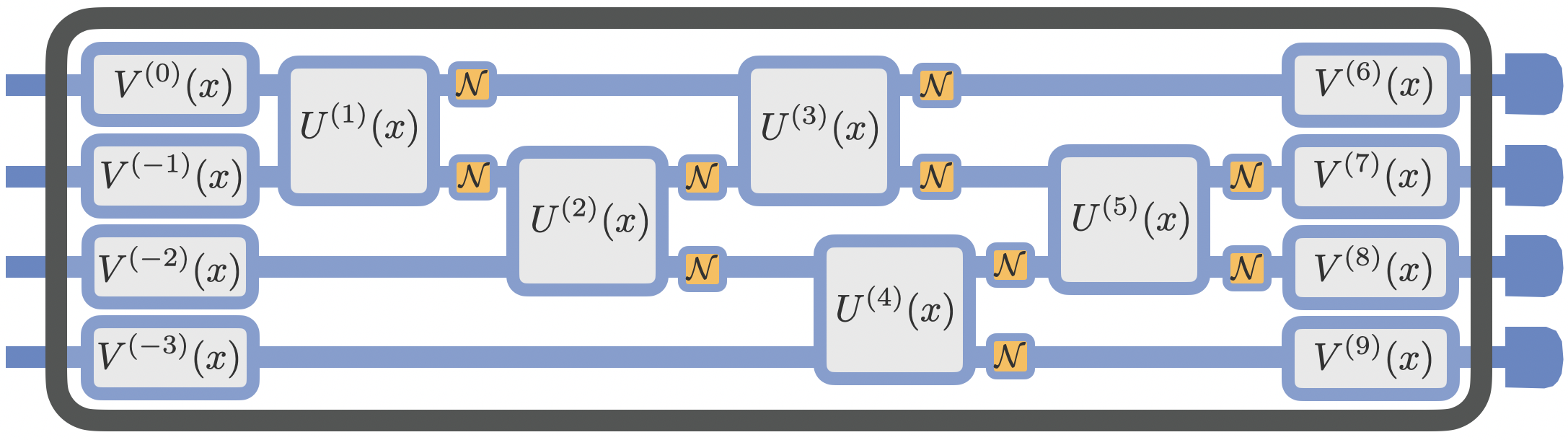}
\caption[Example of a noisy quantum circuit diagram on $n=4$ qudits and $s=5$ two-qudit gates]{Example of a noisy quantum circuit on $n=4$ qubits with two-qubit and single-qubit gates, parametrized by the input vector ${\boldsymbol{x}} \in \mathcal{X}$. }
\label{fig:noisycircuitdiagram}
\end{figure}

\subsubsection{Kernel-based supervised learning}

To better suit our results, we sketch how kernel methods can be used to perform supervised learning. We consider a training set of labelled inputs $\mathcal{S} = \left\{{\boldsymbol{x}}^{(i)},f\left({\boldsymbol{x}}^{(i)}\right)\right\}_{i\in[m]}$, where $f(\cdot)$ is some unknown function that we want to learn. Thus our goal is to find a function $h$ approximating $f$. Thanks to the Representer Theorem (see, for instance, \cite{shalev2014understanding}, Theorem 16.1), the optimal function can be expressed as  
\begin{equation}
    h (z) = \sum_{i=1}^m a_i \kappa \left({\boldsymbol{x}}^{(i)},z\right),
\end{equation}
where the $\boldsymbol{a} = (a_1,a_2,\dots,a_m)$ is a vector of parameters to be optimized with respect to a suitable loss function.

Then, to enable the implementation of kernel methods, it is necessary to estimate the Gram matrix. This matrix, denoted as $\mathcal{G}$, comprises the kernels derived from pairs of inputs within the training set ${\boldsymbol{x}}^{(1)}, {\boldsymbol{x}}^{(2)},\dots, {\boldsymbol{x}}^{(m)}$, and is defined as

\begin{equation}
\forall i\in[m] : \mathcal{G}[i,j] = \kappa \left({\boldsymbol{x}}^{(i)},{\boldsymbol{x}}^{(j)}\right).
\end{equation}
We recall that kernels exhibit exponential concentration with respect to a distribution $\mathcal{D}$ over $\mathcal{X}$, if there exists a real number $\mu\in\mathbb{R}$ and a value $\delta\in2^{-\Omega(n)}$ such that
\begin{equation}
    \Pr_{{\boldsymbol{x}},{\boldsymbol{y}}\sim\mathcal{D}} [|\kappa({\boldsymbol{x}},{\boldsymbol{y}}) - \mu| \geq \delta ] \in2^{-\Omega(n)}.
\end{equation}
In this case, all the entries of the Gram matrix are exponentially close to $\mu$ with exponentially high probability, making the optimization of the vector $\boldsymbol{a}$ an information-theoretically hard task.

\subsubsection{Assumption on the training data distribution.}
We assume that each point in the training set is sampled from a distribution $\mathcal{D}: \mathcal{X} \rightarrow [0,1]$ and we denote by $\nu'$ the corresponding induced distribution over quantum channels. Moreover, we also assume that $\nu'$ satisfies Definition\ \ref{def:localdesignlayer}, i.e., it is 2-local 2-design layer distribution.

\subsubsection{Fidelity quantum kernels: Exponential concentration at any depth}

The Cauchy-Schwarz inequality implies that the fidelity quantum kernel can be upper bounded by the square root of the purities of the output states
 \begin{equation}
     \Tr[\rho(\boldsymbol{x})\rho(\boldsymbol{x'})] \leq \sqrt{\Tr[\rho(\boldsymbol{x})^2]  \Tr[\rho(\boldsymbol{x'})^2] }. 
 \end{equation}
By a direct application Proposition~\ref{prop:avg-purity}, we obtain the following result.

\begin{corollary}[Exponential concentration of quantum kernels, average-case circuit]
\label{cor:kernel-avg}

\begin{equation}
    \mathbb{E}_{\boldsymbol{x},\boldsymbol{x'}}\kappa^{FQ}(\boldsymbol{x},\boldsymbol{x'}) \leq \left(\frac{1+{\|\bold{t}\|^2_2 + \|\bold{D}\|^2_2}}{2}\right)^n.
\end{equation}
\end{corollary}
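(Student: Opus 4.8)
The plan is to obtain Corollary~\ref{cor:kernel-avg} directly from the average-case purity bound in Proposition~\ref{prop:avg-purity}, using only two elementary inequalities; essentially all that is required is to match the kernel setup to the hypotheses of that proposition. First I would record the pointwise Cauchy--Schwarz estimate already displayed above, namely that for every fixed pair of inputs $\boldsymbol{x},\boldsymbol{x'}\in\mathcal{X}$ one has $\kappa^{FQ}(\boldsymbol{x},\boldsymbol{x'})=\Tr[\rho(\boldsymbol{x})\rho(\boldsymbol{x'})]\le\sqrt{\Tr[\rho(\boldsymbol{x})^2]\,\Tr[\rho(\boldsymbol{x'})^2]}$. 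Writing $A(\boldsymbol{x})\coloneqq\Tr[\rho(\boldsymbol{x})^2]=\Tr[\Phi_{\boldsymbol{x}}(\rho_0)^2]$, the task reduces to bounding $\mathbb{E}_{\boldsymbol{x},\boldsymbol{x'}}\!\big[\sqrt{A(\boldsymbol{x})A(\boldsymbol{x'})}\big]$.

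Second, I would take the expectation over $\boldsymbol{x},\boldsymbol{x'}\sim\mathcal{D}$. By the Cauchy--Schwarz inequality for expectations, $\mathbb{E}\!\big[\sqrt{A(\boldsymbol{x})A(\boldsymbol{x'})}\big]\le\sqrt{\mathbb{E}[A(\boldsymbol{x})]\,\mathbb{E}[A(\boldsymbol{x'})]}$, and since $\boldsymbol{x}$ and $\boldsymbol{x'}$ share the same marginal $\mathcal{D}$, both expectations on the right equal $\mathbb{E}_{\boldsymbol{x}\sim\mathcal{D}}[A(\boldsymbol{x})]$, so the bound collapses to $\mathbb{E}_{\boldsymbol{x}\sim\mathcal{D}}[A(\boldsymbol{x})]$ itself. (Alternatively, in the standard Gram-matrix sampling where $\boldsymbol{x}$ and $\boldsymbol{x'}$ are independent, the same conclusion follows from factorizing the expectation as $(\mathbb{E}_{\boldsymbol{x}}\sqrt{A(\boldsymbol{x})})^2$ and applying Jensen's inequality to the concave function $t\mapsto\sqrt t$.) It then only remains to upper bound $\mathbb{E}_{\boldsymbol{x}\sim\mathcal{D}}[\Tr[\Phi_{\boldsymbol{x}}(\rho_0)^2]]$.

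Third, I would invoke Proposition~\ref{prop:avg-purity} with $\rho=\rho_0$, whose upper bound is exactly $\left(\frac{1+\|\bold{t}\|_2^2+\|\bold{D}\|_2^2}{2}\right)^n$. The single point that needs a sentence of care is that the data-induced distribution $\nu'$ over embedding channels lies within the scope of that proposition: by the assumptions on the training data distribution, $\nu'$ consists of circuits with $2$-local $2$-design unitary layers (Assumption~1) terminated by a layer of uniformly random single-qubit Clifford gates, which form a single-qubit $2$-design by Lemma~\ref{le:Clifford3design}, interleaved with the fixed local noise channel $\mathcal{N}$ written in its normal form. This is precisely the circuit model of Eq.~\eqref{eq:randcirc}, and the proof of Proposition~\ref{prop:avg-purity} uses only the second-moment bounds of Eqs.~\eqref{eq:lbpaulipurity} and~\eqref{eq:ub-pauli-purity}, so it applies verbatim with $\mathbb{E}$ read as $\mathbb{E}_{\boldsymbol{x}\sim\mathcal{D}}$. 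Chaining the three estimates yields the claimed bound. The main obstacle is genuinely only this bookkeeping: there is no new inequality to establish beyond what Proposition~\ref{prop:avg-purity} already provides.
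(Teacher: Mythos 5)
Your proposal is correct and follows essentially the same route as the paper: the pointwise Cauchy--Schwarz bound $\Tr[\rho(\boldsymbol{x})\rho(\boldsymbol{x'})]\le\sqrt{\Tr[\rho(\boldsymbol{x})^2]\Tr[\rho(\boldsymbol{x'})^2]}$ followed by the average-case purity upper bound of Proposition~\ref{prop:avg-purity}. Your additional Cauchy--Schwarz step to pass the expectation through the geometric mean, and the check that the data-induced circuit distribution satisfies the hypotheses of that proposition, merely make explicit what the paper leaves implicit.
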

In a similar fashion, we can derive a worst-case concentration bound by employing Corollary~\ref{cor:purity-worst2}.
\begin{corollary}[Exponential concentration of quantum kernels, worst-case circuit]
\label{cor:kernel-worst}
Let $\mathcal{U}_x,\mathcal{U}_{x'}$ two noisy circuits interspersed by $L$ layers of local noise, modeled by the channel $\mathcal{N}^{\otimes n} \circ \mathcal{N}_p^{(\mathrm{dep})\otimes n} $. Denote by $\rho(\boldsymbol{x}) = \mathcal{U}_{\boldsymbol{x}}(\rho_0)$ and $\rho({\boldsymbol{x'}}) = \mathcal{U}_{{\boldsymbol{x'}}}(\rho_0)$ the output states of the noisy circuits. Assume that $p = \Theta(1)$ and $\|\bold{t}\|_2 \neq 1$. Then the fidelity quantum kernel $\kappa^{FQ}({\boldsymbol{x}},{\boldsymbol{x'}}) = \Tr[\rho(\boldsymbol{x})\rho(\boldsymbol{x'}) ] $ satisfies the upper bound
\begin{equation}
    \kappa^{FQ}(\boldsymbol{x},\boldsymbol{x'}) \leq 2^{ n(\delta_L -1)},
\end{equation}
where $\delta_L\coloneqq (1-p)^{2L} + \|\bold{t}\|_2\frac{1 - (1-p)^{2L}}{2p - p^2}$.
\end{corollary}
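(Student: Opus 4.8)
The plan is to derive this statement as an immediate consequence of the worst-case purity bound of Corollary~\ref{cor:purity-worst2} together with the Cauchy--Schwarz inequality for the Hilbert--Schmidt inner product. First I would recall that for any two density matrices $\rho,\sigma$ (both Hermitian, so $\rho^\dagger=\rho$), the fidelity-type overlap is the Hilbert--Schmidt inner product $\Tr[\rho\sigma]=\hs{\rho}{\sigma}$, and the Hölder inequality with $p=q=2$ gives $\abs{\hs{\rho}{\sigma}}\le\norm{\rho}_2\norm{\sigma}_2=\sqrt{\Tr[\rho^2]\,\Tr[\sigma^2]}$. Applying this to the two embedded states yields
\begin{equation}
\kappa^{FQ}(\boldsymbol{x},\boldsymbol{x'}) = \Tr[\rho(\boldsymbol{x})\rho(\boldsymbol{x'})] \le \sqrt{\Tr[\rho(\boldsymbol{x})^2]\,\Tr[\rho(\boldsymbol{x'})^2]}.
\end{equation}

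Next I would observe that each of $\mathcal{U}_{\boldsymbol{x}}$ and $\mathcal{U}_{\boldsymbol{x'}}$ is exactly of the form of the circuit $\mathcal{C}$ treated in Subsection~\ref{subsub:purityy}: a layered circuit with arbitrary unitary layers interspersed by $L$ copies of the composed local noise $\widetilde{\mathcal{N}}^{\otimes n}=(\mathcal{N}\circ\mathcal{N}_p^{(\mathrm{dep})})^{\otimes n}$, acting on the fixed initial state $\rho_0$. Since $p=\Theta(1)>0$ and $\norm{\bold{t}}_2\neq 1$ by hypothesis, the assumptions of Corollary~\ref{cor:purity-worst2} are satisfied, so $\Tr[\rho(\boldsymbol{x})^2]\le 2^{n(\delta_L-1)}$ and, by the same argument, $\Tr[\rho(\boldsymbol{x'})^2]\le 2^{n(\delta_L-1)}$, where $\delta_L=(1-p)^{2L}+\norm{\bold{t}}_2\frac{1-(1-p)^{2L}}{2p-p^2}$ depends only on $p$, $\bold{t}$ and $L$, and not on the chosen data point. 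Substituting both purity bounds into the Cauchy--Schwarz estimate gives $\kappa^{FQ}(\boldsymbol{x},\boldsymbol{x'})\le\sqrt{2^{n(\delta_L-1)}\cdot 2^{n(\delta_L-1)}}=2^{n(\delta_L-1)}$, which is the claimed inequality.

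There is essentially no obstacle in this argument; the only things to verify are (i) that the Cauchy--Schwarz / Hölder step is legitimate, which it is because $\Tr[\rho\sigma]$ is precisely the Hilbert--Schmidt inner product of two positive semidefinite operators and $\abs{\hs{A}{B}}\le\norm{A}_2\norm{B}_2$ holds in general, and (ii) that the circuits in the statement genuinely instantiate the template of Corollary~\ref{cor:purity-worst2}, which they do by construction (no assumption on the unitary layers is used there, only the decomposition of the noise into a depolarizing part followed by an arbitrary normal-form channel). The most delicate ingredient is thus entirely inherited from Corollary~\ref{cor:purity-worst2}, whose proof in turn rests on the data-processed triangle inequality and the relative-entropy contraction bound of Ref.~\cite{optimal}; no new estimates are needed here. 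If one additionally wanted an average-case version of the concentration statement, one would replace the invocation of Corollary~\ref{cor:purity-worst2} by Proposition~\ref{prop:avg-purity}, exactly as done in Corollary~\ref{cor:kernel-avg}.
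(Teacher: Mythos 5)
Your proof is correct and follows exactly the route the paper takes: Cauchy--Schwarz bounds the fidelity kernel by the geometric mean of the two output purities, and Corollary~\ref{cor:purity-worst2} bounds each purity by $2^{n(\delta_L-1)}$, yielding the claim. Nothing further is needed.
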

As mentioned in Subsection~\ref{subsub:purityy}, this bound can be exponentially vanishing in the number of qubits for certain range of parameters.  

We also emphasize when $\norm{\bold{t}} = 0$, our bound predicts that the kernel $\kappa^{FQ}(\boldsymbol{x},\boldsymbol{x'}) $ is at most $ 2^{-n(2p-p^2)} = 2^{-\Omega(n)}$, even after a single layer of noise, whereas Theorem 3 in \cite{thanasilp2022exponential} only predicts that $|\kappa^{FQ}(x,x') -1/2^n| \leq (1-p)^2 = \Theta(1)$.
Thus, compared to the previous literature, our result is exponentially tighter with respect to the number of layers for the local depolarizing noise. An analogous bound for local Pauli noise can be derived along the lines of Supplementary Lemma 6 in Ref.\ \cite{nibp}.

\section{Miscellaneous}
\label{sec:misc}
\subsubsection{Useful lemmas: concentration inequalities}
\begin{lemma}[Large variance implies significant probability of deviation]
\label{le:probstat}
Let $f$ be a real function depending by parameters distributed according to a probability distribution $\mu$. Then, the inequality
\begin{align}
    \mathrm{Prob}\left(\left| f  - \mathbb{E}[f] \right| > \sqrt{\frac{\mathrm{Var}[f]}{2}} \right) \ge \frac{\mathrm{Var}[f]}{8\sup(|f|)^2}
\end{align}
 holds,
where the expected value and variance are taken with respect to the probability distribution $\mu$.
\end{lemma}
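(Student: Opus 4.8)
The statement is a standard concentration-type inequality: a large variance forces a non-negligible probability of deviating from the mean by at least $\sqrt{\operatorname{Var}[f]/2}$. The plan is to bound the variance by splitting the expectation $\operatorname{Var}[f]=\mathbb{E}[(f-\mathbb{E}[f])^2]$ according to whether the deviation $|f-\mathbb{E}[f]|$ exceeds the threshold $\tau\coloneqq\sqrt{\operatorname{Var}[f]/2}$ or not. On the event where the deviation is small, each summand is at most $\tau^2=\operatorname{Var}[f]/2$, so that part contributes at most $\operatorname{Var}[f]/2$. On the complementary event, each summand is at most $(2\sup|f|)^2=4\sup(|f|)^2$ (since both $f$ and $\mathbb{E}[f]$ lie in $[-\sup|f|,\sup|f|]$), and this part contributes at most $4\sup(|f|)^2\cdot\operatorname{Prob}(|f-\mathbb{E}[f]|>\tau)$.

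First I would write, with $p\coloneqq\operatorname{Prob}\!\left(|f-\mathbb{E}[f]|>\tau\right)$ and $\tau=\sqrt{\operatorname{Var}[f]/2}$,
\begin{align}
\operatorname{Var}[f] &= \mathbb{E}\!\left[(f-\mathbb{E}[f])^2 \,\mathbf{1}_{|f-\mathbb{E}[f]|\le \tau}\right] + \mathbb{E}\!\left[(f-\mathbb{E}[f])^2 \,\mathbf{1}_{|f-\mathbb{E}[f]|> \tau}\right] \nonumber\\
&\le \tau^2 + 4\sup(|f|)^2\, p = \frac{\operatorname{Var}[f]}{2} + 4\sup(|f|)^2\, p .
\end{align}
Rearranging gives $\operatorname{Var}[f]/2 \le 4\sup(|f|)^2\, p$, i.e. $p \ge \operatorname{Var}[f]/(8\sup(|f|)^2)$, which is exactly the claim. (Here $\mathbf{1}_{(\cdot)}$ denotes the indicator of the event; if the paper has not defined such a macro I would instead phrase the split in words, decomposing the probability space into the two complementary events and bounding the contribution of each.)

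There is essentially no obstacle here — the argument is a one-line truncation of the second moment — so the only care needed is (i) making sure the bound $|f-\mathbb{E}[f]|\le 2\sup(|f|)$ is invoked correctly, which holds because $|\mathbb{E}[f]|\le \mathbb{E}[|f|]\le \sup(|f|)$, and (ii) being consistent about strict versus non-strict inequalities in the events, which does not affect the final estimate since moving mass between $\{|f-\mathbb{E}[f]|=\tau\}$ and its neighbours only helps. If one wants the slightly cleaner-looking constant in the displayed form of the lemma, note $\sqrt{\operatorname{Var}[f]/2}$ is chosen precisely so that the "small-deviation" part absorbs half the variance; any threshold of the form $c\sqrt{\operatorname{Var}[f]}$ with $c<1$ would work with a correspondingly adjusted constant.
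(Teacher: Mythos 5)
Your proof is correct and follows essentially the same truncation argument as the paper: split the second moment at the threshold $\sqrt{\operatorname{Var}[f]/2}$, bound the small-deviation part by $\operatorname{Var}[f]/2$ and the large-deviation part by $4\sup(|f|)^2$ times the tail probability, then rearrange. The only cosmetic difference is that the paper retains the factor $1-p$ on the small-deviation event (yielding the slightly sharper intermediate bound $\operatorname{Var}[f]/(8\sup(|f|)^2-\operatorname{Var}[f])$) before relaxing to the stated constant, whereas you bound that probability by $1$ directly.
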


\begin{proof}
 Let $T > 0$ be a real value that we will fix later. We have
\begin{align}
    \Var[f]&\coloneqq \int\!\left( f  - \Ex\left[f\right] \right)^2 d\mu\\
    \nonumber
    &=\int_{\left| f  - \Ex\left[f\right] \right|\le T}\!\left( f  - \Ex\left[f\right] \right)^2 d\mu + \int_{\left| f  - \Ex\left[f\right] \right|> T}\!\left( f  - \Ex\left[f\right] \right)^2 d\mu\\
    \nonumber
    &\le T^2 \int_{\left| f  - \Ex\left[f\right] \right|\le T}\!1\, d\mu + \int_{\left| f  - \Ex\left[f\right] \right|> T}\!\left( f  - \Ex\left[f\right] \right)^2 d\mu\\
    \nonumber
    &\le T^2\left(1- \int_{\left| f  - \Ex\left[f\right] \right|> T}\!1\, d\mu\right)  + 4(\sup(|f|))^2\int_{\left| f  - \Ex\left[f\right] \right|> T}\!1 d\mu\\
    \nonumber
    &= T^2  + \left(4\left(\sup(|f|)\right)^2 - T^2 \right)\mathrm{Prob}\left(\left| f  - \Ex\left[f\right] \right| > T  \right),
    \nonumber
\end{align}
where in the fourth step we have used that $\left| f  - \Ex\left[f\right] \right|\le \left| f \right| +\left| \Ex\left[f\right] \right|\le 2 \sup(|f|) $.
Therefore, rearranging the previous inequality, we have
\begin{align}
    \mathrm{Prob}\left(\left| f  - \Ex\left[f\right] \right| > T  \right)\ge \frac{ \Var[f] - T^2 }{4\sup(|f|)^2-T^2}.
\end{align}
By choosing $T\coloneqq \frac{1}{\sqrt{2}}\sqrt{\Var[f]}$, we get
\begin{align}
    \mathrm{Prob}\left(\left| f  - \Ex\left[f\right] \right| > \sqrt{\frac{\Var[f]}{2}} \right)\ge \frac{  \Var[f] }{8\sup(|f|)^2-\Var[f]}\ge \frac{  \Var[f] }{8\sup(|f|)^2}.
\end{align}
\end{proof}

\begin{lemma}[Large first moments]
\label{le:probstat2}
Let $f$ be a real function depending by parameters distributed according to a probability distribution $\mu$. Then, the inequality 
\begin{align}
    \mathrm{Prob}\left( f   > \frac{\Ex[f]}{2} \right) \ge \frac{\Ex[f]}{2\sup(|f|)}
\end{align}
holds, where the expected value and variance are taken with respect to the probability distribution $\mu$.
\end{lemma}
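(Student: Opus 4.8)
The final statement to prove is Lemma~\ref{le:probstat2}: for a real-valued function $f$ of parameters distributed according to $\mu$, one has
\begin{align}
    \mathrm{Prob}\!\left( f > \frac{\Ex[f]}{2} \right) \ge \frac{\Ex[f]}{2\sup(|f|)}.
\end{align}
This is a one-sided reverse-Markov type inequality, and the plan is to mimic the standard "Paley--Zygmund-flavoured" splitting argument, but at the level of the first moment rather than the second.

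The approach I would take is to split the expectation $\Ex[f]$ according to whether $f$ exceeds the threshold $T \coloneqq \Ex[f]/2$ or not. Write $\Ex[f] = \int_{f \le T} f \, d\mu + \int_{f > T} f \, d\mu$. On the low region I bound $f \le T$ pointwise (using $T \ge 0$, which holds since $\Ex[f] \le \sup|f|$ makes the inequality vacuous if $\Ex[f] \le 0$; more carefully, if $\Ex[f]\le 0$ the claimed lower bound on the probability is $\le 0$ and there is nothing to prove, so we may assume $\Ex[f] > 0$ and hence $T > 0$), giving $\int_{f\le T} f\, d\mu \le T\, \mathrm{Prob}(f \le T) \le T$. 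On the high region I bound $f \le \sup(|f|)$ pointwise, giving $\int_{f > T} f\, d\mu \le \sup(|f|)\,\mathrm{Prob}(f > T)$. Combining, $\Ex[f] \le T + \sup(|f|)\,\mathrm{Prob}(f > T)$, and rearranging with $T = \Ex[f]/2$ yields $\mathrm{Prob}(f > \Ex[f]/2) \ge \frac{\Ex[f] - T}{\sup(|f|)} = \frac{\Ex[f]/2}{\sup(|f|)}$, which is exactly the claim. One small wrinkle: on the low region $\{f \le T\}$ the integrand $f$ could be negative, in which case $\int_{f \le T} f\, d\mu$ is even smaller than $T\,\mathrm{Prob}(f\le T)$, so the bound $\le T$ still holds; I would just remark that $\int_{f\le T} f\, d\mu \le \int_{f \le T} T\, d\mu \le T$ since $T > 0$.

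I do not expect any genuine obstacle here — the argument is three or four lines of elementary manipulation, essentially identical in structure to the proof of Lemma~\ref{le:probstat} already given in the excerpt (which splits the \emph{variance} instead of the \emph{first moment}). The only thing to be careful about is the degenerate case $\Ex[f] \le 0$: there the right-hand side $\frac{\Ex[f]}{2\sup(|f|)}$ is non-positive, so the inequality is trivially true, and I would dispose of it in one sentence before running the main splitting argument under the assumption $\Ex[f] > 0$. A second minor point is the edge case $\sup(|f|) = 0$, i.e. $f \equiv 0$, where again $\Ex[f] = 0$ and the statement is vacuous; this is subsumed by the $\Ex[f]\le 0$ case. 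So the write-up would be: (i) handle $\Ex[f] \le 0$ trivially; (ii) assume $\Ex[f] > 0$, set $T = \Ex[f]/2 > 0$; (iii) perform the two-region split and the pointwise bounds; (iv) rearrange to conclude.
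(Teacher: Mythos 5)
Your proposal is correct and follows essentially the same argument as the paper: split $\Ex[f]$ into the regions below and above the threshold $T=\Ex[f]/2$, bound the low part by $T$ and the high part by $\sup(|f|)\,\mathrm{Prob}(f>T)$, and rearrange. Your version is in fact marginally cleaner (you split on $f\le T$ versus $f>T$ rather than on $|f|\le T$, and you explicitly dispose of the degenerate case $\Ex[f]\le 0$, which the paper leaves implicit).
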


\begin{proof}
Let $T$ a real value $T>0$. Then, we have
\begin{align}
    \Ex[f]&=\int\! f d\mu=\int_{\left| f \right|\le T}\! f d\mu + \int_{\left| f \right|> T}\!f d\mu\le T  + \sup(|f|) \,\mathrm{Prob}\left( f   > T \right).
\end{align}
Now, if we assume $T=\Ex[f]/2$ and rearrange the inequality, we obtain
\begin{align}
    \mathrm{Prob}\left( f   > \frac{\Ex[f]}{2} \right) \ge \frac{\Ex[f]}{2\sup(|f|)}.
\end{align}
\end{proof}

\subsection{Trace distance decay for worst-case circuits under local depolarizing noise}
\label{subsub:tracedep}
As documented by the previous literature\ \cite{DanielPaper, hirche2022contraction, nibp}, the output of any circuit affected by unital, primitive noise converges exponentially fast in the depth to the maximally mixed state with respect to the trace distance. The proof relies on the Pinsker's inequality and on the contraction coefficients of the quantum Rényi divergence of order $2$. See the definition of Rényi divergence in Eq.~(\ref{eq:relativeentropy}).
For the sake of simplicity, we will consider the special case of the depolarizing noise, and refer to Refs.\ \cite{hirche2022contraction, nibp} for an extension to arbitrary Pauli channels with normal form parameters satisfying $D_P < 1$ for all $P\in\{X,Y,Z\}$.
We will need the following lemma.
\begin{lemma}[Strong data-processing inequality. Adapted from Theorem 6.1 in
Ref.\ \cite{DanielPaper}]\label{lem:muller}
 Let ${\mathcal{N}_p^{(\mathrm{dep})}}$ be the single-qubit depolarizing channel of rate $p$, i.e.,  ${\mathcal{N}_p^{(\mathrm{dep})}}(X) =   p\frac{I}{2} \Tr(X) + (1-p)X$.
  Let $\Phi:= \bigcirc_{i=1}^L (\mathcal{N}_p^{(\mathrm{dep})\otimes n} \circ \mathcal{U}_i) $ be a circuit of $L$ unitary layers interspersed by local depolarizing noise. Then, for every state $\rho$, we have
  \begin{align}
D_2\left(\Phi(\rho) \bigg  \|  \frac{I}{2^n}\right)  \leq (1-p)^{2L} n.
\end{align}
\end{lemma}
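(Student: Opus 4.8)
The plan is to bound $D_2(\,\cdot\,\|I/2^n)$ layer by layer, using that the maximally mixed state is a fixed point of \emph{every} channel appearing in $\Phi$, so that the unitary layers leave the divergence unchanged and only the $L$ applications of the local depolarizing noise can make it grow.

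First I would record the two elementary facts that drive the recursion. (i) \textbf{Unitary invariance:} since $U(I/2^n)U^\dagger=I/2^n$ and the sandwiched R\'enyi divergence is invariant under simultaneous conjugation of both arguments, $D_2(\mathcal{U}_i(\sigma)\|I/2^n)=D_2(\sigma\|I/2^n)$ for every unitary layer $\mathcal{U}_i$. (ii) \textbf{Initial bound:} directly from the definition of $D_2$, $D_2(\rho\|I/2^n)=\log(2^n\Tr(\rho^2))\le\log 2^n=n$, because $\Tr(\rho^2)\le 1$. The heart of the argument is the strong data-processing inequality for the local depolarizing channel with respect to its fixed point $I/2^n$,
\begin{equation}
D_2\!\left(\mathcal{N}_p^{(\mathrm{dep})\otimes n}(\sigma)\,\Big\|\,\tfrac{I}{2^n}\right)\ \le\ (1-p)^2\, D_2\!\left(\sigma\,\Big\|\,\tfrac{I}{2^n}\right)\qquad\text{for all states }\sigma,
\end{equation}
which is exactly Theorem~6.1 of Ref.~\cite{DanielPaper}, specialized from their general $D_\alpha$-statement to $\alpha=2$. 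The one non-obvious point here is that the contraction coefficient equals $(1-p)^2$ \emph{independently of $n$}: this is a tensorization phenomenon, the one-qubit contraction coefficient of the depolarizing channel being inherited by its $n$-fold tensor power, and it is precisely what I would import rather than reprove. The exponent $2$ is already visible at the level of the $\chi^2$-type quantity: writing $\sigma=2^{-n}\sum_P c_P P$ in the Pauli basis with $c_I=1$, the channel sends $c_P\mapsto(1-p)^{|P|}c_P$, each non-identity coefficient is scaled by at most $1-p$, and a unitary only orthogonally mixes the non-identity coefficients among themselves, so $\sum_{P\neq I}c_P^2$ contracts by at least $(1-p)^2$ per layer; the content of Ref.~\cite{DanielPaper} is to upgrade this to the relative/R\'enyi divergence with the same $(1-p)^2$ and no loss in $n$.

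Granting the displayed inequality, the proof is a one-line induction. Write $\Phi=\bigcirc_{i=1}^{L}(\mathcal{N}_p^{(\mathrm{dep})\otimes n}\circ\mathcal{U}_i)$ and let $\sigma_k$ be the state after the first $k$ layers, with $\sigma_0=\rho$ and $\sigma_L=\Phi(\rho)$. By fact (i) the unitary $\mathcal{U}_k$ does not change $D_2(\,\cdot\,\|I/2^n)$, and by the SDPI the subsequent $\mathcal{N}_p^{(\mathrm{dep})\otimes n}$ multiplies it by at most $(1-p)^2$; hence $D_2(\sigma_k\|I/2^n)\le(1-p)^2\,D_2(\sigma_{k-1}\|I/2^n)$ for every $k\in[L]$. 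Iterating $L$ times and inserting the bound from fact (ii) gives $D_2(\Phi(\rho)\|I/2^n)\le(1-p)^{2L}\,D_2(\rho\|I/2^n)\le(1-p)^{2L}\,n$, which is the claim; combined with Pinsker's inequality (and monotonicity of the R\'enyi divergences in the order) this then yields the trace-distance statement of Eq.~\eqref{eq:dep}.

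The main obstacle is therefore entirely concentrated in the displayed SDPI, and more specifically in the dimension-independence of its contraction coefficient: everything else — unitary invariance, the trivial bound $D_2(\rho\|I/2^n)\le n$, and the geometric iteration over layers — is routine bookkeeping. For the generalization to arbitrary Pauli channels with normal-form parameters satisfying $D_P<1$, one would instead invoke the analogous contraction results of Refs.~\cite{hirche2022contraction,Wang_2021}, and the same two-step template (unitary invariance $+$ per-noise-layer contraction) goes through verbatim.
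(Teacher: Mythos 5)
Your proposal is correct and follows essentially the same route as the paper, which simply imports the dimension-independent contraction coefficient $(1-p)^2$ from Theorem~6.1 of Ref.~\cite{DanielPaper}, uses unitary invariance of $D_2(\cdot\|I/2^n)$ for each gate layer, and iterates from the trivial initial bound $D_2(\rho\|I/2^n)=\log(2^n\Tr(\rho^2))\le n$. You correctly locate the only nontrivial content in the tensorization of the single-qubit contraction coefficient, which both you and the paper cite rather than reprove.
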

Then the desired result follows by a direct application of 
Pinsker's inequality.
\begin{proposition}[Deviation from maximal mixedness]
\label{prop:sdpi}
     Under the same assumptions of Lemma\ \ref{lem:muller}, we obtain
     \begin{align}
         \bigg  \| \Phi(\rho) - \frac{I}{2^n}\bigg  \|_1 \leq \sqrt{2n}(1-p)^{L} . 
     \end{align}
     And therefore, for all states $\rho$ and $\sigma$, we have
    \begin{align}
         \norm{\Phi(\rho) - \Phi(\sigma)}_1 \leq 2\sqrt{2n}(1-p)^{L}. 
     \end{align}
\end{proposition}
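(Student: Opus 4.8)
The plan is to chain together the strong data-processing inequality of Lemma~\ref{lem:muller} with the monotonicity of the R\'enyi divergences in their order and the quantum Pinsker inequality. Since $I/2^n$ is full rank, the support condition $\operatorname{supp}(\Phi(\rho))\subseteq\operatorname{supp}(I/2^n)$ holds automatically, so every divergence below is well defined and no positivity/support caveats are needed.

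First I would apply Lemma~\ref{lem:muller} to get $D_2\!\left(\Phi(\rho)\,\big\|\,I/2^n\right)\le (1-p)^{2L}n$. Then, using the fact (recalled together with the definition of the R\'enyi divergences) that $D_\alpha\ge D_\beta$ for $\alpha>\beta>0$ and $\lim_{\alpha\to1}D_\alpha=D$, I obtain
\begin{align}
D\!\left(\Phi(\rho)\,\bigg\|\,\frac{I}{2^n}\right)\le D_2\!\left(\Phi(\rho)\,\bigg\|\,\frac{I}{2^n}\right)\le (1-p)^{2L}n.
\end{align}
Applying the quantum Pinsker inequality $\|\omega-\tau\|_1\le\sqrt{2\,D(\omega\|\tau)}$ (with the relative entropy in the normalization matching the stated bound; changing the logarithm base only rescales the constant) gives
\begin{align}
\left\|\Phi(\rho)-\frac{I}{2^n}\right\|_1\le\sqrt{2\,D\!\left(\Phi(\rho)\,\bigg\|\,\frac{I}{2^n}\right)}\le\sqrt{2n}\,(1-p)^{L}.
\end{align}
Finally, for two arbitrary input states $\rho$ and $\sigma$, the triangle inequality for the trace norm yields
\begin{align}
\|\Phi(\rho)-\Phi(\sigma)\|_1\le\left\|\Phi(\rho)-\frac{I}{2^n}\right\|_1+\left\|\Phi(\sigma)-\frac{I}{2^n}\right\|_1\le 2\sqrt{2n}\,(1-p)^{L},
\end{align}
which is the claim.

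There is no substantial obstacle here — the argument is a short chain of standard facts, exactly as the text indicates. The only points that deserve a little care are (i) checking that the hypotheses invoked for Lemma~\ref{lem:muller} are precisely those of the proposition, namely a layered circuit with a round of local depolarizing noise of rate $p$ after each unitary layer; and (ii) keeping the logarithm-base convention consistent so that the Pinsker constant reproduces the stated $\sqrt{2n}$ prefactor. For a general Pauli channel with normal-form parameters $D_P<1$ for all $P\in\{X,Y,Z\}$, one would simply replace Lemma~\ref{lem:muller} by the corresponding contraction-coefficient bound of Refs.~\cite{hirche2022contraction,Wang_2021} on $D_2$, and the remainder of the proof is unchanged.
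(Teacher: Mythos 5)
Your proof is correct and follows essentially the same route as the paper: bound $D_2$ via Lemma~\ref{lem:muller}, pass to the relative entropy by monotonicity of the R\'enyi divergences, apply Pinsker's inequality, and finish with the triangle inequality. Nothing further is needed.
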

\begin{proof}
Combining Pinsker's inequality with the monotonicity of the the quantum Rényi divergence, we obtain
\begin{align}
     \bigg  \| \Phi(\rho) - \frac{I}{2^n}\bigg  \|_1^2  \leq  2D\bigg (\Phi(\rho) \bigg \| \frac{I}{2^n}\bigg ) \leq  2D_2\bigg (\Phi(\rho) \bigg \| \frac{I}{2^n}\bigg ).
\end{align}
Hence, Proposition\ \ref{prop:sdpi} implies 
\begin{align}
    \bigg  \| \Phi(\rho) - \frac{I}{2^n}\bigg  \|_1 \leq \sqrt{2n}(1-p)^{L}. 
\end{align}
Thus, a direct application of the triangle inequality yields the desired result
\begin{align}
    \norm{\Phi(\rho) - \Phi(\sigma)}_1 \leq  \bigg  \| \Phi(\rho) - \frac{I}{2^n}\bigg  \|_1  +  \bigg  \| \Phi(\sigma) - \frac{I}{2^n}\bigg  \|_1 \leq 2\sqrt{2n}(1-p)^{L}.
\end{align}
\end{proof}
We emphasize that the above result does not require randomness, unlike our Theorem\ \ref{th:effective}. On the other hand, Theorem\ \ref{th:effective} yields non trivial bound even at constant depth, while the above depolarizing-noise result is vacuous at sub-logarithmic depth.
We remark that a non-vacuous bound for shallow depths could be derived by means of the quantum Bretagnolle-Huber inequality (see, for instance, Ref.\ \cite{canonne2022short}, (Lemma B.1 in 
Ref.\ \cite{angrisani2023unifying}) and references therein). 

Informally, Proposition\ \ref{prop:sdpi} says that the output of a noisy circuit becomes computationally trivial at super-logarithmic depths, provided that the noise is unital and primitive.
This poses severe constraints on the capabilities of noisy devices, as exemplified by the following result.

\begin{corollary}[Exponential concentration, unital case]
    Under the same assumptions of Lemma\ \ref{lem:muller}, for every state $\rho$ and for every observable $O$, we obtain
     \begin{align}
        \left|\Tr[O\Phi(\rho)] - \frac{\Tr[O]}{2^n}\right| \leq \sqrt{2n}(1-p)^{L} \norm{O}_\infty.
     \end{align}
\end{corollary}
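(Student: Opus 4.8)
The plan is to reduce the claim to Proposition~\ref{prop:sdpi} via a single application of the Hölder inequality. First I would write
\begin{align}
\Tr[O\Phi(\rho)] - \frac{\Tr[O]}{2^n} = \Tr\!\left[O\left(\Phi(\rho) - \frac{I}{2^n}\right)\right],
\end{align}
using the trivial identity $\Tr[O\, I/2^n] = \Tr[O]/2^n$. The quantity inside the trace pairs the observable $O$ against the traceless Hermitian operator $\Phi(\rho) - I/2^n$, which is exactly the setting in which the Hölder inequality $|\langle A,B\rangle_{HS}| \le \norm{A}_\infty \norm{B}_1$ (with $p=\infty$, $q=1$) is sharpest.

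Next I would apply Hölder to obtain
\begin{align}
\left|\Tr[O\Phi(\rho)] - \frac{\Tr[O]}{2^n}\right| \le \norm{O}_\infty \left\lVert \Phi(\rho) - \frac{I}{2^n}\right\rVert_1 .
\end{align}
At this point the bound on the trace-distance term is precisely the content of Proposition~\ref{prop:sdpi} (itself a consequence of Lemma~\ref{lem:muller} combined with Pinsker's inequality and monotonicity of the Rényi divergence), which gives $\lVert \Phi(\rho) - I/2^n\rVert_1 \le \sqrt{2n}(1-p)^L$ under the stated hypotheses on the circuit and the depolarizing noise. Substituting this estimate yields the claimed inequality directly.

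There is essentially no obstacle here: the corollary is a one-line consequence of the already-established trace-distance bound, and the only things to be careful about are (i) that the adjoint/Heisenberg picture is not needed — we stay in the Schrödinger picture throughout — and (ii) that no randomness enters, since Proposition~\ref{prop:sdpi} is a worst-case (deterministic) statement valid for every layered circuit with local depolarizing noise. If one wanted a marginally tighter constant at small depths, the remark following Proposition~\ref{prop:sdpi} about using the quantum Bretagnolle–Huber inequality in place of Pinsker could be invoked, but for the stated form of the corollary the Pinsker route suffices.
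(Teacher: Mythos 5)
Your proposal is correct and follows exactly the same route as the paper: rewrite the difference as $\Tr[O(\Phi(\rho)-I/2^n)]$, apply H\"older with the $(\infty,1)$ pairing, and invoke Proposition~\ref{prop:sdpi} for the trace-norm factor. Nothing further is needed.
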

\begin{proof}
First, we notice that the LHS can be rearranged as 
    \begin{align}
     \left|\Tr[O\Phi(\rho)] - \frac{\Tr[O]}{2^n} \right| =  \left| \Tr\left[O\left(\Phi(\rho)- \frac{1}{2^n}\right)\right] \right|.
    \end{align}
Hence, we obtain 
    \begin{align}
        \left|\Tr\left[O\left(\Phi(\rho)- \frac{1}{2^n}\right)\right] \right| \leq \left\| \Phi(\rho) - \frac{I}{2^n}\right\|_1\norm{O}_\infty \leq \sqrt{2n}(1-p)^{L} \norm{O}_\infty,
    \end{align}
where the first inequality follows from the H{\"o}lder's inequality, and the second one is a consequence of Proposition\ \ref{prop:sdpi}.
\end{proof}

\subsection{Numerical simulations}
In this section we present numerical results that corroborate our analytical results and explore regimes that go beyond the assumptions and results of our theorems, such as the one of assuming the $2$-qubit gates to be distributed according to a local $2$-design. 
We start by considering a brickwork architecture as depicted in Fig.~\ref{Fig:circ}, where each 2-qubit gate takes the form
\begin{align}
U_{i,i+1}\left(\theta_1,\theta_2,\theta_3,\theta_4\right)\coloneqq \left(R_{Y}(\theta_4)\otimes R_{Y}(\theta_3)\right) \mathrm{CNOT}_{i,i+1} \left(R_{X}(\theta_2)\otimes R_{X}(\theta_1)\right).
\end{align}
We consider the noise model given by the composition of amplitude damping and depolarizing channels:
\begin{align}
\mathcal{N}_{(p,q)}^{(\mathrm{dep,amp})}\coloneqq \mathcal{N}_p^{(\mathrm{dep})}\circ \mathcal{N}_q^{(\mathrm{amp})},
\end{align} 
where two noise channels are defined respectively in Eq.~\eqref{eq:ampdampnoise} and Eq.~\eqref{eq:depnoise} with $p,q\in [0,1]$.

 Moreover, we assume, in contrast to our circuit model in Fig.~\ref{Fig:circ}, that the circuit ends with a layer of noise (instead of a layer of single-qubits gates).
Furthermore, we consider an expectation value with respect to the observable $Z_1$.
\begin{figure}[h]
\centering
\includegraphics[width=0.495\textwidth]{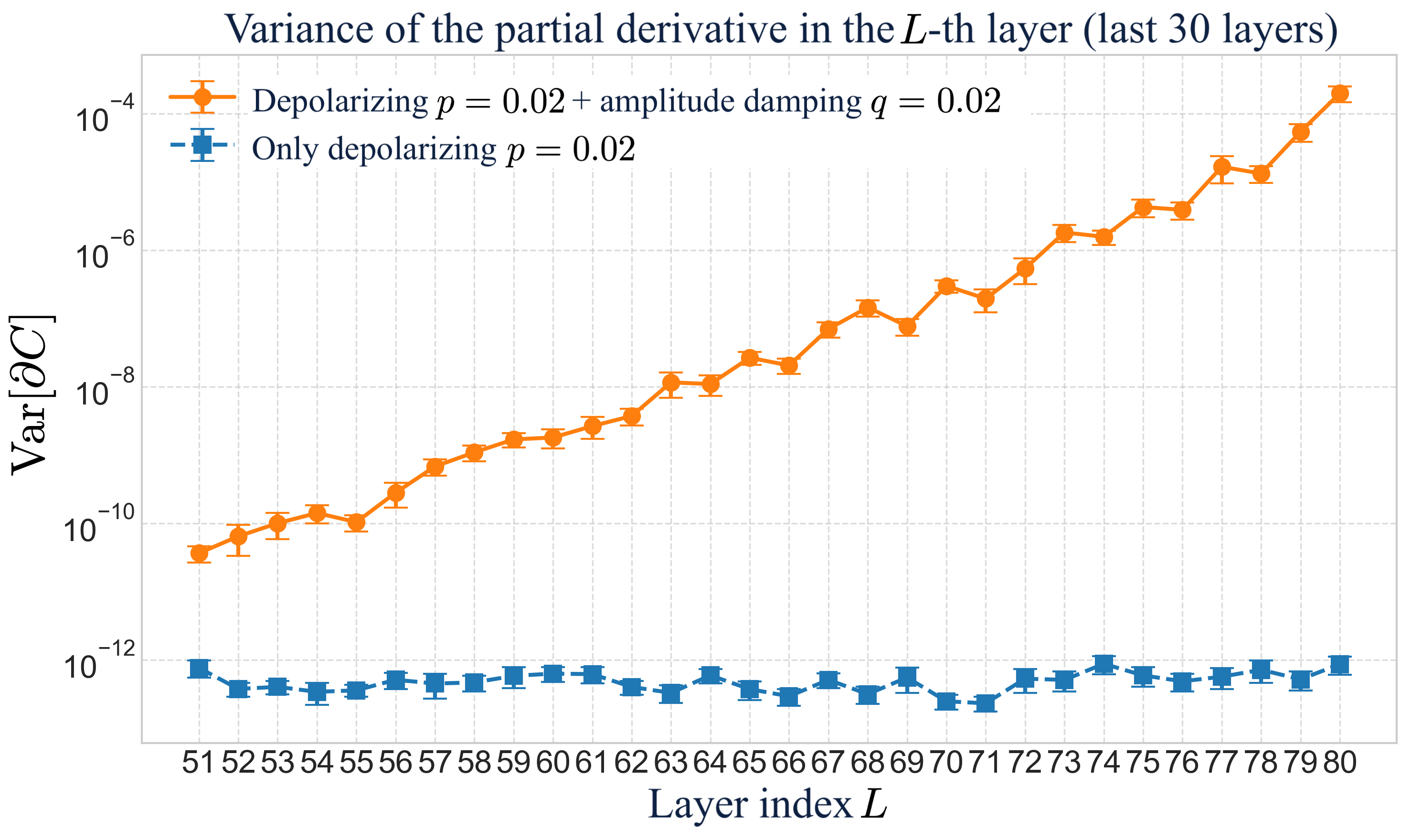}
\includegraphics[width=0.495\textwidth]{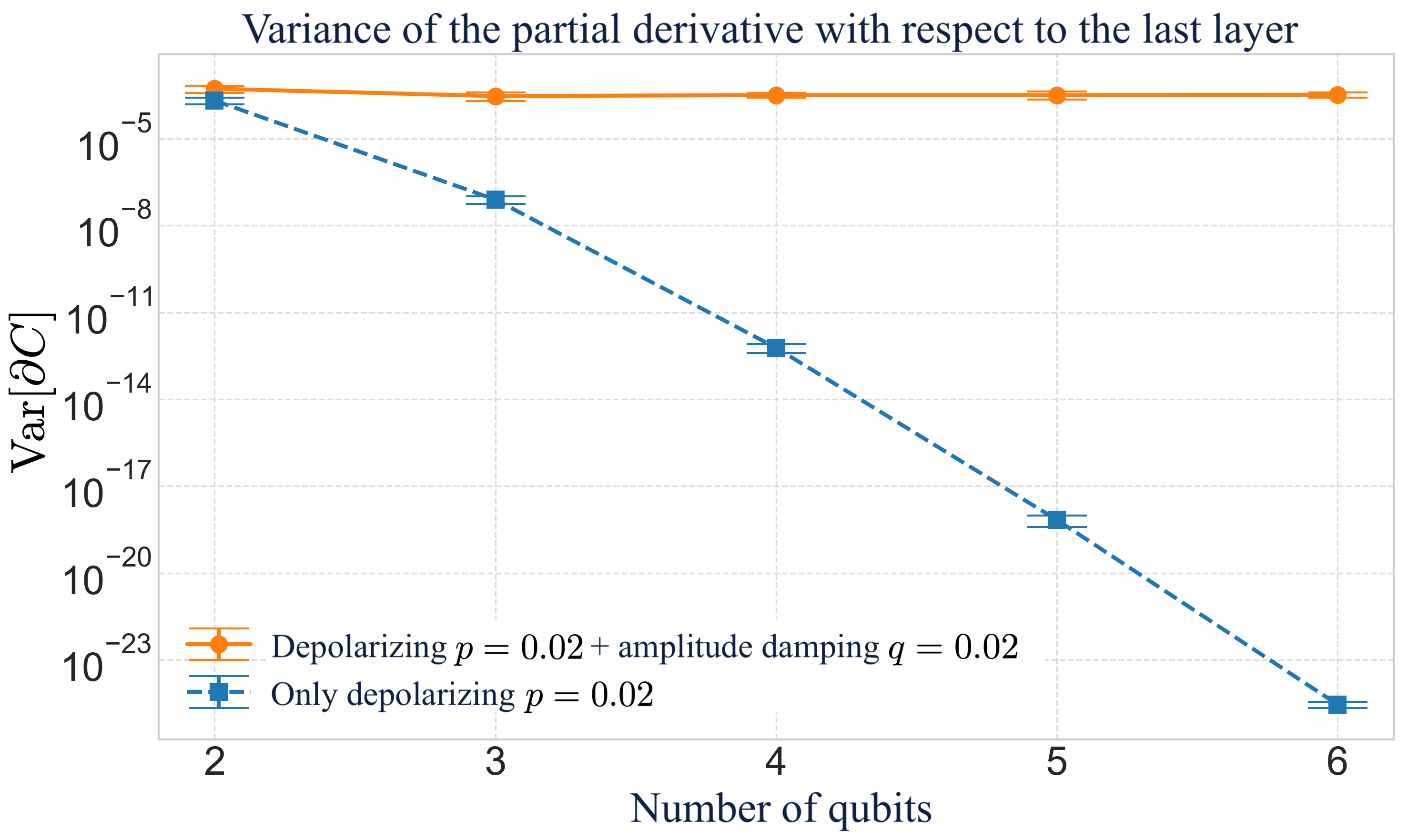}
\caption{
\textbf{Left:} Variance of the partial derivative with respect to a gate in the $L$-th layer of a $4$-qubit one-dimensional quantum circuit of depth $80$. The expectation value becomes exponentially less \emph{sensitive} to gates far from the end of the circuit. Noise parameters are set to $p=q=0.02$. When amplitude damping is switched off ($q=0$), all partial derivatives exhibit comparable scaling across layers. All circuit parameters are sampled uniformly in $[0,2\pi)$. 
\textbf{Right:} Variance of the partial derivative in the last layer versus the number of qubits, for circuit depth $20n$. With both amplitude damping and depolarizing noise ($p=q=0.02$), the last-layer variance remains constant with system size, whereas for $q=0$ it decays exponentially. In both plots, error bars indicate the standard deviation over $20$ random parameter samples, with all parameters drawn uniformly from $[0,2\pi)$.}
\label{Fig:AMPDEPlayer}
\end{figure}

In Fig.~\ref{Fig:AMPDEPlayer} (Left), we can clearly observe that the partial derivatives taken at the end of the circuit are significantly larger compared to those taken at the beginning of the circuit. This confirms the exponential decay that we proved in Theorem~\ref{th:ubvariance}.
It is noteworthy that if we were to deactivate the amplitude damping component, we would observe an average exponential concentration in all partial derivatives, regardless of the layer at which the derivative is taken. This aligns with the findings of the study on (depolarizing-)noise-induced barren plateaus~\cite{nibp} and with our Proposition~\ref{th:ubvarianceUNIT}.
Furthermore, in our experiments, we observe that the partial derivatives in the final layers remain constant as the number of qubits increases, as depicted in Fig.~\ref{Fig:AMPDEPlayer} (Right). This observation corroborates the conclusions drawn in our Corollary~\ref{th:bpINTappe}. In particular, this implies that the $2$-norm of the gradient remains constant on average with respect to the number of qubits. 
Another consideration is whether our theorems heavily depend on the assumption that the $2$-qubit gate is sampled by a $2$-design, and whether this assumption can be relaxed. Our evidence suggests that we might relax such assumption: in our numerical simulations, we observe a similar trend of what we proved even with more structured ansatz like QAOA, 
the \emph{quantum approximate optimization algorithm}~\cite{farhi2014quantum}, as demonstrated in Fig.~\ref{Fig:QAOA}. This provides further support for the notion that the assumption regarding the $2$-qubit gates being sampled by a $2$-design is not crucial.
\begin{figure}[h]
\centering
\includegraphics[width=0.48\textwidth]{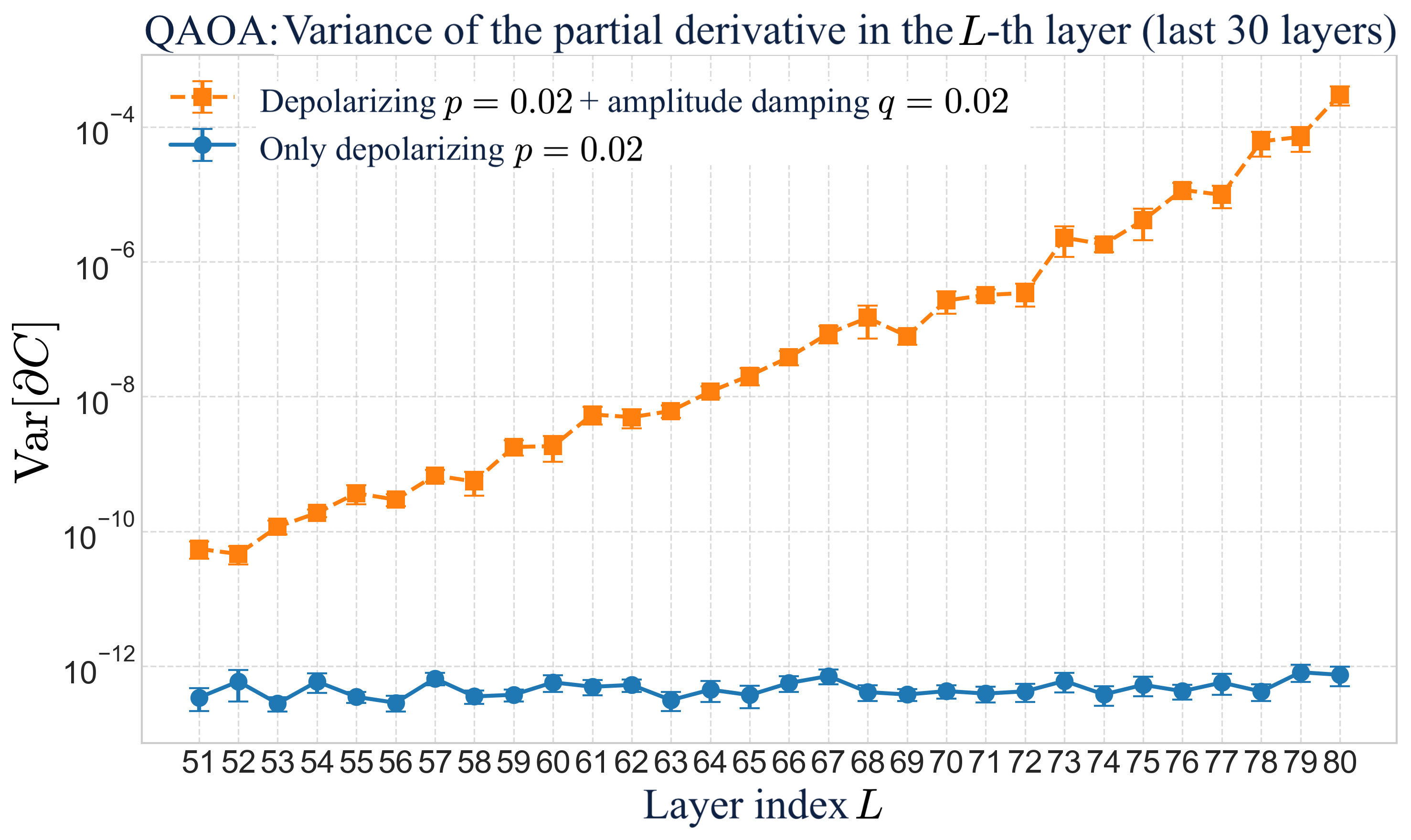}
\includegraphics[width=0.48\textwidth]{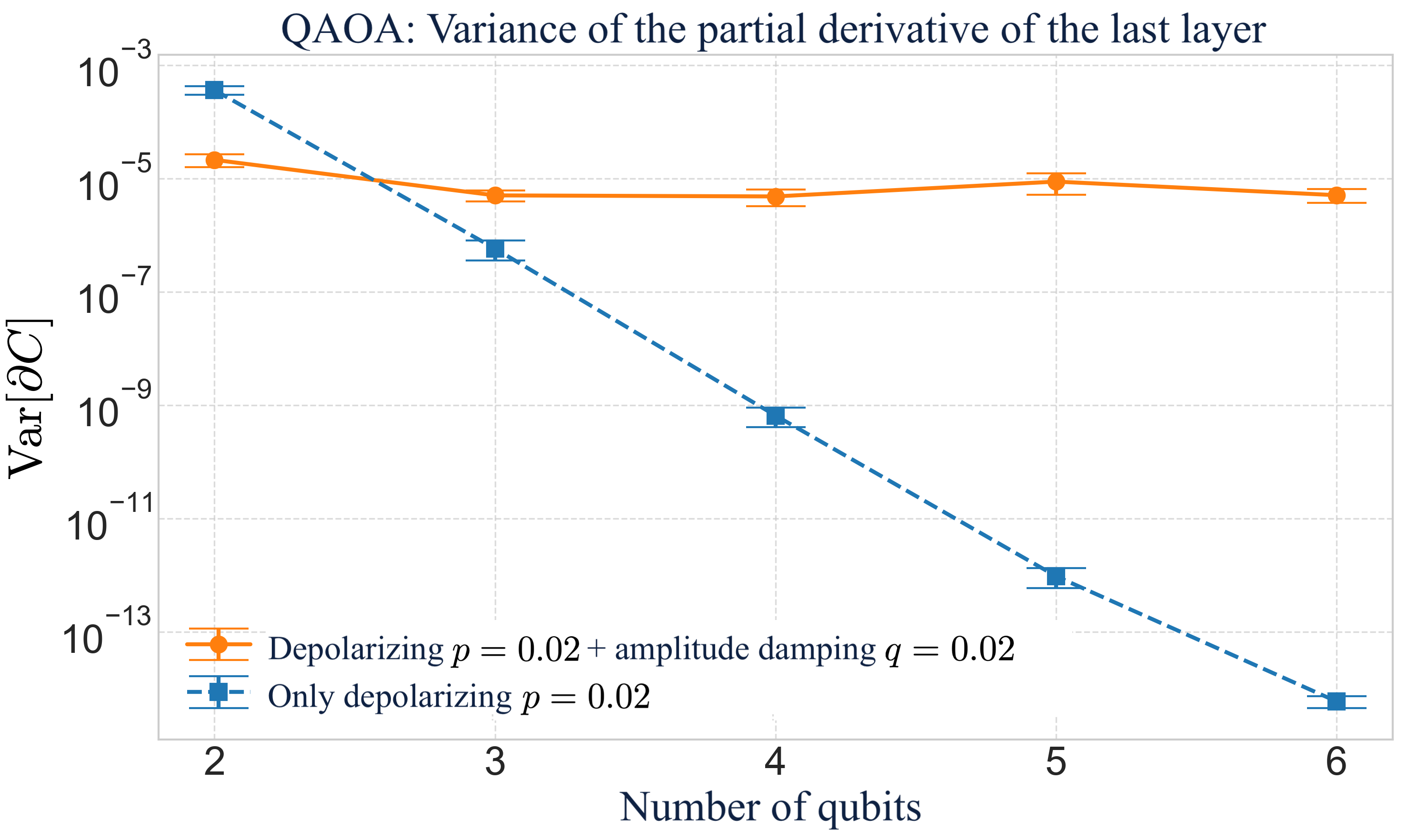}
\caption{
The same quantities as in Fig.~\ref{Fig:AMPDEPlayer}, but using a QAOA circuit ansatz. The results show no qualitative change, confirming that our conclusions are not limited to unstructured random ansätze. The noise parameters are $p=q=0.02$, and all circuit parameters $\{\gamma_i,\beta_i\}$ are sampled uniformly in $[0,2\pi)$. As an observable, we use $X_1$ instead of $Z_1$ (the latter has vanishing expectation value in the noiseless case for any fixed QAOA circuit due to symmetry). The QAOA ansatz takes the form $\prod_{i=1}^{D} \exp(-i\beta_i H_x) \exp(-i\gamma_i H_z) \ket{+}^{\otimes n}$, with $H_z=\sum_{i=1}^{n-1}Z_i Z_{i+1}$ and $H_x=\sum_{i=1}^n X_i$. Error bars represent the standard deviation over $20$ random parameter samples.}
\label{Fig:QAOA}
\end{figure}

\end{document}